\newtheorem{prop}{Proposition}
\newtheorem{thm}{Theorem}
\newtheorem{lem}{Lemma}
\newtheorem{cor}{Corollary}
\newtheorem{assum}{Assumption}
\newtheorem{con}{Condition}
\providecommand{\keywords}[1]{\textbf{Keywords:} #1}
\begin{document}

\begin{spacing}{1.2}

\title{Decomposition of Differences in Distribution under Sample Selection and the Gender Wage Gap}

\author{Santiago Pereda-Fern\'{a}ndez\thanks {Departamento de Econom\'{i}a, Universidad de Cantabria, Avenida de los Castros, s/n, 39005 Santander, Spain. I would like to thank Manuel Arellano, Domenico Depalo, Franco Peracchi, Alexandra Sober\'{o}n, Paolo Zacchia and seminar participants at CEMFI, Universidad de Cantabria, Universit\`{a} di Bologna and Universit\`{a} di Roma Tor Vergata. This work is part of the I+D+i project Ref. TED2021-131763A-I00 financed by MCIN/AEI/10.13039/501100011033 and by the European Union NextGenerationEU/PRTR. I gratefully acknowledge financial support from the Spanish Ministry of Universities and the European Union-NextGenerationEU (RMZ-18). All remaining errors are my own. I can be reached via email at santiagopereda@gmail.com.}}
\affil{Universidad de Cantabria}
\maketitle

\date

\begin{abstract}
\noindent
I address the decomposition of the differences between the distribution of outcomes of two groups when individuals self-select themselves into participation. I differentiate between the decomposition for participants and the entire population, highlighting how the primitive components of the model affect each of the distributions of outcomes. Additionally, I introduce two ancillary decompositions that help uncover the sources of differences in the distribution of unobservables and participation between the two groups. The estimation is done using existing quantile regression methods, for which I show how to perform uniformly valid inference. I illustrate these methods by revisiting the gender wage gap, finding that changes in female participation and self-selection have been the main drivers for reducing the gap.
\end{abstract}

\keywords{Decomposition, Distributional analysis, Gender wage gap, Quantile regression, Sample selection}

\textbf{JEL classification: C13, C14, C31, J16, J31}

\end{spacing}

\pagebreak{}

\section{Introduction}\label{sec:intro}

Decomposition methods have been widely used to understand the sources of differences between the outcomes of two groups. Following the seminal works by \cite{Oaxaca1973} and \cite{Blinder1973}, mean differences between two groups have been attributed to either differences in mean covariates (endowments effect/composition component) or differences in the slope parameters (coefficients effect/structural component). Subsequently, several methods have been proposed to account for differences between other features of the distribution, such as the unconditional quantiles. These decompositions typically rely on the ignorability assumption (selection on observables).

However, individuals present in the decomposition often constitute a self-selected sample. This requires accounting for it to estimate the structural parameters using sample selection methods. Moreover, the decomposition of a feature of the outcomes between two groups could depend on differences the unobservables that depend on the amount of self-selection and participation for each group. This is the case if one considers the distribution of actual outcomes rather than the distribution of potential outcomes that would be observed if everyone participated.

Additionally, sample selection raises the question of which population is the target of the analysis: participants or the entire population. In the first case, the comparison is between those individuals with non-zero outcomes. \textit{E.g.}, one could be interested in understanding the differences in labor earnings of two groups of employed workers. However, it can be argued that including non-participants provides a more comprehensive comparison. This would amount to include unemployed workers and those out of the labor force in the analysis.

This paper makes the following contributions. First, I propose several decompositions according to the target population and the distributional feature of interest. Specifically, I consider decompositions for participants and the entire population for actual outcomes. Second, I propose two ancillary decompositions of the propensity score and the average value of the unobservable variable that affects the outcome, which is interpreted as ability. Third, I estimate the functionals that are used in the decomposition using the Quantile Regression with Selection (QRS) estimator proposed by \cite{Arellano2017}. I show the asymptotic properties of the estimated components of the decomposition, as well as how to carry uniformly valid inference. Fourth, I obtain new estimates of the evolution of the gender wage gap, finding that the unobserved ability of female workers has increased more than that of males. This, together with the fall in the gender participation gap have been two major factors in reducing the gender wage gap.

In the absence of self-selection, individual actual and potential outcomes coincide. In contrast, actual outcomes equal zero for non-participants when there is self-selection. Hence, there are two potentially relevant target populations to consider: just participants, who have a non-zero outcome, and the entire population, which also includes non-participants. The decomposition for both populations are related and, depending on the context, either could be more appropriate to analyze. I consider both of them, highlighting their differences and similarities in a general framework and under some simplifying assumptions. Furthermore, because differences in the unobservables can account for differences in the actual outcomes, the type of decompositions presented here differ from those that focus on the potential outcomes when every individual is assumed to participate.

Under ignorability, differences in outcomes can be split into differences in covariates and differences in the returns to the covariates. However, when there is sample selection on unobservables, there are two additional components that can explain the differences between the two groups. The first one is the selection component, which reflects how, \emph{caeteris paribus}, the unobserved characteristics of a group may be more or less positively selected relative to those of the other group. The other one is the participation component, which reflects differences in outcomes that can be attributed to the differences in the participation rates between the two groups. This extends the decomposition into three components in a triangular model with a binary treatment considered in \cite{Pereda2022}.

Most decompositions in a cross-sectional data setting have imposed no selection on unobservables.\footnote{There exist works that explicitly account for time-invariant unobservables in a panel data setting. See \cite{Fortin2011} and references therein for further details.} This problem has been highlighted, \textit{e.g.}, by \cite{Kunze2008} or \cite{Huber2015c}. Some exceptions include \cite{Neuman2003,Neuman2004} and \cite{Mora2008}, who used the Heckit correction to obtain their estimates corrected for selection, or \cite{Cukrowska2016}, who used a multinomial correction model. More recently, \cite{Huber2020} used mediation analysis tools to decompose the mean gap with sample selection under the assumption that the effect of the unobservables is homogeneous (\textit{e.g.}, if the model is additively separable).

A common feature of these works is that they are only concerned with decompositions of mean differences. In contrast, the focus gradually expanded to analyze other features, including unconditional distributions. The latter are particularly relevant, as many statistics of interest can be expressed as a function of it. Regarding estimation, there exists a wide menu of available methods, including those based on quantile regression \citep{Machado2005,Melly2006,Chernozhukov2013}, on distribution regression \cite{Chernozhukov2013}, and on reweighting \citep{Dinardo1996,Firpo2018}. A thorough review of the advantages and disadvantages of each method is found in \cite{Fortin2011}.

A decomposition into four components has previously been considered by \cite{Chernozhukov2019c}, who also used it in a sample selection framework with a distributional regression estimator. The main difference with respect to the one considered in this paper regards the structural functions employed to model the outcomes. In their paper, they use bivariate Gaussian distributions to model the sample selection locally, whereas the approach in this paper is to model the unobservables with a copula globally. This allows to interpret the unobservables as the conditional ranks in the distribution of potential outcomes. Moreover, it is straightforward to relate differences in the value of these unobservables between the two groups to differences in the copula, the propensity score and the distribution of covariates, which may be of independent interest for the researcher.

Other related works are \cite{Maasoumi2017} and \cite{Maasoumi2019}. In the first one they estimate the differences between functions of the distributions of the two groups for the selected sample and they bound the differences for the entire population. In the second one they analyze the evolution of the gender wage gap in the US using the estimator proposed by \cite{Arellano2017}, including a decomposition of the gap between the two distributions of potential wages for the entire population. There are some differences relative to these two papers. First, I consider differences between the actual distributions and relate them to the different primitives of the model. Second, I consider four different sources of variability, which can help explain which are the determinants of the gap. The latter can be a problem not just if sample selection is ignored, but also if one analyzes the distributions of potential outcomes ignoring participation.

For estimation purposes, I consider a nonseparable model with a univariate unobservable variable in the outcome equation, which naturally points at quantile regression methods. The QRS estimator can estimate the structural function that relates potential wages to covariates and unobservables and the copula that captures the amount of self-selection. This, together with an estimator of the propensity score and the sample distribution of covariates are all the ingredients needed to estimate the different components of the decomposition. Moreover, I discuss under which simplifying assumptions other alternative estimators can be used.

The decomposition methods have been applied to a wide variety of topics, including test scores gaps between genders \citep{Sohn2008}, schools \citep{Krieg2006} or countries \citep{Mcewan2004}, differences in students' enrollment \citep{Borooah2005}, differences in health insurance coverage between different demographic groups \citep{Bustamante2009}, or gender differences in smoking behavior \citep{Bauer2007}. However, most decomposition studies revolved around wage gaps between genders \cite{Kunze2008}, race \citep{Barsky2002}, unionization status \citep{Card1996}, or public-private employees \citep{Depalo2020}.

I apply the methods presented in this paper to decompose the gender wage gap. I revisit the estimates by \cite{Maasoumi2019}, which are based on the Current Population Survey for the period 1976-2013. I perform several decompositions, considering different population targets (actual earnings for participants and the full population, as well as potential outcomes for the full population) and several statistics (mean, unconditional distributions and generalized entropy indices). I find that considering just employed workers understates the wage gap relative to consider the entire population. However, the gender gap has substantially diminished over the period considered in both cases. The main contributing factors for this reduction are the increase in female participation, and the increase in the average value of both observed and unobserved characteristics of employed females.

The rest of the paper is organized as follows: Section~\ref{sec:model} presents the model and the main functions of interest. The decompositions of interest are presented in Section~\ref{sec:decom}, including particular cases of interest that are nested in the more general model. Section~\ref{sec:est} describes the estimation of the decompositions for the general model, describing its asymptotic properties and showing how to conduct inference. Section~\ref{sec:emp} revisits the estimates of the evolution of the gender wage gap, and Section~\ref{sec:conc} concludes.
\section{The Model}\label{sec:model}

Consider the following selection model:
\begin{align}
Y&=g_{D}\left(X,U\right)S\label{eq:y}\\
S&=\mathbf{1}\left(\pi_{D}\left(Z\right)-V>0\right)\label{eq:s}
\end{align}
where $Y$ denotes the continuous outcome of interest, $X$ a set of predetermined covariates, $Z\equiv\left(Z_{1},X'\right)$ is composed of the instrument $Z_{1}$ and the predetermined covariates, $S$ is a binary indicator for participation, $D$ is an observed variable that denotes the group to which the individuals of the population can belong, and $U$ and $V$ are two unobservable random variables. Equations~\ref{eq:y}-\ref{eq:s} conform a sample selection model that allows for a broad class of differences between individuals of different groups. The set of groups is denoted by $\mathcal{D}$, and its dimension is assumed to be finite. For expositional purposes, the leading case is two groups, \textit{i.e.}, $D\in\mathcal{D}=\left\{0,1\right\}$.\footnote{This framework is reminiscent of those considered for the study of the marginal treatment effect (MTE; \citealp{Bjorklund1987}). Such framework can be rationalized by a generalized Roy model with imperfect information \citep{Pereda2022}.}

This system can be used to model several economic phenomena, such as labor wages, denoted by $Y$. These could be different for people belonging to different demographic groups, like gender ($D$). Equation~\ref{eq:s} reflects the fact that only employed workers ($S=1$) have a wage. These are modeled by Equation~\ref{eq:y}, and they depend on their observed and unobserved characteristics, respectively given by $X$ and $U$. The latter, together with $V$ are the unobservables of the model, which can be interpreted as the ability of the worker and the unobserved propensity to be unemployed, respectively.\footnote{One could also consider work intensity as an additional dependent variable in the model. \textit{E.g.}, one could consider weekly hours or worked weeks, as in \cite{Fernandez2022}. Such extensions could also be considered by appropriately modeling the unobservable variables, possibly adding some extra ones. Such comprehensive models could be informative about features that are not captured by a simpler model, at the cost of making them computationally more complicated. Their study is left for future work.}

Following \cite{Heckman2005}, the distribution of $V$ can be normalized to be uniformly distributed on the unit interval. This is convenient, as $\pi_{d}$ can now be interpreted as the propensity score.\footnote{Note that the propensity score is sensitive to the instrument used. Therefore, using different instruments would lead to different results. See, \textit{e.g.}, \cite{Mogstad2021} for further details on the interpretation of treatment effects in with multiple instruments, a framework related to the one considered in this paper.} Moreover, if the same normalization is applied to the distribution of $U$, Equation~\ref{eq:y} uses the Skorohod representation, allowing us to interpret $g_{d}\left(x,u\right)$ as the structural quantile function (SQF). These normalizations offer two advantages: the joint distribution of the unobservables conditional on $D=d$ and $X=x$, denoted by $C_{d,x}\left(u,v\right)\equiv\mathbb{P}\left(U\leq u,V\leq v|D=d,X=x\right)$, can be interpreted as a copula and, by taking the inverse of the SQF with respect to $U$, one obtains the conditional distribution of potential outcomes if all were participants.\footnote{This constitutes a conceptual difference relative to \cite{Chernozhukov2019c}. They model the relation between the outcome variable and the participation decision using a bivariate model that is observationally equivalent. However, it is not possible to interpret the distribution of the unobservables as a copula with their representation. Hence, the interpretation of their unobservables is less clear, as they are locally defined.}

The copula also conveys some important information for the policy maker regarding some counterfactual scenarios. It captures the amount of self-selection on unobservables, linking the participation decision to the outcome. Negative amounts of correlation are associated with positive selection of individuals into participation. Consequently, the more negative the amount of correlation, the lower the potential outcomes of non-participants relative to participants.

To show that these normalizations are without loss of generality, consider an alternative data generating process determined by functions $\tilde{g}_{D}$ and $\tilde{\pi}_{D}$, as well as the distribution of the unobservables $\tilde{F}_{\tilde{U},\tilde{V}|D,X}$. Their observational equivalence is established by Lemma~\ref{lem:normalize}:
\begin{lem}\label{lem:normalize}
Let $Y=\tilde{g}_{D}\left(X,\tilde{U}\right)S$ and $S=\mathbf{1}\left(\tilde{\pi}_{D}\left(Z\right)-\tilde{V}>0\right)$, where the distribution of the unobservables is given by $\tilde{F}_{\tilde{U},\tilde{V}|D,X}\left(\tilde{U},\tilde{V}|D,X\right)$, with marginal distributions $\tilde{F}_{\tilde{U}|D,X}$ and $\tilde{F}_{\tilde{V}|D,X}$, that may depend non-trivially on $X$. Then, there exist $g_{D}$, $\pi_{D}$ such that the model given by Equations~\ref{eq:y}-\ref{eq:s}, where $U|D,X\sim U\left[0,1\right]$ and $V|D,X\sim U\left[0,1\right]$, generates the same distribution of $\left(Y,S,D,Z\right)$.
\end{lem}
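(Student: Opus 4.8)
The plan is to construct the normalized objects $U$, $V$, $g_{D}$, and $\pi_{D}$ explicitly by applying the probability integral transform conditionally on $\left(D,X\right)$, and then to verify that the resulting model reproduces $\left(Y,S\right)$ almost surely, which yields the same joint law of $\left(Y,S,D,Z\right)$. The construction treats the outcome and selection equations separately, using the marginals $\tilde{F}_{\tilde{U}|D,X}$ and $\tilde{F}_{\tilde{V}|D,X}$ as the transforming maps, and it preserves the dependence between the two unobservables precisely because the pair of marginal transforms is the copula transform.

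First, for the selection equation I would exploit that $S$ depends on $\tilde{V}$ only through the event $\left\{\tilde{V}<\tilde{\pi}_{D}\left(Z\right)\right\}$. Define $V\equiv\tilde{F}_{\tilde{V}|D,X}\left(\tilde{V}\right)$, which is $U\left[0,1\right]$ conditional on $\left(D,X\right)$ by the probability integral transform, and set $\pi_{D}\left(Z\right)\equiv\tilde{F}_{\tilde{V}|D,X}\left(\tilde{\pi}_{D}\left(Z\right)\right)$. Since $\tilde{F}_{\tilde{V}|D,X}$ is continuous and strictly increasing it preserves order, so $\tilde{V}<\tilde{\pi}_{D}\left(Z\right)\iff V<\pi_{D}\left(Z\right)$, giving $S=\mathbf{1}\left(\pi_{D}\left(Z\right)-V>0\right)$ exactly as in Equation~\ref{eq:s}. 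This step relies on the exclusion restriction that $\tilde{V}$ is independent of the instrument $Z_{1}$ conditional on $\left(D,X\right)$ — implicit in conditioning the marginals on $\left(D,X\right)$ rather than on $Z$ — so that $\tilde{F}_{\tilde{V}|D,X}$ does not involve $Z_{1}$ and $\pi_{D}$ remains a well-defined function of $Z$.

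Second, for the outcome equation, define $U\equiv\tilde{F}_{\tilde{U}|D,X}\left(\tilde{U}\right)$, again $U\left[0,1\right]$ conditional on $\left(D,X\right)$, and set $g_{D}\left(X,U\right)\equiv\tilde{g}_{D}\left(X,\tilde{F}_{\tilde{U}|D,X}^{-1}\left(U\right)\right)$, so that $g_{D}\left(X,U\right)=\tilde{g}_{D}\left(X,\tilde{U}\right)$ identically and hence $Y=g_{D}\left(X,U\right)S=\tilde{g}_{D}\left(X,\tilde{U}\right)S$. Because the map $\left(\tilde{U},\tilde{V}\right)\mapsto\left(U,V\right)$ applies each marginal CDF to its own argument, the law of $\left(U,V\right)$ conditional on $\left(D,X\right)$ is the copula $C_{d,x}$ of $\left(\tilde{U},\tilde{V}\right)$, which has uniform marginals by construction, so the normalization requirements are met. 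Collecting these, $\left(Y,S\right)$ under the normalized model coincides almost surely with $\left(Y,S\right)$ under the original model for each realization of $\left(D,Z,\tilde{U},\tilde{V}\right)$; since $\left(D,Z\right)$ are common to both models, integrating over the unobservables delivers the same conditional law of $\left(Y,S\right)$ given $\left(D,Z\right)$, and therefore the same joint law of $\left(Y,S,D,Z\right)$.

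The main obstacle is measure-theoretic: the probability integral transform yields an exactly uniform variable and a well-defined inverse only when $\tilde{F}_{\tilde{U}|D,X}$ and $\tilde{F}_{\tilde{V}|D,X}$ are continuous and strictly increasing. Continuity of $\tilde{F}_{\tilde{U}|D,X}$ is consistent with $Y$ being a continuous outcome, and continuity of $\tilde{F}_{\tilde{V}|D,X}$ is exactly what makes the propensity-score reading of $\pi_{D}$ sensible, so these are natural maintained assumptions rather than substantive restrictions. Where strict monotonicity fails on flat stretches I would replace ordinary inverses by generalized quantile inverses and check equivalence of the relevant events up to a $\mathbb{P}$-null set, and to absorb any atoms I would invoke the distributional-transform device; in either case the displayed equalities hold almost surely, which suffices for equality in distribution.
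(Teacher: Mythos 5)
Your construction coincides with the paper's proof: the same conditional probability-integral transforms $V=\tilde{F}_{\tilde{V}|D,X}\left(\tilde{V}\right)$ and $U=\tilde{F}_{\tilde{U}|D,X}\left(\tilde{U}\right)$, with $\pi_{D}\left(Z\right)=\tilde{F}_{\tilde{V}|D,X}\left(\tilde{\pi}_{D}\left(Z\right)\right)$ and $g_{D}\left(x,u\right)=\tilde{g}_{D}\left(x,\tilde{F}_{\tilde{U}|D,X}^{-1}\left(u\right)\right)$, so the proposal is correct and takes essentially the same approach. The only difference is in the final step and it is harmless: you conclude by the pathwise identity that $\left(Y,S,D,Z\right)$ is literally the same random vector, whereas the paper verifies equality of the conditional laws (deriving $\tilde{G}_{d,x}\left(\tau,\tilde{\pi}_{d}\left(z\right)\right)=G_{d,x}\left(\tilde{F}_{U|d,x}\left(\tau|x\right),\pi_{d}\left(z\right)\right)$ and then $\mathbb{P}\left(Y\leq y|S=1,Z=z\right)$ by a change of variables); your shortcut is valid, and your explicit handling of the continuity and strict-monotonicity caveats via generalized inverses is more careful than the paper's implicit invertibility assumption.
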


The identification conditions are listed in \cite{Arellano2017}.\footnote{Note that the identification is based on instrumental variables, rather than on a control function approach. A similar model was considered using control functions was considered by \cite{Fernandez2022}. The two models are non-nested, and in particular \cite{Fernandez2022} works under the assumption that the unobservables of $Z$, including $X$. This contrasts with the assumptions in \cite{Arellano2017}, which allows for non-trivial dependence between the unobservables and $X$. A comparison of the conditions required for identification with instrumental variables and control functions is provided, \textit{e.g.}, by \cite{Heckman2007b}.} Apart from some standard assumptions (exclusion restriction, continuous outcomes, propensity score strictly within the unit interval and a well-defined continuous copula for the unobservables), they require either identification at infinity, or a continuous instrument and that the copula be real analytic with respect to its second argument.\footnote{The set of assumptions required for identification by \cite{Chernozhukov2019c} is different, though they are related. They provide a comparison of both sets of assumptions in their paper.} One could relax the latter by imposing a parametric assumption, a possibility considered in Appendix~\ref{app:param}. Additionally, these assumptions can be relaxed if the copula is homogeneous with respect to the covariates and one uses variation in the covariates as a source of exogenous variation.
\section{Decompositions}\label{sec:decom}

Decompositions typically vary according to the distributional feature that is decomposed, such as the difference of the means or of the unconditional quantiles. Sample selection introduces two additional margins of choice regarding which decomposition to make. First, because only a fraction of the population participates, one could consider a decomposition for participants or for the entire population, assigning a value of zero for non-participants. Second, the distribution of the unobservables for participants and non-participants differ, making the distributions of actual and potential outcomes differ, too. The focus here is on actual outcomes of both participants and the full population.

\subsection{Primitives of the Decompositions}

The decomposition of the outcomes for either participants or the entire population requires to account for the following primitive functions: the SQF, $g_{d}\left(x,u\right)$, the propensity score, $\pi_{d}\left(z\right)$, the marginal distribution of the covariates, $F_{Z}^{d}\left(z\right)$, and the conditional copula of the unobservables, $C_{d,x}\left(u,v\right)$, for $d=0,1$. This contrasts with decompositions of potential outcomes, which account for selection only to estimate the structural parameters, and then base the decomposition on the SQF and the propensity score.

The leading feature that is decomposed is the mean outcome. This requires assigning a value to the counterfactual mean outcome for the population of interest when a combination of the distribution of the covariates or the estimated structural functions are changed to match those of the other group. Following the previous discussion, the mean outcome for participants with the distribution of the observables of group $h$, the SQF of group $j$, the copula of group $k$ and the propensity score of group $l$ is given by
\begin{align}\label{eq:meanys}
\mathbb{E}\left[Y^{hjkl}|S=1\right]\equiv\int_{\mathcal{Z}}\int_{0}^{1}g_{j}\left(x,u\right)dG_{k,x}\left(u,\pi_{l}\left(z\right)\right)dF_{Z}^{h}\left(z\right)
\end{align}
where $G_{d,x}\left(u,v\right)\equiv\mathbb{P}\left(U\leq u|D=d,X=x,V\leq v\right)=\frac{C_{d,x}\left(u,v\right)}{v}$ denotes the copula conditional on participation and $\mathcal{Z}$ denotes the support of $Z$. This is the channel through which the unobservables affect the mean outcome. Similarly, the counterfactual value for the entire population is given by
\begin{align}\label{eq:meany}
\mathbb{E}\left[Y^{hjkl}\right]\equiv\int_{\mathcal{Z}}\int_{0}^{1}g_{j}\left(x,u\right)dC_{k,x}\left(u,\pi_{l}\left(z\right)\right)dF_{Z}^{h}\left(z\right)
\end{align}

More generally, one can obtain equivalent expressions for the distributions for each group, that constitute the building blocks for other functionals of interest, notably unconditional quantiles. The unconditional cumulative distribution functions for participants and the entire population are respectively given by
\begin{align}
F_{Y|S=1}^{hjkl}\left(y\right)&\equiv\int_{\mathcal{Z}}\int_{0}^{1}\mathbf{1}\left(g_{j}\left(x,u\right)\leq y\right)dG_{k,x}\left(u,\pi_{l}\left(z\right)\right)dF_{Z}^{h}\left(z\right)\label{eq:fys}\\
F_{Y}^{hjkl}\left(y\right)&\equiv\int_{\mathcal{Z}}\left[\int_{0}^{1}\mathbf{1}\left(g_{j}\left(x,u\right)\leq y\right)dC_{k,x}\left(u,\pi_{l}\left(z\right)\right)+\left(1-\pi_{l}\left(z\right)\right)\right]dF_{Z}^{h}\left(z\right)\label{eq:fy}
\end{align}

Consequently, the unconditional quantile functions are given by\footnote{Note that if $Y$ is continuous and there is no bunching at any particular value, Equation~\ref{eq:qys} equals the inverse of the cumulative distribution function, but the same is not true for Equation~\ref{eq:qy}, which has a mass point at $Y=0$.}
\begin{align}
Q_{Y|S=1}^{hjkl}\left(\tau\right)&\equiv\inf\left\{y:\tau\leq F_{Y|S=1}^{hjkl}\left(y\right)\right\}\label{eq:qys}\\
Q_{Y}^{hjkl}\left(\tau\right)&\equiv\inf\left\{y:\tau\leq F_{Y}^{hjkl}\left(y\right)\right\}\label{eq:qy}
\end{align}

\subsection{Main Decompositions}

The decomposition in a sample selection model is more intricate than in the regular one. To see this, note that the covariates can have an impact on the final outcome through three different channels. First, they affect the distribution of potential outcomes through the SQF, the only channel present when there is no sample selection. Second, they affect the propensity to participate, making some individuals more likely to participate. Third, they affect the amount of self-selection through the distribution of the unobservables, which is indexed by $x$ in general. In other words, some characteristics may display some correlation with the unobservables, either reinforcing or mitigating the differences between the two groups.

Consequently, to better assess how the covariates affect the propensity score and the selection on unobservables channels, it is useful to report two ancillary decompositions, presented in Section~\ref{sec:ancillary}. Moreover, the decomposition is path dependent. I present one of the possible orders for the decomposition of mean differences for participants and describe each component in turn.\footnote{Specifically, there are a total of 24 decompositions, and the size of each component may vary in each of them. See \cite{Fortin2011} for further details on the limits of path dependent decompositions.}

Note that, by definition, $\mathbb{E}\left[Y|D=1,S=1\right]=\mathbb{E}\left[Y^{1111}|S=1\right]$ and $\mathbb{E}\left[Y|D=0,S=1\right]=\mathbb{E}\left[Y^{0000}|S=1\right]$. The difference between these two can be decomposed as:
\begin{align}\label{eq:decommeanys}
\mathbb{E}\left[Y|D=1,S=1\right]-\mathbb{E}\left[Y|D=0,S=1\right]&=\underbrace{\mathbb{E}\left[Y^{1111}|S=1\right]-\mathbb{E}\left[Y^{0111}|S=1\right]}_{\text{endowments component}}\nonumber\\
&+\underbrace{\mathbb{E}\left[Y^{0111}|S=1\right]-\mathbb{E}\left[Y^{0011}|S=1\right]}_{\text{coefficients component}}\nonumber\\
&+\underbrace{\mathbb{E}\left[Y^{0011}|S=1\right]-\mathbb{E}\left[Y^{0001}|S=1\right]}_{\text{selection component}}\nonumber\\
&+\underbrace{\mathbb{E}\left[Y^{0001}|S=1\right]-\mathbb{E}\left[Y^{0000}|S=1\right]}_{\text{participation component}}
\end{align}

Equation~\ref{eq:decommeanys} decomposes the mean difference between the two groups into four components. The first two are those present in decompositions under the ignorability assumption. The endowments component captures how differences in the covariates between the two groups lead to differences in mean outcomes. The other term present in this type of decompositions is the coefficients component. It reflects how differences in the SQF between the two groups are related to differences in mean outcomes, which may be partly driven by discrimination. For example, in the Oaxaca-Blinder decomposition, this term equals the difference in the OLS coefficients between the two groups, scaled by the average covariates of one of them.

The remaining two terms arise in a sample selection framework. The participation component has the easiest interpretation, as it relates differences in the probability of participating for both groups to differences in mean outcomes. To get some intuition, assume that more able individuals (\textit{i.e}, those with high $U$) are also more prone to participate (\textit{i.e.}, low $V$). Then, as the propensity score increases from zero to one, the average ability of participants gets smaller towards its mean value, reducing the mean wage for participants.

The selection component links differences in the amount of self-selection into participation to differences in mean outcomes. The interpretation is slightly different from that of the participation component, as it affects the distribution of the unobservables without affecting the participation. Therefore, \textit{caeteris paribus}, the higher the level of selection, the higher the average ability of participants and, consequently, the higher their mean outcome.

Note that because of the linearity of the expectation operator, each component can be written in terms of the difference of the primitive functions. For instance, the coefficients component can be written as
\begin{align*}
\mathbb{E}\left[Y^{0111}|S=1\right]-\mathbb{E}\left[Y^{0011}|S=1\right]=\int_{\mathcal{Z}}\int_{0}^{1}\left(g_{1}\left(x,u\right)-g_{0}\left(x,u\right)\right)dG_{1,x}\left(u,\pi_{1}\left(z\right)\right)dF_{Z}^{0}\left(z\right)
\end{align*}

This convenient property does not generally hold, and in particular for the decomposition of the unconditional quantiles. Moreover, in some particular cases, the expression for some components can be simplified, as shown in Appendix~\ref{sec:cases}. The decomposition for the entire population is analogous to the one for participants:
\begin{align}\label{eq:decommeany}
\mathbb{E}\left[Y|D=1\right]-\mathbb{E}\left[Y|D=0\right]&=\underbrace{\mathbb{E}\left[Y^{1111}\right]-\mathbb{E}\left[Y^{0111}\right]}_{\text{endowments component}}+\underbrace{\mathbb{E}\left[Y^{0111}\right]-\mathbb{E}\left[Y^{0011}\right]}_{\text{coefficients component}}\nonumber\\
&+\underbrace{\mathbb{E}\left[Y^{0011}\right]-\mathbb{E}\left[Y^{0001}\right]}_{\text{selection component}}+\underbrace{\mathbb{E}\left[Y^{0001}\right]-\mathbb{E}\left[Y^{0000}\right]}_{\text{participation component}}
\end{align}

The interpretation of the different components is similar, with one notable difference: because the outcome for non-participants equals zero by definition, the mean outcome for the entire population is the mean outcome for participants multiplied by the propensity score. Therefore, the endowments and participation components operate through two channels: first, by changing the proportion of participants; second, by affecting the average outcome of participants. In contrast, the coefficients and selection components operate exclusively through the second channel.

The last two considered decompositions are those of the unconditional quantiles of the outcome for both target populations.\footnote{Additional decompositions can be constructed analogously. \textit{E.g.}, if one is interested in differences in inequality, one could construct equivalent counterfactual values for the Gini index and apply the decomposition. As long as the statistic of interest can be expressed in terms of the same primitives of the model, and under some regularity conditions, they will be well-behaved. See, \textit{e.g.}, \cite{Chernozhukov2013} for further details.} They consist of the same four components as the mean decompositions, although there are some important differences. From a mathematical standpoint, the unconditional quantile function is not a linear operator. Therefore, their expressions are more convoluted even under some simplifying assumptions. From a policy perspective, they are more informative, as they allow to assess which segments of the population present the largest differences. This makes them particularly relevant in cases in which the differences are of opposite signs for different quantiles of the distribution.

Formally, the decomposition of the unconditional quantile distribution for participants at quantile $\tau$ equals
\begin{align}\label{eq:decomqys}
Q_{Y|D=1,S=1}\left(\tau\right)-Q_{Y|D=0,S=1}\left(\tau\right)&=\underbrace{Q_{Y|S=1}^{1111}\left(\tau\right)-Q_{Y|S=1}^{0111}\left(\tau\right)}_{\text{endowments component}}+\underbrace{Q_{Y|S=1}^{0111}\left(\tau\right)-Q_{Y|S=1}^{0011}\left(\tau\right)}_{\text{coefficients component}}\nonumber\\
&+\underbrace{Q_{Y|S=1}^{0011}\left(\tau\right)-Q_{Y|S=1}^{0001}\left(\tau\right)}_{\text{selection component}}+\underbrace{Q_{Y|S=1}^{0001}\left(\tau\right)-Q_{Y|S=1}^{0000}\left(\tau\right)}_{\text{participation component}}
\end{align}
while the decomposition of the unconditional quantile distribution for the entire population at quantile $\tau$ is given by
\begin{align}\label{eq:decomqy}
Q_{Y|D=1}\left(\tau\right)-Q_{Y|D=0}\left(\tau\right)&=\underbrace{Q_{Y}^{1111}\left(\tau\right)-Q_{Y}^{0111}\left(\tau\right)}_{\text{endowments component}}+\underbrace{Q_{Y}^{0111}\left(\tau\right)-Q_{Y}^{0011}\left(\tau\right)}_{\text{coefficients component}}\nonumber\\
&+\underbrace{Q_{Y}^{0011}\left(\tau\right)-Q_{Y}^{0001}\left(\tau\right)}_{\text{selection component}}+\underbrace{Q_{Y}^{0001}\left(\tau\right)-Q_{Y}^{0000}\left(\tau\right)}_{\text{participation component}}
\end{align}

\subsection{Ancillary Decompositions}\label{sec:ancillary}

The previous discussion highlighted the multiple channels through which the covariates can affect the final outcome in a sample selection framework. To better assess their role, researchers could perform two additional decompositions of the propensity score for the entire population and the average value of the unobservables for participants. These decompositions could be presented alongside the main decomposition, complementing the analysis.\footnote{It is worth stressing that, because the variable that is decomposed is not the main outcome, the components of the ancillary decompositions are not comparable to those of the main decompositions, although they are related to each other.}

The first one is the participation decomposition. This is a regular means decomposition without selection using the propensity score as the dependent variable. As such, differences in the propensity score would be split into the usual endowments and coefficients components.

The second one is the self-selection decomposition. This is more reminiscent to the decompositions of the mean outcomes. To see this, note that the average value of $U$ for participants with the distribution of the observables of group $h$, the copula of group $l$ and the propensity score of group $m$ is given by
\begin{align}\label{eq:meanu}
\mathbb{E}\left[U^{hkl}|S=1\right]\equiv\int_{\mathcal{Z}}\int_{0}^{1}udG_{k,x}\left(u,\pi_{l}\left(z\right)\right)dF_{Z}^{h}\left(z\right)
\end{align}

Then, the difference of this statistic between the two groups can be decomposed as:
\begin{align}\label{eq:decommeanus}
\mathbb{E}\left[U|D=1,S=1\right]-\mathbb{E}\left[U|D=0,S=1\right]&=\underbrace{\mathbb{E}\left[U^{111}|S=1\right]-\mathbb{E}\left[U^{011}|S=1\right]}_{\text{endowments component}}\nonumber\\
&+\underbrace{\mathbb{E}\left[U^{011}|S=1\right]-\mathbb{E}\left[U^{001}|S=1\right]}_{\text{selection component}}\nonumber\\
&+\underbrace{\mathbb{E}\left[U^{001}|S=1\right]-\mathbb{E}\left[U^{000}|S=1\right]}_{\text{participation component}}
\end{align}

Similarly to the main decompositions, the endowments component captures differences in the copula and the propensity score due to differences in the covariates. On the other hand, the selection and participation components capture differences in the unobserved ability between the two groups due to differences in the copula and the propensity score, respectively.
\section{Estimation}\label{sec:est}

For expositional convenience, I present the estimators of the decompositions for participants. The estimators of the decomposition for the entire population are similarly constructed using the analogy principle. Their exact expressions and their asymptotic properties are presented in Appendix~\ref{app:additional}. Throughout the entire section, the following assumptions are maintained:
\begin{assum}\label{assum:samp}
$\left(Y_{i},S_{i},D_{i},Z_{i}'\right)'$ are \textit{iid} for $i=1,...,n$, defined on the probability space $\left(\Omega,\mathfrak{F},\mathbb{P}\right)$ and take values in a compact set.
\end{assum}
\begin{assum}\label{assum:gsize}
Denote the sample size for the d-th group by $n_{d}$. It is non decreasing in the total sample size, $n\equiv\sum_{d}n_{d},$ and $\sfrac{n}{n_{d}}\overset{P}{\rightarrow}\lambda_{d}\in\left(1,\infty\right)\forall d\in\mathcal{D}$ as $n\rightarrow\infty$.
\end{assum}
\begin{assum}\label{assum:propensity}
Let $\pi_{d}\left(Z\right)\equiv \pi_{d}\left(Z;\gamma_{d}\right)$, with $\dim\left(\gamma_{d}\right)<\infty$. $\pi_{d}\left(Z;\gamma_{d}\right)$ is continuously differentiable with respect to $\gamma_{d}$. Moreover, there exists an asymptotically linear estimator $\hat{\gamma}_{d}\forall d\in\mathcal{D}$ that admits the following representation: $\hat{\gamma}_{d}-\gamma_{d}=-B_{d}^{-1}\frac{1}{n_{d}}\sum_{i=1}^{n}\mathbf{1}\left(D=d\right)\cdot b_{d}\left(S_{i},Z_{i};\gamma\right)+o_{P}\left(\frac{1}{\sqrt{n}}\right)$.
\end{assum}
\begin{assum}\label{assum:bound}
$Y$ has conditional density that is bounded from above and away from zero, a.s. on a compact set $\mathcal{Y}$. The density is given by $f_{Y|Z,S=1}^{d}\left(y|z\right)$ $\forall d\in\mathcal{D}$.
\end{assum}

Assumption~\ref{assum:samp} states the sampling process of the data. Assumption~\ref{assum:gsize} further restricts it to ensure that the number of individuals in each group converges to a fixed proportion with respect to the sample size. The propensity score is required to satisfy some mild regularity conditions given by Assumption~\ref{assum:propensity}. It ensures that its asymptotic distribution is well-behaved, and it is satisfied by many estimation methods, such as maximum likelihood. Finally, Assumption \ref{assum:bound} ensures that the dependent variable has a finite conditional density for the entirety of its support.

\subsection{Main Decompositions}

Let $\upsilon_{m}\left(z,\tau,\pi,f\right)\equiv\left(g_{j}\left(x,\tau\right),c_{k,x}\left(\tau,\pi\right),\pi_{l}\left(z\right),\int_{\mathcal{Z}}fdF_{Z}^{h}\right)$ be the vector that contains the structural functions, where $m\equiv\left(h,j,k,l\right)\in\mathcal{M}\equiv\left\{\left(h,j,k,l\right)\in\mathcal{D}^{4}\right\}$ is used to keep notation compact.\footnote{Strictly speaking, the last component of this vector is not of the structural functions, which would be $F_{Z}^{h}$. However, the former is more convenient to derive the asymptotic theory, so I refer to $\upsilon_{\ell}$ as the vector of structural functions.} The counterfactual mean outcome is estimated as:
\begin{align}\label{eq:meanyhat}
\hat{\mathbb{E}}\left[Y^{m}|S=1\right]=\frac{1}{n_{h}}\sum_{i=1}^{n}\int_{\varepsilon}^{1-\varepsilon}\hat{g}_{j}\left(X_{i},u\right)d\hat{G}_{k,x}\left(u,\hat{\pi}_{l}\left(Z_{i}\right)\right)\mathbf{1}\left(D_{i}=h\right)
\end{align}

Each of the effects is computed as $\hat{\mathbb{E}}\left[Y^{m}|S=1\right]-\hat{\mathbb{E}}\left[Y^{m'}|S=1\right]$ for the appropriate choice of $m,m'$.\footnote{An alternative approach to the one presented in this paper would be to provide bounds of the quantities of interest. \cite{Maasoumi2017} do this in a similar setting, whereas \cite{Blundell2007c} obtained estimates of changes in the distribution of wages by gender using bounds based a variety of assumptions. Unfortunately, as pointed out by the latter, the bounds on differentials tend to be much larger than the bounds on the quantities of interest. Therefore, while in principle it would be possible to bound the components presented in this paper, the width of these bounds could be too large to be informative in most empirical applications.} Similarly, the estimator of the unconditional distribution equals:
\begin{align}\label{eq:qyhat}
\hat{Q}_{Y|S=1}^{m}\left(\tau\right)=\inf\left\{y:\tau\leq \hat{F}_{Y|S=1}^{m}\left(y\right)\right\}
\end{align}
where $\hat{F}_{Y|S=1}^{m}\left(y\right)=\frac{1}{n_{h}}\sum_{i=1}^{n}\left[\varepsilon+\int_{\varepsilon}^{1-\varepsilon}\mathbf{1}\left(\hat{g}_{j}\left(X_{i},u\right)\leq y\right)d\hat{G}_{k,x}\left(u,\hat{\pi}_{l}\left(Z_{i}\right)\right)\right]\mathbf{1}\left(D_{i}=h\right)$.

Similarly, the effects of interest are computed as $\hat{Q}_{Y|S=1}^{m}\left(\tau\right)-\hat{Q}_{Y|S=1}^{m'}\left(\tau\right)$. To derive the asymptotic distribution of these estimators, let $\mathcal{U},\overline{\mathcal{P}}$ be closed subsets of $\left(0,1\right)$, and $\mathcal{F}$ denote the class of measurable functions that includes $\left\{F_{Y|Z}\left(y|z\right):y\in\mathcal{Y},z\in\mathcal{Z}\right\}$ as well as the indicators of all the rectangles in $\overline{\mathbb{R}}^{d_{z}}$, such that $\mathcal{F}$ is totally bounded under the metric $\sigma\left(f,\tilde{f}\right)=\left[\int\left(f-\tilde{f}\right)^{2}dF_{Z}\right]^{\sfrac{1}{2}}$, and denote the space of real-valued bounded functions defined on the index set by the supremum norm by $\ell^{\infty}$. The estimator $\hat{\upsilon}_{m}$ needs to satisfy the following condition:
\begin{con}\label{con:genmod}
The estimator of the components of the decomposition for the general model, $\hat{\upsilon}_{m}\left(z,\tau,\pi,f\right)$, satisfies $\sqrt{n}\left(\hat{\upsilon}_{m}\left(z,\tau,\pi,f\right)-\upsilon_{m}\left(z,\tau,\pi,f\right)\right)\Rightarrow\mathbb{Z}_{\upsilon_{m}}\left(z,\tau,\pi,f\right)$, a stochastic process in $\ell^{\infty}\left(\mathcal{Z}\mathcal{U}\overline{\mathcal{P}}\mathcal{F}\mathcal{M}\right)$, where $\mathbb{Z}_{\upsilon_{m}}\left(z,\tau,\pi,f\right)$ is a tight zero-mean Gaussian process with a.s. uniformly continuous paths is $\mathcal{Z}\mathcal{U}\overline{\mathcal{P}}\mathcal{F}\mathcal{M}$.
\end{con}

To keep notation compact, denote the difference between two counterfactual values of a statistic indexed by $m$ and $m'$ by $\Delta^{m,m'}\left(\cdot\right)$. Moreover, denote the support of $Y$ by $\mathcal{Y}$. The asymptotic distribution of the components of the means and unconditional quantile decomposition are established in the following theorem:
\begin{thm}\label{thm:asymgc}
Let the estimator $\hat{\upsilon}_{m}\left(z,\tau,\pi\right)$ satisfy Condition~\ref{con:genmod}. Under Assumptions~\ref{assum:samp}-\ref{assum:bound}, the following hold:
\begin{align*}
\sqrt{n}\Delta^{m,m'}\left(\hat{\mathbb{E}}\left[Y|S=1\right]-\mathbb{E}\left[Y|S=1\right]\right)\Rightarrow\mathbb{Z}_{\Delta^{mm'}Y|S=1}
\end{align*}
where $\mathbb{Z}_{\Delta^{mm'}Y|S=1}$ is a zero-mean Gaussian random variable $\forall m,m'\in\mathcal{M}$, defined in Appendix~\ref{app:proofs}, and
\begin{align*}
\sqrt{n}\Delta^{m,m'}\left(\hat{Q}_{Y|S=1}\left(\tau\right)-Q_{Y|S=1}\left(\tau\right)\right)\Rightarrow\mathbb{Z}_{Q|S=1,mm'}\left(\tau\right)
\end{align*}
a stochastic process in metric space $\ell^{\infty}\left(\mathcal{T}\mathcal{M}\right)$, where $\mathcal{T}\subset\left(0,1\right)$ and $\mathbb{Z}_{Q|S=1,mm'}\left(\tau\right)$ is a zero-mean tight Gaussian process, defined in Appendix~\ref{app:proofs}, with a.s. uniformly continuous paths in $\mathcal{T}\mathcal{M}$.
\end{thm}

Theorem~\ref{thm:asymgc} establishes the asymptotic gaussianity of the counterfactuals of interest when the estimator of the structural functions satisfies Condition~\ref{con:genmod}. As a consequence, one needs to consider an estimator of the structural functions for which this condition holds. One such example is the estimator proposed by \cite{Arellano2017}. This estimator is adapted to account for different groups, and it has to satisfy the following assumptions:
\begin{assum}\label{assum:linear}
$g_{d}\left(x,\tau\right)=x'\beta_{d}\left(\tau\right)$ $\forall d\in\mathcal{D}$, where $\beta_{d}$ is continuous and such that $g_{d}\left(x,\tau\right)$ is increasing in its last argument.
\end{assum}
\begin{assum}\label{assum:parcop}
$C_{d,x}\left(u,v\right)\equiv C_{d,x}\left(u,v;\theta_{d}\right)$, with $\dim\left(\theta_{d}\right)<\infty\forall d\in\mathcal{D}$. $C_{d,x}\left(u,v;\theta_{d}\right)$ is uniformly continuous and differentiable with respect to its arguments \textit{a.e.} Its density, $c_{d,x}\left(u,v;\theta_{d}\right)$, is well-defined and finite.
\end{assum}
\begin{assum}\label{assum:compact}
Let $\beta\left(\tau\right)$, $\theta$, and $\gamma$ be the concatenation of $\beta_{d}\left(\tau\right)$, $\theta_{d}$, and $\gamma_{d}$ $\forall d\in\mathcal{D}$. $\forall\tau\in\mathcal{U}$, $\left(\beta\left(\tau\right)',\theta',\gamma'\right)'\in int\mathcal{B}\times\Theta\times\Gamma$, where $\mathcal{B}\times\Theta\times\Gamma
$ is compact and convex, and $\mathcal{U}=\left[\varepsilon,1-\varepsilon\right]$, for some constant $\varepsilon$ that is used to avoid the estimation of extreme quantiles.
\end{assum}
\begin{assum}\label{assum:fullrank}
Matrices of derivatives of the moments $J_{\beta d}\left(\tau\right)$, $\tilde{J}_{\beta d}\left(\tau\right)$, $J_{\gamma d}\left(\tau\right)$, $\tilde{J}_{\gamma d}\left(\tau\right)$, $J_{\theta d}\left(\tau\right)$, $\tilde{J}_{\theta d}\left(\tau\right)$ $\forall d\in\mathcal{D}$, as defined in Appendix~\ref{app:proofs}, are continuous and have full rank, uniformly over $\mathcal{B}\times\Theta\times\Gamma\times\mathcal{U}$, and $\forall d\in\mathcal{D}$.
\end{assum}
\begin{assum}\label{assum:prop}
Denote the support of $\pi_{d}\left(Z\right)$ conditional on $X=x$ by $\mathcal{P}_{d,x}$. $\forall x\in\mathcal{X}$ and $\forall d\in\mathcal{D}$, $\mathcal{P}_{d,x}\subset\left[0,1\right]$ is an open interval, and $\overline{\mathcal{P}}$ is the closure of $\cup_{d\in\mathcal{D}}\mathcal{P}_{d,x}$.
\end{assum}

Assumption~\ref{assum:linear} is made for convenience. Despite restricting the shape of the distribution of $Y$, it does not force conditional quantile curves to be parallel to each other for different values of the covariates. Moreover, it does not suffer from the curse of dimensionality as other more flexible alternatives, such as the partially linear model \citep{Lee2003}. Assumption~\ref{assum:parcop} is also imposed for practical reasons, and it allows to use the most common parametric copulas, such as the Clayton, the Gaussian, or a Bernstein copula of a fixed order.\footnote{Note that Assumption~\ref{assum:parcop} does not provide a way to select the most appropriate copula given some data. This could be done, \textit{e.g.}, by choosing the copula that minimizes the criterion function given in Equation~\ref{eq:thetahat} whenever the candidate copulas have the same number of parameters. Alternatively, it could be selected using cross validation as in \cite{Pereda2022}.} Importantly, it allows the copula to be dependent on the covariates. Assumption~\ref{assum:compact} is a regularity condition and Assumption~\ref{assum:fullrank} ensures that the moments needed to derive the asymptotic distribution of the estimator have full rank. Lastly, Assumption~\ref{assum:prop} ensures that the participation decision is not deterministic for any individual.

The following steps describe how to compute the estimator, and how to implement it to obtain an estimate of the decompositions:
\begin{enumerate}
\item Estimate the propensity score by $\hat{\pi}_{d}\left(Z_{i}\right)\equiv\pi_{d}\left(Z_{i},\hat{\gamma}_{d}\right)$.
\item Fix a value of $t\in\Theta$. For $\forall d\in\mathcal{D}$ and $\tau\in\mathcal{U}$, compute $\hat{\beta}_{d}\left(\tau;t\right)$ as
\begin{align}\label{eq:betahat}
\hat{\beta}_{d}\left(\tau;t\right)\equiv\arg\min_{b\in\mathcal{B}}\sum_{i=1}^{n}\mathbf{1}\left(D_{i}=d\right)S_{i}\rho_{G_{d,x}\left(\tau,\hat{\pi}_{d}\left(Z_{i}\right);t\right)}\left(Y_{i}-X_{i}'b\right)
\end{align}
where $\rho_{u}\left(x\right)\equiv xu\mathbf{1}\left(x\geq 0\right)-\left(1-u\right)x\mathbf{1}\left(x<0\right)$ denotes the check function.
\item Estimate the copula parameters $\forall d\in\mathcal{D}$ by minimizing over $t\in\Theta$:
\begin{align}\label{eq:thetahat}
\hat{\theta}_{d}\equiv\arg\min_{t\in\Theta}\left\|\sum_{i=1}^{n}\int_{\varepsilon}^{1-\varepsilon}\mathbf{1}\left(D_{i}=d\right)S_{i}\varphi\left(\tau,Z_{i}\right)\left[\mathbf{1}\left(Y_{i}\leq X_{i}'\hat{\beta}_{d}\left(\tau;t\right)\right)-G_{d,x}\left(\tau,\hat{\pi}\left(Z_{i}\right);t\right)\right]d\tau\right\|
\end{align}
where $\varphi\left(\tau,Z_{i}\right)$ is an instrument function.\footnote{\textit{E.g.}, a polynomial of the propensity score \citep{Arellano2017}.}
\item The slope parameters are obtained by $\hat{\beta}_{d}\left(\tau\right)\equiv\hat{\beta}_{d}\left(\tau;\hat{\theta}_{d}\right)$ $\forall d\in\mathcal{D}$.
\item The SQF and the copula are respectively given by $\hat{g}_{d}\left(x,\tau\right)=x'\hat{\beta}_{d}\left(\tau\right)$ and $\hat{C}_{d,x}\left(\tau,\pi\right)=C_{d,x}\left(\tau,\pi;\hat{\theta}_{d}\right)$.
\end{enumerate}

Step 1 is standard and can be done, \textit{e.g.}, by logit or probit. Step 2 is a rotated quantile regression conditional on a particular value of the copula. Note that in practice one needs to set a grid of values of $\tau$, such as $\tau=\left\{0.01,...,0.99\right\}$. The third step is computationally expensive, as it involves the minimization over a non-convex space. The most common parametric copulas depend on few parameters, so that a grid search may be feasible. For Bernstein copulas, one can use the algorithm proposed in \cite{Pereda2022}. The last two steps are immediate, and they yield the slope parameters, which are those estimated in step 2 for the value of the copula estimated in step 3, the SQF and the copula.

The following theorem establishes the uniform asymptotic distribution of the \cite{Arellano2017} estimator:
\begin{thm}\label{thm:asym}
Let $\hat{\vartheta}_{m}\left(\tau\right)\equiv\left(\hat{\beta}_{m}\left(\tau\right)',\hat{\theta}_{m}',\hat{\gamma}_{m}'\right)'$. Under Assumptions~\ref{assum:samp}-\ref{assum:prop}, their joint asymptotic distribution is given by $\sqrt{n}\left(\hat{\vartheta}_{m}\left(\tau\right)-\vartheta_{m}\left(\tau\right)\right)\Rightarrow \mathbb{Z}_{\vartheta_{m}}\left(\tau\right)$, a stochastic process in $\ell^{\infty}\left(\mathcal{U}\mathcal{M}\right)$, where $\mathbb{Z}_{\vartheta_{m}}\left(\tau\right)$ is a zero-mean tight Gaussian process with a.s. uniformly continuous paths in $\mathcal{U}\mathcal{M}$ and covariance function $\Sigma_{\vartheta_{m}}\left(\tau,\tau'\right)$, where:
\begin{align*}
&\Sigma_{\vartheta_{m}}\left(\tau,\tau'\right)=H_{m}\left(\tau\right)\Sigma_{R_{m}}\left(\tau,\tau'\right)H_{m'}\left(\tau'\right)'\\
&H_{m}\left(\tau\right)=F_{m}^{I}\left(\tau\right)\left[C_{m}\left(\tau\right)+\left(I-\int_{\varepsilon}^{1-\varepsilon}D_{m}\left(u\right)F_{m}^{I}\left(u\right)du\right)^{-1}\int_{\varepsilon}^{1-\varepsilon}D_{m}\left(u\right)F_{m}^{I}\left(u\right)C_{m}\left(u\right)du\right]\\
&\Sigma_{R_{m}}\left(\tau,\tau'\right)=\mathbb{E}\left[\mathbb{Z}_{R_{m}}\left(\tau\right)\mathbb{Z}_{R_{m'}}\left(\tau'\right)'\right]
\end{align*}
and functions $C_{m}\left(\tau\right)$, $D_{m}\left(\tau\right)$, $F_{m}^{I}\left(\tau\right)$ and $\mathbb{Z}_{R_{m}}\left(\tau\right)$ are defined in Appendix~\ref{app:proofs}.
\end{thm}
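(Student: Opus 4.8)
The plan is to treat $\hat\vartheta_d(\tau)=(\hat\beta_d(\tau)',\hat\theta_d',\hat\gamma_d')'$ as the solution of a stacked system of (generally non-smooth) estimating equations and to derive a uniform-in-$\tau$ Bahadur representation of the form $\sqrt{n_d}(\hat\vartheta_d(\tau)-\vartheta_d(\tau))=H_d(\tau)\,\nu_{n,d}(\tau)+o_P(1)$, where $\nu_{n,d}$ is a centered empirical process built from the stage-wise scores that converges weakly to $\mathbb{Z}_{R_d}$. Since each group's parameters are estimated from the group-$d$ observations, I would work first on the $\sqrt{n_d}$ scale and convert to the $\sqrt{n}$ normalization at the end using Assumption~\ref{assum:gsize}, which supplies the factors $\sqrt{p_d p_{d'}}$. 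The argument follows \cite{Arellano2017}, adapted to accommodate the group index and the joint treatment of the propensity-score, quantile-regression, and copula stages.

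I would proceed stage by stage. The propensity-score estimator is handled directly by Assumption~\ref{assum:propensity}, which already furnishes its asymptotically linear representation; since $\pi_d(z;\gamma)$ is continuously differentiable in $\gamma$, this error enters the later stages smoothly through the Jacobians $J_{\gamma d}(\tau)$ and $\tilde J_{\gamma d}(\tau)$. For the rotated quantile regression of Step 2, I would first establish consistency of $\hat\beta_d(\tau;t)$ uniformly in $(\tau,t)$ using convexity of the check-function objective together with the compactness and identification supplied by Assumptions~\ref{assum:compact} and \ref{assum:fullrank}. The key analytic step is to linearize the non-smooth sample moment $\frac{1}{n_d}\sum_i \mathbf{1}(d_i=d)s_i x_i[\mathbf{1}(y_i\le x_i'b)-G_{d,x}(\tau,\pi_d(z_i;\gamma);t)]$. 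Its population counterpart is smooth because $Y$ has a bounded conditional density (Assumption~\ref{assum:bound}), so differentiation yields the Jacobian $J_{\beta d}(\tau)$, whose inverse is $F_d^I(\tau)$, together with the cross-derivatives $J_{\gamma d}$ and $J_{\theta d}$ that transmit the first-stage and copula errors. To pass from the smooth population expansion to the estimator I would invoke stochastic equicontinuity: the relevant class of half-space indicators composed with the smooth index $x'b$ and the smooth maps $(\gamma,t)\mapsto G_{d,x}$ is Donsker under Assumptions~\ref{assum:samp} and \ref{assum:compact}, which lets me replace the discontinuous empirical term by its expectation plus an asymptotically linear remainder. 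An entirely parallel minimum-distance argument, using the integrated moment in Equation~\ref{eq:thetahat} and the derivative matrices $\tilde J_{\beta d}$, $\tilde J_{\gamma d}$, $\tilde J_{\theta d}$, delivers the linear representation of $\hat\theta_d$; Assumption~\ref{assum:fullrank} guarantees the relevant matrices are invertible, so that $\hat\theta_d$ is regular.

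The final step is to chain the stages. Substituting $\hat\theta_d$ into $\hat\beta_d(\tau)=\hat\beta_d(\tau;\hat\theta_d)$ couples the two inner stages: because $\hat\theta_d$ is obtained by integrating the stage-2 residuals over $\tau$, its influence function is itself an integral functional of $\{\hat\beta_d(\cdot;t)\}$, and feeding it back produces a fixed-point correction. Solving that fixed point is precisely what generates the term $\bigl(I-\int_\varepsilon^{1-\varepsilon}D_d(u)F_d^I(u)\,du\bigr)^{-1}\int_\varepsilon^{1-\varepsilon}D_d(u)F_d^I(u)C_d(u)\,du$ inside $H_d(\tau)$, with $C_d(\tau)$ collecting the direct stage-2 score together with the propensity-score correction and $D_d$ the derivative with respect to $\theta$. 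Stacking all contributions into the score process $\mathbb{Z}_{R_d}(\tau)$ and applying a functional central limit theorem (justified by the Donsker property just established) gives joint weak convergence in $\ell^\infty(\mathcal{T})$; tightness and continuity of the limit follow from continuity of the Jacobians (Assumption~\ref{assum:fullrank}). The covariance $\Sigma_{\vartheta_d}(\tau,\tau')=\sqrt{p_d p_{d'}}\,H_d(\tau)\Sigma_{R_d}(\tau,\tau')H_{d'}(\tau')'$ then follows by the continuous mapping theorem together with the $\sqrt{n/n_d}\to\sqrt{p_d}$ rescaling.

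I expect the main obstacle to be the simultaneous treatment of the non-smoothness and the cross-stage feedback. Obtaining a representation that is uniform in $\tau$ despite the discontinuous subgradient of the check function requires controlling an empirical process indexed jointly by $(\tau,b,t,\gamma)$, so the usual single-quantile Bahadur arguments must be upgraded to hold uniformly; this is where the Donsker/VC structure and Assumption~\ref{assum:bound} do the heavy lifting. Intertwined with this, verifying that the operator $I-\int_\varepsilon^{1-\varepsilon}D_d(u)F_d^I(u)\,du$ is invertible, so that the fixed-point correction in $H_d$ is well defined, is the delicate point, and I would tie it to the full-rank condition in Assumption~\ref{assum:fullrank}.
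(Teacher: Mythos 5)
Your proposal is correct and follows essentially the same route as the paper: uniform consistency plus Koenker--Bassett-type subgradient conditions for the rotated quantile regression, a Donsker/stochastic-equicontinuity linearization of the non-smooth moments (Lemma~\ref{lem:se}), chaining of the propensity-score, quantile-regression and copula stages through the Jacobians $J_{\beta d}$, $J_{\gamma d}$, $J_{\theta d}$, $\tilde{J}_{\beta d}$, $\tilde{J}_{\gamma d}$, $\tilde{J}_{\theta d}$, and exactly the fixed-point correction you describe, which the paper formalizes as a Fredholm integral equation of the second kind solved by a Liouville--Neumann series (Lemma~\ref{lem:fredholm}), with the invertibility of $I-\int_{\varepsilon}^{1-\varepsilon}D_{d}\left(u\right)F_{d}^{I}\left(u\right)du$ imposed as a condition of that lemma rather than derived from Assumption~\ref{assum:fullrank}. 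Your only slip is notational and harmless: in the paper $F_{d}^{I}\left(\tau\right)=\left(I-F_{d}\left(\tau\right)\right)^{-1}$ is the resolvent of the contemporaneous coupling of $\hat{\beta}_{d}\left(\tau\right)$ with $\left(\hat{\theta}_{d},\hat{\gamma}_{d}\right)$, while $J_{\beta d}\left(\tau\right)^{-1}$ sits inside the score-loading matrix $C_{d}\left(\tau\right)$, so $F_{d}^{I}$ is not the inverse of $J_{\beta d}$.
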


It remains to verify that it satisfies Condition~\ref{con:genmod}. This is done in the following corollary:
\begin{cor}\label{cor:asym}
Let $\hat{g}_{d}\left(x,\tau\right)=x'\hat{\beta}_{d}\left(\tau\right)$, $\hat{c}_{d,x}\left(\tau,\pi\right)=c_{d,x}\left(\tau,\pi;\hat{\theta}_{d}\right)$, $\hat{\pi}_{d}\left(z\right)=\pi_{d}\left(z;\hat{\gamma}_{d}\right)$ and $\hat{F}_{Z}^{d}\left(z\right)=\frac{1}{n_{d}}\sum_{i=1}^{n}\mathbf{1}\left(D_{i}=d\right)\mathbf{1}\left(Z_{i}\leq z\right)$. They satisfy Condition~\ref{con:genmod}.
\end{cor}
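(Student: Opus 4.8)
The plan is to verify Condition~\ref{con:genmod} by treating the four entries of $\hat{\upsilon}_{\ell}$ separately and then establishing their joint weak convergence. The first three entries — $\hat{g}_{k}(x,\tau)=x'\hat{\beta}_{k}(\tau)$, $\hat{c}_{l,x}(\tau,\eta)=c_{l,x}(\tau,\eta;\hat{\theta}_{l})$ and $\hat{\pi}_{m}(z)=\pi_{m}(z;\hat{\gamma}_{m})$ — are all known, smooth transformations of the finite-dimensional parameter process $\hat{\vartheta}_{d}(\tau)=(\hat{\beta}_{d}(\tau)',\hat{\theta}_{d}',\hat{\gamma}_{d}')'$, whose joint (over $d$ and $\tau$) weak convergence to the zero-mean Gaussian process $\mathbb{Z}_{\vartheta_{d}}(\tau)$ is delivered by Theorem~\ref{thm:asym}. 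The fourth entry, $\int_{\mathcal{Z}}f\,d\hat{F}_{Z}^{h}$, is a sample average and is handled separately by an empirical-process argument.

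First I would dispatch the three parametric entries with the functional delta method. The SQF entry is linear, so $\sqrt{n}(x'\hat{\beta}_{k}(\tau)-x'\beta_{k}(\tau))=x'\sqrt{n}(\hat{\beta}_{k}(\tau)-\beta_{k}(\tau))\Rightarrow x'\mathbb{Z}_{\beta_{k}}(\tau)$, uniformly over $x$ in the compact support of Assumption~\ref{assum:samp} and $\tau\in\mathcal{T}$, where $\mathbb{Z}_{\beta_{k}}$ is the corresponding block of $\mathbb{Z}_{\vartheta_{d}}$. For the copula density, Assumption~\ref{assum:parcop} guarantees differentiability of $c_{l,x}(\tau,\eta;\cdot)$, so a first-order expansion gives $\sqrt{n}(\hat{c}_{l,x}(\tau,\eta)-c_{l,x}(\tau,\eta))=\partial_{\theta}c_{l,x}(\tau,\eta;\theta_{l})'\sqrt{n}(\hat{\theta}_{l}-\theta_{l})+o_{P}(1)\Rightarrow\partial_{\theta}c_{l,x}(\tau,\eta;\theta_{l})'\mathbb{Z}_{\theta_{l}}$; the propensity-score entry is treated identically using the continuous differentiability in Assumption~\ref{assum:propensity}. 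Compactness of the parameter space (Assumption~\ref{assum:compact}) and of the covariate support ensures the derivatives are bounded, so these expansions hold uniformly over the index set and the limits are Gaussian as continuous linear images of $\mathbb{Z}_{\vartheta_{d}}$.

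Next I would treat the empirical measure. Writing $\int_{\mathcal{Z}}f\,d\hat{F}_{Z}^{h}=n_{h}^{-1}\sum_{i}\mathbf{1}(d_{i}=h)f(Z_{i})$, Assumption~\ref{assum:samp} (iid sampling on a compact set) together with Assumption~\ref{assum:gsize} yields, by the functional central limit theorem, $\sqrt{n}(\int f\,d\hat{F}_{Z}^{h}-\int f\,dF_{Z}^{h})\Rightarrow\mathbb{G}_{h}(f)$, a zero-mean Gaussian process indexed by $f$ over the relevant Donsker class; the compact support makes the indicator and smooth-integrand classes Donsker. The influence function of this entry is simply $\mathbf{1}(d_{i}=h)(f(Z_{i})-\int f\,dF_{Z}^{h})$, rescaled by the group proportion $p_{h}$.

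The only genuine work, and the step I expect to be the main obstacle, is the joint weak convergence of all four entries, since they are built from the same sample. For this I would collect the asymptotically linear (Bahadur) representations: $\hat{\vartheta}_{d}$ admits the influence-function representation underlying Theorem~\ref{thm:asym} (encoded in $\mathbb{Z}_{R_{d}}$ via the moment conditions defining $\hat{\beta}_{d}$ and $\hat{\theta}_{d}$ in Equations~\ref{eq:betahat}-\ref{eq:thetahat} and Assumption~\ref{assum:propensity}), while the empirical-measure entry is trivially asymptotically linear. Stacking these influence functions and verifying that the combined class is Donsker, a joint functional CLT delivers simultaneous weak convergence of the stacked, centered and $\sqrt{n}$-scaled quantities to a single zero-mean Gaussian process, with cross-covariances determined by the covariances of the influence functions (generally nonzero because $\hat{\vartheta}_{d}$ and $\hat{F}_{Z}^{h}$ share the same data). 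Applying the continuous mapping and delta method to this joint limit, the four entries of $\sqrt{n}(\hat{\upsilon}_{\ell}-\upsilon_{\ell})$ converge jointly to a zero-mean Gaussian process $\mathbb{Z}_{\upsilon_{\ell}}$, which is exactly Condition~\ref{con:genmod}. The delicate points are the uniformity of the delta-method expansions over the full index $(z,\tau,\eta,x)$ and the Donsker property of the stacked influence functions; both are underwritten by the compactness and smoothness imposed in Assumptions~\ref{assum:samp}, \ref{assum:propensity}, \ref{assum:parcop} and \ref{assum:compact}.
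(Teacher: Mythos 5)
Your proposal is correct and follows essentially the same route as the paper: the functional delta method applied to the joint parameter process $\hat{\vartheta}_{d}\left(\tau\right)$ from Theorem~\ref{thm:asym}, with the block Jacobian $\left(x',\nabla_{\theta_{l}}c_{l,x},\nabla_{\gamma_{m}}\pi_{m}\right)$ for the first three entries, plus an empirical-process limit for $\int_{\mathcal{Z}}f\,d\hat{F}_{Z}^{h}$, which the paper obtains by citing Lemma~E.4 of \cite{Chernozhukov2013} rather than rederiving the Donsker property as you do. Your extra attention to the joint convergence of the stacked influence functions is a more explicit treatment of a step the paper compresses into ``taking these together,'' but it is the same argument, not a different one.
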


\subsection{Inference}

The expressions of the asymptotic variance of the different estimators are complex and depend on several density functions. Therefore, using resampling methods to obtain standard errors is preferable to obtaining closed-form expressions. In this paper I consider the weighted bootstrap \citep{Ma2005}. The following assumption defines the weights used to obtain all the bootstrap estimates:

\begin{assum}\label{assum:weights}
Let $W_{i}$ be an \textit{iid}\ sample of positive weights, such that $\mathbb{E}\left(W_{i}\right)=1$, $Var\left(W_{i}\right)=\omega_{0}>0$ and is independent of $\left(Y_{i},D_{i},S_{i},Z_{i}'\right)'$ for $i=1,...,n$.
\end{assum}

For the \cite{Arellano2017} estimator, the weighted bootstrap is implemented as follows:
\begin{itemize}
\item For each repetition $r=1,...,R$, draw the weights $W_{i,r}$ for $i=1,...,n$ that satisfy Assumption~\ref{assum:weights}.
\item Estimate the propensity score using the weights for each observation. Denote the estimate by $\hat{\pi}_{d,r}^{*}\left(z_{i}\right)\equiv\pi_{d}\left(z_{i},\hat{\gamma}_{d,r}^{*}\right)$.
\item Estimate the slope and copula parameters by adding the weights to Equations~\ref{eq:betahat}-\ref{eq:thetahat}:
\begin{align*}
&\hat{\beta}_{d,r}^{*}\left(\tau;t\right)\equiv\arg\min_{b\in\mathcal{B}}\sum_{i=1}^{n}W_{i,r}\mathbf{1}\left(D_{i}=d\right)S_{i}\rho_{G_{d,x}\left(\tau,\hat{\pi}_{d,r}^{*}\left(Z_{i}\right);t\right)}\left(Y_{i}-X_{i}'b\right)\\
&\hat{\theta}_{d,r}^{*}\equiv\arg\min_{t\in\Theta}\left\|\sum_{i=1}^{n}\int_{\varepsilon}^{1-\varepsilon}W_{i,r}\mathbf{1}\left(D_{i}=d\right)S_{i}\varphi\left(\tau,Z_{i}\right)\left[\mathbf{1}\left(Y_{i}\leq X_{i}'\hat{\beta}_{d,r}^{*}\left(\tau;t\right)\right)-G_{d,x}\left(\tau,\hat{\pi}_{d,r}^{*}\left(Z_{i}\right);t\right)\right]d\tau\right\|\\
&\hat{\beta}_{d,r}^{*}\left(\tau\right)\equiv\hat{\beta}_{d,r}^{*}\left(\tau;\hat{\theta}_{d,r}^{*}\right)
\end{align*}
\item Estimate the counterfactual mean outcomes and unconditional distributions as:
\begin{align*}
&\hat{\mathbb{E}}^{*}\left[Y_{r}^{m}|S=1\right]=\frac{1}{n_{h}}\sum_{i=1}^{n}W_{i,r}\int_{\varepsilon}^{1-\varepsilon}\hat{g}_{j,r}^{*}\left(X_{i},u\right)d\hat{G}_{k,x,r}^{*}\left(u,\hat{\pi}_{l,r}^{*}\left(Z_{i}\right)\right)\mathbf{1}\left(D_{i}=h\right)\\
&\hat{F}_{Y_{r}|S=1}^{\ell,*}\left(y\right)=\frac{1}{n_{h}}\sum_{i=1}^{n}W_{i,r}\left[\varepsilon+\int_{\varepsilon}^{1-\varepsilon}\mathbf{1}\left(\hat{g}_{j,r}^{*}\left(X_{i},u\right)\leq y\right)d\hat{G}_{k,x,r}\left(u,\hat{\pi}_{l,r}^{*}\left(Z_{i}\right)\right)\right]\mathbf{1}\left(D_{i}=h\right)
\end{align*}
where $\hat{g}_{d,r}^{*}\left(x,u\right)\equiv x'\hat{\beta}_{d,r}^{*}\left(u\right)$ and $\hat{G}_{d,x,r}^{*}\left(u,\pi\right)\equiv G_{d,x}\left(u,\pi;\hat{\theta}_{d,r}^{*}\right)$.
\item The remaining counterfactual unconditional quantile function and the components of the decomposition are computed as described in the text. They are denoted by $\hat{Q}_{Y_{r}|S=1}^{m,*}\left(\tau\right)$, $\Delta^{m,m'}\hat{\mathbb{E}}^{*}\left[Y_{r}|S=1\right]$, and $\Delta^{m,m'}\hat{Q}_{Y_{r}|S=1}^{*}\left(\tau\right)$.
\item Once all $r$ estimates have been obtained, estimate the variance of each of the statistics as $\frac{q_{0.75}\left(\tau\right)-q_{0.25}\left(\tau\right)}{z_{0.75}-z_{0.25}}$, where $z_{p}$ is the $p$-th quantile of the standard normal distribution, and $q_{p}\left(\tau\right)$ is the $p$-th quantile of the distribution of the statistic, for $r=1,...,R$.
\end{itemize}

This bootstrap estimator of the variance is based on the one presented in \cite{Chernozhukov2013}. Even though it is possible to use the variance of the estimator across repetitions of the bootstrap to obtain the standard errors, it would require additional conditions for it to be valid \citep{Kato2011}. This estimator only requires that the bootstrap converges in distribution to the asymptotic distribution of the sample estimator, which is established in the following theorem:
\begin{thm}\label{thm:bootstrap}
Under Assumptions~\ref{assum:samp}-\ref{assum:weights}, the weighted bootstrap estimators are denoted by $\hat{\vartheta}_{m,r}^{*}\left(\tau\right)$, $\Delta^{m,m'}\hat{\mathbb{E}}^{*}\left[Y_{r}|S=1\right]$, and $\Delta^{m,m'}\hat{Q}_{Y_{r}|S=1}^{*}\left(\tau\right)$. They consistently estimate the limiting laws of $\hat{\vartheta}_{m}\left(\tau\right)$, $\Delta^{m,m'}\hat{\mathbb{E}}\left[Y|S=1\right]$, and $\Delta^{m,m'}\hat{Q}_{Y|S=1}\left(\tau\right)$. Moreover,
\begin{align*}
\sqrt{\frac{n}{\omega_{0}}}\left(\hat{\vartheta}_{m,r}^{*}\left(\tau\right)-\hat{\vartheta}_{m}\left(\tau\right)\right)&\Rightarrow\mathbb{Z}_{\vartheta_{m}}\left(\tau\right)
\end{align*}
a stochastic process in metric space $\ell^{\infty}\left(\mathcal{U}\mathcal{M}\right)$,
\begin{align*}
\sqrt{\frac{n}{\omega_{0}}}\left(\Delta^{m,m'}\hat{\mathbb{E}}^{*}\left[Y_{r}|S=1\right]-\Delta^{m,m'}\hat{\mathbb{E}}\left[Y|S=1\right]\right)&\Rightarrow\mathbb{Z}_{\Delta^{mm'}Y|S=1}
\end{align*}
$\forall m,m'\in\mathcal{M}$,
\begin{align*}
\sqrt{\frac{n}{\omega_{0}}}\left(\Delta^{m,m'}\hat{Q}_{Y_{r}|S=1}^{*}\left(\tau\right)-\Delta^{m,m'}\hat{Q}_{Y|S=1}\left(\tau\right)\right)&\Rightarrow\mathbb{Z}_{Q|S=1,mm'}\left(\tau\right)
\end{align*}
a stochastic process in metric space $\ell^{\infty}\left(\mathcal{T}\mathcal{M}\right)$,
\end{thm}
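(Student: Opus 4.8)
The plan is to establish bootstrap consistency in two stages, mirroring the architecture of the proofs of Theorems~\ref{thm:asym} and~\ref{thm:asymgc}: first for the base QRS parameter vector $\hat{\vartheta}_{d}(\tau)$, and then propagating the result to the decomposition functionals. The notion of consistency to be verified is conditional weak convergence given the data, in (outer) probability. Throughout, the workhorse is the conditional multiplier central limit theorem \citep{Ma2005}: for a Donsker class $\mathcal{F}$, the weighted empirical process $\frac{1}{\sqrt{n}}\sum_{i=1}^{n}(w_{i}-1)\{f(Z_{i})-\mathbb{E}f\}$ converges conditionally on the data to $\sqrt{\omega_{0}}$ times the same tight Gaussian limit as the centered empirical process that drives Theorem~\ref{thm:asym}. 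This is precisely what reproduces the limit $\mathbb{Z}_{\vartheta_{d}}(\tau)$ after rescaling by $\sqrt{n/\omega_{0}}$, and it accounts for the factor $\omega_{0}$ appearing in the statement.

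First I would treat the propensity score. Assumption~\ref{assum:propensity} supplies an asymptotically linear representation with influence function $r_{d}$; inserting the weights $w_{i,j}$ into this step preserves linearity, so $\sqrt{n/\omega_{0}}(\hat{\gamma}_{d,j}^{*}-\hat{\gamma}_{d})$ converges conditionally to the same law as $\sqrt{n}(\hat{\gamma}_{d}-\gamma_{d})$ by the multiplier CLT applied to the class generated by $r_{d}$. Second, for the rotated quantile-regression step~\ref{eq:betahat}, the objective is convex in $b$ for each fixed $t$; I would reproduce the Bahadur-type expansion underlying Theorem~\ref{thm:asym} in the bootstrap world, using a stochastic-equicontinuity argument for the weighted subgradient process indexed by $(\tau,b,t)$ with the plugged-in $\hat{\pi}_{d,j}^{*}$, together with convexity (a Hjort--Pollard-type argument) to pass from pointwise to uniform-in-$\tau$ control. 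The matrices $J_{\beta d},\tilde{J}_{\beta d},J_{\theta d},\ldots$ of Assumption~\ref{assum:fullrank} are estimated consistently and continue to invert the score across bootstrap draws. Third, the copula GMM step~\ref{eq:thetahat} is a $Z$-estimator in $t$ built from the already-linearized $\hat{\beta}_{d,j}^{*}$; chaining its linearization onto the previous two yields a joint conditional linear representation of $\sqrt{n/\omega_{0}}(\hat{\vartheta}_{d,j}^{*}-\hat{\vartheta}_{d})$ with the same $H_{d}(\tau)\mathbb{Z}_{R_{d}}$ structure as in Theorem~\ref{thm:asym}, hence the same limit $\mathbb{Z}_{\vartheta_{d}}(\tau)$.

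With the base estimator in hand, the decomposition components follow by the bootstrap version of the functional delta method, as in the framework of \cite{Chernozhukov2013}. Corollary~\ref{cor:asym} already exhibits $\hat{\upsilon}_{\ell}$ as a Hadamard-differentiable transformation of $\hat{\vartheta}_{d}$ and the empirical measure $\hat{F}_{Z}^{d}$, and Theorem~\ref{thm:asymgc} gives the maps from $\upsilon_{\ell}$ to the mean and quantile counterfactuals. For the mean statistic the integrand map is linear, so conditional convergence transfers directly to $\sqrt{n/\omega_{0}}(\Delta^{\ell,\ell'}\hat{\mathbb{E}}^{*}[Y_{j}|S=1]-\Delta^{\ell,\ell'}\hat{\mathbb{E}}[Y|S=1])\Rightarrow\mathbb{Z}_{\Delta^{\ell\ell'}Y|S=1}$; the weighting of $\hat{F}_{Z}^{d}$ is absorbed by the same multiplier CLT applied to the Donsker class $\{\mathbf{1}(Z_{i}\le z)\}$. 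For the unconditional quantile I would invoke Hadamard differentiability of the inverse map at a continuously increasing distribution function, guaranteed by the density bounds of Assumption~\ref{assum:bound}, yielding $\sqrt{n/\omega_{0}}\,\Delta^{\ell,\ell'}(\hat{Q}_{Y_{j}|S=1}^{*}(\tau)-\hat{Q}_{Y|S=1}(\tau))\Rightarrow\mathbb{Z}_{Q|S=1,\ell\ell'}(\tau)$ uniformly in $\tau$. Finally, the IQR-based variance estimator is consistent because conditional weak convergence to a Gaussian (hence continuous, strictly increasing) limit implies convergence of the bootstrap inter-quantile range by the continuous-mapping theorem for quantiles, exactly the argument of \cite{Chernozhukov2013}; this also explains why only distributional consistency, and not finite bootstrap moments \citep{Kato2011}, is required.

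The main obstacle I expect is the simultaneous treatment of the two non-smooth steps---the check-function objective~\ref{eq:betahat} and the indicator-based moment~\ref{eq:thetahat}---under random weights and with the estimated nuisances $\hat{\pi}_{d,j}^{*}$ and $\hat{\theta}_{d,j}^{*}$ entering inside the indicators. Establishing stochastic equicontinuity of the weighted empirical processes indexed jointly by $(\tau,b,t,\gamma)$, uniformly over the compact sets of Assumption~\ref{assum:compact}, and checking that the conditional fluctuations of the weights do not destroy the Donsker property, is the delicate technical core; everything else reduces to chaining the linear representations and applying the delta method.
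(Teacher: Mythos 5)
Your proposal follows essentially the same route as the paper's proof: the paper re-derives the linearization of Theorem~\ref{thm:asym} with the weights inserted (yielding the representation driven by $\mathbb{G}_{n}^{*}\psi_{d}$), subtracts the unweighted expansion so that the difference $\sqrt{n}\left(\hat{\vartheta}_{d,j}^{*}\left(\tau\right)-\hat{\vartheta}_{d}\left(\tau\right)\right)$ is driven by $\frac{1}{\sqrt{n}}\sum_{i=1}^{n}\left(w_{i}-1\right)\psi_{d}\left(\tau\right)$, applies the multiplier CLT under Assumption~\ref{assum:weights} to obtain the $\sqrt{\omega_{0}}$ rescaling, and then propagates to the mean and quantile components via the functional delta method together with Theorem~\ref{thm:asymgc} --- exactly the two-stage architecture you describe. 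Your additional remarks on stochastic equicontinuity of the weighted processes and on the IQR variance estimator correctly identify the ingredients the paper leaves implicit (via Lemma~\ref{lem:se} and the argument of \cite{Chernozhukov2013}), so there is no substantive gap.
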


On top of providing uniform confidence bands for the functionals of interest and the intermediate functions, the weighted bootstrap can be used to carry out uniform inference using, \textit{e.g.}, a Kolmogorov-Smirnov test to any of the components of the decomposition of the unconditional quantiles. For example, one could test the null hypothesis that one of the components equals a specific value, $\Delta^{mm'}Q_{Y|S=1}\left(\tau\right)$. The test statistic would be given by
\begin{align*}
KS_{n}=\sup_{\tau\in\mathcal{T}}\sqrt{n}\hat{\Sigma}_{Q|S=1,mm'}\left(\tau\right)^{-\sfrac{1}{2}}\left|\Delta^{m,m'}\left(\hat{Q}_{Y|S=1}\left(\tau\right)-Q_{Y|S=1}\left(\tau\right)\right)\right|
\end{align*}
where $\hat{\Sigma}_{Q|S=1,mm'}\left(\tau\right)$ is an estimator of the asymptotic variance of $\Delta^{m,m'}\hat{Q}_{Y|S=1}\left(\tau\right)$, such as the one proposed in the weighted bootstrap algorithm. The critical value $c_{1-\alpha}$ would be $1-\alpha$ quantile of the distribution of the bootstrapped of the $KS_{n}$ statistic. Similar uniform confidence bands can be constructed for other functionals of interest.

\subsection{Ancillary Decompositions}

The ancillary decompositions are also based on the vector $\upsilon_{\ell}\left(z,\tau,\pi,f\right)$. The counterfactual values of the mean propensity score and the mean value of the unobservables are given by
\begin{align}
\hat{\mathbb{E}}\left[\pi^{m}|S=1\right]&=\frac{1}{n_{h}}\sum_{i=1}^{n}\hat{\pi}_{l}\left(Z_{i}\right)\mathbf{1}\left(D_{i}=h\right)\label{eq:pihat}\\
\hat{\mathbb{E}}\left[U^{m}|S=1\right]&=\frac{1}{n_{h}}\sum_{i=1}^{n}\int_{\varepsilon}^{1-\varepsilon}ud\hat{G}_{k,x}\left(u,\hat{\pi}_{l}\left(Z_{i}\right)\right)\mathbf{1}\left(D_{i}=h\right)\label{eq:uhat}
\end{align}

The asymptotic distribution of the components of the two ancillary distributions is established in the following theorem:
\begin{thm}\label{thm:asymanc}
Let the estimator $\hat{\upsilon}_{m}\left(z,\tau,\pi\right)$ satisfy Condition~\ref{con:genmod}. Under Assumptions~\ref{assum:samp}-\ref{assum:bound}, the following hold for all $\left(mm'\right)$:
\begin{align*}
&\sqrt{n}\Delta^{m,m'}\left(\hat{\mathbb{E}}\left[\pi|S=1\right]-\mathbb{E}\left[\pi|S=1\right]\right)\Rightarrow\mathbb{Z}_{\Delta^{mm'}\pi|S=1}\\
&\sqrt{n}\Delta^{m,m'}\left(\hat{\mathbb{E}}\left[U|S=1\right]-\mathbb{E}\left[U|S=1\right]\right)\Rightarrow\mathbb{Z}_{\Delta^{mm'}U|S=1}
\end{align*}
where $\mathbb{Z}_{\Delta^{\ell\ell'}\pi|S=1}$ and $\mathbb{Z}_{\Delta^{\ell\ell'}U|S=1}$ are zero-mean Gaussian random variables.
\end{thm}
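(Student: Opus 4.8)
The plan is to treat each ancillary statistic as a smooth functional of the primitive vector $\upsilon_{\ell}$ and then transport the functional central limit theorem of Condition~\ref{con:genmod} through it by the functional delta method, exactly as in the proof of Theorem~\ref{thm:asymgc} but with simpler integrands. First I would write the two population targets as functionals of $\upsilon_{\ell}$: the participation statistic is $\mathbb{E}\left[\pi^{\ell}|S=1\right]=\int_{\mathcal{Z}}\pi_{m}\left(z\right)dF_{Z}^{h}\left(z\right)$, which depends on $\upsilon_{\ell}$ only through the propensity-score and covariate-distribution components, and the self-selection statistic is $\mathbb{E}\left[U^{\ell}|S=1\right]=\int_{\mathcal{Z}}\int_{\varepsilon}^{1-\varepsilon}u\,dG_{l,x}\left(u,\pi_{m}\left(z\right)\right)dF_{Z}^{h}\left(z\right)$, which has exactly the form of the mean-outcome functional in Equation~\ref{eq:meanys} with the kernel $g_{k}\left(x,u\right)$ replaced by the fixed, bounded, continuous function $u$. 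Because $u$ does not depend on the estimated slope process, the channel of variation coming from $\hat{\beta}$ in Theorem~\ref{thm:asymgc} drops out entirely, so the self-selection statistic depends on $\upsilon_{\ell}$ only through $\left(c_{l,x},\pi_{m},F_{Z}^{h}\right)$ and the participation statistic only through $\left(\pi_{m},F_{Z}^{h}\right)$.

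Next I would decompose each plug-in estimator into an empirical-measure term and an estimation-error term. For the self-selection statistic,
\begin{align*}
\hat{\mathbb{E}}\left[U^{\ell}|S=1\right]-\mathbb{E}\left[U^{\ell}|S=1\right]&=\frac{1}{n_{h}}\sum_{i=1}^{n}\left[\int_{\varepsilon}^{1-\varepsilon}u\,d\hat{G}_{l,x}\left(u,\hat{\pi}_{m}\left(z_{i}\right)\right)-\int_{\varepsilon}^{1-\varepsilon}u\,dG_{l,x}\left(u,\pi_{m}\left(z_{i}\right)\right)\right]\mathbf{1}\left(d_{i}=h\right)\\
&+\left[\frac{1}{n_{h}}\sum_{i=1}^{n}\int_{\varepsilon}^{1-\varepsilon}u\,dG_{l,x}\left(u,\pi_{m}\left(z_{i}\right)\right)\mathbf{1}\left(d_{i}=h\right)-\mathbb{E}\left[U^{\ell}|S=1\right]\right],
\end{align*}
with an analogous and simpler split for the participation statistic. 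The second bracket is a sample average of a fixed bounded function against the empirical distribution of covariates in group $h$; under Assumptions~\ref{assum:samp}--\ref{assum:gsize} it is asymptotically Gaussian and is precisely the $\int f\,dF_{Z}^{h}$ coordinate of $\mathbb{Z}_{\upsilon_{\ell}}$. The first bracket I would linearize in $\left(\hat{\theta}_{l}-\theta_{l},\hat{\gamma}_{m}-\gamma_{m}\right)$: by Assumption~\ref{assum:parcop} the conditional copula $G_{l,x}\left(u,\pi;\theta\right)=C_{l,x}\left(u,\pi;\theta\right)/\pi$ is continuously differentiable in $\theta$, and by Assumptions~\ref{assum:propensity} and \ref{assum:prop} the map $\pi\mapsto\int_{\varepsilon}^{1-\varepsilon}u\,dG_{l,x}\left(u,\pi\right)=\mathbb{E}\left[U|V\le\pi\right]$ is differentiable in the propensity argument, so the first bracket is a smooth functional of the $c_{l,x}$ and $\pi_{m}$ coordinates of $\upsilon_{\ell}$.

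Combining the two pieces, the functional is Hadamard-differentiable at the truth tangentially to the support of $\mathbb{Z}_{\upsilon_{\ell}}$, and the functional delta method (as invoked for the harder mean functional in Theorem~\ref{thm:asymgc}) delivers $\sqrt{n}\left(\hat{\mathbb{E}}\left[U^{\ell}|S=1\right]-\mathbb{E}\left[U^{\ell}|S=1\right]\right)\Rightarrow\phi'_{\ell}\left(\mathbb{Z}_{\upsilon_{\ell}}\right)$, with $\phi'_{\ell}$ the linear derivative map; the same argument gives the participation result. Because $\phi'_{\ell}$ is linear and $\mathbb{Z}_{\upsilon_{\ell}}$ is jointly Gaussian across the indices $\ell$, the difference $\Delta^{\ell,\ell'}$ of two such limits is again a zero-mean Gaussian random variable, which is the claim.

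The step I expect to be the main obstacle is controlling the first bracket near the boundary of the propensity support, since $G_{l,x}=C_{l,x}/\pi$ carries a $1/\pi$ factor that is singular as $\pi\to 0$; this is exactly where Assumption~\ref{assum:prop}, which keeps $\mathcal{P}_{d,x}$ a subinterval of the open $\left(0,1\right)$, together with the $\left[\varepsilon,1-\varepsilon\right]$ truncation of the $u$-integral, is used to bound the relevant derivative uniformly and secure the linearization. All remaining steps are routine given the machinery already assembled for Theorem~\ref{thm:asymgc}.
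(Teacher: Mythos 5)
Your proposal is correct and follows essentially the same route as the paper's proof: a functional delta method expansion of each ancillary statistic into its copula, propensity-score and covariate-distribution channels (the self-selection statistic being exactly the mean-outcome functional of Theorem~\ref{thm:asymgc} with $g_{k}\left(x,u\right)$ replaced by the identity, so the $\hat{\beta}$ channel vanishes), followed by a second application of the delta method to the difference $\Delta^{\ell,\ell'}$. The only adjustment needed is that you linearize the estimation-error bracket in $\left(\hat{\theta}_{l}-\theta_{l},\hat{\gamma}_{m}-\gamma_{m}\right)$ invoking Assumptions~\ref{assum:parcop} and~\ref{assum:prop}, which are not among the theorem's hypotheses (only Assumptions~\ref{assum:samp}--\ref{assum:bound} and Condition~\ref{con:genmod} are assumed); the paper instead expresses the same linearization through the Hadamard derivative $\nabla_{\pi}G_{l,x}\left(\tau|\pi\right)$ and the generic Gaussian limits $\mathbb{Z}_{c_{l}}$, $\mathbb{Z}_{\pi_{m}}$ and $\sqrt{p_{h}}\mathbb{Z}_{Z_{h}}$, so you should phrase the differentiability step nonparametrically to preserve the stated generality of the result.
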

\section{Evolution of the Gender Wage Gap}\label{sec:emp}

I study the evolution of the gender gap between earnings distributions using the Current Population Survey (CPS) dataset. I extend the analysis in \cite{Maasoumi2019} by decomposing several features of the distribution of actual earnings for employed workers and the entire population. In addition, I do the two ancillary decompositions regarding differences in participation and self-selection between men and women.

To preserve the comparability to \cite{Maasoumi2019}, the sample covers the 1976-2013 period, and the regressions are done on a year-by-year basis, abstracting from any dynamics. I restrict the analysis to individuals between 18 and 64 years old, who work for wages and salary, do not live in group quarters and worked at least for 20 weeks and 25 hours per week in the previous year.\footnote{Note that this analysis excludes part-time workers, seasonal workers, self-employed, or the total number of hours worked (which affects the intensive margin). Despite this, the gaps considered in this paper are relevant for a large fraction of the working age population, and the elements analyzed could also be relevant for the analysis of a more comprehensive gender gap that accounts for the aforementioned factors. Such analysis is beyond the scope of this paper.} Like them, I estimate the propensity score using the probit estimator. I use the same regressors they used, \textit{i.e.}, a third degree polynomial of age, four levels of education, four regional dummies, marital status, an indicator for white race, and the interactions between age and the other listed covariates, plus another variable they did not use: the number of children.

The dependent variable is mean log hourly wages.\footnote{As in \cite{Maasoumi2019}, it is computed as the logarithm of the total wage and salary income divided by the number of week and hours worked during the previous year and adjusted for inflation using the 1999 consumer price index adjustment factor.} The specification for the QRS estimator uses the same set of variables (except for the interactions between age and the remaining covariates), the same instrument (number of children below 5 years old) and two parametric copulas: the Frank and the Gaussian copulas.\footnote{A discussion of the validity of the exclusion restriction is provided in \cite{Maasoumi2019}. They acknowledge that the instrument may be stronger for women than for men, as well as for earlier years than for the latter. In addition, it may be stronger for married workers than for those who are not.} However, there are some slight differences along several dimensions: I include the regressor number of children, the quantile grid, the propensity score, and the objective function used to estimate the copula.\footnote{For precision, the quantile grid I used for the estimation is $\left(0.01,0.02,...,0.99\right)$, while the one used by \cite{Maasoumi2019} was $\left(0.3,0.4,...,0.7\right)$; I use the propensity score as the instrument $\varphi\left(u,z\right)=\hat{\pi}\left(z\right)$ as suggested in \cite{Arellano2017}, whereas \cite{Maasoumi2019} use $\varphi\left(u,z\right)=\sqrt{u\left(1-u\right)}\hat{\pi}\left(z\right)$, which puts less weight on values that are further away from the median; the objective function equals Equation~\ref{eq:thetahat}, while the objective function used by \cite{Maasoumi2019} is $\frac{1}{N}\sum_{i=1}^{N}\sum_{j=1}^{J}\left(\varphi\left(\tau_{j},z_{i}\right)\left(\mathbf{1}\left(y_{i}\leq x_{i}'\hat{\beta}\left(\tau_{j}\right)\right)-G\left(\tau_{j},\hat{\pi}\left(z_{i}\right);\hat{\theta}\right)\right)\right)$; the implemented quantile regression estimates from Stata in \cite{Maasoumi2019} were in some cases numerically slightly worse than the ones in Matlab, \textit{i.e.}, the value of the check function was smaller for the latter. See Appendix~\ref{app:add} for further details.} Additionally, I also allow for more flexible specifications that separately estimate the main equation according to race (white vs non-white), level of education (college vs less than college) and marital status (married vs non-married), which are reported in Section~\ref{sec:hetcop}, and the estimates using the same specification as in \cite{Maasoumi2019}, which are reported in Appendix~\ref{app:add}.\footnote{For completeness, I also report the results of the decompositions of potential outcomes and the estimates of the general entropy measures considered by \cite{Maasoumi2019} in Appendix~\ref{app:add}.}

\subsection{Evolution of Participation}

To analyze the evolution of the gender earnings gap, I begin by analyzing the evolution of labor market participation for both genders. Table~\ref{tab:prop} reports the average estimated propensity score by gender for the entire period. There has been a marked catch-up between female and male participation: in 1976, female participation was roughly one third, steadily increasing to over one half in the early 2000s, to fall slightly in the aftermath of the financial crisis. Meanwhile, male participation has been more stable: it has been equal to around two thirds until the financial crisis, falling to about 60\% afterwards.

\begin{table}[htbp]
  \centering
  \caption{Average propensity to work by gender}\label{tab:prop}
		\begin{threeparttable}
    \begin{tabular}{ccc|ccc}
		\hline
    Year  & Male  & Female & Year  & Male  & Female \\
		\hline
    1976  & 68.0  & 35.0  & 1995  & 66.7  & 48.0 \\
    1977  & 68.6  & 36.1  & 1996  & 67.2  & 49.1 \\
    1978  & 68.6  & 37.5  & 1997  & 67.1  & 49.7 \\
    1979  & 69.8  & 39.2  & 1998  & 67.7  & 50.3 \\
    1980  & 69.1  & 40.7  & 1999  & 68.7  & 51.1 \\
    1981  & 67.6  & 40.4  & 2000  & 68.8  & 52.0 \\
    1982  & 64.8  & 38.7  & 2001  & 69.2  & 52.5 \\
    1983  & 61.6  & 37.9  & 2002  & 68.3  & 51.5 \\
    1984  & 61.5  & 38.9  & 2003  & 66.3  & 50.3 \\
    1985  & 63.1  & 40.8  & 2004  & 65.2  & 49.5 \\
    1986  & 64.4  & 41.7  & 2005  & 65.3  & 49.3 \\
    1987  & 64.7  & 42.6  & 2006  & 65.7  & 50.0 \\
    1988  & 66.6  & 46.0  & 2007  & 66.3  & 50.4 \\
    1989  & 67.0  & 46.8  & 2008  & 65.8  & 50.9 \\
    1990  & 68.8  & 47.7  & 2009  & 64.0  & 49.8 \\
    1991  & 67.8  & 47.6  & 2010  & 59.8  & 47.7 \\
    1992  & 66.7  & 47.8  & 2011  & 59.2  & 46.8 \\
    1993  & 65.6  & 47.4  & 2012  & 60.1  & 46.7 \\
    1994  & 65.5  & 47.2  & 2013  & 60.8  & 46.9 \\
		\hline
    \end{tabular}\begin{tablenotes}[para,flushleft]
\begin{spacing}{1}
{\footnotesize Notes: average estimated propensity score by year and gender; coefficients scaled by 100.}
\end{spacing}
\end{tablenotes}
\end{threeparttable}
\end{table}

Consequently, the gender participation gap has more than halved during the period, from 33\% to about 14\% (Table~\ref{tab:dprop}). Its decomposition shows that almost the entirety of the gap is explained by the coefficients components. In words, there has been a structural increase in female participation into employment for women unrelated to gender differences in covariates. On the other hand, the endowments component has been either statistically not significant or slightly negative for most of the period. Indeed, only between 1987 and 1989 was this component positive and barely significant. Hence, the catch up in college education rates for female workers has not contributed to an increased participation in the labor force relative to men.

\begin{table}[htbp]
  \centering
  \caption{Participation decomposition}\label{tab:dprop}
		\begin{threeparttable}
    \begin{tabular}{cc@{\,}lc@{\,}lc@{\,}l|cc@{\,}lc@{\,}lc@{\,}l}
		\hline
    Year  & \multicolumn{2}{c}{Total} & \multicolumn{2}{c}{EC} & \multicolumn{2}{c}{CC} & Year  & \multicolumn{2}{c}{Total} & \multicolumn{2}{c}{EC} & \multicolumn{2}{c}{CC} \\
		\hline
    1976  & 33.0  & ***   & -0.6  & ***   & 33.7  & ***   & 1995  & 18.7  & ***   & -0.4  & **    & 19.1  & *** \\
    1977  & 32.5  & ***   & -0.7  & ***   & 33.2  & ***   & 1996  & 18.0  & ***   & -0.2  &       & 18.3  & *** \\
    1978  & 31.1  & ***   & -0.6  & ***   & 31.7  & ***   & 1997  & 17.4  & ***   & -0.2  &       & 17.6  & *** \\
    1979  & 30.7  & ***   & -0.5  & ***   & 31.2  & ***   & 1998  & 17.4  & ***   & -0.3  &       & 17.7  & *** \\
    1980  & 28.4  & ***   & -0.4  & **    & 28.8  & ***   & 1999  & 17.6  & ***   & -0.1  &       & 17.7  & *** \\
    1981  & 27.2  & ***   & -0.1  &       & 27.3  & ***   & 2000  & 16.8  & ***   & 0.0   &       & 16.8  & *** \\
    1982  & 26.1  & ***   & -0.1  &       & 26.2  & ***   & 2001  & 16.7  & ***   & 0.0   &       & 16.8  & *** \\
    1983  & 23.8  & ***   & 0.0   &       & 23.8  & ***   & 2002  & 16.8  & ***   & -0.2  & *     & 17.0  & *** \\
    1984  & 22.5  & ***   & 0.0   &       & 22.5  & ***   & 2003  & 15.9  & ***   & -0.2  &       & 16.1  & *** \\
    1985  & 22.3  & ***   & 0.0   &       & 22.4  & ***   & 2004  & 15.7  & ***   & -0.4  & ***   & 16.1  & *** \\
    1986  & 22.6  & ***   & 0.1   &       & 22.5  & ***   & 2005  & 16.1  & ***   & -0.3  & **    & 16.3  & *** \\
    1987  & 22.1  & ***   & 0.3   & *     & 21.8  & ***   & 2006  & 15.7  & ***   & -0.2  & *     & 15.9  & *** \\
    1988  & 20.6  & ***   & 0.3   & *     & 20.3  & ***   & 2007  & 15.8  & ***   & -0.4  & ***   & 16.2  & *** \\
    1989  & 20.3  & ***   & 0.3   & *     & 20.0  & ***   & 2008  & 14.9  & ***   & -0.4  & ***   & 15.4  & *** \\
    1990  & 21.1  & ***   & 0.1   &       & 21.0  & ***   & 2009  & 14.2  & ***   & -0.7  & ***   & 15.0  & *** \\
    1991  & 20.2  & ***   & -0.3  & **    & 20.5  & ***   & 2010  & 12.1  & ***   & -0.8  & ***   & 12.9  & *** \\
    1992  & 18.9  & ***   & -0.2  &       & 19.0  & ***   & 2011  & 12.5  & ***   & -0.8  & ***   & 13.3  & *** \\
    1993  & 18.2  & ***   & -0.2  &       & 18.4  & ***   & 2012  & 13.4  & ***   & -0.8  & ***   & 14.2  & *** \\
    1994  & 18.3  & ***   & -0.2  &       & 18.5  & ***   & 2013  & 13.9  & ***   & -1.0  & ***   & 14.8  & *** \\
		\hline
    \end{tabular}\begin{tablenotes}[para,flushleft]
\begin{spacing}{1}
{\footnotesize Notes: Total, EC and CC respectively denote total difference, endowments component and coefficients component; coefficients scaled by 100; *, ** and ** respectively denote statistical significance at the 90\%, 95\% and 99\% confidence level.}
\end{spacing}
\end{tablenotes}
\end{threeparttable}
\end{table}

\subsection{Evolution of Self-Selection}

The second feature of interest is the evolution of differences in self selection, which depends mainly on the estimated copula. Because the values for different copulas are not directly comparable, I report the Kendall's $\tau$ correlation coefficients.\footnote{Unlike to the more common Spearman's $\rho$ correlation coefficient, Kendall's $\tau$ is invariant to the distribution of the marginals.} Table~\ref{tab:kendalls} reports the baseline estimates for each year and gender, both with the Frank and the Gaussian copulas. These coefficients indicate that the amount of selection into employment has steadily increased for female workers: until the early 80s, there used to be negative selection that turned positive afterwards.\footnote{Recall that a negative (positive) coefficient implies positive (negative) selection into employment.} This result reinforces the dynamics on selection into employment found by \cite{Mulligan2008}. On the other hand, the amount of selection for male workers has fluctuated more over time, being either above or below that of females depending on the year. The results are very similar with both copulas, which suggests that the choice of the parametric copula is of secondary importance. To assess the sensitivity of these results to the model used, I report in Table~\ref{tab:heck} in Appendix~\ref{app:tabs} a comparison of the correlation coefficients with those of the Heckman 2-stage estimator \citep{Heckman1979}. The findings show that the copula estimated by both models are very similar throughout the entire period.

\begin{table}[htbp]
  \centering
  \caption{Kendall's $\tau$ correlation coefficients}\label{tab:kendalls}
		\begin{threeparttable}
    \begin{tabular}{cr@{\,}lr@{\,}lr@{\,}lr@{\,}l}
		\hline
							& \multicolumn{4}{c}{Frank copula} & \multicolumn{4}{c}{Gaussian copula} \\
    Year  & \multicolumn{2}{c}{Male} & \multicolumn{2}{c}{Female} & \multicolumn{2}{c}{Male} & \multicolumn{2}{c}{Female} \\
		\hline
    1976  & 0.28  & ***   & 0.15  & ***   & 0.28  & ***   & 0.13  & *** \\
    1977  & 0.16  &       & 0.09  & *     & 0.16  &       & 0.08  & * \\
    1978  & 0.21  & *     & -0.01 &       & 0.21  & *     & -0.01 &  \\
    1979  & 0.23  & **    & -0.02 &       & 0.24  & **    & -0.02 &  \\
    1980  & 0.22  & **    & 0.01  &       & 0.24  & **    & 0.01  &  \\
    1981  & 0.21  & ***   & 0.01  &       & 0.23  & ***   & 0.01  &  \\
    1982  & 0.14  &       & -0.03 &       & 0.15  &       & -0.03 &  \\
    1983  & 0.01  &       & -0.10 & *     & 0.02  &       & -0.09 & ** \\
    1984  & -0.09 &       & 0.02  &       & -0.08 &       & 0.02  &  \\
    1985  & 0.13  &       & -0.05 &       & 0.12  &       & -0.05 &  \\
    1986  & 0.09  &       & -0.11 & **    & 0.09  &       & -0.10 & *** \\
    1987  & 0.13  &       & -0.17 & ***   & 0.11  &       & -0.17 & *** \\
    1988  & -0.06 &       & -0.10 & ***   & -0.05 &       & -0.10 & *** \\
    1989  & -0.10 &       & -0.10 & **    & -0.12 &       & -0.10 & ** \\
    1990  & -0.07 &       & -0.22 & ***   & -0.09 &       & -0.23 & *** \\
    1991  & -0.06 &       & -0.10 & ***   & -0.05 &       & -0.10 & ** \\
    1992  & -0.08 &       & -0.11 & ***   & -0.09 &       & -0.11 & *** \\
    1993  & -0.28 & **    & -0.13 & ***   & -0.28 & **    & -0.13 & *** \\
    1994  & -0.21 &       & -0.17 & ***   & -0.22 & *     & -0.17 & *** \\
    1995  & -0.30 & ***   & -0.21 & ***   & -0.27 & **    & -0.21 & *** \\
    1996  & 0.07  &       & -0.21 & ***   & 0.07  &       & -0.22 & *** \\
    1997  & 0.01  &       & -0.21 & ***   & 0.01  &       & -0.21 & *** \\
    1998  & -0.19 &       & -0.20 & ***   & -0.22 &       & -0.21 & *** \\
    1999  & 0.08  &       & -0.20 & ***   & 0.06  &       & -0.22 & *** \\
    2000  & 0.09  &       & -0.12 & *     & 0.10  &       & -0.13 & * \\
    2001  & 0.12  &       & -0.12 & **    & 0.13  &       & -0.12 & ** \\
    2002  & 0.04  &       & -0.13 & **    & 0.04  &       & -0.13 & ** \\
    2003  & 0.05  &       & -0.25 & ***   & 0.06  &       & -0.25 & *** \\
    2004  & 0.08  &       & -0.21 & ***   & 0.10  &       & -0.21 & *** \\
    2005  & 0.08  &       & -0.21 & ***   & 0.08  &       & -0.22 & *** \\
    2006  & 0.28  & ***   & -0.17 & ***   & 0.29  & ***   & -0.17 & *** \\
    2007  & 0.10  &       & -0.28 & ***   & 0.10  &       & -0.30 & *** \\
    2008  & 0.13  &       & -0.30 & ***   & 0.13  &       & -0.31 & *** \\
    2009  & -0.05 &       & -0.25 & ***   & -0.06 &       & -0.26 & *** \\
    2010  & -0.21 &       & -0.27 & ***   & -0.18 &       & -0.28 & *** \\
    2011  & -0.03 &       & -0.31 & ***   & -0.01 &       & -0.32 & *** \\
    2012  & -0.28 & **    & -0.28 & ***   & -0.26 & *     & -0.28 & *** \\
    2013  & 0.06  &       & -0.25 & ***   & 0.06  &       & -0.27 & *** \\
		\hline
    \end{tabular}\begin{tablenotes}[para,flushleft]
\begin{spacing}{1}
{\footnotesize Notes: Kendall's $\tau$ correlation coefficients of the copula estimates by year and gender; *, ** and ** respectively denote statistical significance at the 90\%, 95\% and 99\% confidence level.}
\end{spacing}
\end{tablenotes}
\end{threeparttable}
\end{table}

A more informative way to understand these estimates is to compare the average value of the unobservable $u$ across genders and periods. This is shown in Table~\ref{tab:meanu} and Figure~\ref{fig:meanu}. For the entire period considered, the average value of $u$ for full-time employed females steadily increased from slightly above 40 to around 60. The average value for employed males has slightly increased over time. However, it displayed much more fluctuation over time, which may be partly related to the lack of strength of instrument for male workers. Still, there was a positive trend until the mid-nineties, steadily decreasing until the eve of the financial crisis, when it soared for a few years.\footnote{These trends can be seen more clearly if one smooths the copula estimates over time. In Figure~\ref{fig:meanuma} in Appendix~\ref{app:tabs} I compare the baseline estimates to a 5 year moving average of the copula.} In other words, along with the increase in participation, there has been an increase in the amount of self-selection into employment for women. Hence, potential earnings of non-employed women are lower than actual earnings of those employed, given the same observed characteristics.

\begin{table}[htbp]
  \centering
  \caption{Mean value of $u$ for participants}\label{tab:meanu}
		\begin{threeparttable}
    \begin{tabular}{ccccc|ccccc}
		\hline
          & \multicolumn{2}{c}{Frank copula} & \multicolumn{2}{c}{Gaussian copula} &       & \multicolumn{2}{c}{Frank copula} & \multicolumn{2}{c}{Gaussian copula} \\
    Year  & Male  & Female & Male  & Female & Year  & Male  & Female & Male  & Female \\
		\hline
    1976  & 43.4  & 42.7  & 43.6  & 43.5  & 1995  & 56.5  & 57.0  & 55.5  & 56.5 \\
    1977  & 46.0  & 45.5  & 46.2  & 45.9  & 1996  & 47.9  & 57.1  & 48.0  & 56.8 \\
    1978  & 44.8  & 49.8  & 45.1  & 49.8  & 1997  & 49.2  & 56.8  & 49.2  & 56.5 \\
    1979  & 44.7  & 50.3  & 44.8  & 50.3  & 1998  & 53.9  & 56.5  & 54.3  & 56.2 \\
    1980  & 44.8  & 49.2  & 44.6  & 49.2  & 1999  & 47.7  & 56.4  & 48.2  & 56.5 \\
    1981  & 44.7  & 49.0  & 44.5  & 49.2  & 2000  & 47.5  & 53.6  & 47.5  & 53.7 \\
    1982  & 46.0  & 50.6  & 45.9  & 50.5  & 2001  & 47.0  & 53.3  & 46.8  & 53.3 \\
    1983  & 49.2  & 53.7  & 49.0  & 53.2  & 2002  & 48.6  & 54.1  & 48.6  & 53.8 \\
    1984  & 51.9  & 48.7  & 51.6  & 48.7  & 2003  & 48.3  & 58.1  & 48.1  & 57.5 \\
    1985  & 46.2  & 51.6  & 46.5  & 51.5  & 2004  & 47.5  & 56.8  & 47.3  & 56.5 \\
    1986  & 47.3  & 53.9  & 47.4  & 53.5  & 2005  & 47.5  & 57.1  & 47.6  & 56.8 \\
    1987  & 46.3  & 56.5  & 47.0  & 55.8  & 2006  & 42.8  & 55.4  & 43.0  & 54.9 \\
    1988  & 50.9  & 53.4  & 50.6  & 52.9  & 2007  & 47.2  & 59.2  & 47.2  & 59.0 \\
    1989  & 51.9  & 53.1  & 52.2  & 52.9  & 2008  & 46.4  & 59.8  & 46.5  & 59.4 \\
    1990  & 51.0  & 57.5  & 51.4  & 57.3  & 2009  & 50.8  & 58.2  & 50.9  & 58.0 \\
    1991  & 50.8  & 53.3  & 50.6  & 53.0  & 2010  & 55.5  & 59.2  & 54.3  & 58.8 \\
    1992  & 51.5  & 53.5  & 51.5  & 53.2  & 2011  & 50.4  & 60.9  & 49.7  & 60.4 \\
    1993  & 56.1  & 54.2  & 56.0  & 54.2  & 2012  & 57.3  & 59.6  & 56.4  & 59.3 \\
    1994  & 54.5  & 55.7  & 54.6  & 55.3  & 2013  & 47.7  & 58.6  & 47.8  & 58.8 \\
		\hline
    \end{tabular}\begin{tablenotes}[para,flushleft]
\begin{spacing}{1}
{\footnotesize Notes: coefficients scaled by 100.}
\end{spacing}
\end{tablenotes}
\end{threeparttable}
\end{table}

\begin{figure}[htbp]
\caption{Mean value of $u$ for participants}
\includegraphics[width=16.5cm]{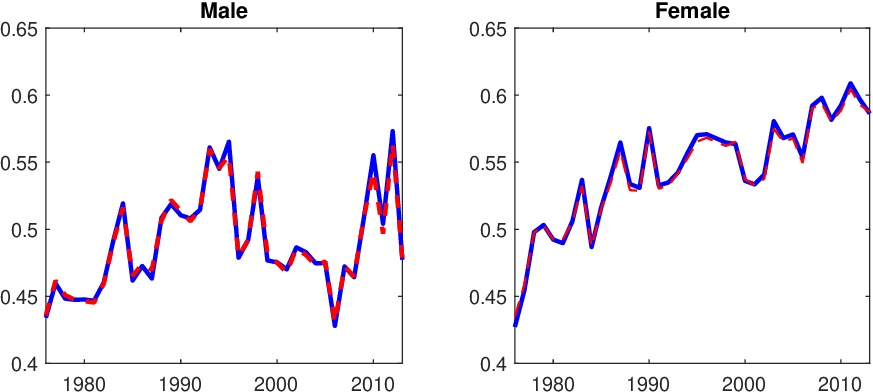}\label{fig:meanu}

{\footnotesize Notes: the solid blue line denotes the estimate with the Frank copula for all participants; the dashed red line denotes the estimate with the Gaussian copula for all participants; coefficients scaled by 100.}
\end{figure}

The average selection difference between male and female workers has consequently become negative, decreasing by about 11.6 percentage points, although it also reflects the oscillation of the estimates for males. The participation components experienced a decrease of about 6 percentage points, roughly half of the difference. Hence, even if the amount of self-selection (copula) had been the same for both genders, the increase of female employment rates contributed to the increase in average unobserved ability, setting a gap relative to male workers. On the other hand, the selection component is less precisely estimated and it displays an unstable evolution. This is a direct consequence of the oscillating behavior of the male copula estimates. Regardless, the long term trend also points at an increase of the gap in favor of employed women. Lastly, because the baseline estimates are homogeneous with respect to the covariates, the endowments component is negligible.\footnote{Note that it is not exactly zero because there is variation in the copula because of differences in the propensity score, deriving themselves from differences in the distribution of $Z$ between genders.}

\begin{table}[htbp]
  \centering
  \caption{Self-selection decomposition (Frank copula)}\label{tab:dug}
		\begin{threeparttable}
    \begin{tabular}{cc@{\,}lc@{\,}lc@{\,}lc@{\,}l|cc@{\,}lc@{\,}lc@{\,}lc@{\,}l}
		\hline
     Year  & \multicolumn{2}{c}{Total} & \multicolumn{2}{c}{EC} & \multicolumn{2}{c}{SC} & \multicolumn{2}{c}{PC} & Year  & \multicolumn{2}{c}{Total} & \multicolumn{2}{c}{EC} & \multicolumn{2}{c}{SC} & \multicolumn{2}{c}{PC} \\
		\hline
    1976  & 0.7   &       & -0.1  & **    & -2.7  &       & 3.5   & ***   & 1995  & -0.5  &       & 0.1   &       & 2.2   &       & -2.7  & *** \\
    1977  & 0.5   &       & -0.1  &       & -1.5  &       & 2.1   & *     & 1996  & -9.2  & ***   & 0.0   &       & -6.5  & ***   & -2.7  & *** \\
    1978  & -5.0  & *     & -0.1  & *     & -4.7  & **    & -0.1  &       & 1997  & -7.6  & **    & 0.0   &       & -5.0  & *     & -2.5  & *** \\
    1979  & -5.6  & **    & -0.1  & **    & -5.1  & **    & -0.4  &       & 1998  & -2.6  &       & 0.0   &       & -0.1  &       & -2.5  & *** \\
    1980  & -4.5  & **    & -0.1  & *     & -4.5  & **    & 0.1   &       & 1999  & -8.7  & ***   & 0.0   &       & -6.2  & **    & -2.5  & *** \\
    1981  & -4.3  &       & 0.0   &       & -4.5  & *     & 0.2   &       & 2000  & -6.1  & *     & 0.0   &       & -4.6  & *     & -1.4  & * \\
    1982  & -4.6  & *     & 0.0   &       & -4.1  & *     & -0.5  &       & 2001  & -6.3  & ***   & 0.0   &       & -5.0  & **    & -1.4  & *** \\
    1983  & -4.5  &       & 0.0   &       & -2.9  &       & -1.6  & *     & 2002  & -5.4  & **    & 0.0   &       & -3.8  & *     & -1.6  & ** \\
    1984  & 3.3   &       & 0.0   &       & 2.9   &       & 0.3   &       & 2003  & -9.8  & ***   & 0.0   &       & -7.0  & ***   & -2.8  & *** \\
    1985  & -5.5  & *     & 0.0   &       & -4.6  & *     & -0.8  &       & 2004  & -9.3  & ***   & 0.0   &       & -7.0  & **    & -2.3  & *** \\
    1986  & -6.7  & ***   & 0.0   &       & -5.0  & **    & -1.7  & ***   & 2005  & -9.6  & ***   & 0.0   &       & -7.2  & **    & -2.4  & *** \\
    1987  & -10.1 & ***   & 0.0   &       & -7.5  & ***   & -2.6  & ***   & 2006  & -12.6 & ***   & 0.0   &       & -10.7 & ***   & -1.9  & *** \\
    1988  & -2.5  &       & 0.0   &       & -1.1  &       & -1.5  & ***   & 2007  & -12.0 & ***   & 0.0   &       & -8.8  & ***   & -3.2  & *** \\
    1989  & -1.2  &       & 0.0   &       & 0.2   &       & -1.3  & **    & 2008  & -13.4 & ***   & 0.0   &       & -10.1 & ***   & -3.2  & *** \\
    1990  & -6.5  & ***   & 0.0   &       & -3.3  &       & -3.2  & ***   & 2009  & -7.4  & **    & 0.0   &       & -4.8  &       & -2.6  & *** \\
    1991  & -2.5  &       & 0.0   &       & -1.0  &       & -1.5  & ***   & 2010  & -3.7  &       & 0.1   &       & -1.5  &       & -2.4  & *** \\
    1992  & -2.0  &       & 0.0   &       & -0.6  &       & -1.4  & ***   & 2011  & -10.5 & **    & 0.0   &       & -7.7  & *     & -2.8  & *** \\
    1993  & 1.9   &       & 0.0   &       & 3.5   &       & -1.6  & ***   & 2012  & -2.3  &       & 0.2   & **    & 0.2   &       & -2.7  & *** \\
    1994  & -1.2  &       & 0.0   &       & 0.9   &       & -2.2  & ***   & 2013  & -10.9 & ***   & 0.0   &       & -8.3  & ***   & -2.5  & *** \\
		\hline
    \end{tabular}\begin{tablenotes}[para,flushleft]
\begin{spacing}{1}
{\footnotesize Notes: Total, EC, SC and PC respectively denote total difference, endowments component, selection component and participation component; coefficients scaled by 100; *, ** and ** respectively denote statistical significance at the 90\%, 95\% and 99\% confidence level.}
\end{spacing}
\end{tablenotes}
\end{threeparttable}
\end{table}

\subsection{Evolution of Labor Earnings}

Next consider the distributions of actual earnings for participants and the entire population by gender. Specifically, I report their means and the value of their 10th, 25th, 50th, 75th and 90th percentiles in Tables~\ref{tab:y1g}-\ref{tab:y2g}. The actual earnings distribution for participants shows a small decrease in mean earnings for male workers and a slightly larger increase for mean female earnings.

\begin{table}[htbp]
  \centering
  \caption{Actual earnings distributions for participants by gender (Frank copula)}\label{tab:y1g}
		\begin{threeparttable}
    \begin{tabular}{ccccccc|cccccc}
		\hline
          & \multicolumn{6}{c}{Male}                      & \multicolumn{6}{c}{Female} \\
    Year  & Mean  & P10   & P25   & P50   & P75   & P90   & Mean  & P10   & P25   & P50   & P75   & P90 \\
		\hline
    1976  & 2.72  & 2.03  & 2.40  & 2.77  & 3.09  & 3.36  & 2.29  & 1.74  & 2.00  & 2.30  & 2.60  & 2.86 \\
    1977  & 2.73  & 2.02  & 2.40  & 2.78  & 3.10  & 3.37  & 2.32  & 1.77  & 2.03  & 2.33  & 2.62  & 2.88 \\
    1978  & 2.74  & 2.03  & 2.40  & 2.78  & 3.12  & 3.38  & 2.31  & 1.76  & 2.02  & 2.33  & 2.63  & 2.89 \\
    1979  & 2.75  & 2.05  & 2.40  & 2.79  & 3.13  & 3.40  & 2.33  & 1.78  & 2.04  & 2.33  & 2.63  & 2.90 \\
    1980  & 2.73  & 2.03  & 2.40  & 2.78  & 3.12  & 3.38  & 2.32  & 1.78  & 2.04  & 2.32  & 2.62  & 2.89 \\
    1981  & 2.68  & 1.97  & 2.34  & 2.73  & 3.07  & 3.34  & 2.29  & 1.74  & 2.00  & 2.29  & 2.59  & 2.87 \\
    1982  & 2.67  & 1.95  & 2.31  & 2.71  & 3.06  & 3.35  & 2.27  & 1.72  & 1.98  & 2.28  & 2.59  & 2.86 \\
    1983  & 2.65  & 1.91  & 2.28  & 2.70  & 3.07  & 3.36  & 2.28  & 1.70  & 1.98  & 2.29  & 2.60  & 2.88 \\
    1984  & 2.65  & 1.90  & 2.28  & 2.70  & 3.07  & 3.38  & 2.30  & 1.70  & 1.99  & 2.31  & 2.63  & 2.91 \\
    1985  & 2.66  & 1.88  & 2.28  & 2.71  & 3.08  & 3.38  & 2.31  & 1.69  & 1.98  & 2.32  & 2.65  & 2.94 \\
    1986  & 2.66  & 1.88  & 2.28  & 2.71  & 3.09  & 3.39  & 2.32  & 1.69  & 1.99  & 2.33  & 2.67  & 2.97 \\
    1987  & 2.68  & 1.89  & 2.29  & 2.73  & 3.11  & 3.42  & 2.35  & 1.70  & 2.01  & 2.36  & 2.71  & 3.02 \\
    1988  & 2.68  & 1.89  & 2.29  & 2.72  & 3.11  & 3.43  & 2.36  & 1.69  & 2.02  & 2.38  & 2.72  & 3.03 \\
    1989  & 2.68  & 1.90  & 2.30  & 2.72  & 3.11  & 3.42  & 2.36  & 1.69  & 2.02  & 2.38  & 2.73  & 3.04 \\
    1990  & 2.67  & 1.89  & 2.29  & 2.71  & 3.10  & 3.43  & 2.37  & 1.69  & 2.03  & 2.39  & 2.75  & 3.07 \\
    1991  & 2.64  & 1.86  & 2.25  & 2.67  & 3.07  & 3.40  & 2.36  & 1.69  & 2.01  & 2.38  & 2.74  & 3.05 \\
    1992  & 2.63  & 1.84  & 2.23  & 2.67  & 3.06  & 3.39  & 2.36  & 1.69  & 2.00  & 2.38  & 2.74  & 3.06 \\
    1993  & 2.62  & 1.83  & 2.22  & 2.66  & 3.06  & 3.40  & 2.37  & 1.69  & 2.01  & 2.39  & 2.75  & 3.07 \\
    1994  & 2.60  & 1.80  & 2.19  & 2.64  & 3.04  & 3.39  & 2.36  & 1.66  & 2.00  & 2.38  & 2.75  & 3.08 \\
    1995  & 2.60  & 1.80  & 2.20  & 2.64  & 3.05  & 3.41  & 2.36  & 1.65  & 1.98  & 2.38  & 2.76  & 3.11 \\
    1996  & 2.61  & 1.80  & 2.19  & 2.63  & 3.04  & 3.40  & 2.36  & 1.65  & 1.99  & 2.37  & 2.76  & 3.10 \\
    1997  & 2.62  & 1.82  & 2.20  & 2.63  & 3.04  & 3.40  & 2.37  & 1.66  & 2.00  & 2.38  & 2.76  & 3.11 \\
    1998  & 2.64  & 1.84  & 2.23  & 2.66  & 3.06  & 3.43  & 2.39  & 1.70  & 2.02  & 2.40  & 2.78  & 3.12 \\
    1999  & 2.67  & 1.87  & 2.26  & 2.68  & 3.08  & 3.46  & 2.41  & 1.71  & 2.04  & 2.43  & 2.81  & 3.16 \\
    2000  & 2.68  & 1.87  & 2.26  & 2.69  & 3.10  & 3.48  & 2.43  & 1.71  & 2.05  & 2.44  & 2.83  & 3.18 \\
    2001  & 2.71  & 1.90  & 2.27  & 2.70  & 3.11  & 3.53  & 2.46  & 1.74  & 2.08  & 2.47  & 2.85  & 3.21 \\
    2002  & 2.70  & 1.89  & 2.28  & 2.70  & 3.11  & 3.53  & 2.48  & 1.75  & 2.09  & 2.48  & 2.86  & 3.23 \\
    2003  & 2.70  & 1.90  & 2.28  & 2.70  & 3.12  & 3.53  & 2.49  & 1.76  & 2.10  & 2.49  & 2.88  & 3.25 \\
    2004  & 2.69  & 1.88  & 2.26  & 2.69  & 3.12  & 3.52  & 2.49  & 1.76  & 2.11  & 2.50  & 2.89  & 3.26 \\
    2005  & 2.69  & 1.87  & 2.25  & 2.68  & 3.11  & 3.53  & 2.48  & 1.74  & 2.10  & 2.49  & 2.88  & 3.25 \\
    2006  & 2.69  & 1.88  & 2.24  & 2.67  & 3.10  & 3.52  & 2.47  & 1.72  & 2.08  & 2.47  & 2.87  & 3.25 \\
    2007  & 2.68  & 1.87  & 2.24  & 2.67  & 3.10  & 3.52  & 2.48  & 1.72  & 2.08  & 2.48  & 2.89  & 3.28 \\
    2008  & 2.68  & 1.88  & 2.25  & 2.67  & 3.11  & 3.52  & 2.49  & 1.75  & 2.10  & 2.50  & 2.89  & 3.28 \\
    2009  & 2.67  & 1.84  & 2.23  & 2.66  & 3.10  & 3.53  & 2.47  & 1.72  & 2.08  & 2.47  & 2.88  & 3.26 \\
    2010  & 2.68  & 1.86  & 2.25  & 2.68  & 3.12  & 3.54  & 2.48  & 1.73  & 2.09  & 2.49  & 2.90  & 3.29 \\
    2011  & 2.67  & 1.84  & 2.22  & 2.67  & 3.11  & 3.53  & 2.47  & 1.72  & 2.07  & 2.49  & 2.89  & 3.29 \\
    2012  & 2.65  & 1.81  & 2.20  & 2.65  & 3.10  & 3.53  & 2.46  & 1.70  & 2.06  & 2.47  & 2.88  & 3.28 \\
    2013  & 2.65  & 1.80  & 2.19  & 2.64  & 3.09  & 3.52  & 2.46  & 1.70  & 2.06  & 2.47  & 2.88  & 3.28 \\
		\hline
    \end{tabular}
\end{threeparttable}
\end{table}

This catch-up, however, masks an increase in the inequality within each gender, as interquantile ranges (IQR) increased both for male and female distributions: the 90-10 IQR increased from 132 to 172 percentage points for men, and from 112 to 158 for women; the 75-25 IQR similarly increased from 69 to 90 and 60 to 82 percentage points for male and female workers, respectively. Still, the evolution has been quite heterogeneous across the distributions of earnings. For male workers earnings followed a long term decrease for percentiles below the 75th, and a steady increase for those at the top of the distribution; for female workers there has been a gain for those above the 25th percentile, more pronounced at the top. Despite this catch-up, there is still a gap in favor of men at all quantiles.

By construction, earnings at any given quantile are smaller on the distribution for the full population than on that for participants, resulting in a smaller mean for both genders. Nonetheless, including non-participants increases the earnings gap. Following the increase in female labor participation, this distribution has steadily increased for females, reducing the gap relative to males by a bigger fraction than for the distribution of participants. On the other hand, mean earnings for males have oscillated across time following the changes in participation and average earnings for participants. Note that the fall in the male participation rate in the last years of the sample has been starker than that of females, prompting a decrease in mean earnings for both genders, along with a decrease of the gap.

\begin{table}[htbp]
  \centering
  \caption{Actual earnings distributions for the full population by gender (Frank copula)}\label{tab:y2g}
		\begin{threeparttable}
    \begin{tabular}{ccccccc|cccccc}
		\hline
          & \multicolumn{6}{c}{Male}                      & \multicolumn{6}{c}{Female} \\
    Year  & Mean  & P10   & P25   & P50   & P75   & P90   & Mean  & P10   & P25   & P50   & P75   & P90 \\
		\hline
    1976  & 1.89  & 0.00  & 0.00  & 2.50  & 2.97  & 3.28  & 0.81  & 0.00  & 0.00  & 0.00  & 2.09  & 2.60 \\
    1977  & 1.90  & 0.00  & 0.00  & 2.51  & 2.99  & 3.30  & 0.84  & 0.00  & 0.00  & 0.00  & 2.13  & 2.62 \\
    1978  & 1.91  & 0.00  & 0.00  & 2.50  & 3.00  & 3.31  & 0.87  & 0.00  & 0.00  & 0.00  & 2.16  & 2.64 \\
    1979  & 1.94  & 0.00  & 0.00  & 2.52  & 3.01  & 3.32  & 0.92  & 0.00  & 0.00  & 0.00  & 2.20  & 2.65 \\
    1980  & 1.91  & 0.00  & 0.00  & 2.50  & 3.00  & 3.30  & 0.95  & 0.00  & 0.00  & 0.00  & 2.23  & 2.66 \\
    1981  & 1.84  & 0.00  & 0.00  & 2.42  & 2.94  & 3.26  & 0.93  & 0.00  & 0.00  & 0.00  & 2.19  & 2.63 \\
    1982  & 1.76  & 0.00  & 0.00  & 2.34  & 2.91  & 3.25  & 0.89  & 0.00  & 0.00  & 0.00  & 2.15  & 2.61 \\
    1983  & 1.67  & 0.00  & 0.00  & 2.24  & 2.88  & 3.25  & 0.87  & 0.00  & 0.00  & 0.00  & 2.14  & 2.63 \\
    1984  & 1.66  & 0.00  & 0.00  & 2.23  & 2.89  & 3.26  & 0.91  & 0.00  & 0.00  & 0.00  & 2.19  & 2.67 \\
    1985  & 1.71  & 0.00  & 0.00  & 2.27  & 2.90  & 3.27  & 0.95  & 0.00  & 0.00  & 0.00  & 2.23  & 2.71 \\
    1986  & 1.74  & 0.00  & 0.00  & 2.30  & 2.92  & 3.28  & 0.98  & 0.00  & 0.00  & 0.00  & 2.26  & 2.74 \\
    1987  & 1.76  & 0.00  & 0.00  & 2.32  & 2.94  & 3.31  & 1.01  & 0.00  & 0.00  & 0.00  & 2.30  & 2.79 \\
    1988  & 1.81  & 0.00  & 0.00  & 2.35  & 2.95  & 3.32  & 1.10  & 0.00  & 0.00  & 0.00  & 2.37  & 2.82 \\
    1989  & 1.82  & 0.00  & 0.00  & 2.36  & 2.95  & 3.32  & 1.12  & 0.00  & 0.00  & 0.00  & 2.39  & 2.84 \\
    1990  & 1.85  & 0.00  & 0.00  & 2.38  & 2.94  & 3.32  & 1.14  & 0.00  & 0.00  & 0.00  & 2.41  & 2.86 \\
    1991  & 1.81  & 0.00  & 0.00  & 2.33  & 2.91  & 3.29  & 1.14  & 0.00  & 0.00  & 0.00  & 2.40  & 2.85 \\
    1992  & 1.77  & 0.00  & 0.00  & 2.29  & 2.89  & 3.28  & 1.15  & 0.00  & 0.00  & 0.00  & 2.40  & 2.86 \\
    1993  & 1.74  & 0.00  & 0.00  & 2.26  & 2.88  & 3.27  & 1.15  & 0.00  & 0.00  & 0.00  & 2.42  & 2.87 \\
    1994  & 1.73  & 0.00  & 0.00  & 2.23  & 2.86  & 3.27  & 1.14  & 0.00  & 0.00  & 0.00  & 2.41  & 2.87 \\
    1995  & 1.76  & 0.00  & 0.00  & 2.26  & 2.87  & 3.29  & 1.16  & 0.00  & 0.00  & 0.00  & 2.41  & 2.89 \\
    1996  & 1.78  & 0.00  & 0.00  & 2.27  & 2.87  & 3.28  & 1.18  & 0.00  & 0.00  & 0.00  & 2.42  & 2.89 \\
    1997  & 1.78  & 0.00  & 0.00  & 2.28  & 2.87  & 3.27  & 1.20  & 0.00  & 0.00  & 0.00  & 2.44  & 2.90 \\
    1998  & 1.81  & 0.00  & 0.00  & 2.31  & 2.89  & 3.30  & 1.22  & 0.00  & 0.00  & 1.16  & 2.46  & 2.92 \\
    1999  & 1.86  & 0.00  & 0.00  & 2.36  & 2.93  & 3.33  & 1.25  & 0.00  & 0.00  & 1.38  & 2.50  & 2.95 \\
    2000  & 1.87  & 0.00  & 0.00  & 2.37  & 2.94  & 3.36  & 1.28  & 0.00  & 0.00  & 1.52  & 2.53  & 2.98 \\
    2001  & 1.90  & 0.00  & 0.00  & 2.38  & 2.95  & 3.39  & 1.31  & 0.00  & 0.00  & 1.57  & 2.55  & 3.00 \\
    2002  & 1.87  & 0.00  & 0.00  & 2.37  & 2.95  & 3.40  & 1.30  & 0.00  & 0.00  & 1.48  & 2.55  & 3.01 \\
    2003  & 1.82  & 0.00  & 0.00  & 2.33  & 2.94  & 3.39  & 1.27  & 0.00  & 0.00  & 1.22  & 2.55  & 3.03 \\
    2004  & 1.79  & 0.00  & 0.00  & 2.30  & 2.93  & 3.38  & 1.25  & 0.00  & 0.00  & 0.00  & 2.55  & 3.02 \\
    2005  & 1.79  & 0.00  & 0.00  & 2.29  & 2.92  & 3.38  & 1.24  & 0.00  & 0.00  & 0.00  & 2.54  & 3.03 \\
    2006  & 1.79  & 0.00  & 0.00  & 2.28  & 2.91  & 3.37  & 1.26  & 0.00  & 0.00  & 0.83  & 2.54  & 3.02 \\
    2007  & 1.80  & 0.00  & 0.00  & 2.29  & 2.92  & 3.38  & 1.27  & 0.00  & 0.00  & 1.19  & 2.55  & 3.04 \\
    2008  & 1.80  & 0.00  & 0.00  & 2.29  & 2.92  & 3.38  & 1.29  & 0.00  & 0.00  & 1.37  & 2.57  & 3.05 \\
    2009  & 1.74  & 0.00  & 0.00  & 2.25  & 2.90  & 3.38  & 1.25  & 0.00  & 0.00  & 0.00  & 2.54  & 3.03 \\
    2010  & 1.64  & 0.00  & 0.00  & 2.15  & 2.89  & 3.37  & 1.21  & 0.00  & 0.00  & 0.00  & 2.54  & 3.05 \\
    2011  & 1.62  & 0.00  & 0.00  & 2.11  & 2.87  & 3.36  & 1.19  & 0.00  & 0.00  & 0.00  & 2.53  & 3.04 \\
    2012  & 1.63  & 0.00  & 0.00  & 2.12  & 2.87  & 3.37  & 1.18  & 0.00  & 0.00  & 0.00  & 2.51  & 3.02 \\
    2013  & 1.65  & 0.00  & 0.00  & 2.13  & 2.86  & 3.36  & 1.19  & 0.00  & 0.00  & 0.00  & 2.51  & 3.03 \\
		\hline
    \end{tabular}
\end{threeparttable}
\end{table}

\subsection{Main Decompositions}

Tables~\ref{tab:dy1g}-\ref{tab:dy2g} report the decompositions of the mean earnings gap for the two populations considered.\footnote{The estimates of the decompositions of mean earnings with the Heckman 2-stage estimator are very similar to those found with the QRS estimator. These are shown in Figures~\ref{fig:dech1}-\ref{fig:dech2} in Appendix~\ref{app:tabs}.} The mean gap for participants has more than halved during the period: from over 40\% gender gap for workers, it was equal to 19\%. Out of the four components, the largest one in every period has been the coefficients component. Its size displays some yearly variation driven by the fluctuation of the slope and copula parameters for male workers.\footnote{The erratic behavior of this and the selection components could be linked to the strength of the instrument, which is weaker for male workers. To see this, note that the autocorrelation of the total gap is 0.99. Out of the four components, it is also large for the endowments (0.99) and participation components (0.79), whereas for the coefficients and selection components are significantly smaller: 0.47 and 0.37, respectively. However, the autocorrelation of the sum of these two components is equal to 0.93. Therefore, it is plausible that the lack of strength of the instrument may be responsible for the large variations in the size of these two components even in consecutive years.} If one considers the moving average estimates of the copula parameters, it can be seen a decrease of this component until the mid-nineties, increasing afterwards.\footnote{See Figures~\ref{fig:meanyma1}-\ref{fig:meanyma2} in Appendix~\ref{app:tabs}.} Analogously, the selection component also displays an erratic behavior, which is again a consequence of the estimates of the copula for males. The sign of this component is often negative, and when it is positive it is not significant at the 95\% confidence level. Moreover, it displays a slightly downward trend, thus increasingly helping in the reduction of the mean gap (by 14 percentage points in the last year of analysis).

\begin{table}[htbp]
  \centering
  \caption{Mean decomposition, actual earnings for participants (Frank copula)}\label{tab:dy1g}
		\begin{threeparttable}
    \begin{tabular}{cr@{\,}lr@{\,}lr@{\,}lr@{\,}lr@{\,}l}
		\hline
    Year  & \multicolumn{2}{c}{Total} & \multicolumn{2}{c}{EC} & \multicolumn{2}{c}{CC} & \multicolumn{2}{c}{SC} & \multicolumn{2}{c}{PC} \\
		\hline
    1976  & 0.43  & ***   & 0.01  & ***   & 0.41  & ***   & -0.03 &       & 0.05  & *** \\
    1977  & 0.41  & ***   & 0.01  & ***   & 0.39  & ***   & -0.02 &       & 0.03  & * \\
    1978  & 0.42  & ***   & 0.01  & ***   & 0.48  & ***   & -0.06 & **    & 0.00  &  \\
    1979  & 0.42  & ***   & 0.01  & ***   & 0.48  & ***   & -0.07 & **    & -0.01 &  \\
    1980  & 0.41  & ***   & 0.01  & **    & 0.47  & ***   & -0.06 & **    & 0.00  &  \\
    1981  & 0.40  & ***   & 0.00  &       & 0.45  & ***   & -0.06 & *     & 0.00  &  \\
    1982  & 0.40  & ***   & 0.01  & **    & 0.46  & ***   & -0.05 & *     & -0.01 &  \\
    1983  & 0.38  & ***   & 0.01  & **    & 0.43  & ***   & -0.04 &       & -0.02 & * \\
    1984  & 0.35  & ***   & 0.01  & **    & 0.30  & ***   & 0.04  &       & 0.00  &  \\
    1985  & 0.35  & ***   & 0.01  & ***   & 0.42  & ***   & -0.07 & *     & -0.01 &  \\
    1986  & 0.34  & ***   & 0.01  & ***   & 0.43  & ***   & -0.07 & **    & -0.02 & *** \\
    1987  & 0.34  & ***   & 0.01  & **    & 0.48  & ***   & -0.11 & ***   & -0.04 & *** \\
    1988  & 0.32  & ***   & 0.01  & ***   & 0.35  & ***   & -0.02 &       & -0.02 & *** \\
    1989  & 0.32  & ***   & 0.01  & ***   & 0.33  & ***   & 0.00  &       & -0.02 & ** \\
    1990  & 0.30  & ***   & 0.01  & ***   & 0.39  & ***   & -0.05 &       & -0.05 & *** \\
    1991  & 0.28  & ***   & 0.00  &       & 0.31  & ***   & -0.02 &       & -0.02 & ** \\
    1992  & 0.27  & ***   & 0.00  & *     & 0.29  & ***   & -0.01 &       & -0.02 & *** \\
    1993  & 0.25  & ***   & 0.00  &       & 0.22  & ***   & 0.05  &       & -0.03 & *** \\
    1994  & 0.23  & ***   & 0.00  &       & 0.25  & ***   & 0.01  &       & -0.03 & *** \\
    1995  & 0.24  & ***   & 0.00  &       & 0.25  & ***   & 0.03  &       & -0.04 & *** \\
    1996  & 0.25  & ***   & 0.00  &       & 0.41  & ***   & -0.10 & ***   & -0.04 & *** \\
    1997  & 0.25  & ***   & -0.01 & **    & 0.38  & ***   & -0.08 & *     & -0.04 & *** \\
    1998  & 0.24  & ***   & -0.01 & ***   & 0.30  & ***   & 0.00  &       & -0.04 & *** \\
    1999  & 0.26  & ***   & -0.01 & **    & 0.40  & ***   & -0.10 & **    & -0.04 & *** \\
    2000  & 0.25  & ***   & -0.01 & **    & 0.35  & ***   & -0.07 & *     & -0.02 &  \\
    2001  & 0.24  & ***   & -0.01 & ***   & 0.36  & ***   & -0.08 & **    & -0.02 & ** \\
    2002  & 0.23  & ***   & -0.01 & ***   & 0.33  & ***   & -0.06 & *     & -0.03 & ** \\
    2003  & 0.21  & ***   & -0.02 & ***   & 0.39  & ***   & -0.12 & ***   & -0.05 & *** \\
    2004  & 0.20  & ***   & -0.02 & ***   & 0.38  & ***   & -0.12 & **    & -0.04 & *** \\
    2005  & 0.21  & ***   & -0.02 & ***   & 0.40  & ***   & -0.12 & **    & -0.04 & *** \\
    2006  & 0.22  & ***   & -0.03 & ***   & 0.45  & ***   & -0.18 & ***   & -0.03 & *** \\
    2007  & 0.20  & ***   & -0.03 & ***   & 0.45  & ***   & -0.16 & ***   & -0.06 & *** \\
    2008  & 0.20  & ***   & -0.03 & ***   & 0.46  & ***   & -0.18 & ***   & -0.06 & *** \\
    2009  & 0.20  & ***   & -0.04 & ***   & 0.37  & ***   & -0.08 &       & -0.04 & *** \\
    2010  & 0.20  & ***   & -0.04 & ***   & 0.30  & ***   & -0.03 &       & -0.04 & *** \\
    2011  & 0.20  & ***   & -0.04 & ***   & 0.42  & ***   & -0.14 & *     & -0.05 & *** \\
    2012  & 0.19  & ***   & -0.03 & ***   & 0.27  & ***   & 0.00  &       & -0.05 & *** \\
    2013  & 0.19  & ***   & -0.04 & ***   & 0.41  & ***   & -0.14 & ***   & -0.04 & *** \\
		\hline
    \end{tabular}\begin{tablenotes}[para,flushleft]
\begin{spacing}{1}
{\footnotesize Notes: Total, EC, CC, SC and PC respectively denote total difference, endowments component, coefficients component, selection component and participation component; *, ** and ** respectively denote statistical significance at the 90\%, 95\% and 99\% confidence level.}
\end{spacing}
\end{tablenotes}
\end{threeparttable}
\end{table}

The dynamics of the remaining two components has been more stable: they were initially positive, and they eventually became negative, therefore reducing the mean gender gap. Moreover, the magnitude of these two components has been more modest than that of the other two: the endowments components changed from 1 to -4 percentage points, whereas the participation component experienced a more pronounced fall (from 5 to -4 percentage points).

One way to highlight the importance of accounting for selection is to compare the previous decomposition to the standard Oaxaca-Blinder decomposition. This is shown in Figure~\ref{fig:deco1}. Both components of the latter display a different evolution over time. Specifically, the endowments component retains the downward trend, but its magnitude is much larger, whereas the coefficients component displays a steady downward trend. Hence, the latter appears to be a contributor to the decrease in the gender wage gap, in contrast with the analysis that accounts for sample selection.

\begin{figure}[htbp]
\caption{Actual earnings decomposition for participants, OLS}
\includegraphics[width=16.5cm]{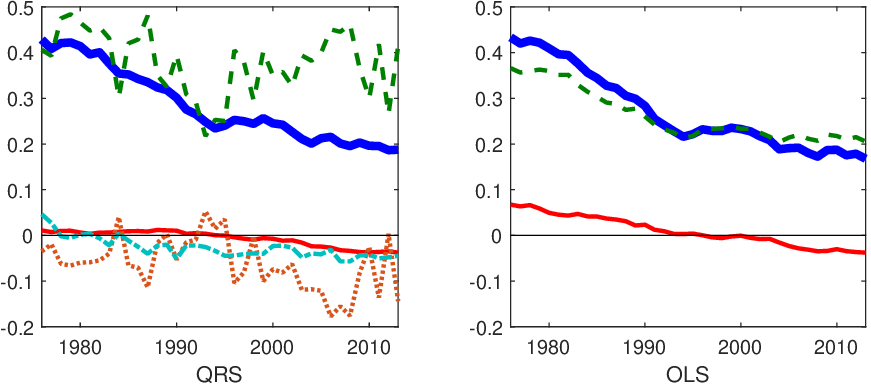}\label{fig:deco1}

{\footnotesize Notes: QRS and OLS stand for Quantile Regression with Selection and Ordinary Least Squares estimators; the solid thick blue line denotes the total gap between male and female workers; the solid thin red line denotes the endowments component; the dashed green line denotes the coefficients component; the dotted orange line denotes the selection component; the dashed-dotted cyan line denotes the participation component.}
\end{figure}

The gap for the entire population has followed a similar trend, steadily decreasing to less than a half of the gap in 1976. Specifically, from a gap of over 100\%, it fell to less than 50\%. However, its magnitude has always been larger, owing to the gender participation gap. Indeed, the participation component constitutes the lion share of the gap, and its reduction has been responsible for the majority of the reduction of the gap: from an initial 81 percentage points, it fell to 33. The remaining three components display a similar behavior to the one found in the decomposition of actual earnings for participants, although their size is slightly scaled down.

\begin{table}[htbp]
  \centering
  \caption{Mean decomposition, actual earnings for the full population (Frank copula)}\label{tab:dy2g}
		\begin{threeparttable}
    \begin{tabular}{cr@{\,}lr@{\,}lr@{\,}lr@{\,}lr@{\,}l}
		\hline
    Year  & \multicolumn{2}{c}{Total} & \multicolumn{2}{c}{EC} & \multicolumn{2}{c}{CC} & \multicolumn{2}{c}{SC} & \multicolumn{2}{c}{PC} \\
		\hline
    1976  & 1.07  & ***   & 0.00  &       & 0.29  & ***   & -0.02 &       & 0.81  & *** \\
    1977  & 1.06  & ***   & -0.01 &       & 0.28  & ***   & -0.01 &       & 0.79  & *** \\
    1978  & 1.03  & ***   & 0.00  &       & 0.34  & ***   & -0.04 & **    & 0.73  & *** \\
    1979  & 1.03  & ***   & 0.00  &       & 0.35  & ***   & -0.04 & **    & 0.72  & *** \\
    1980  & 0.96  & ***   & 0.00  &       & 0.33  & ***   & -0.04 & **    & 0.67  & *** \\
    1981  & 0.91  & ***   & 0.01  &       & 0.31  & ***   & -0.03 & *     & 0.63  & *** \\
    1982  & 0.87  & ***   & 0.01  &       & 0.30  & ***   & -0.03 & *     & 0.59  & *** \\
    1983  & 0.79  & ***   & 0.01  & *     & 0.28  & ***   & -0.02 &       & 0.53  & *** \\
    1984  & 0.75  & ***   & 0.01  & **    & 0.20  & ***   & 0.02  &       & 0.52  & *** \\
    1985  & 0.75  & ***   & 0.01  & **    & 0.27  & ***   & -0.04 & *     & 0.50  & *** \\
    1986  & 0.76  & ***   & 0.01  & ***   & 0.28  & ***   & -0.04 & **    & 0.50  & *** \\
    1987  & 0.75  & ***   & 0.02  & ***   & 0.31  & ***   & -0.06 & ***   & 0.48  & *** \\
    1988  & 0.71  & ***   & 0.02  & ***   & 0.24  & ***   & -0.01 &       & 0.46  & *** \\
    1989  & 0.69  & ***   & 0.02  & ***   & 0.22  & ***   & 0.00  &       & 0.45  & *** \\
    1990  & 0.71  & ***   & 0.01  & **    & 0.27  & ***   & -0.03 &       & 0.46  & *** \\
    1991  & 0.67  & ***   & 0.00  &       & 0.22  & ***   & -0.01 &       & 0.46  & *** \\
    1992  & 0.62  & ***   & 0.00  &       & 0.20  & ***   & -0.01 &       & 0.43  & *** \\
    1993  & 0.59  & ***   & 0.00  &       & 0.15  & ***   & 0.03  &       & 0.41  & *** \\
    1994  & 0.59  & ***   & 0.00  &       & 0.17  & ***   & 0.01  &       & 0.41  & *** \\
    1995  & 0.60  & ***   & -0.01 &       & 0.18  & ***   & 0.02  &       & 0.42  & *** \\
    1996  & 0.60  & ***   & -0.01 &       & 0.27  & ***   & -0.06 & ***   & 0.40  & *** \\
    1997  & 0.59  & ***   & -0.01 &       & 0.26  & ***   & -0.05 & *     & 0.38  & *** \\
    1998  & 0.59  & ***   & -0.01 & **    & 0.21  & ***   & 0.00  &       & 0.39  & *** \\
    1999  & 0.61  & ***   & 0.00  &       & 0.28  & ***   & -0.06 & **    & 0.39  & *** \\
    2000  & 0.58  & ***   & 0.00  &       & 0.24  & ***   & -0.04 & *     & 0.39  & *** \\
    2001  & 0.59  & ***   & -0.01 & *     & 0.25  & ***   & -0.05 & **    & 0.39  & *** \\
    2002  & 0.58  & ***   & -0.01 & ***   & 0.23  & ***   & -0.04 & *     & 0.40  & *** \\
    2003  & 0.55  & ***   & -0.01 & ***   & 0.26  & ***   & -0.07 & ***   & 0.37  & *** \\
    2004  & 0.53  & ***   & -0.02 & ***   & 0.25  & ***   & -0.07 & **    & 0.37  & *** \\
    2005  & 0.54  & ***   & -0.02 & ***   & 0.26  & ***   & -0.07 & **    & 0.37  & *** \\
    2006  & 0.53  & ***   & -0.02 & ***   & 0.29  & ***   & -0.10 & ***   & 0.36  & *** \\
    2007  & 0.53  & ***   & -0.03 & ***   & 0.29  & ***   & -0.09 & ***   & 0.36  & *** \\
    2008  & 0.51  & ***   & -0.03 & ***   & 0.30  & ***   & -0.10 & ***   & 0.34  & *** \\
    2009  & 0.49  & ***   & -0.04 & ***   & 0.24  & ***   & -0.05 &       & 0.34  & *** \\
    2010  & 0.43  & ***   & -0.04 & ***   & 0.19  & ***   & -0.01 &       & 0.30  & *** \\
    2011  & 0.43  & ***   & -0.04 & ***   & 0.25  & ***   & -0.07 & *     & 0.30  & *** \\
    2012  & 0.45  & ***   & -0.04 & ***   & 0.17  & ***   & 0.00  &       & 0.32  & *** \\
    2013  & 0.47  & ***   & -0.05 & ***   & 0.25  & ***   & -0.07 & ***   & 0.33  & *** \\
		\hline
    \end{tabular}\begin{tablenotes}[para,flushleft]
\begin{spacing}{1}
{\footnotesize Notes: Total, EC, CC, SC and PC respectively denote total difference, endowments component, coefficients component, selection component and participation component; *, ** and ** respectively denote statistical significance at the 90\%, 95\% and 99\% confidence level.}
\end{spacing}
\end{tablenotes}
\end{threeparttable}
\end{table}

The same decompositions are performed for the unconditional distributions. I present the estimates for a number of years (1976, 1984, 1992, 2000, 2007, 2013) in Figures~\ref{fig:dec1}-\ref{fig:dec2}. Additionally, I report the estimates for several quantiles in Tables~\ref{tab:dy1p10g}-\ref{tab:dy2p90g} in Appendix~\ref{app:tabs}.

Figure~\ref{fig:dec1} shows the evolution of the gap for the entire quantile process. Several changes have taken place. First, the gap increases monotonically with the quantiles of the distribution in every year, with the exception of the extreme top quantiles. Second, there has been a generalized reduction at all quantiles and, the decrease in the gap in absolute value has also been larger for higher quantiles. Namely, the gap at the 10th percentile fell from 29\% to 10\%, whereas the gap at the 90th percentile decreased from 49\% to 24\%. As it was the case for the mean decomposition, the coefficients component has been the largest one for almost every quantile and every year. In contrast, the selection component has been relatively flat across quantiles, although it displayed great variation across years, switching sign several times.

\begin{figure}[htbp]
\caption{Unconditional quantiles decompositions, actual earnings for participants (Frank copula)}
\includegraphics[width=16.5cm]{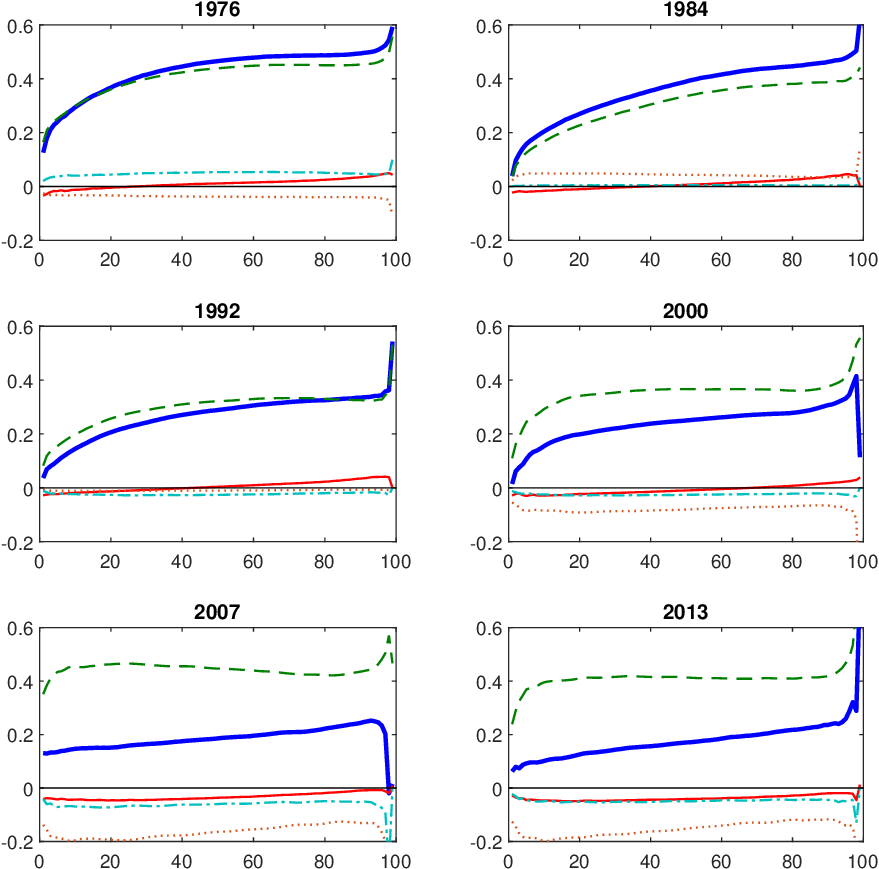}\label{fig:dec1}

{\footnotesize Notes: the solid thick blue line denotes the total gap between male and female workers; the solid thin red line denotes the endowments component; the dashed thin green line denotes the coefficients component; the dotted thin orange line denotes the selection component; the dashed-dotted thin cyan line denotes the participation component.}
\end{figure}

Regarding the participation and endowments components, their behavior is quite similar: their magnitude is smaller than that of the other two components, they were initially positive for the majority of the distribution and had a mild upward slope, and they have progressively flattened out, becoming negative and therefore reducing the gender gap. Moreover, their magnitude is similar to that of the decomposition of the mean.

The distributional gap for the entire population has an unconventional shape, as it displays a thick spike for a large part of the distribution. Its width equals the difference in the participation rates between men and women, and its height equals earnings of male workers from the left tail of their gender distribution. Therefore, the width of this spike has progressively diminished over time with the reduction of the participation gap. However, it still remains the main factor of difference between the two distributions. Second, since the fraction of non-participants is positive for both men and women, the lower tail of the gap equals zero, as workers of both genders on that tail do not have any labor earnings.

Additionally, the participation component has a decreasing shape after the end of the spike, reflecting a shifting in the distribution between male and female worker. To see this, denote by $\tau_{f}$ the quantile at which women earnings becomes positive, \textit{i.e.}, $\tau_{f}=1-\mathbb{E}\left[\pi_{f}\left(Z\right)\right]$. Men above $\tau_{f}$ represent those above a certain level of earnings, which would be equivalent to comparing the distribution of earnings of employed women at a given quantile to the distribution of employed men of a higher quantile.

The other three components have a similar shape to the one found in the decomposition for participants, although their size is somewhat smaller. Consequently, the coefficients component is the second largest one, remaining an important determinant of the gap. Finally, the endowments component and selection components are flatter, relatively small, and they have become negative over time reducing the gap at all quantiles.

\begin{figure}[htbp]
\caption{Unconditional quantiles decompositions, actual earnings for participants (Gaussian copula)}
\includegraphics[width=16.5cm]{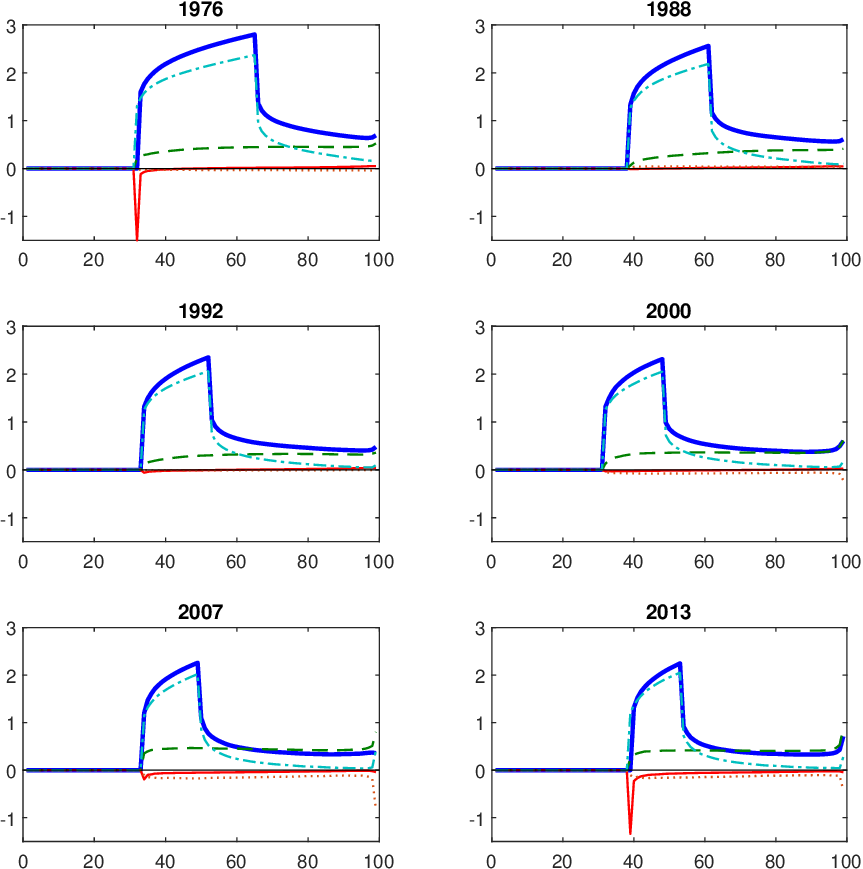}\label{fig:dec2}

{\footnotesize Notes: the solid thick blue line denotes the total gap between male and female workers; the solid thin red line denotes the endowments component; the dashed thin green line denotes the coefficients component; the dotted thin orange line denotes the selection component; the dashed-dotted thin cyan line denotes the participation component.}
\end{figure}

These results are related to the findings in \cite{Olivetti2008}. In their cross-country comparison, they found that countries with the highest participation gaps tended to have lower wage gaps for participants. However, their estimated gaps accounting for self-selection showed that gaps increased much more in countries with larger participation gaps. This is captured by the participation component in the decomposition of the wage gap: for a given level of self-selection, narrowing the participation gap would make the participation component shrink to zero. However, this does not eliminate the impact of self-selection on unobservables, as differences in the amount of self-selection could be due to differences in the copula. Indeed, the estimates in this paper showed how, as participation gaps decreased over the considered period, the intensity of self-selection increased much more for women. Therefore, a more comprehensive cross-country analysis could account for this factor to explain how differences in self-selection intensity have determined different wage gaps in different countries.

\subsection{Heterogeneous copulas}\label{sec:hetcop}

One potentially strong assumption regards the fact that the baseline copulas used in the estimation are homogeneous across the covariates. This limits any selection differences across some of these characteristics to the propensity score channel. To address this issue, I repeat the estimation separately for three different categories: race (white vs non-white), education level (college graduates vs. less than college) and marital status (married vs unmarried).

Some of the estimates are slightly sensitive to the heterogeneous copulas. The mean value of unobserved ability is most similar when one obtains the estimates by race (Figure~\ref{fig:meanu2}, top panel). However, it has been in general larger for white males relative to the baseline estimates, although the estimates are more volatile. In contrast, the estimates for non-white males show a smaller value than the baseline for almost the entire period. The estimates for white females show a more stable evolution of the average value of unobserved ability, whereas for non-white female workers, the evolution has been positive, very closely to the baseline estimates.

\begin{figure}[htbp]
\caption{Mean value of $u$ for participants}
\includegraphics[width=16.5cm]{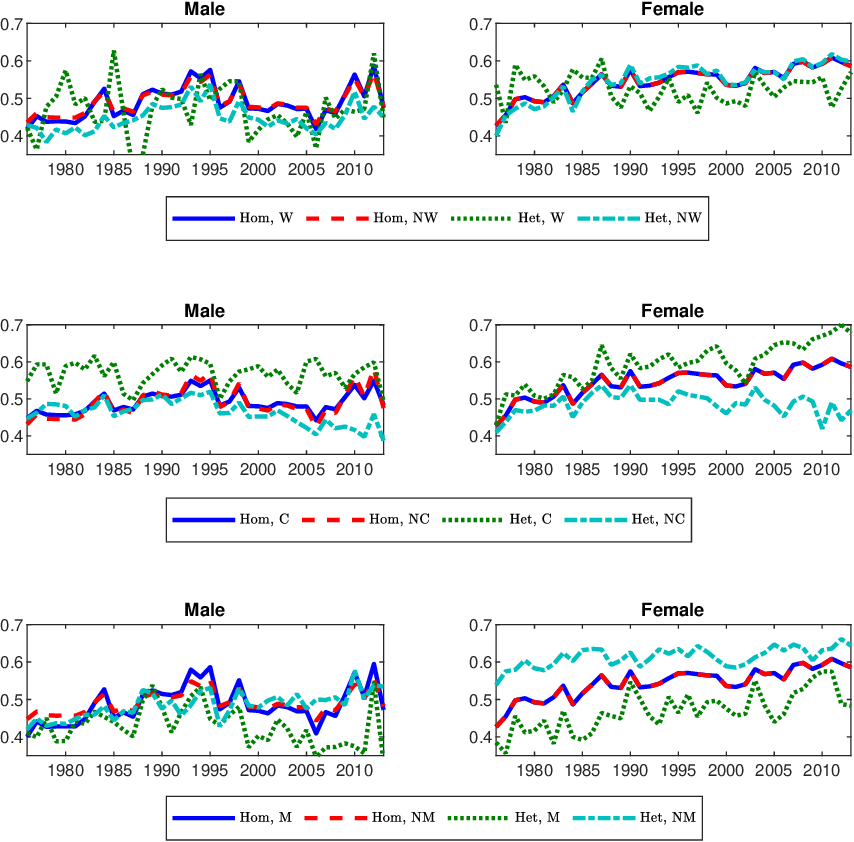}\label{fig:meanu2}

{\footnotesize Notes: Hom, Het, W, NW, C, NC, M and NM respectively denote homogeneous copula, heterogeneous copula, white, non-white, college graduates, less than college, married and unmarried.}
\end{figure}

A more evident difference relative to the baseline case appears in the estimates split by education level (Figure~\ref{fig:meanu2}, central panel), showing a great divide in the average level of unobserved ability between those with a college degree and those without. For the first group, this level has been higher and relatively stable, whereas for the latter it has been lower and decreasing steadily since the mid-nineties. The same divide can be observed for female workers although the average level of unobserved ability steadily increased for college-educated women, and has remained stable for lower-educated female workers.

Finally, if one allows the copula to be different for married and unmarried workers (Figure~\ref{fig:meanu2}, bottom panel), we find the largest differences for women: in particular, unmarried female workers tend to have a much higher level of unobserved ability, although the gap with married women has decreased in the last two decades. In contrast, the estimates for men are much similar to the baseline ones. The main difference corresponds to the period beginning in the late nineties, in which the average level of unobserved ability for married male workers is smaller.

Despite these differences in the amount of self-selection, the mean earnings gap for participants remains largely unaltered in these specifications (Figure~\ref{fig:dY1_het}). However, the decompositions do vary slightly. The most noticeable differences relative to the baseline estimates arise in the coefficients and selection components. In particular, the coefficients components is generally larger for the estimates with heterogeneous copulas by race and marital status, and smaller for the estimates that are heterogeneous by education level. The opposite is true for the selection component, as these two almost cancel each other out entirely. This reinforces the hypothesis that the instrument is weak for male workers. The only exception is the model with heterogeneous copula by marital status, for which the the participation components is more negative during the entire period, \textit{i.e.}, in favor of female workers.

\begin{figure}[htbp]
\caption{Mean decomposition, actual earnings for participants with heterogeneous copulas}
\includegraphics[width=16.5cm]{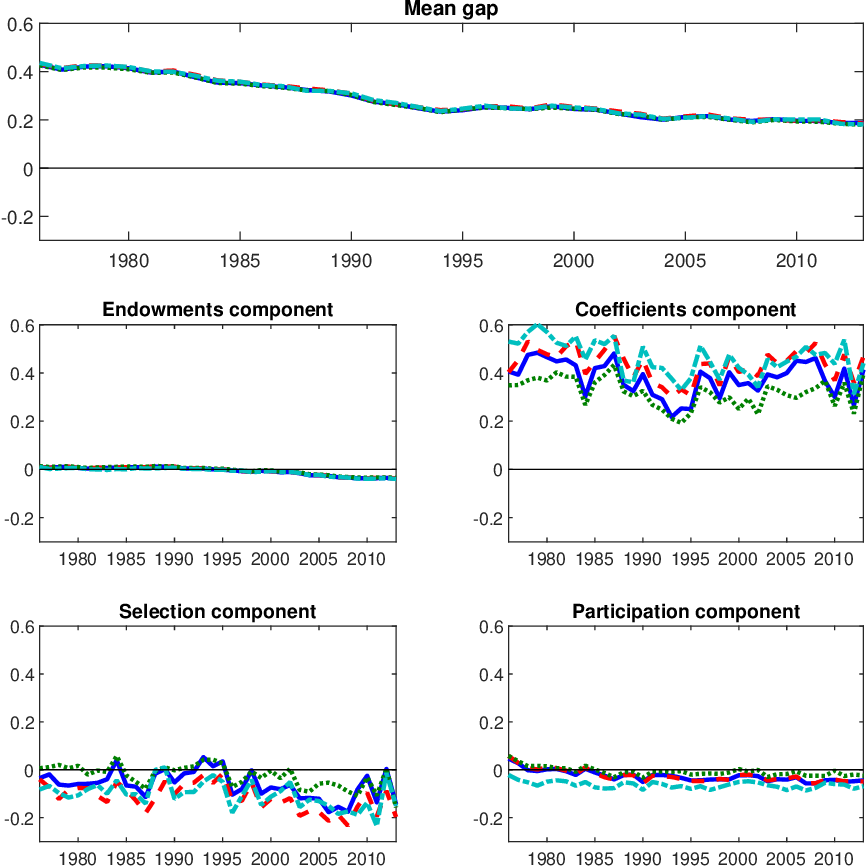}\label{fig:dY1_het}

{\footnotesize Notes: the solid blue line indicates the gap with the homogeneous copulas; the dashed red line indicates the gap with the heterogeneous copulas by race; the dotted green line indicates the gap with the heterogeneous copulas by education level; the dashed-dotted cyan line indicates the gap with the heterogeneous copulas by marital status.}
\end{figure}
\section{Conclusion}\label{sec:conc}

In this paper I have introduced a new way to decompose differences in outcomes between two groups when there is self-selection into participation. In particular, rather than considering the decomposition of potential outcomes for the entire population, I decompose differences in actual outcomes for both participants and the entire population. These differences are decomposed into four components: endowments, coefficients, participation and selection. Moreover, I propose to provide two additional ancillary decompositions regarding differences in participation and self-selection.

I apply this methodology to analyze the labor earnings gap between males and females. I find that increases in female labor market participation and improvements in self-selection that led to an increase in unobserved ability for females have been responsible for a large share of the fall of the gap. Moreover, considering the gap for participants greatly underestimates the gap for the entire population, due to the still existing gender participation gap.

\bibliographystyle{chicago}
\begin{spacing}{1.0}

\bibliography{decomposition}
\end{spacing}
\pagebreak

\appendix

\section{Mathematical proofs}\label{app:proofs}

\subsection{Proof of Lemma~\ref{lem:normalize}}

$\forall d\in\mathcal{D}$, let $V=\tilde{F}_{V|d,X}\left(\tilde{V}|d,X\right)$. By definition, $V\sim U\left(0,1\right)$. Moreover, $\tilde{V}<\tilde{\pi}_{d}\left(Z\right)\Leftrightarrow \tilde{F}_{V|d,X}\left(\tilde{V}|d,X\right)<\tilde{F}_{V|d,X}\left(\tilde{\pi}_{d}\left(Z\right)|d,X\right)\equiv\pi_{d}\left(Z\right)$. Hence, $\tilde{F}_{V|S,X}=F_{V|S,X}$.

Similarly, let $U=\tilde{F}_{U|d,X}\left(\tilde{U}|d,X\right)$, which is also uniformly distributed on the unit interval. It follows that $Y=\tilde{g}_{d}\left(X,\tilde{U}\right)S=\tilde{g}_{d}\left(X,\tilde{F}_{U|d,X}^{-1}\left(U|d,X\right)\right)S\equiv g_{d}\left(X,U\right)S$. The joint distribution of $\left(\tilde{U},\tilde{V}\right)$ can be written as
\begin{align*}
\mathbb{P}\left(\tilde{U}\leq\tau,\tilde{V}\leq\tilde{\pi}_{d}\left(z\right)|Z=z\right)&=\mathbb{P}\left(\tilde{F}_{U|d,x}^{-1}\left(U|d,x\right)\leq\tau,\tilde{F}_{V|d,x}^{-1}\left(V|d,x\right)\leq\tilde{\pi}_{d}\left(z\right)|Z=z\right)\\
&=\mathbb{P}\left(U\leq\tilde{F}_{U|d,x}\left(\tau|x\right),V\leq\pi_{d}\left(z\right)|Z=z\right)\\
&=C_{d,x}\left(\tilde{F}_{U|d,x}\left(\tau|d,x\right),\pi_{d}\left(z\right)\right)
\end{align*}
where the first equality follows by the invertibility of $\tilde{U}$ and $\tilde{V}$, the second one by the first result of the lemma, and the third one by definition of the copula.

Define $\tilde{G}_{d,x}\left(\tau,\pi_{d}\left(z\right)\right)\equiv\mathbb{P}\left(\tilde{U}\leq\tau|S=1,Z=z\right)$. It can be expressed as
\begin{align}
\tilde{G}_{d,x}\left(\tau,\tilde{\pi}_{d}\left(z\right)\right)&\equiv\frac{\mathbb{P}\left(\tilde{U}\leq\tau,\tilde{V}\leq\tilde{\pi}_{d}\left(z\right)|Z=z\right)}{\mathbb{P}\left(\tilde{V}\leq\tilde{\pi}_{d}\left(z\right)|Z=z\right)}\nonumber\\
&=\frac{\mathbb{P}\left(\tilde{F}_{U|d,x}^{-1}\left(U_{1}|d,x\right)\leq\tau,\tilde{F}_{V|d,x}^{-1}\left(V|d,x\right)\leq\tilde{\pi}_{d}\left(z\right)|Z=z\right)}{\mathbb{P}\left(\tilde{F}_{V|d,x}^{-1}\left(V|d,x\right)\leq\tilde{\pi}_{d}\left(z\right)\right|Z=z)}\nonumber\\
&=\frac{C_{d,x}\left(\tilde{F}_{U|d,x}\left(\tau|d,x\right),\pi_{d}\left(z\right)\right)}{\pi_{d}\left(z\right)}=G_{d,x}\left(\tilde{F}_{U|d,x}\left(\tau|x\right),\pi_{d}\left(z\right)\right)\label{eq:cond0}
\end{align}

Then, the distribution of $Y$, conditional on $S=1$ and $Z=z$ equals
\begin{align*}
\mathbb{P}\left(Y\leq y|S=1,Z=z\right)&=\int\mathbf{1}\left(\tilde{g}_{d}\left(x,\tilde{u}\right)\leq y\right)d\tilde{G}_{d,x}\left(\tilde{u},\tilde{\pi}_{d}\left(z\right)\right)\\
&=\int\mathbf{1}\left(\tilde{g}_{d}\left(x,\tilde{u}\right)\leq y\right)dG_{d,x}\left(\tilde{F}_{U|d,x}\left(\tilde{u}|x\right),\pi_{d}\left(z\right)\right)\\
&=\int\mathbf{1}\left(\tilde{g}_{d}\left(x,\tilde{F}_{U|d,x}^{-1}\left(u\right)\right)\leq y\right)dG_{d,x}\left(u,\pi_{d}\left(z\right)\right)\\
&=\int\mathbf{1}\left(g_{d}\left(x,u\right)\leq y\right)dG_{d,x}\left(u,\pi_{d}\left(z\right)\right)\\
\end{align*}
where the second equality follows by Equation~\ref{eq:cond0}, the third one by the invertibility of $\tilde{U}$, and the fourth one by the definition of $\tilde{g}_{d}$, completing the proof.

\subsection{Proof of Theorem~\ref{thm:asymgc}}

Define
\begin{align*}
\mathbb{Z}_{\upsilon_{m}}\left(z,\tau,\pi,f\right)\equiv\begin{pmatrix}
\mathbb{Z}_{g_{j}}\left(\tau,x\right)\\
\mathbb{Z}_{c_{k}}\left(\tau,\pi\right)\\
\mathbb{Z}_{\pi_{l}}\left(z\right)\\
\sqrt{\lambda_{h}}\mathbb{Z}_{Z_{h}}\left(f\right)
\end{pmatrix}
\end{align*}

By Assumptions~\ref{assum:samp}-\ref{assum:gsize}, it is possible to apply Lemma E.4 in \cite{Chernozhukov2013}. This, together with Lemma~\ref{lem:hadamarde} and the functional delta method yields\footnote{Recall that $\hat{F}_{Z}^{h}\left(z\right)=\frac{1}{n_{h}}\sum_{i=1}^{n}\mathbf{1}\left(Z_{i}\leq z\right)\mathbf{1}\left(D_{i}=h\right)$, so that $\sqrt{n}\int_{\mathcal{Z}}fd\left(\hat{F}_{Z}^{h}-F_{Z}^{h}\right)\Rightarrow\sqrt{\lambda_{h}}\mathbb{Z}_{Z_{h}}\left(f\right)$.}
\begin{align*}
\sqrt{n}&\left(\hat{\mathbb{E}}\left[Y^{m}|S=1\right]-\mathbb{E}\left[Y^{m}|S=1\right]\right)\\
&=\sqrt{n}\int_{\mathcal{Z}}\int_{\varepsilon}^{1-\varepsilon}\left(\hat{g}_{j}\left(x,\tau\right)-g_{j}\left(x,\tau\right)\right)dG_{k,x}\left(\tau|\pi_{l}\left(z\right)\right)sdF_{Z}^{h}\left(z\right)\\
&+\sqrt{n}\int_{\mathcal{Z}}\int_{\varepsilon}^{1-\varepsilon}g_{j}\left(x,\tau\right)d\left(\hat{G}_{k,x}\left(\tau|\pi_{l}\left(z\right)\right)-G_{k,x}\left(\tau|\pi_{l}\left(z\right)\right)\right)sdF_{Z}^{h}\left(z\right)\\
&+\sqrt{n}\int_{\mathcal{Z}}\int_{\varepsilon}^{1-\varepsilon}g_{j}\left(x,\tau\right)d\left(\nabla_{\pi}G_{k,x}\left(\tau|\pi_{l}\left(z\right)\right)\right)\left(\hat{\pi}_{l}\left(z\right)-\pi_{l}\left(z\right)\right)sdF_{Z}^{h}\left(z\right)\\
&+\sqrt{n}\int_{\mathcal{Z}}\int_{\varepsilon}^{1-\varepsilon}g_{j}\left(x,\tau\right)dG_{k,x}\left(\tau|\pi_{l}\left(z\right)\right)d\left(\hat{F}_{Z}^{h}\left(z\right)-F_{Z}^{h}\left(z\right)\right)+o_{P}\left(1\right)\\
&\Rightarrow\int_{\mathcal{Z}}\int_{\varepsilon}^{1-\varepsilon}\mathbb{Z}_{g_{j}}\left(\tau,x\right)dG_{k,x}\left(\tau|\pi_{l}\left(z\right)\right)dF_{Z}^{h}\left(z\right)\\
&+\int_{\mathcal{Z}}\int_{\varepsilon}^{1-\varepsilon}g_{j}\left(x,\tau\right)\frac{1}{\pi_{l}\left(z\right)}\mathbb{Z}_{c_{k}}\left(\tau,\pi_{l}\left(z\right)\right)d\tau dF_{Z}^{h}\left(z\right)\\
&+\int_{\mathcal{Z}}\int_{\varepsilon}^{1-\varepsilon}g_{j}\left(x,\tau\right)\frac{\nabla_{v}c_{k,x}\left(u,\pi_{l}\left(z\right)\right)\pi_{l}\left(z\right)-c_{k,x}\left(\tau,\pi_{l}\left(z\right)\right)}{\pi_{l}\left(z\right)^{2}}\mathbb{Z}_{\pi_{l}}\left(z\right)d\tau dF_{Z}^{h}\left(z\right)\\
&+\sqrt{\lambda_{h}}\mathbb{Z}_{Z_{h}}\left(\int_{\varepsilon}^{1-\varepsilon}g_{j}\left(\cdot,\tau\right)dG_{k,x}\left(\tau|\pi_{l}\left(\cdot\right)\right)\right)\equiv\mathbb{Z}_{Y^{m}|S=1}
\end{align*}
in $\ell^{\infty}\left(\mathcal{M}\right)$. Apply the functional delta method once more to obtain
\begin{align*}
\sqrt{n}\Delta^{m,m'}\left(\hat{\mathbb{E}}\left[Y|S=1\right]-\mathbb{E}\left[Y|S=1\right]\right)\Rightarrow\mathbb{Z}_{Y^{m}|S=1}-\mathbb{Z}_{Y^{m'}|S=1}\equiv\mathbb{Z}_{\Delta^{mm'}Y|S=1}
\end{align*}

To show the asymptotic distribution of $\Delta^{m,m'}\hat{Q}_{Y|S=1}\left(\tau\right)$, I firstly show some intermediate steps. First, note that we can write
\begin{align*}
&\hat{F}_{Y|S=1}^{m}\left(y|z\right)=\frac{1}{n_{h}}\sum_{i=1}^{n}\left[\varepsilon+\int_{\varepsilon}^{1-\varepsilon}\mathbf{1}\left(\hat{g}_{j}\left(X_{i},u\right)\leq y\right)d\hat{G}_{k,x}\left(u,\hat{\pi}_{l}\left(Z_{i}\right)\right)\right]\mathbf{1}\left(D_{i}=h\right)\\
&=\int_{\mathcal{Z}}\left[\varepsilon+\int_{\varepsilon}^{1-\varepsilon}\mathbf{1}\left(\hat{g}_{j}\left(x,u\right)\leq y\right)\frac{\hat{c}_{k,x}\left(u,\hat{\pi}_{l}\left(z\right)\right)}{\hat{\pi}_{l}\left(z\right)}du\right]d\hat{F}_{Z}^{h}\left(z\right)\equiv\int_{\mathcal{Z}}\hat{F}_{Y|Z,S=1}^{m}\left(y|z\right)d\hat{F}_{Z_{m}}\left(z\right)
\end{align*}

Next, consider the joint asymptotic distribution of $\left(\hat{F}_{Y|Z,S=1}^{m}\left(y|z\right),\int_{\mathcal{Z}}fd\hat{F}_{Z_{m}}\left(z\right)\right)$. By Lemma~\ref{lem:hadamardf} and the functional delta method, it follows that
\begin{align*}
\sqrt{n}&\left(\hat{F}_{Y|Z,S=1}^{m}\left(y|z\right)-F_{Y|Z,S=1}^{m}\left(y|z\right)\right)\\
&=\sqrt{n}\int_{\varepsilon}^{1-\varepsilon}\left(\mathbf{1}\left(\hat{g}_{j}\left(x,u\right)\leq y\right)-\mathbf{1}\left(g_{j}\left(x,u\right)\leq y\right)\right)\frac{c_{k,x}\left(u,\pi_{l}\left(z\right)\right)}{\pi_{l}\left(z\right)}du\\
&+\sqrt{n}\int_{\varepsilon}^{1-\varepsilon}\mathbf{1}\left(g_{j}\left(x,u\right)\leq y\right)\frac{1}{\pi_{l}\left(z\right)}\left(\hat{c}_{k,x}\left(u,\pi_{l}\left(z\right)\right)-c_{k,x}\left(u,\pi_{l}\left(z\right)\right)\right)du\\
&+\sqrt{n}\int_{\varepsilon}^{1-\varepsilon}\mathbf{1}\left(g_{j}\left(x,u\right)\leq y\right)\frac{\nabla_{v}c_{k,x}\left(u,\pi_{l}\left(z\right)\right)\pi_{l}\left(z\right)-c_{k,x}\left(u,\pi_{l}\left(z\right)\right)}{\pi_{l}\left(z\right)^{2}}\left(\hat{\pi}_{l}\left(z\right)-\pi_{l}\left(z\right)\right)du+o_{P}\left(1\right)\\
&\Rightarrow -f_{Y|Z,S=1}\left(y|z\right)\mathbb{Z}_{g_{j}}\left(G_{j,x}^{-1}\left(F_{Y|Z,S=1}^{j}\left(y|z\right),\pi_{j}\left(z\right)\right),x\right)\\
&+\int_{\varepsilon}^{1-\varepsilon}\mathbf{1}\left(g_{j}\left(x,u\right)\leq y\right)\frac{1}{\pi_{l}\left(z\right)}\mathbb{Z}_{c_{k}}\left(u,\pi_{l}\left(z\right)\right)du\\
&+\int_{\varepsilon}^{1-\varepsilon}\mathbf{1}\left(g_{j}\left(x,u\right)\leq y\right)\frac{\nabla_{v}c_{k,x}\left(u,\pi_{l}\left(z\right)\right)\pi_{l}\left(z\right)-c_{k,x}\left(u,\pi_{l}\left(z\right)\right)}{\pi_{l}\left(z\right)^{2}}\mathbb{Z}_{\pi_{l}}\left(z\right)du\\
&\equiv\mathbb{Z}_{F_{Y}^{m}|Z,S=1}\left(y,z\right)
\end{align*}
in $\ell^{\infty}\left(\mathcal{Y}\mathcal{Z}\mathcal{M}\right)$. Therefore,
\begin{align*}
\sqrt{n}\begin{pmatrix}
\hat{F}_{Y|Z,S=1}^{m}\left(y|z\right)-F_{Y|Z,S=1}^{m}\left(y|z\right)\\
\int_{\mathcal{Z}}fd\left(\hat{F}_{Z}^{h}\left(z\right)-F_{Z}^{h}\left(z\right)\right)
\end{pmatrix}\Rightarrow\begin{pmatrix}
\mathbb{Z}_{F_{Y}^{m}|Z,S=1}\left(y,z\right)\\
\sqrt{\lambda_{h}}\mathbb{Z}_{Z_{h}}\left(f\right)
\end{pmatrix}
\end{align*}
in $\ell^{\infty}\left(\mathcal{Y}\mathcal{Z}\mathcal{F}\mathcal{M}\right)$.

The next step is to show the asymptotic distribution of the estimator of the unconditional distribution $\hat{F}_{Y|S=1}^{m}\left(y\right)$. By the functional delta method,
\begin{align*}
\sqrt{n}\left(\hat{F}_{Y|S=1}^{m}\left(y\right)-F_{Y|S=1}^{m}\left(y\right)\right)&=\sqrt{n}\int_{\mathcal{Z}}\left(\hat{F}_{Y|Z,S=1}^{m}\left(y|z\right)-F_{Y|Z,S=1}^{m}\left(y|z\right)\right)dF_{Z}^{h}\left(z\right)\\
&+\sqrt{n}\int_{\mathcal{Z}}F_{Y|Z,S=1}^{m}\left(y|z\right)d\left(\hat{F}_{Z}^{h}\left(z\right)-F_{Z}^{h}\left(z\right)\right)\\
&\Rightarrow\int_{\mathcal{Z}}\mathbb{Z}_{F_{Y}^{m}|Z,S=1}\left(y|z\right)dF_{Z}^{h}\left(z\right)+\sqrt{\lambda_{h}}\mathbb{Z}_{Z_{h}}\left(F_{Y|Z,S=1}^{m}\left(y|z\right)\right)\\
&\equiv\mathbb{Z}_{F_{Y|S=1}^{m}}\left(y\right)
\end{align*}
uniformly in $\ell^{\infty}\left(\mathcal{Y}\mathcal{M}\right)$.

The final step is to show the asymptotic distribution of $\hat{Q}_{Y|S=1}^{m}\left(\tau\right)$. By Lemma E.4 in \cite{Chernozhukov2013} and the functional delta method,
\begin{align*}
\sqrt{n}\left(\hat{Q}_{Y|S=1}^{m}\left(\tau\right)-Q_{Y|S=1}^{m}\left(\tau\right)\right)&=-\frac{\sqrt{n}\left(\hat{F}_{Y|S=1}^{m}\left(Q_{Y|S=1}^{m}\left(\tau\right)\right)-F_{Y|S=1}^{m}\left(Q_{Y|S=1}^{m}\left(\tau\right)\right)\right)}{f_{Y|S=1}^{m}\left(Q_{Y|S=1}^{m}\left(\tau\right)\right)}+o_{P}\left(1\right)\\
&\Rightarrow-\frac{\mathbb{Z}_{F_{Y|S=1}^{m}}\left(Q_{Y|S=1}^{m}\left(\tau\right)\right)}{f_{Y|S=1}^{m}\left(Q_{Y|S=1}^{m}\left(\tau\right)\right)}\\
&\equiv\mathbb{Z}_{Q_{Y|S=1}^{m}}\left(\tau\right)
\end{align*}
uniformly in $\ell^{\infty}\left(\mathcal{T}\mathcal{M}\right)$, where I have used the Hadamard differentiability of the quantile operator \citep{Chernozhukov2010}. $\tau\rightarrow Q_{Y|S=1}^{m}\left(\tau\right)$ is a.s. uniformly continuous by Assumption~\ref{assum:bound}, and together with the a.s. uniform continuity of $\mathbb{Z}_{F_{Y|S=1}^{m}}\left(y\right)$, it follows that $\mathbb{Z}_{Q_{Y|S=1}^{m}}\left(\tau\right)$ is a.s. uniformly continuous with respect to $\tau$.

Finally, note that
\begin{align*}
\Delta^{m,m'}\left(\hat{Q}_{Y|S=1}\left(\tau\right)-Q_{Y|S=1}\left(\tau\right)\right)=\hat{Q}_{Y|S=1}^{m}\left(\tau\right)-Q_{Y|S=1}^{m}\left(\tau\right)-\left(\hat{Q}_{Y|S=1}^{m'}\left(\tau\right)-Q_{Y|S=1}^{m'}\left(\tau\right)\right)
\end{align*}

Hence, $\sqrt{n}\Delta^{m,m'}\left(\hat{Q}_{Y|S=1}\left(\tau\right)-Q_{Y|S=1}\left(\tau\right)\right)\Rightarrow\mathbb{Z}_{Q_{Y|S=1}^{m}}\left(\tau\right)-\mathbb{Z}_{Q_{Y|S=1}^{m'}}\left(\tau\right)\equiv\mathbb{Z}_{Q|S=1,mm'}\left(\tau\right)$, finishing the proof.

\subsection{Proof of Theorem~\ref{thm:asym}}

Let $E\equiv\left(Y,S,D,Z\right)$. Moreover, define
\begin{align*}
R_{m}\left(E,\beta,\theta,\gamma,\tau\right)\equiv
\begin{bmatrix}
\mathbf{1}\left(D=j\right)SX\zeta_{G_{j,x}\left(\tau,\pi_{j}\left(Z;\gamma\right),\theta\right)}\left(Y-X'\beta\right)\\
\int_{\varepsilon}^{1-\varepsilon}\mathbf{1}\left(D=k\right)S\varphi\left(u,Z\right)\zeta_{G_{k,x}\left(\tau,\pi_{k}\left(Z;\gamma\right),\theta\right)}\left(Y-X'\beta\right)du\\
\mathbf{1}\left(D=l\right)b_{l}\left(S,Z;\gamma\right)\\
\end{bmatrix}
\end{align*}
\begin{align*}
q_{m}\left(E,\beta,\theta,\gamma,\tau\right)\equiv
\begin{bmatrix}
\mathbf{1}\left(D=j\right)SX\rho_{G_{j,x}\left(\tau,\pi_{j}\left(Z;\gamma\right),\theta\right)}\left(Y-X'\beta\right)\\
\mathbf{1}\left(D=k\right)S\int_{\varepsilon}^{1-\varepsilon}\varphi\left(u,Z\right)\rho_{G_{k,x}\left(\tau,\pi_{k}\left(Z;\gamma\right),\theta\right)}\left(Y-X'\beta\right)du\\
\mathbf{1}\left(D=l\right)\tilde{b}_{l}\left(S,Z;\gamma\right)\\
\end{bmatrix}
\end{align*}
for some $\tilde{b}_{l}\left(S,Z;\gamma\right)$ such that $\frac{\partial}{\partial\gamma}\tilde{b}_{l}\left(S,Z;\gamma\right)=b_{l}\left(S,Z;\gamma\right)$, $f\mapsto\mathbb{E}_{n}\left[f\left(E\right)\right]\equiv\frac{1}{n}\sum_{i=1}^{n}f\left(E\right)$, $f\mapsto\mathbb{G}_{n}\left[f\left(E\right)\right]\equiv\frac{1}{\sqrt{n}}\sum_{i=1}^{n}f\left(E\right)-\mathbb{E}\left(f\left(E\right)\right)$, $Q_{m,n}\left(\beta,\theta,\gamma,\tau\right)\equiv\mathbb{E}_{n}\left[q_{m}\left(E,\beta,\theta,\gamma,\tau\right)\right]$, and $Q_{m}\left(\beta,\theta,\gamma,\tau\right)\equiv\mathbb{E}\left[q_{m}\left(E,\beta,\theta,\gamma,\tau\right)\right]$, where $\rho_{\tau}\left(u\right)\equiv\left(\tau-\mathbf{1}\left(u<0\right)\right)u$, $\zeta_{\tau}\left(u\right)\equiv\left(\mathbf{1}\left(u<0\right)-\tau\right)$, $\epsilon_{d}\left(\tau\right)\equiv Y-X'\beta_{d}\left(\tau\right)$, and $\hat{\epsilon}_{d}\left(\tau\right)\equiv Y-X'\hat{\beta}_{d}\left(\tau\right)$.

First I show the consistency of the estimator. By Assumptions~\ref{assum:propensity} to~\ref{assum:parcop}, $Q_{m}\left(\beta,\theta,\gamma,\tau\right)$ is continuous over $\mathcal{B}\times\Theta\times\Gamma\times\mathcal{U}$. By Lemma~\ref{lem:se}, $\sup_{\left(\beta,\theta,\gamma,\tau\right)\in\mathcal{B}\times\Theta\times\Gamma\times\mathcal{U}}\left\|Q_{m,n}\left(\beta,\theta,\gamma,\tau\right)-Q_{m}\left(\beta,\theta,\gamma,\tau\right)\right\|\overset{P}{\rightarrow}0$. Thus, by Lemma~\ref{lem:argmax}, $\sup_{\tau\in\mathcal{U}}\left\|\hat{\vartheta}_{m}\left(\tau\right)-\vartheta_{m}\left(\tau\right)\right\|\overset{P}{\rightarrow}0$.

Next, I show its asymptotic distribution. By Theorem 3 in \cite{Koenker1978}, it is possible to show that
\begin{align*}
O\left(\frac{1}{\sqrt{n}}\right)=\sqrt{n}\mathbb{E}_{n}\left[\mathbf{1}\left(D=d\right)SX\zeta_{G_{d,x}\left(\tau,\pi_{d}\left(Z;\hat{\gamma}_{d}\right),\hat{\theta}_{d}\right)}\left(\hat{\epsilon}_{d}\left(\tau\right)\right)\right]
\end{align*}

By Lemma~\ref{lem:se} and Assumption~\ref{assum:fullrank}, the following expansion holds in $\ell^{\infty}\left(\mathcal{U}\right)$:
\begin{align*}
O\left(\frac{1}{\sqrt{n}}\right)&=\mathbb{G}_{n}\left[\mathbf{1}\left(D=d\right)SX\zeta_{G_{d,x}\left(\tau,\pi_{d}\left(Z;\hat{\gamma}_{d}\right);\hat{\theta}_{d}\right)}\left(\hat{\epsilon}_{d}\left(\tau\right)\right)\right]+\sqrt{n}\mathbb{E}\left[\mathbf{1}\left(D=d\right)SX\zeta_{G_{d,x}\left(\tau,\pi_{d}\left(Z;\hat{\gamma}_{d}\right);\hat{\theta}_{d}\right)}\left(\hat{\epsilon}_{d}\left(\tau\right)\right)\right]\\
&=\mathbb{G}_{n}\left[\mathbf{1}\left(D=d\right)SX\zeta_{G_{d,x}\left(\tau,\pi_{d}\left(Z;\gamma_{d}\right);\theta_{d}\right)}\left(\epsilon_{d}\left(\tau\right)\right)\right]+o_{P}\left(1\right)\\
&+\sqrt{n}\mathbb{E}\left[\mathbf{1}\left(D=d\right)SX\zeta_{G_{d,x}\left(\tau,\pi_{d}\left(Z;\hat{\gamma}_{d}\right);\hat{\theta}_{d}\right)}\left(\hat{\epsilon}_{d}\left(\tau\right)\right)\right]\\
&=\mathbb{G}_{n}\left[\mathbf{1}\left(D=d\right)SX\zeta_{G_{d,x}\left(\tau,\pi_{d}\left(Z;\gamma_{d}\right);\theta_{d}\right)}\left(\epsilon_{d}\left(\tau\right)\right)\right]+J_{\beta d}\left(\tau\right)\sqrt{n}\left(\hat{\beta}_{d}\left(\tau\right)-\beta_{d}\left(\tau\right)\right)\\
&-J_{\gamma d}\left(\tau\right)\sqrt{n}\left(\hat{\gamma}_{d}-\gamma_{d}\right)-J_{\theta d}\left(\tau\right)\sqrt{n}\left(\hat{\theta}_{d}-\theta_{d}\right)+o_{P}\left(1\right)
\end{align*}
where
\begin{align*}
J_{\beta d}\left(\tau\right)\equiv\frac{\partial\mathbb{E}\left[\mathbf{1}\left(D=d\right)SX\zeta_{G_{d,x}\left(\tau,\pi_{d}\left(Z;\gamma_{d}\right);\theta_{d}\right)}\left(\epsilon_{d}\left(\tau\right)\right)\right]}{\partial\beta_{d}}
\end{align*}
\begin{align*}
J_{\gamma d}\left(\tau\right)\equiv-\frac{\partial\mathbb{E}\left[\mathbf{1}\left(D=d\right)SX\zeta_{G_{d,x}\left(\tau,\pi_{d}\left(Z;\gamma_{d}\right);\theta_{d}\right)}\left(\epsilon_{d}\left(\tau\right)\right)\right]}{\partial\gamma_{d}}
\end{align*}
\begin{align*}
J_{\theta d}\left(\tau\right)\equiv-\frac{\partial\mathbb{E}\left[\mathbf{1}\left(D=d\right)SX\zeta_{G_{d,x}\left(\tau,\pi_{d}\left(Z;\gamma_{d}\right);\theta_{d}\right)}\left(\epsilon_{d}\left(\tau\right)\right)\right]}{\partial\theta_{d}}
\end{align*}

Rearranging and solving for $\sqrt{n}\left(\hat{\beta}_{d}\left(\tau\right)-\beta_{d}\left(\tau\right)\right)$,
\begin{align}\label{eq:beta1}
\sqrt{n}\left(\hat{\beta}_{d}\left(\tau\right)-\beta_{d}\left(\tau\right)\right)&=-J_{\beta d}\left(\tau\right)^{-1}\left\{\mathbb{G}_{n}\left[\mathbf{1}\left(D=d\right)SX\zeta_{G_{d,x}\left(\tau,\pi_{d}\left(Z;\gamma_{d}\right);\theta_{d}\right)}\left(\epsilon_{d}\left(\tau\right)\right)\right]\right.\nonumber\\
&\left.-J_{\gamma d}\left(\tau\right)\sqrt{n}\left(\hat{\gamma}_{d}-\gamma_{d}\right)-J_{\theta d}\left(\tau\right)\sqrt{n}\left(\hat{\theta}_{d}-\theta_{d}\right)\right\}+o_{P}\left(1\right)
\end{align}
in $\ell^{\infty}\left(\mathcal{U}\right)$.

Using Theorem 3 in \cite{Koenker1978} again, it is possible to show that
\begin{align*}
O\left(\frac{1}{\sqrt{n}}\right)=\sqrt{n}\mathbb{E}_{n}\left[\int_{\varepsilon}^{1-\varepsilon}\mathbf{1}\left(D=d\right)S\varphi\left(u,Z\right)\zeta_{G_{d,x}\left(u,\pi_{d}\left(Z;\hat{\gamma}_{d}\right);\hat{\theta}_{d}\right)}\left(\hat{\epsilon}_{d}\left(u\right)\right)du\right]
\end{align*}

By Lemma~\ref{lem:se} and Assumption~\ref{assum:fullrank}, the following expansion holds:
\begin{align*}
O\left(\frac{1}{\sqrt{n}}\right)&=\mathbb{G}_{n}\left[\int_{\varepsilon}^{1-\varepsilon}\mathbf{1}\left(D=d\right)S\varphi\left(u,Z\right)\zeta_{G_{d,x}\left(u,\pi_{d}\left(Z;\hat{\gamma}_{d}\right);\hat{\theta}_{d}\right)}\left(\hat{\epsilon}_{d}\left(u\right)\right)du\right]\\
&+\sqrt{n}\int_{\varepsilon}^{1-\varepsilon}\mathbb{E}\left[\mathbf{1}\left(D=d\right)S\varphi\left(u,Z\right)\zeta_{G_{d,x}\left(u,\pi_{d}\left(Z;\hat{\gamma}_{d}\right);\hat{\theta}_{d}\right)}\left(\hat{\epsilon}_{d}\left(u\right)\right)\right]du\\
&=\mathbb{G}_{n}\left[\int_{\varepsilon}^{1-\varepsilon}\mathbf{1}\left(D=d\right)S\varphi\left(u,Z\right)\zeta_{G_{d,x}\left(u,\pi_{d}\left(Z;\gamma_{d}\right);\theta_{d}\right)}\left(\epsilon_{d}\left(u\right)\right)du\right]+o_{P}\left(1\right)\\
&+\sqrt{n}\int_{\varepsilon}^{1-\varepsilon}\mathbb{E}\left[\mathbf{1}\left(D=d\right)S\varphi\left(u,Z\right)\zeta_{G_{d,x}\left(u,\pi_{d}\left(Z;\hat{\gamma}_{d}\right);\hat{\theta}_{d}\right)}\left(\hat{\epsilon}_{d}\left(u\right)\right)\right]du\\
&=\mathbb{G}_{n}\left[\int_{\varepsilon}^{1-\varepsilon}\mathbf{1}\left(D=d\right)S\varphi\left(u,Z\right)\zeta_{G_{d,x}\left(u,\pi_{d}\left(Z;\gamma_{d}\right);\theta_{d}\right)}\left(\epsilon_{d}\left(u\right)\right)du\right]\\
&+\sqrt{n}\int_{\varepsilon}^{1-\varepsilon}\tilde{J}_{\beta d}\left(u\right)\left(\hat{\beta}_{d}\left(u\right)-\beta_{d}\left(u\right)\right)du-\sqrt{n}\int_{\varepsilon}^{1-\varepsilon}\tilde{J}_{\theta d}\left(u\right)du\left(\hat{\theta}_{d}-\theta_{d}\right)\\
&-\sqrt{n}\int_{\varepsilon}^{1-\varepsilon}\tilde{J}_{\gamma d}\left(u\right)du\left(\hat{\gamma}_{d}-\gamma_{d}\right)+o_{P}\left(1\right)
\end{align*}
where
\begin{align*}
\tilde{J}_{\beta d}\left(\tau\right)\equiv\frac{\partial\mathbb{E}\left[\mathbf{1}\left(D=d\right)S\varphi\left(\tau,Z\right)\zeta_{G_{d,x}\left(\tau,\pi_{d}\left(Z;\gamma_{d}\right);\theta_{d}\right)}\left(\epsilon_{d}\left(\tau\right)\right)\right]}{\partial\beta_{d}}
\end{align*}
\begin{align*}
\tilde{J}_{\gamma d}\left(\tau\right)\equiv-\frac{\partial\mathbb{E}\left[\mathbf{1}\left(D=d\right)S\varphi\left(\tau,Z\right)\zeta_{G_{d,x}\left(\tau,\pi_{d}\left(Z;\gamma_{d}\right);\theta_{d}\right)}\left(\epsilon_{d}\left(\tau\right)\right)\right]}{\partial\gamma_{d}}
\end{align*}
\begin{align*}
\tilde{J}_{\theta d}\left(\tau\right)\equiv-\frac{\partial\mathbb{E}\left[\mathbf{1}\left(D=d\right)S\varphi\left(\tau,Z\right)\zeta_{G_{d,x}\left(\tau,\pi_{d}\left(Z;\gamma_{d}\right);\theta_{d}\right)}\left(\epsilon_{d}\left(\tau\right)\right)\right]}{\partial\theta_{d}}
\end{align*}

Rearranging and solving for $\sqrt{n}\left(\hat{\theta}_{d}-\theta_{d}\right)$,
\begin{align}\label{eq:theta1}
\sqrt{n}\left(\hat{\theta}_{d}-\theta_{d}\right)&=\left[\int_{\varepsilon}^{1-\varepsilon}\tilde{J}_{\theta d}\left(u\right)du\right]^{-1}\left\{\mathbb{G}_{n}\left[\int_{\varepsilon}^{1-\varepsilon}\mathbf{1}\left(D=d\right)S\varphi\left(u,Z\right)\zeta_{G_{d,x}\left(u,\pi_{d}\left(Z;\gamma_{d}\right);\theta_{d}\right)}\left(\epsilon_{d}\left(u\right)\right)du\right]\right.\nonumber\\
&+\sqrt{n}\int_{\varepsilon}^{1-\varepsilon}\tilde{J}_{\beta d}\left(u\right)\left(\hat{\beta}_{d}\left(u\right)-\beta_{d}\left(u\right)\right)du-\left.\sqrt{n}\int_{\varepsilon}^{1-\varepsilon}\tilde{J}_{\gamma d}\left(u\right)du\left(\hat{\gamma}_{d}-\gamma_{d}\right)\right\}+o_{P}\left(1\right)
\end{align}

Now define
\begin{align*}
A_{d}\left(\tau\right)\equiv\hat{\vartheta}_{d}\left(\tau\right)-\vartheta_{d}\left(\tau\right)
\end{align*}
\begin{align*}
C_{d}\left(\tau\right)\equiv\begin{bmatrix}-J_{\beta d}\left(\tau\right)^{-1} & 0 & 0\\
0 & \left[\int_{\varepsilon}^{1-\varepsilon}\tilde{J}_{\theta d}\left(u\right)du\right]^{-1} & 0\\
0 & 0 & -B_{d}^{-1}\end{bmatrix}
\end{align*}
\begin{align*}
D_{d}\left(\tau\right)\equiv\begin{bmatrix}0 & 0 & 0\\
\left[\int_{\varepsilon}^{1-\varepsilon}\tilde{J}_{\theta d}\left(u\right)du\right]^{-1}\tilde{J}_{\beta d}\left(\tau\right) & 0 & -\left[\int_{\varepsilon}^{1-\varepsilon}\tilde{J}_{\theta d}\left(u\right)du\right]^{-1}\tilde{J}_{\gamma d}\left(\tau\right)\\
0 & 0 & 0\end{bmatrix}
\end{align*}
\begin{align*}
F_{d}\left(\tau\right)\equiv\begin{bmatrix}0 & J_{\beta d}\left(\tau\right)^{-1}J_{\theta d}\left(\tau\right) & J_{\beta d}\left(\tau\right)^{-1}J_{\gamma d}\left(\tau\right)\\
0 & 0 & 0\\
0 & 0 & 0\end{bmatrix}
\end{align*}
\begin{align*}
\psi_{d}\left(\tau\right)\equiv R_{d}\left(E,\beta\left(\tau\right),\theta,\gamma,\tau\right)
\end{align*}

Combining Equations~\ref{eq:beta1} and~\ref{eq:theta1} yields
\begin{align}\label{eq:fredholm}
A_{d}\left(\tau\right)=F_{d}\left(\tau\right)A_{d}\left(\tau\right)+\int_{\varepsilon}^{1-\varepsilon}D_{d}\left(u\right)A_{d}\left(u\right)du+C_{d}\left(\tau\right)\frac{1}{\sqrt{n}}\mathbb{G}_{n}\psi_{d}\left(\tau\right)+o_{P}\left(\frac{1}{\sqrt{n}}\right)
\end{align}
in $\ell^{\infty}\left(\mathcal{U}\right)$. Equation~\ref{eq:fredholm} is a particular case of a Fredholm integral equation of the second kind. The solution to this type of equations is a Liouville-Neumann series. By Lemma~\ref{lem:fredholm}, the solution to this equation is given by:
\begin{align}\label{eq:a}
\sqrt{n}A_{d}\left(\tau\right)&=F_{d}^{I}\left(\tau\right)\left(I-\int_{\varepsilon}^{1-\varepsilon}D_{d}\left(u\right)F_{d}^{I}\left(u\right)du\right)^{-1}\int_{\varepsilon}^{1-\varepsilon}D_{d}\left(u\right)F_{d}^{I}\left(u\right)C_{d}\left(u\right)\mathbb{G}_{n}\psi_{d}\left(u\right)du\nonumber\\
&+F_{d}^{I}\left(\tau\right)C_{d}\left(\tau\right)\mathbb{G}_{n}\psi_{d}\left(\tau\right)+o_{P}\left(1\right)
\end{align}
in $\ell^{\infty}\left(\mathcal{U}\right)$, where $F_{d}^{I}\left(\tau\right)\equiv\left(I-F_{d}\left(\tau\right)\right)^{-1}=I+F_{d}\left(\tau\right)$. Using the Functional Delta Method and Lemmas~\ref{lem:hadamard} and~\ref{lem:se}, it follows that 
$\sqrt{n}\left(\hat{\vartheta}_{d}\left(\tau\right)-\vartheta_{d}\left(\tau\right)\right)\Rightarrow H_{d}\left(\tau\right)\mathbb{Z}_{R_{d}}\left(\tau\right)\equiv\mathbb{Z}_{\vartheta_{d}}\left(\tau\right)$, a zero-mean tight Gaussian process with covariance $\Sigma_{\vartheta_{d}}\left(\tau,\tau'\right)=H_{d}\left(\tau\right)\Sigma_{R_{d}}\left(\tau,\tau'\right)H_{d}\left(\tau\right)$. Using the same arguments, conclude that $\sqrt{n}\left(\hat{\vartheta}_{m}\left(\tau\right)-\vartheta_{m}\left(\tau\right)\right)\Rightarrow H_{m}\left(\tau\right)\mathbb{Z}_{R_{m}}\left(\tau\right)\equiv\mathbb{Z}_{\vartheta_{m}}\left(\tau\right)$.

\subsection{Proof of Corollary~\ref{cor:asym}}

The first step is to show the asymptotic distribution of the first three components of $\left(\hat{g}_{j}\left(x,\tau\right),\hat{c}_{k,x}\left(\tau,\pi\right),\hat{\pi}_{l}\left(z\right),\int_{\mathcal{Z}}fd\hat{F}_{Z}^{h}\right)$. By Theorem~\ref{thm:asym} and the functional delta method,
\begin{align*}
\sqrt{n}\left(\left(\hat{g}_{j}\left(x,\tau\right),\hat{c}_{k,x}\left(\tau,\pi\right),\hat{\pi}_{l}\left(z\right)\right)-\left(g_{j}\left(x,\tau\right),c_{k,x}\left(\tau,\pi\right),\pi_{l}\left(z\right)\right)\right)\Rightarrow J_{\upsilon_{m}}\left(z,\tau,\pi\right)\mathbb{Z}_{\vartheta_{m}}\left(\tau\right)
\end{align*}
where
\begin{align*}
J_{\upsilon_{m}}\left(z,\tau,\pi\right)=
\begin{pmatrix}
x' & 0 & 0\\
0 & \nabla_{\theta_{l}}c_{k,x}\left(\tau,\pi;\theta_{l}\right) & 0\\
0 & 0 & \nabla_{\gamma_{k}}\pi_{l}\left(z;\gamma_{k}\right)
\end{pmatrix}
\end{align*}
and I have also used the fact that the mapping $b\rightarrow x'b\left(u\right)$ is linear and therefore Hadamard differentiable at $b\left(\cdot\right)=\beta_{d}\left(\cdot\right)$ tangentially to a set $\mathcal{B}$ with derivative equal to $\phi_{\beta}\left(h\right)=x'h\left(\tau\right)$. By Lemma E.4 in \cite{Chernozhukov2013}, $\sqrt{n_{h}}\int_{\mathcal{Z}}fd\left(\hat{F}_{Z}^{h}-F_{Z}^{h}\right)\Rightarrow\mathbb{Z}_{Z_{h}}\left(f\right)$. Taking these together yields $\sqrt{n}\left(\hat{\upsilon}_{m}\left(z,\tau,\pi,f\right)-\upsilon_{m}\left(z,\tau,\pi,f\right)\right)\Rightarrow\mathbb{Z}_{\upsilon_{m}}\left(z,\tau,\pi,f\right)$, where
\begin{align*}
\mathbb{Z}_{\upsilon_{m}}\left(z,\tau,\pi,f\right)\equiv\begin{pmatrix}
J_{\upsilon_{m}}\left(z,\tau,\pi\right)\mathbb{Z}_{\vartheta_{m}}\left(\tau\right)\\
\sqrt{\lambda_{h}}\mathbb{Z}_{Z_{h}}\left(f\right)
\end{pmatrix}
\end{align*}
in $\ell^{\infty}\left(\mathcal{Z}\mathcal{U}\overline{\mathcal{P}}\mathcal{F}\mathcal{M}\right)$.

\subsection{Proof of Theorem~\ref{thm:bootstrap}}

First, I show the distribution of $\vartheta_{d}^{*}\left(\tau\right)$. Using the same arguments used in Theorem~\ref{thm:asym} and Assumption~\ref{assum:weights}, it follows that
\begin{align}\label{eq:thetastar}
\sqrt{n}\left(\hat{\vartheta}_{d}^{*}\left(\tau\right)-\vartheta_{d}\left(\tau\right)\right)&=F_{d}^{I}\left(\tau\right)\left(I-\int_{\varepsilon}^{1-\varepsilon}D_{d}\left(u\right)F_{d}^{I}\left(u\right)du\right)^{-1}\int_{\varepsilon}^{1-\varepsilon}D_{d}\left(u\right)F_{d}^{I}\left(u\right)C_{d}\left(u\right)\mathbb{G}_{n}^{*}\psi_{d}\left(u\right)du\nonumber\\
&+F_{d}^{I}\left(\tau\right)C_{d}\left(\tau\right)\mathbb{G}_{n}^{*}\psi_{d}\left(\tau\right)+o_{P}\left(1\right)
\end{align}
uniformly in $\ell^{\infty}\left(\mathcal{U}\right)$, where $f\mapsto\mathbb{G}_{n}^{*}\left[f\left(E\right)\right]\equiv\frac{1}{\sqrt{n}}\sum_{i=1}^{n}W_{i,r}f\left(E\right)-\mathbb{E}\left(f\left(E\right)\right)$. Therefore, $\sqrt{n}\left(\hat{\vartheta}_{d}^{*}-\vartheta_{d}\right)\Rightarrow\mathbb{Z}_{\vartheta_{d}}^{*}\left(\tau\right)\equiv\sqrt{\omega_{0}+1}\mathbb{Z}_{\vartheta_{d}}\left(\tau\right)$, a zero-mean Gaussian process with covariance $\left(\omega_{0}+1\right)\Sigma_{\vartheta_{d}}\left(\tau,\tau'\right)$.

Now subtract Equation~\ref{eq:a} from Equation~\ref{eq:thetastar} to get
\begin{align}
&\sqrt{n}\left(\hat{\vartheta}_{d}^{*}\left(\tau\right)-\hat{\vartheta}_{d}\left(\tau\right)\right)\nonumber\\
&=F_{d}^{I}\left(\tau\right)\left(I-\int_{\varepsilon}^{1-\varepsilon}D_{d}\left(u\right)F_{d}^{I}\left(u\right)du\right)^{-1}\int_{\varepsilon}^{1-\varepsilon}D_{d}\left(u\right)F_{d}^{I}\left(u\right)C_{d}\left(u\right)\frac{1}{\sqrt{n}}\sum_{i}^{n}\left(W_{i,r}-1\right)\psi_{d}\left(u\right)du\nonumber\\
&+F_{d}^{I}\left(\tau\right)C_{d}\left(\tau\right)\frac{1}{\sqrt{n}}\sum_{i}^{n}\left(W_{i,r}-1\right)\psi_{d}\left(\tau\right)+o_{P}\left(1\right)
\end{align}
uniformly in $\ell^{\infty}\left(\mathcal{U}\right)$. 

By Assumption~\ref{assum:weights}, it follows that $\sqrt{\frac{n}{\omega_{0}}}\left(\hat{\vartheta}_{d,r}^{*}\left(\tau\right)-\hat{\vartheta}_{d}\left(\tau\right)\right)\Rightarrow\mathbb{Z}_{\vartheta_{d}}\left(\tau\right)$. By the functional delta method, Theorem~\ref{thm:asymgc}, and the previous result, it is straightforward to show that:
\begin{align*}
\sqrt{n}\left(\Delta^{m,m'}\hat{\mathbb{E}}^{*}\left[Y_{r}|S=1\right]-\Delta^{m,m'}\mathbb{E}\left[Y|S=1\right]\right)&\Rightarrow\sqrt{\omega_{0}+1}\mathbb{Z}_{\Delta^{mm'}Y|S=1}\\
\sqrt{\frac{n}{\omega_{0}}}\left(\Delta^{m,m'}\hat{\mathbb{E}}^{*}\left[Y_{r}|S=1\right]-\Delta^{m,m'}\hat{\mathbb{E}}\left[Y|S=1\right]\right)&\Rightarrow\mathbb{Z}_{\Delta^{mm'}Y|S=1}
\end{align*}
and
\begin{align*}
\sqrt{n}\left(\Delta^{m,m'}\hat{Q}_{Y_{r}|S=1}^{*}\left(\tau\right)-\Delta^{m,m'}Q_{Y|S=1}\left(\tau\right)\right)&\Rightarrow\sqrt{\omega_{0}+1}\mathbb{Z}_{Q|S=1,mm'}\left(\tau\right)\\
\sqrt{\frac{n}{\omega_{0}}}\left(\Delta^{m,m'}\hat{Q}_{Y_{r}|S=1}^{*}\left(\tau\right)-\Delta^{m,m'}\hat{Q}_{Y|S=1}\left(\tau\right)\right)&\Rightarrow\mathbb{Z}_{Q|S=1,mm'}\left(\tau\right)
\end{align*}
uniformly in $\ell^{\infty}\left(\mathcal{T}\right)$.

\subsection{Proof of Theorem~\ref{thm:asymanc}}

By the functional delta method,
\begin{align*}
\sqrt{n}\left(\hat{\mathbb{E}}\left[\pi^{m}|S=1\right]-\mathbb{E}\left[\pi^{m}|S=1\right]\right)&=\sqrt{n}\int_{\mathcal{Z}}\left(\hat{\pi}_{l}\left(z\right)-\pi_{l}\left(z\right)\right)dF_{Z}^{h}\left(z\right)+o_{P}\left(1\right)\\
&\Rightarrow\int_{\mathcal{Z}}\mathbb{Z}_{\pi_{l}}\left(z\right)dF_{Z}^{h}\left(z\right)
\end{align*}

Using the functional delta method once more, it follows that
\begin{align*}
\sqrt{n}\Delta^{m,m'}\left(\hat{\mathbb{E}}\left[\pi|S=1\right]-\mathbb{E}\left[\pi|S=1\right]\right)\Rightarrow\mathbb{Z}_{\pi^{m}|S=1}-\mathbb{Z}_{\pi^{m'}|S=1}\equiv\mathbb{Z}_{\Delta^{mm'}\pi|S=1}
\end{align*}
finishing the proof of the first part of the Theorem.

For the second part, by the functional delta method,
\begin{align*}
\sqrt{n}&\left(\hat{\mathbb{E}}\left[U^{m}|S=1\right]-\mathbb{E}\left[U^{m}|S=1\right]\right)\\
&=\sqrt{n}\int_{\mathcal{Z}}\int_{\varepsilon}^{1-\varepsilon}\tau d\left(\hat{G}_{k,x}\left(\tau|\pi_{l}\left(z\right)\right)-G_{k,x}\left(\tau|\pi_{l}\left(z\right)\right)\right)dF_{Z}^{h}\left(z\right)\\
&+\sqrt{n}\int_{\mathcal{Z}}\int_{\varepsilon}^{1-\varepsilon}\tau d\left(\nabla_{\pi}G_{k,x}\left(\tau|\pi_{l}\left(z\right)\right)\right)\left(\hat{\pi}_{l}\left(z\right)-\pi_{l}\left(z\right)\right)dF_{Z}^{h}\left(z\right)\\
&+\sqrt{n}\int_{\mathcal{Z}}\int_{\varepsilon}^{1-\varepsilon}\tau dG_{k,x}\left(\tau|\pi_{l}\left(z\right)\right)d\left(\hat{F}_{Z}^{h}\left(z\right)-F_{Z}^{h}\left(z\right)\right)+o_{P}\left(1\right)\\
&\Rightarrow\int_{\mathcal{Z}}\int_{\varepsilon}^{1-\varepsilon}\tau \frac{1}{\pi_{l}\left(z\right)}\mathbb{Z}_{c_{k}}\left(\tau,\pi_{l}\left(z\right)\right)d\tau dF_{Z}^{h}\left(z\right)\\
&+\int_{\mathcal{Z}}\int_{\varepsilon}^{1-\varepsilon}\tau d\left(\nabla_{\pi}G_{k,x}\left(\tau|\pi_{l}\left(z\right)\right)\right)\mathbb{Z}_{\pi_{l}}\left(z\right)dF_{Z}^{h}\left(z\right)\\
&+\sqrt{\lambda_{h}}\mathbb{Z}_{Z_{h}}\left(\int_{\varepsilon}^{1-\varepsilon}\tau dG_{k,x}\left(\tau|\pi\left(\cdot\right)\right)\right)\equiv\mathbb{Z}_{U^{m}|S=1}
\end{align*}

Applying the functional delta method again,
\begin{align*}
\sqrt{n}\Delta^{m,m'}\left(\hat{\mathbb{E}}\left[U|S=1\right]-\mathbb{E}\left[U|S=1\right]\right)\Rightarrow\mathbb{Z}_{U^{m}|S=1}-\mathbb{Z}_{U^{m'}|S=1}\equiv\mathbb{Z}_{\Delta^{mm'}U|S=1}
\end{align*}
which finishes the proof.
\section{Auxiliary Lemmas}\label{app:lem}

\subsection{Hadamard Derivatives}

\begin{lem}\label{lem:hadamard}

Let $\kappa\equiv\phi_{\kappa}\left(\psi\left(u\right)\right)=\int_{\varepsilon}^{1-\varepsilon}\overline{\kappa}\left(u\right)\psi\left(u\right)du$ and $\kappa\left(u,h_{n}\right)\equiv\phi_{\kappa}\left(\psi\left(u\right)+t_{n}h_{n}\right)$, where $\phi_{\kappa}:\mathbb{D}_{\phi}\subseteq\ell^{\infty}\left(\mathcal{U}\right)\mapsto\mathbb{R}$. Then, for all sequences $\left\{h_{n}\right\}\subset\ell^{\infty}\left(\mathcal{U}\right)$ and $t_{n}\subset\mathbb{R}$ such that $t_{n}\rightarrow0$, $h_{n}\rightarrow h\in\mathcal{C}\left(\mathcal{U}\right)$ as $n\rightarrow\infty$, $\psi\left(u\right)+t_{n}h_{n}\in\mathbb{D}_{\phi}$ for all $n$, the mapping
\begin{align*}
\phi_{\kappa}'\left(h\right)\equiv\int_{\varepsilon}^{1-\varepsilon}\overline{\kappa}\left(u\right)h\left(u\right)du
\end{align*}
is continuous, linear, and satisfies
\begin{align*}
\lim_{n\rightarrow\infty}\left\|\frac{\phi_{\kappa}\left(\overline{\kappa}\left(u\right)+t_{n}h_{n}\left(u\right)\right)-\phi_{\kappa}\left(\overline{\kappa}\left(u\right)\right)}{t_{n}}-\phi_{\kappa}'\left(h\left(u\right)\right)\right\|_{\mathbb{R}}=0
\end{align*}
\end{lem}

\begin{proof}
\begin{align*}
\frac{\phi_{\kappa}\left(\overline{\kappa}\left(u\right)+t_{n}h_{n}\left(u\right)\right)-\phi_{\kappa}\left(\overline{\kappa}\left(u\right)\right)}{t_{n}}&=\frac{\int_{\varepsilon}^{1-\varepsilon}\overline{\kappa}\left(u\right)\left[\psi\left(u\right)+t_{n}h_{n}\left(u\right)-\psi\left(u\right)\right]du}{t_{n}}\\
&=\frac{\int_{\varepsilon}^{1-\varepsilon}\overline{\kappa}\left(u\right)t_{n}h_{n}\left(u\right)}{t_{n}}\\
&=\int_{\varepsilon}^{1-\varepsilon}\overline{\kappa}\left(u\right)h_{n}\left(u\right)du\\
&\rightarrow\int_{\varepsilon}^{1-\varepsilon}\overline{\kappa}\left(u\right)h\left(u\right)du
\end{align*}
in $\ell^{\infty}\left(\mathcal{U}\right)$ as $n\rightarrow\infty$. The desired result follows.
\end{proof}

\begin{lem}\label{lem:hadamardf}
Let $F_{Y|Z,S=1}\left(y|z\right)\equiv\phi_{g,c,\pi}\left(g\left(x,u\right),c_{x}\left(u,\eta\right),\pi\left(z\right)\right)=\int_{\varepsilon}^{1-\varepsilon}\mathbf{1}\left(g\left(x,u\right)\leq y\right)\frac{1}{\pi\left(z\right)}$ $\cdot c_{x}\left(u,\pi\left(z\right)\right)du$ and $F_{Y|Z,S=1}\left(y|z,h_{n}\right)\equiv\phi_{g,c,\pi}\left(g\left(x,u\right)+t_{n}h_{g,n}\left(u,z\right),c_{x}\left(u,\eta\right)+t_{n}h_{c,h}\left(u,z\right),\right.$ $\left.\pi\left(z\right)+t_{n}h_{\pi,n}\left(u,z\right)\right)$, where $\phi:\mathbb{D}_{\phi}\subseteq\ell^{\infty}\left(\mathcal{Z}\mathcal{U}\mathcal{U}\right)\mapsto\ell^{\infty}\left(\mathcal{Y}\mathcal{Z}\right)\equiv\mathbb{E}$. Then, for all sequences $\left\{h_{n}\right\}\subset\ell^{\infty}\left(\mathcal{Z}\mathcal{U}\mathcal{U}\right)$ and $t_{n}\subset\mathbb{R}$ such that $t_{n}\rightarrow0$, $h_{n}\equiv\left(h_{g,n}',h_{c,n}',h_{\pi,n}'\right)'\rightarrow\left(h_{g}',h_{c}',h_{\pi}'\right)\equiv h\in\mathcal{C}\left(\mathcal{Y}\mathcal{Z}\right)$ as $n\rightarrow\infty$, $\left(g\left(x,u\right)+t_{n}h_{g,n}\left(u,z\right),c_{x}\left(u,\eta\right)+t_{n}h_{c,h}\left(u,z\right),\pi\left(z\right)+t_{n}h_{\pi,n}\left(u,z\right)\right)\in\mathbb{D}_{\phi}$ for all $n$, the mapping
\begin{align*}
\phi_{g,c,\pi}'\left(h\right)&\equiv\int_{\varepsilon}^{1-\varepsilon}\mathbf{1}\left(g\left(x,u\right)\leq y\right)\left[\frac{\nabla_{v}c_{x}\left(u,\pi\left(z\right)\right)}{\pi\left(z\right)}-\frac{c_{x}\left(u,\pi\left(z\right)\right)}{\pi\left(z\right)^{2}}\right]h_{\pi}\left(u,z\right)du\\
&+\int_{\varepsilon}^{1-\varepsilon}\mathbf{1}\left(g\left(x,u\right)\leq y\right)\frac{h_{c}\left(u,z\right)}{\pi\left(z\right)}du\\
&-f_{Y|Z,S=1}\left(y|z\right)h_{g}\left(G_{x}^{-1}\left(F_{Y|Z,S=1}\left(y|z\right),\pi\left(z\right)\right),z\right)
\end{align*}
is continuous, linear, and satisfies
\begin{align*}
\lim_{n\rightarrow\infty}&\left\|\frac{\phi_{g,c,\pi}\left(g\left(x,u\right)+t_{n}h_{g,n}\left(u,z\right),c_{x}\left(u,\eta\right)+t_{n}h_{c,h}\left(u,z\right),\pi\left(z\right)+t_{n}h_{\pi,n}\left(u,z\right)\right)}{t_{n}}\right.\\
&\left.\frac{-\phi_{g,c,\pi}\left(g\left(x,u\right),c_{x}\left(u,\eta\right),\pi\left(z\right)\right)}{t_{n}}-\phi_{g,c,\pi}'\left(h\left(u,z\right)\right)\right\|_{\mathbb{E}}=0
\end{align*}
\end{lem}

\begin{proof}
The first part of the proof is partly based on the proof to Proposition 2 in \cite{Chernozhukov2010}.

For any $\delta>0\exists\epsilon>0$: for $u\in B_{\epsilon}\left(G_{x}^{-1}\left(F_{Y|Z,S=1}\left(y|z\right),\pi\left(z\right)\right)\right)$ and for small enough $t\geq0$
\begin{align*}
\mathbf{1}\left(g\left(x,u\right)+t_{n}h_{g,n}\left(u,z\right)\leq y\right)\leq\mathbf{1}\left(g\left(x,u\right)+t_{n}\left(h_{g}\left(G_{x}^{-1}\left(F_{Y|Z,S=1}\left(y|z\right),\pi\left(z\right)\right),z\right)-\delta\right)\leq y\right)
\end{align*}
whereas $\forall u\notin B_{\epsilon}\left(G_{x}^{-1}\left(F_{Y|Z,S=1}\left(y|z\right),\pi\left(z\right)\right)\right)$
\begin{align*}
\mathbf{1}\left(g\left(x,u\right)+t_{n}h_{g,n}\left(u,z\right)\leq y\right)\leq\mathbf{1}\left(g\left(x,u\right)\leq y\right)
\end{align*}
Therefore, for small enough $t\geq0$
\begin{align}
\frac{1}{t_{n}}&\left[\int_{\varepsilon}^{1-\varepsilon}\mathbf{1}\left(g\left(x,u\right)+t_{n}h_{g,n}\left(u,z\right)\leq y\right)\frac{c_{x}\left(u,\pi\left(z\right)\right)}{\pi\left(z\right)}du\right.\nonumber\\
&-\left.\int_{\varepsilon}^{1-\varepsilon}\mathbf{1}\left(g\left(x,u\right)\leq y\right)\frac{c_{x}\left(u,\pi\left(z\right)\right)}{\pi\left(z\right)}du\right]\label{eq:lemhad1}\\
&\leq\frac{1}{t_{n}}\int_{B_{\epsilon}}\left[\mathbf{1}\left(g\left(x,u\right)+t_{n}h_{g,n}\left(u,z\right)\leq y\right)-\mathbf{1}\left(g\left(x,u\right)\leq y\right)\right]\frac{c_{x}\left(u,\pi\left(z\right)\right)}{\pi\left(z\right)}du\label{eq:lemhad2}
\end{align}
where $B_{\epsilon}$ is shorthand for $B_{\epsilon}\left(G_{x}^{-1}\left(F_{Y|Z,S=1}\left(y|z\right),\pi\left(z\right)\right)\right)$. By the change of variable $\tilde{y}=F_{Y|Z,S=1}^{-1}\left(G_{x}\left(u|\pi\left(z\right)\right)|z\right)$, where $G_{x}^{-1}\left(u,\pi_{d}\left(z\right)\right)$ denotes the inverse of $G_{x}\left(u,\pi_{d}\left(z\right)\right)$ with respect to its first argument, is equal to
\begin{align*}
\frac{1}{t_{n}}\int_{J\cap\left[y,y-t\left(h\left(G_{x}^{-1}\left(F_{Y|Z,S=1}\left(y|z\right),\pi\left(z\right)\right)|z\right)-\delta\right)\right]}f_{Y|Z,S=1}\left(\tilde{y}|z\right)d\tilde{y}
\end{align*}
where $J$ is the image of $B_{\epsilon}\left(G_{x}^{-1}\left(F_{Y|Z,S=1}\left(y|z\right),\pi\left(z\right)\right)\right)$ under $u\mapsto g\left(x,u\right)$. The change of variables is possible because $g_{d}\left(x,C_{d,x}^{-1}\left(u,\pi_{d}\left(z\right)\right)\right)$ is a bijection between $B_{\epsilon}\left(G_{x}^{-1}\left(F_{Y|Z,S=1}\left(y|z\right),\right.\right.$ $\left.\left.\pi\left(z\right)\right)\right)$ and $J$.

Fix $\epsilon>0$ for $t_{n}\rightarrow0$. Then, we have that $J\cap\left[y,y-t\left(h\left(G_{x}^{-1}\left(F_{Y|Z,S=1}\left(y|z\right),\pi\left(z\right)\right)|z\right)-\delta\right)\right]=\left[y,y-t_{n}\left(h\left(G_{x}^{-1}\left(F_{Y|Z,S=1}\left(y|z\right),\pi\left(z\right)\right)|z\right)-\delta\right)\right]$ and $f_{Y|Z,S=1}\left(\tilde{y}|z\right)\rightarrow f_{Y|Z,S=1}\left(y|z\right)$ as $F_{Y|Z,S=1}\left(\tilde{y}|z\right)\rightarrow F_{Y|Z,S=1}\left(y|z\right)$ in $\mathbb{E}$. Thus, Equation~\ref{eq:lemhad2} is no greater than
\begin{align*}
-f_{Y|Z,S=1}\left(y|z\right)\left(h_{g}\left(G_{x}^{-1}\left(F_{Y|Z,S=1}\left(y|z\right),\pi\left(z\right)\right),z\right)-\delta\right)+o\left(1\right)
\end{align*}

By a similar argument,
\begin{align*}
-f_{Y|Z,S=1}\left(y|z\right)\left(h_{g}\left(G_{x}^{-1}\left(F_{Y|Z,S=1}\left(y|z\right),\pi\left(z\right)\right),z\right)+\delta\right)+o\left(1\right)
\end{align*}
bounds Equation~\ref{eq:lemhad1} from below. Since $\delta>0$ can be made arbitrarily small, the desired result follows.

To show that the result holds uniformly in $\left(y,z\right)\in K$, a compact subset of $\mathcal{Y}\mathcal{Z}$, we use Lemma B.4 in \cite{Chernozhukov2013}. Take a sequence $\left(y_{t},z_{t}\right)$ in $K$ that converges to $\left(y,z\right)\in K$. Then, the preceding argument applies to this sequence, since the function $-f_{Y|Z,S=1}\left(y|z\right)\left(h_{g}\left(G_{x}^{-1}\left(F_{Y|Z,S=1}\left(y|z\right),\pi\left(z\right)\right),z\right)\right)$ is uniformly continuous on $K$. This result follows by the assumed continuity of $h_{g}\left(u,x\right)$, $F_{Y|Z,S=1}\left(y|z\right)$, and $f_{Y|Z,S=1}\left(y|z\right)$ in both their arguments, as well as the compactness of $K$.

With some abuse of notation, let $h_{g,n}$, $h_{c,n}$, and $h_{\pi,n}$ respectively denote $h_{g,n}\left(u,z\right)$, $h_{c,n}\left(u,z\right)$, and $h_{\pi,n}\left(u,z\right)$. Doing some algebra, it is possible to show that
\begin{align*}
\frac{1}{t_{n}}&\left[F_{Y|Z,S=1}\left(y|z,h_{n}\right)-F_{Y|Z,S=1}\left(y|z\right)\right]=\\
&\frac{1}{t_{n}}\int_{\varepsilon}^{1-\varepsilon}\mathbf{1}\left(g\left(x,u\right)+t_{n}h_{g,n}\leq y\right)\frac{c_{x}\left(u,\pi\left(z\right)+t_{n}h_{\pi,n}\right)+t_{n}h_{c,n}}{\pi\left(z\right)+t_{n}h_{\pi,n}}du\\
&-\frac{1}{t_{n}}\int_{\varepsilon}^{1-\varepsilon}\mathbf{1}\left(g\left(x,u\right)+t_{n}h_{g,n}\leq y\right)\frac{c_{x}\left(u,\pi\left(z\right)\right)+t_{n}h_{c,n}}{\pi\left(z\right)}du\\
&+\frac{1}{t_{n}}\int_{\varepsilon}^{1-\varepsilon}\mathbf{1}\left(g\left(x,u\right)+t_{n}h_{g,n}\leq y\right)\frac{c_{x}\left(u,\pi\left(z\right)\right)+t_{n}h_{c,n}}{\pi\left(z\right)}du\\
&-\frac{1}{t_{n}}\int_{\varepsilon}^{1-\varepsilon}\mathbf{1}\left(g\left(x,u\right)+t_{n}h_{g,n}\leq y\right)\frac{c_{x}\left(u,\pi\left(z\right)\right)}{\pi\left(z\right)}du\\
&+\frac{1}{t_{n}}\int_{\varepsilon}^{1-\varepsilon}\mathbf{1}\left(g\left(x,u\right)+t_{n}h_{g,n}\leq y\right)\frac{c_{x}\left(u,\pi\left(z\right)\right)}{\pi\left(z\right)}du\\
&-\frac{1}{t_{n}}\int_{\varepsilon}^{1-\varepsilon}\mathbf{1}\left(g\left(x,u\right)\leq y\right)\frac{c_{x}\left(u,\pi\left(z\right)\right)}{\pi\left(z\right)}du\\
&=\frac{1}{t_{n}}\int_{\varepsilon}^{1-\varepsilon}\mathbf{1}\left(g\left(x,u\right)+t_{n}h_{g,n}\leq y\right)\frac{c_{x}\left(u,\pi\left(z\right)+t_{n}h_{\pi,n}\right)-c_{x}\left(u,\pi\left(z\right)\right)}{\pi\left(z\right)+t_{n}h_{\pi,n}}du\\
&-\frac{1}{t_{n}}\int_{\varepsilon}^{1-\varepsilon}\mathbf{1}\left(g\left(x,u\right)+t_{n}h_{g,n}\leq y\right)\frac{t_{n}h_{\pi}c_{x}\left(u,\pi\left(z\right)\right)+t_{n}^{2}h_{c,n}h_{\pi,n}}{\left(\left(\pi\left(z\right)+t_{n}h_{\pi,n}\right)\pi\left(z\right)\right)\pi\left(z\right)}du\\
&+\frac{1}{t_{n}}\int_{\varepsilon}^{1-\varepsilon}\mathbf{1}\left(g\left(x,u\right)+t_{n}h_{g,n}\leq y\right)\frac{t_{n}h_{c,n}}{\pi\left(z\right)}du\\
&+\frac{1}{t_{n}}\int_{\varepsilon}^{1-\varepsilon}\left[\mathbf{1}\left(g\left(x,u\right)+t_{n}h_{g,n}\leq y\right)-\mathbf{1}\left(g\left(x,u\right)\leq y\right)\right]\frac{c_{x}\left(u,\pi\left(z\right)\right)}{\pi\left(z\right)}du\\
&\rightarrow\phi_{g,c,\pi}'\left(h\right)
\end{align*}
in $\ell^{\infty}\left(\mathcal{Y}\mathcal{Z}\right)$ as $n\rightarrow\infty$. The desired result follows.
\end{proof}

\begin{lem}\label{lem:hadamarde}
Let $\mathbb{E}\left[Y|z,S=1\right]\equiv\tilde{\phi}_{g,c,\pi}\left(g\left(x,u\right),c_{x}\left(u,\eta\right),\pi\left(z\right)\right)=\int_{\varepsilon}^{1-\varepsilon}g\left(x,u\right)\frac{1}{\pi\left(z\right)}c_{x}\left(u,\pi\left(z\right)\right)du$ and $\mathbb{E}\left[Y|z,S=1,h_{n}\right]\equiv\tilde{\phi}_{g,c,\pi}\left(g\left(x,u\right)+t_{n}h_{g,n}\left(u,z\right),c_{x}\left(u,\eta\right)+t_{n}h_{c,h}\left(u,z\right),\pi\left(z\right)+t_{n}h_{\pi,n}\left(u,z\right)\right)$, where $\tilde{\phi}:\mathbb{D}_{\tilde{\phi}}\subseteq\ell^{\infty}\left(\mathcal{Z}\mathcal{U}\mathcal{U}\right)\mapsto\ell^{\infty}\left(\mathcal{Z}\right)\equiv\mathbb{E}$. Then, for all sequences $\left\{h_{n}\right\}\subset\ell^{\infty}\left(\mathcal{Z}\mathcal{U}\mathcal{U}\right)$ and $t_{n}\subset\mathbb{R}$ such that $t_{n}\rightarrow0$, $h_{n}\equiv\left(h_{g,n}',h_{c,n}',h_{\pi,n}'\right)'\rightarrow\left(h_{g}',h_{c}',h_{\pi}'\right)\equiv h\in\mathcal{C}\left(\mathcal{Z}\right)$ as $n\rightarrow\infty$, $\left(g\left(x,u\right)+t_{n}h_{g,n}\left(u,z\right),c_{x}\left(u,\eta\right)+t_{n}h_{c,h}\left(u,z\right),\pi\left(z\right)+t_{n}h_{\pi,n}\left(u,z\right)\right)\in\mathbb{D}_{\tilde{\phi}}$ for all $n$, the mapping
\begin{align*}
\tilde{\phi}_{g,c,\pi}'\left(h\right)&\equiv\int_{\varepsilon}^{1-\varepsilon}g\left(x,u\right)\left[\frac{\nabla_{v}c_{x}\left(u,\pi\left(z\right)\right)}{\pi\left(z\right)}-\frac{c_{x}\left(u,\pi\left(z\right)\right)}{\pi\left(z\right)^{2}}\right]h_{\pi}\left(u,z\right)du\\
&+\int_{\varepsilon}^{1-\varepsilon}g\left(x,u\right)\frac{h_{c}\left(u,z\right)}{\pi\left(z\right)}du\\
&+\int_{\varepsilon}^{1-\varepsilon}\frac{c_{x}\left(u,\pi\left(z\right)\right)}{\pi\left(z\right)}h_{g}\left(u,z\right)du
\end{align*}
is continuous, linear, and satisfies
\begin{align*}
\lim_{n\rightarrow\infty}&\left\|\frac{\tilde{\phi}_{g,c,\pi}\left(g\left(x,u\right)+t_{n}h_{g,n}\left(u,z\right),c_{x}\left(u,\eta\right)+t_{n}h_{c,h}\left(u,z\right),\pi\left(z\right)+t_{n}h_{\pi,n}\left(u,z\right)\right)}{t_{n}}\right.\\
&\left.\frac{-\tilde{\phi}_{g,c,\pi}\left(g\left(x,u\right),c_{x}\left(u,\eta\right),\pi\left(z\right)\right)}{t_{n}}-\tilde{\phi}_{g,c,\pi}'\left(h\left(u,z\right)\right)\right\|_{\mathbb{E}}=0
\end{align*}
\end{lem}

\begin{proof}
As in the proof of Lemma~\ref{lem:hadamardf}, after some algebra it is possible to show that
\begin{align*}
\frac{1}{t_{n}}&\left[\mathbb{E}\left[Y|z,S=1\right]-\mathbb{E}\left[Y|z,S=1,h_{n}\right]\right]=\\
&\frac{1}{t_{n}}\int_{\varepsilon}^{1-\varepsilon}\left(g\left(x,u\right)+t_{n}h_{g,n}\right)\frac{c_{x}\left(u,\pi\left(z\right)+t_{n}h_{\pi,n}\right)+t_{n}h_{c,n}}{\pi\left(z\right)+t_{n}h_{\pi,n}}du\\
&-\frac{1}{t_{n}}\int_{\varepsilon}^{1-\varepsilon}\left(g\left(x,u\right)+t_{n}h_{g,n}\right)\frac{c_{x}\left(u,\pi\left(z\right)\right)+t_{n}h_{c,n}}{\pi\left(z\right)}du\\
&+\frac{1}{t_{n}}\int_{\varepsilon}^{1-\varepsilon}\left(g\left(x,u\right)+t_{n}h_{g,n}\right)\frac{c_{x}\left(u,\pi\left(z\right)\right)+t_{n}h_{c,n}}{\pi\left(z\right)}du\\
&-\frac{1}{t_{n}}\int_{\varepsilon}^{1-\varepsilon}\left(g\left(x,u\right)+t_{n}h_{g,n}\right)\frac{c_{x}\left(u,\pi\left(z\right)\right)}{\pi\left(z\right)}du\\
&+\frac{1}{t_{n}}\int_{\varepsilon}^{1-\varepsilon}\left(g\left(x,u\right)+t_{n}h_{g,n}\right)\frac{c_{x}\left(u,\pi\left(z\right)\right)}{\pi\left(z\right)}du\\
&-\frac{1}{t_{n}}\int_{\varepsilon}^{1-\varepsilon}g\left(x,u\right)\frac{c_{x}\left(u,\pi\left(z\right)\right)}{\pi\left(z\right)}du\\
&=\frac{1}{t_{n}}\int_{\varepsilon}^{1-\varepsilon}\left(g\left(x,u\right)+t_{n}h_{g,n}\right)\frac{c_{x}\left(u,\pi\left(z\right)+t_{n}h_{\pi,n}\right)-c_{x}\left(u,\pi\left(z\right)\right)}{\pi\left(z\right)+t_{n}h_{\pi,n}}du\\
&-\frac{1}{t_{n}}\int_{\varepsilon}^{1-\varepsilon}\left(g\left(x,u\right)+t_{n}h_{g,n}\right)\frac{t_{n}h_{\pi}c_{x}\left(u,\pi\left(z\right)\right)+t_{n}^{2}h_{c,n}h_{\pi,n}}{\left(\left(\pi\left(z\right)+t_{n}h_{\pi,n}\right)\pi\left(z\right)\right)\pi\left(z\right)}du\\
&+\frac{1}{t_{n}}\int_{\varepsilon}^{1-\varepsilon}\left(g\left(x,u\right)+t_{n}h_{g,n}\right)\frac{t_{n}h_{c,n}}{\pi\left(z\right)}du\\
&+\frac{1}{t_{n}}\int_{\varepsilon}^{1-\varepsilon}t_{n}h_{g,n}\frac{c_{x}\left(u,\pi\left(z\right)\right)}{\pi\left(z\right)}du\\
&\rightarrow\phi_{g,c,\pi}'\left(h\right)
\end{align*}
in $\ell^{\infty}\left(\mathcal{Z}\right)$ as $n\rightarrow\infty$. The desired result follows.
\end{proof}

\subsection{Solution to the Fredholm Integral Equation}

\begin{lem}\label{lem:fredholm}
Let $L\left(\tau\right)=M_{1}\left(\tau\right)L\left(\tau\right)+M_{2}\left(\tau\right)+\int_{0}^{1}M_{3}\left(u\right)L\left(u\right)du$ be a Fredholm integral equation of the second kind. Moreover, define $\tilde{M}_{2}\left(\tau\right)\equiv\left(I-M_{1}\left(\tau\right)\right)^{-1}M_{2}\left(\tau\right)$ and $\tilde{M}_{3}\left(\tau\right)\equiv M_{3}\left(\tau\right)\left(I-M_{1}\left(\tau\right)\right)^{-1}$. Let

\begin{enumerate}[(i)]
\item $I-M_{1}\left(\tau\right)$ is invertible $\forall\tau\in\left[0,1\right]$
\item $\lim_{n\rightarrow\infty}\left[\int_{0}^{1}\tilde{M}_{3}\left(u\right)du\right]^{n}=0$
\end{enumerate}

Under (i)-(ii), the solution to this equation is given by
\begin{align*}
L\left(\tau\right)=\tilde{M}_{2}\left(\tau\right)+\left(I-M_{1}\left(\tau\right)\right)^{-1}\left(I-\int_{0}^{1}\tilde{M}_{3}\left(u\right)du\right)^{-1}\int_{0}^{1}\tilde{M}_{3}\left(u\right)M_{2}\left(u\right)du
\end{align*}
\end{lem}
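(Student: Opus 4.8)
The plan is to reduce this functional equation to a finite-dimensional linear system by exploiting the fact that the integral term $\int_0^1 M_3(u) L(u)\, du$ does not depend on $\tau$. First I would use hypothesis (i) to solve for $L(\tau)$ pointwise: since $I - M_1(\tau)$ is invertible for every $\tau$, moving the $M_1(\tau)L(\tau)$ term to the left and multiplying through by $(I-M_1(\tau))^{-1}$ gives
\begin{align*}
L(\tau) = \tilde{M}_2(\tau) + \left(I - M_1(\tau)\right)^{-1} c, \qquad c \equiv \int_0^1 M_3(u) L(u)\, du,
\end{align*}
where I have named the ($\tau$-independent) integral $c$. The entire problem then collapses to pinning down the single unknown $c$.

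Next I would substitute this expression for $L$ back into the definition of $c$. Using the two identities $M_3(u)\tilde{M}_2(u) = \tilde{M}_3(u) M_2(u)$ and $M_3(u)\left(I - M_1(u)\right)^{-1} = \tilde{M}_3(u)$, both immediate from the definitions of $\tilde{M}_2$ and $\tilde{M}_3$, this produces
\begin{align*}
c = \int_0^1 \tilde{M}_3(u) M_2(u)\, du + \left(\int_0^1 \tilde{M}_3(u)\, du\right) c,
\end{align*}
that is, $\left(I - \int_0^1 \tilde{M}_3(u)\, du\right) c = \int_0^1 \tilde{M}_3(u) M_2(u)\, du$, an ordinary linear (matrix) equation in $c$.

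The substantive step is then inverting $I - \int_0^1 \tilde{M}_3(u)\, du$, and this is exactly where hypothesis (ii) enters. Writing $A \equiv \int_0^1 \tilde{M}_3(u)\, du$, the condition $\lim_{n\to\infty} A^n = 0$ (equivalently, spectral radius of $A$ strictly below one) guarantees, by the standard Neumann-series argument, that $\sum_{n\ge 0} A^n$ converges and equals $(I - A)^{-1}$; in particular $I - A$ is invertible. Solving $c = (I-A)^{-1}\int_0^1 \tilde{M}_3(u) M_2(u)\, du$ and reinserting into the pointwise formula for $L(\tau)$ reproduces the claimed expression. Since every step is an equivalence under (i)--(ii), the same derivation simultaneously establishes existence and uniqueness of the solution. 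The only point I would treat with care is that (ii) must be invoked to justify invertibility of $I-A$ rather than a purely formal series manipulation; everything else is routine rearrangement.
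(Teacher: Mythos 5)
Your proof is correct, and it takes a genuinely different (and arguably cleaner) route than the paper's. You exploit the fact that the kernel is degenerate: the integral term $c\equiv\int_{0}^{1}M_{3}\left(u\right)L\left(u\right)du$ is a constant vector, so hypothesis (i) lets you solve pointwise, $L\left(\tau\right)=\tilde{M}_{2}\left(\tau\right)+\left(I-M_{1}\left(\tau\right)\right)^{-1}c$, and substituting back collapses the functional equation into the finite-dimensional linear system $\left(I-A\right)c=\int_{0}^{1}\tilde{M}_{3}\left(u\right)M_{2}\left(u\right)du$ with $A\equiv\int_{0}^{1}\tilde{M}_{3}\left(u\right)du$; hypothesis (ii) then delivers invertibility of $I-A$ via $A^{n}\rightarrow0\Rightarrow\rho\left(A\right)<1$ and the Neumann series. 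The paper instead works on the full functional equation: it iteratively substitutes $L\left(u\right)$ inside the integral, generating the Liouville--Neumann series $\sum_{n\geq0}A^{n}\int_{0}^{1}\tilde{M}_{3}\left(u\right)M_{2}\left(u\right)du$ plus a remainder term $\left(I-M_{1}\left(\tau\right)\right)^{-1}A^{n}\int_{0}^{1}M_{3}\left(u\right)L\left(u\right)du$ that vanishes by (ii), and then sums the series to $\left(I-A\right)^{-1}$. Both arguments use (i) and (ii) in the same essential roles, but your reduction makes two things more transparent: first, that (ii) is precisely an invertibility condition on $I-A$, whereas the paper's final summation step appeals somewhat informally to the fact that if $I-C$ is invertible then $\sum_{n\geq0}C^{n}=\left(I-C\right)^{-1}$ without isolating where that invertibility comes from; second, since each of your steps is an equivalence, existence and uniqueness of the solution (within the class of functions for which $c$ is well defined) come simultaneously, while the iterative route requires tracking that the chain of equalities holds for any solution and that the limit step is justified. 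The only implicit assumption you share with the paper is the integrability of $M_{3}\left(\cdot\right)L\left(\cdot\right)$ and of $\tilde{M}_{3}\left(\cdot\right)$, needed for $c$ and $A$ to exist; your explicit flag that (ii) must be invoked for invertibility of $I-A$, rather than a purely formal series manipulation, is exactly the right point of care.
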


\begin{proof}
\begin{align*}
L\left(\tau\right)&=M_{1}\left(\tau\right)L\left(\tau\right)+M_{2}\left(\tau\right)+\int_{0}^{1}M_{3}\left(u\right)L\left(u\right)du\\
&=\tilde{M}_{2}\left(\tau\right)+\left(I-M_{1}\left(\tau\right)\right)^{-1}\int_{0}^{1}M_{3}\left(u\right)L\left(u\right)du\\
&=\tilde{M}_{2}\left(\tau\right)+\left(I-M_{1}\left(\tau\right)\right)^{-1}\sum_{n=0}^{\infty}\left[\int_{0}^{1}\tilde{M}_{3}\left(u\right)du\right]^{n}\int_{0}^{1}\tilde{M}_{3}\left(u\right)M_{2}\left(u\right)du\\
&+\lim_{n\rightarrow\infty}\left(I-M_{1}\left(\tau\right)\right)^{-1}\left[\int_{0}^{1}\tilde{M}_{3}\left(u\right)du\right]^{n}\int_{0}^{1}M_{3}\left(u\right)L\left(u\right)du\\
&=\tilde{M}_{2}\left(\tau\right)+\left(I-M_{1}\left(\tau\right)\right)^{-1}\left(I-\int_{0}^{1}\tilde{M}_{3}\left(u\right)du\right)^{-1}\int_{0}^{1}\tilde{M}_{3}\left(u\right)M_{2}\left(u\right)du
\end{align*}
where the second equality follows by (i), the third one by iteratively substituting $L\left(u\right)$ inside the integral, and the fourth one by (ii) and the following result: define $S\equiv\sum_{n=0}^{\infty}C^{n}$, and $A$, $B$ and $C$ be square matrices. Then, $ASB-ACSB=A\left(I-C\right)SB=AB$. If $I-C$ is invertible, then $S=\left(I-C\right)^{-1}$. Premultiply both sides of the equation by $A$ and postmultiply them by $B$ to obtain the desired result.
\end{proof}

\subsection{Argmax Process}

\begin{lem}\label{lem:argmax}
\citep{Chernozhukov2006} Suppose that uniformly in $\pi$ in a compact set $\Pi$ and for a compact set $K$ $\left(i\right)$ $Z_{n}\left(\pi\right)$ is s.t. $Q_{n}\left(Z_{n}\left(\pi\right)|\pi\right)\geq\sup_{z\in K}Q_{n}\left(z|\pi\right)-\epsilon_{n}$, $\epsilon\searrow0$; $Z_{n}\left(\pi\right)\in K\; wp\rightarrow 1$, $\left(ii\right)$ $Z_{\infty}\left(\pi\right)\equiv\arg\sup_{z\in K}Q_{\infty}\left(z|\pi\right)$ is a uniquely defined continuous process in $\ell^{\infty}\left(\Pi\right)$, $\left(iii\right)$ $Q_{n}\left(\tau|\tau\right)\overset{p}{\rightarrow}Q_{\infty}\left(\tau|\tau\right)$ in $\ell^{\infty}\left(K\times\Pi\right)$, where $Q_{\infty}\left(\tau|\tau\right)$ is continuous. Then $Z_{n}\left(\tau\right)=Z_{\infty}\left(\tau\right)+o_{P}\left(1\right)$ in $\ell^{\infty}\left(\Pi\right)$
\end{lem}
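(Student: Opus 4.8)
The plan is to establish uniform consistency of the near-argmax process by a compactness-and-contradiction argument, which is the standard route for results of this type and is the one underlying \cite{Chernozhukov2006}. The ingredients are the uniform convergence of the criterion in $(iii)$, the uniqueness and continuity of the limiting argmax $Z_{\infty}$ in $(ii)$, and the near-maximizing property in $(i)$. Throughout I write $\left\|f\right\|_{\Pi}\equiv\sup_{\pi\in\Pi}\left|f\left(\pi\right)\right|$, and since the event $\left\{Z_{n}\left(\pi\right)\in K\right\}$ holds with probability tending to one by $(i)$, I may argue on that event.

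First I would establish a uniform well-separation property of the limiting maximum. Fix $\delta>0$. For each $\pi\in\Pi$, uniqueness of the maximizer of the continuous map $z\mapsto Q_{\infty}\left(z|\pi\right)$ over the compact set $K$ makes $\eta\left(\pi\right)\equiv Q_{\infty}\left(Z_{\infty}\left(\pi\right)|\pi\right)-\sup\left\{Q_{\infty}\left(z|\pi\right):z\in K,\left|z-Z_{\infty}\left(\pi\right)\right|\geq\delta\right\}$ strictly positive. Using the joint continuity of $Q_{\infty}$ on $K\times\Pi$ together with the continuity of $\pi\mapsto Z_{\infty}\left(\pi\right)$, the map $\pi\mapsto\eta\left(\pi\right)$ is lower semicontinuous, so it attains a positive infimum $\eta_{\delta}>0$ over the compact set $\Pi$. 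I expect this to be the main obstacle: converting the pointwise identification into a separation bound that is uniform over the whole index set $\Pi$ is exactly where the compactness of $\Pi$ and the assumed continuity of the argmax process are indispensable.

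Next I would argue by contradiction. Suppose $\left\|Z_{n}-Z_{\infty}\right\|_{\Pi}$ does not converge to zero in (outer) probability. Then along some subsequence, on an event of probability bounded away from zero, there exist indices $\pi_{n}$ with $\left|Z_{n}\left(\pi_{n}\right)-Z_{\infty}\left(\pi_{n}\right)\right|\geq\delta$. Passing to a further subsequence along which the uniform convergence in $(iii)$ holds almost surely and, by compactness of $\Pi$ and $K$, along which $\pi_{n}\rightarrow\pi^{*}$ and $Z_{n}\left(\pi_{n}\right)\rightarrow z^{*}$, continuity of $Z_{\infty}$ gives $Z_{\infty}\left(\pi_{n}\right)\rightarrow Z_{\infty}\left(\pi^{*}\right)$ and hence $\left|z^{*}-Z_{\infty}\left(\pi^{*}\right)\right|\geq\delta$. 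On the other hand, the near-maximizing property $Q_{n}\left(Z_{n}\left(\pi_{n}\right)|\pi_{n}\right)\geq Q_{n}\left(Z_{\infty}\left(\pi_{n}\right)|\pi_{n}\right)-\epsilon_{n}$ with $\epsilon_{n}\searrow0$, combined with $\left\|Q_{n}-Q_{\infty}\right\|_{K\times\Pi}\rightarrow0$ and the joint continuity of $Q_{\infty}$, yields in the limit $Q_{\infty}\left(z^{*}|\pi^{*}\right)\geq Q_{\infty}\left(Z_{\infty}\left(\pi^{*}\right)|\pi^{*}\right)$. Since $Z_{\infty}\left(\pi^{*}\right)$ is the unique maximizer, this forces $z^{*}=Z_{\infty}\left(\pi^{*}\right)$, contradicting $\left|z^{*}-Z_{\infty}\left(\pi^{*}\right)\right|\geq\delta$.

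The contradiction shows that for every $\delta>0$ one has $\mathbb{P}\left(\left\|Z_{n}-Z_{\infty}\right\|_{\Pi}\geq\delta\right)\rightarrow0$, which is precisely the claim $Z_{n}\left(\tau\right)=Z_{\infty}\left(\tau\right)+o_{P}\left(1\right)$ in $\ell^{\infty}\left(\Pi\right)$. The only remaining technical care concerns measurability of the suprema involved; working with outer probabilities and the almost-sure representation along the subsequence, as in \cite{Chernozhukov2006}, disposes of these issues without altering the argument.
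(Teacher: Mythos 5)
Your argument is correct, and it supplies what the paper itself omits: the paper's entire proof of this lemma consists of the citation ``See \cite{Chernozhukov2006}'', so there is no in-text argument to compare against, and your proof is precisely the standard well-separation-plus-compactness route underlying that reference. Three small remarks. First, your well-separation constant $\eta_{\delta}$ is constructed but then never used: the subsequence-contradiction argument in your second paragraph is self-contained without it. Conversely, once you have $\eta_{\delta}>0$ you could finish without any subsequences at all: on the event where $Z_{n}\left(\pi\right)\in K$ for all $\pi$, $\epsilon_{n}<\eta_{\delta}/3$ and $\sup_{K\times\Pi}\left|Q_{n}-Q_{\infty}\right|<\eta_{\delta}/3$, the chain $Q_{\infty}\left(Z_{n}\left(\pi\right)|\pi\right)\geq Q_{n}\left(Z_{n}\left(\pi\right)|\pi\right)-\eta_{\delta}/3\geq Q_{n}\left(Z_{\infty}\left(\pi\right)|\pi\right)-\epsilon_{n}-\eta_{\delta}/3>Q_{\infty}\left(Z_{\infty}\left(\pi\right)|\pi\right)-\eta_{\delta}$ forces $\left|Z_{n}\left(\pi\right)-Z_{\infty}\left(\pi\right)\right|<\delta$ simultaneously for every $\pi\in\Pi$, which is the claim; so either half of your write-up suffices on its own. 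Second, in the contradiction step the supremum over $\pi$ of $\left|Z_{n}\left(\pi\right)-Z_{\infty}\left(\pi\right)\right|$ need not be attained, so you should select $\pi_{n}$ as a near-attainer with $\left|Z_{n}\left(\pi_{n}\right)-Z_{\infty}\left(\pi_{n}\right)\right|\geq\delta/2$; this changes nothing downstream. Third, your lower-semicontinuity claim for $\eta\left(\pi\right)$ deserves a line of justification: the constraint correspondence $\pi\mapsto\left\{z\in K:\left|z-Z_{\infty}\left(\pi\right)\right|\geq\delta\right\}$ has closed graph and compact values (by continuity of $Z_{\infty}$ and compactness of $K$), so the inner supremum is upper semicontinuous in $\pi$ and $\eta$ is indeed lower semicontinuous, attaining a strictly positive infimum on the compact $\Pi$. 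With these touches the proof is complete and, if anything, more informative than the paper's bare citation.
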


\begin{proof}
See \cite{Chernozhukov2006}.
\end{proof}

\subsection{Stochastic Expansion}

\begin{lem}\label{lem:se}
Under Assumptions~\ref{assum:samp}-\ref{assum:fullrank}, the following statements hold:

\begin{enumerate}
\item $\sup_{\left(\beta,\theta,\gamma,m\right)\in\mathcal{B}\times\Theta\times\Gamma\times\mathcal{M}}\left|\mathbb{E}_{n}\left[q_{m}\left(E,\beta,\theta,\gamma,\tau\right)\right]-\mathbb{E}\left[q_{m}\left(E,\beta,\theta,\gamma,\tau\right)\right]\right|=o_{P}\left(1\right)$
\item $\mathbb{G}_{n}R_{m}\left(E,\beta\left(\cdot\right),\theta,\gamma,\tau\right)\Rightarrow\mathbb{Z}_{R_{m}}\left(\cdot\right)$ in $\ell^{\infty}\left(\mathcal{U}\mathcal{M}\right)$, where $\mathbb{Z}_{R_{m}}\left(\tau\right)$ is a zero-mean tight Gaussian process with covariance $\Sigma_{R_{m}}\left(\tau,\tau'\right)$ defined below in the proof. Moreover, for any $\hat{\vartheta}_{m}\left(\tau\right)$ such that $\sup_{\left(\tau,m\right)\in\mathcal{U}\times\mathcal{M}}\left\|\hat{\vartheta}_{m}\left(\tau\right)-\vartheta_{m}\left(\tau\right)\right\|=o_{P}\left(1\right)$, the following holds:
\begin{align*}
\sup_{\left(\tau,m\right)\in\mathcal{U}\mathcal{M}}\left\|\mathbb{G}_{n}R_{m}\left(E,\hat{\beta}\left(\tau\right),\hat{\theta},\hat{\gamma},\tau\right)-\mathbb{G}_{n}R_{m}\left(E,\beta\left(\tau\right),\theta,\gamma,\tau\right)\right\|=o_{P}\left(1\right)
\end{align*}
\end{enumerate}
\end{lem}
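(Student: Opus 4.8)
The plan is to read Lemma~\ref{lem:se} as a pair of empirical-process statements and to establish them by verifying a Glivenko--Cantelli property for the first part and a Donsker property together with asymptotic equicontinuity for the second. The compactness of the support in Assumption~\ref{assum:samp} makes every building block bounded, so all the function classes below admit a bounded (hence $P$-integrable and $P$-square-integrable) envelope; this is the feature I would exploit throughout, and the finiteness of $\mathcal{D}$ makes uniformity over $d$ automatic once each fixed-$d$ statement is proved.

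For Part~1, I would note that the index set $\mathcal{B}\times\Theta\times\Gamma\times\mathcal{T}$ is a compact subset of a Euclidean space and that, for almost every $W$, the map $\left(\beta,\theta,\gamma,\tau\right)\mapsto q_{d}\left(W,\beta,\theta,\gamma,\tau\right)$ is continuous. Continuity holds because the check function $\rho_{u}\left(\cdot\right)$ is jointly continuous in its level $u$ and its argument (it is piecewise linear with no jump at the kink), because $G_{d,x}\left(\tau,\pi\left(Z;\gamma\right);\theta_{d}\right)$ is continuous in $\left(\tau,\gamma,\theta_{d}\right)$ by Assumptions~\ref{assum:propensity} and~\ref{assum:parcop}, because $X'\beta_{d}$ is linear in $\beta_{d}$, and because the propensity-score score $s_{d}$ is continuous in $\gamma$ by Assumption~\ref{assum:propensity}. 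A dominated, continuous-in-parameter family indexed by a compact set is Glivenko--Cantelli, which delivers $\sup\lvert\mathbb{E}_{n}q_{d}-\mathbb{E}q_{d}\rvert=o_{P}\left(1\right)$.

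For Part~2, I would decompose $r_{d}\left(W,\beta\left(\tau\right),\theta,\gamma,\tau\right)$ into its ingredients and argue each generates a Donsker class indexed by $\tau$. The stochastic ingredient is the indicator $\mathbf{1}\left(Y\leq X'\beta_{d}\left(\tau\right)\right)$: since $\tau\mapsto\beta_{d}\left(\tau\right)$ is continuous (Assumption~\ref{assum:linear}) and finite-dimensional, this family is contained in the VC class of half-space indicators $\left\{\mathbf{1}\left(y\leq x'b\right):b\in\mathcal{B}\right\}$ and is therefore Donsker. The level $G_{d,x}\left(\tau,\pi\left(Z;\gamma\right);\theta_{d}\right)$ is, as a function of $\tau$, a conditional distribution function, hence monotone and uniformly bounded, so it too forms a Donsker class, while $s_{d}$ is a smooth finite-dimensional parametric class. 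Multiplying by the bounded multipliers $\mathbf{1}\left(D=d\right)S$, $X$ and $\varphi$, taking sums, and integrating the second coordinate against $du$ over $\left[\varepsilon,1-\varepsilon\right]$ (a bounded linear functional) all preserve the Donsker property by the standard permanence results used, e.g., in \cite{Chernozhukov2013}. Since the true parameters solve the moment conditions, $\mathbb{E}\left[r_{d}\left(W,\beta\left(\tau\right),\theta,\gamma,\tau\right)\right]=0$, so $\mathbb{G}_{n}r_{d}\Rightarrow\mathbb{Z}_{R_{d}}\left(\tau\right)$ in $\ell^{\infty}\left(\mathcal{T}\right)$, a tight zero-mean Gaussian process with covariance $\Sigma_{R_{d}}\left(\tau,\tau'\right)=\mathbb{E}\left[r_{d}\left(\tau\right)r_{d}\left(\tau'\right)'\right]$.

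The final step, and the one I expect to be the main obstacle, is the asymptotic equicontinuity display. Being Donsker, the class is asymptotically equicontinuous in the $L^{2}\left(P\right)$ semimetric, so it suffices to show $\sup_{\tau}\left\Vert r_{d}\left(W,\hat{\beta}\left(\tau\right),\hat{\theta},\hat{\gamma},\tau\right)-r_{d}\left(W,\beta\left(\tau\right),\theta,\gamma,\tau\right)\right\Vert_{L^{2}\left(P\right)}\overset{P}{\to}0$ and then invoke equicontinuity to conclude that the difference of the two empirical processes is $o_{P}\left(1\right)$ uniformly. The smooth coordinates give this $L^{2}$ convergence directly from Lipschitz continuity together with the assumed uniform consistency $\sup_{\tau}\lVert\hat{\vartheta}_{d}\left(\tau\right)-\vartheta_{d}\left(\tau\right)\rVert=o_{P}\left(1\right)$. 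The delicate coordinate is the indicator, for which $\mathbb{E}\left[\left(\mathbf{1}\left(Y\leq X'\hat{\beta}_{d}\left(\tau\right)\right)-\mathbf{1}\left(Y\leq X'\beta_{d}\left(\tau\right)\right)\right)^{2}\mid S=1,D=d,X,Z\right]=\lvert F_{Y|S=1,D,Z}\left(X'\hat{\beta}_{d}\left(\tau\right)\right)-F_{Y|S=1,D,Z}\left(X'\beta_{d}\left(\tau\right)\right)\rvert$; the bounded density of Assumption~\ref{assum:bound} bounds this by a constant times $\lvert X'\left(\hat{\beta}_{d}\left(\tau\right)-\beta_{d}\left(\tau\right)\right)\rvert$, which the uniform consistency and the boundedness of $X$ drive to zero uniformly in $\tau$. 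Assembling these pieces yields the stated stochastic expansion.
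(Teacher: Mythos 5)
Your overall architecture coincides with the paper's: both parts rest on establishing a Donsker property for the relevant function classes and then, for the plug-in statement, combining the assumed uniform consistency with asymptotic equicontinuity in the $L_{2}\left(P\right)$ pseudometric. The paper does this by building a class $\mathcal{H}_{d}$ from smooth, linear and indicator subclasses via bracketing bounds (Corollary 2.7.4 of \cite{VanderVaart1996}) and Lipschitz preservation (Theorem 2.10.6 there), and by driving the pseudodistance $\delta_{n}$ to zero using Assumption~\ref{assum:bound} --- exactly the skeleton you propose. Your direct Glivenko--Cantelli argument for part 1 (continuity plus compactness plus a bounded envelope), your VC-class treatment of the indicator, your explicit conditional-cdf computation bounding the $L^{2}$ increment of the indicator coordinate by the bounded density, and your appeal to finiteness of $\mathcal{D}$ for uniformity over $d$ all match the paper in substance; your monotonicity-in-$\tau$ route for $G_{d,x}$ is a legitimate substitute for the paper's mean-value-theorem verification that $G_{d,x}$ belongs to a smooth H\"{o}lder class.

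One step, as written, would fail: you construct Donsker classes ``indexed by $\tau$'' at the true $\left(\theta,\gamma\right)$, but the final display evaluates the empirical process at $\left(\hat{\beta}\left(\tau\right),\hat{\theta},\hat{\gamma}\right)$. Asymptotic equicontinuity controls increments only \emph{within} a fixed Donsker class, so the plugged-in functions must be members of it; showing that their $L^{2}\left(P\right)$ distance to the truth vanishes is not sufficient if $r_{d}\left(\cdot,\hat{\beta}\left(\tau\right),\hat{\theta},\hat{\gamma},\tau\right)$ lies outside the class. You handle this correctly for the indicator coordinate, which you embed in the VC class over all $b\in\mathcal{B}$, and implicitly for $s_{d}$, but the coordinate $G_{d,x}\left(\tau,\pi\left(Z;\gamma\right);\theta_{d}\right)$ is left indexed by $\tau$ alone, so your monotone-class argument does not cover $\left(\hat{\theta}_{d},\hat{\gamma}\right)$. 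The repair is exactly the paper's construction: index the class by the full compact set $\mathcal{B}\times\Theta\times\Gamma\times\mathcal{T}$ (the paper's $\mathcal{H}_{d}$) and note that, since $\left(\theta,\gamma\right)$ are finite-dimensional and $G_{d,x}$ is differentiable in them under Assumptions~\ref{assum:propensity} and~\ref{assum:parcop}, the enlarged class retains polynomially bounded bracketing numbers (a parametric Lipschitz class, e.g.\ Theorem 2.7.11 in \cite{VanderVaart1996}). In doing so you must also verify that the relevant derivatives of $G_{d,x}=C_{d,x}\left(u,v\right)/v$ are bounded uniformly over the support of the propensity score, which the paper checks explicitly --- including the limit $\pi\rightarrow1$ via L'H\^{o}pital under Assumptions~\ref{assum:parcop} and~\ref{assum:prop}. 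With that enlargement your argument closes and is otherwise the paper's proof.
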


\begin{proof}

Let $\mathcal{F}$ be the class of uniformly smooth functions in $z$ with the uniform smoothness order $\omega>\frac{\dim\left(s,z\right)}{2}$ and $\left\|f\left(\tau',z\right)-f\left(\tau,z\right)\right\|<\overline{K}\left(\tau-\tau'\right)^{a}$ for $\overline{K}>0$, $a>0$, $\forall\left(z,\tau,\tau'\right)\forall f\in\mathcal{F}$. The bracketing number of $\mathcal{F}$, by Corollary 2.7.4 in \cite{VanderVaart1996} satisfies
\begin{align*}
\log N_{\left[\cdot\right]}\left(\epsilon,\mathcal{F},L_{2}\left(P\right)\right)=O\left(\epsilon^{-\frac{\dim\left(z\right)}{\omega}}\right)=O\left(\epsilon^{-2-\delta}\right)
\end{align*}
for some $\delta<0$. Therefore, $\mathcal{F}$ is Donsker with a constant envelope. By Corollary 2.7.4, the bracketing number of
\begin{align*}
\mathcal{J}\equiv\left\{\beta\mapsto X'\beta,\beta\in\mathcal{B}\right\}
\end{align*}
satisfies
\begin{align*}
\log N_{\left[\cdot\right]}\left(\epsilon,\mathcal{J},L_{2}\left(P\right)\right)=O\left(\epsilon^{-\frac{\dim\left(s,x\right)}{\omega}}\right)=O\left(\epsilon^{-2-\delta'}\right)
\end{align*}
for some $\delta'<0$. Since the indicator function is bounded and monotone, and the density functions $f_{Y|Z,S=1}^{d}\left(y|z\right)$ are bounded by Assumption~\ref{assum:bound}, the bracketing number of
\begin{align*}
\mathcal{E}\equiv\left\{\beta\mapsto \mathbf{1}\left(Y<X'\beta\right)\mathbf{1}\left(S=1\right),\beta\in\mathcal{B}\right\}
\end{align*}
satisfies
\begin{align*}
\log N_{\left[\cdot\right]}\left(\epsilon,\mathcal{E},L_{2}\left(P\right)\right)=O\left(\epsilon^{-2-\delta'}\right)
\end{align*}

Since $\mathcal{E}$ has a constant envelope, it is Donsker. Now consider the function $G_{x}$. By Assumptions~\ref{assum:parcop} and~\ref{assum:prop}, the mean value theorem can be applied to show
\begin{align*}
\left\|G_{x}\left(\tau,\pi\left(z,\gamma\right);\theta\right)-G_{x}\left(\tau',\pi\left(z,\gamma\right);\theta\right)\right\|&=\left\|\tau-\tau'\right\|\left\|\frac{\partial}{\partial\tau}G_{x}\left(\tau'',\pi\left(z,\gamma\right);\theta\right)\right\|
\end{align*}
for some $\tau''$ between $\tau$ and $\tau'$. By Assumptions~\ref{assum:parcop} and~\ref{assum:prop}, the second term is bounded $\forall z,\tau''$, so it follows that $G_{x}\in\mathcal{F}$.\footnote{To see this, notice that both $\frac{\partial}{\partial\tau}C_{x}\left(\tau,\pi\right)\in\left[0,1\right]$ and $\pi\left(\tau\right)\in\left[0,1\right]$. Hence, it suffices to show that $\lim_{\pi\rightarrow1}\frac{\partial}{\partial\tau}G_{x}\left(\tau,\pi\right)=\lim_{\pi\rightarrow1}C_{x}\left(\tau,\pi\right)<\infty$, where I have used L'H\^{o}pital rule. Since the derivative is bounded by Assumption~\ref{assum:parcop}, the result follows.} Let $\mathcal{U}\equiv\left\{\tau\mapsto\tau\right\}$ and define
\begin{align*}
\mathcal{W}\equiv&\left\{w=\left(\beta,\theta,\gamma,\tau\right)\mapsto R_{m}\left(E,\beta,\theta,\gamma,\tau\right),\left(\beta,\theta,\gamma\right)\in\mathcal{B}\times\Theta\times\Gamma\right\}
\end{align*}

The first subvector of $\mathcal{W}$ is $\mathcal{E}\times\mathcal{F}-\mathcal{U}\times\mathcal{F}$, the second subvector is $\mathcal{E}\times\mathcal{F}-\mathcal{U}\times\mathcal{F}$, and the third subvector is $\mathcal{F}$. Since $\mathcal{W}$ is Lipschitz over $\left(\mathcal{U},\mathcal{F},\mathcal{E}\right)$, it follows that it is Donsker by Theorem 2.10.6 in \cite{VanderVaart1996}. Define
\begin{align*}
w\equiv\left(\beta,\theta,\gamma,\tau\right)\mapsto\mathbb{G}_{n}R_{m}\left(E,\beta,\theta,\gamma,\tau\right)
\end{align*}

$w$ is Donsker in $\ell^{\infty}\left(\mathcal{W}\right)$. Consider the process
\begin{align*}
\tau\mapsto\mathbb{G}_{n}R_{m}\left(E,\beta,\theta,\gamma,\tau\right)
\end{align*}

By the uniform H\"{o}lder continuity of $\tau\mapsto\left(\tau,\beta\left(\tau\right)\right)$ in $\tau$ with respect to the supremum norm, it is also Donsker in $\ell^{\infty}\left(\mathcal{U}\right)$. This, together with Assumption~\ref{assum:gsize} implies 
\begin{align*}
\mathbb{G}_{n}R_{m}\left(E,\beta\left(\tau\right),\theta,\gamma,\tau\right)
\Rightarrow\mathbb{Z}_{R_{m}}\left(\tau\right)
\end{align*}
with covariate function
\begin{align*}
\Sigma_{R_{m}}\left(\tau,\tau'\right)\equiv\mathbb{E}\left[\mathbb{Z}_{R_{m}}\left(\tau\right)\mathbb{Z}_{R_{m'}}\left(\tau'\right)'\right]=
\begin{bmatrix}\Sigma_{R_{mm'}}^{11}\left(\tau,\tau'\right) & \Sigma_{R_{m'}}^{21}\left(\tau'\right)' & 0\\
\Sigma_{R_{m}}^{21}\left(\tau\right) & \Sigma_{R_{mm'}}^{22} & 0\\
0 & 0 & \Sigma_{R_{mm'}}^{33}\end{bmatrix}
\end{align*}
where
\begin{align*}
\Sigma_{R_{mm'}}^{11}\left(\tau,\tau'\right)&=\mathbb{E}\left[S\left(G_{j,X,\tau}\wedge G_{j',X,\tau'}-G_{j,X,\tau}G_{j,X,\tau'}\right)XX'\mathbf{1}\left(D=j\right)\right]\mathbf{1}\left(j=j'\right)\\
&=\frac{1}{\lambda_{j}}\mathbb{E}\left[S\left(G_{j,X,\tau}\wedge G_{j',X,\tau'}-G_{j,X,\tau}G_{j,X,\tau'}\right)XX'|D=j\right]\mathbf{1}\left(j=j'\right)
\end{align*}
\begin{align*}
\Sigma_{R_{m}}^{21}\left(\tau\right)&=\mathbb{E}\left[S\int_{\varepsilon}^{1-\varepsilon}X\varphi\left(u,Z\right)'\left[G_{k,X,\tau}\wedge G_{k',X,u}-G_{k,X,\tau}G_{k',X,u}\right]du\mathbf{1}\left(D=k\right)\right]\mathbf{1}\left(k=j'\right)\\
&=\frac{1}{\lambda_{k}}\mathbb{E}\left[S\int_{\varepsilon}^{1-\varepsilon}X\varphi\left(u,Z\right)'\left[G_{k,X,\tau}\wedge G_{k',X,u}-G_{k,X,\tau}G_{k',X,u}\right]du|D=k\right]\mathbf{1}\left(k=j'\right)
\end{align*}
\begin{align*}
\Sigma_{R_{mm'}}^{22}&=\mathbb{E}\left[\int_{\varepsilon}^{1-\varepsilon}\int_{\varepsilon}^{1-\varepsilon}\varphi\left(u,Z\right)\varphi\left(v,Z\right)'\left[G_{k,X,u}\wedge G_{k',X,v}-G_{k,X,u}G_{k',X,v}\right]dvdu\mathbf{1}\left(D=k\right)\right]\mathbf{1}\left(k=k'\right)\\
&=\frac{1}{\lambda_{k}}\mathbb{E}\left[\int_{\varepsilon}^{1-\varepsilon}\int_{\varepsilon}^{1-\varepsilon}\varphi\left(u,Z\right)\varphi\left(v,Z\right)'\left[G_{k,X,u}\wedge G_{k',X,v}-G_{k,X,u}G_{k',X,v}\right]dvdu|D=k\right]\mathbf{1}\left(k=k'\right)
\end{align*}
\begin{align*}
\Sigma_{R_{mm'}}^{33}&=\mathbb{E}\left[b_{l}\left(S,Z;\gamma_{l}\right)b_{l'}\left(S,Z;\gamma_{l}\right)'\mathbf{1}\left(D=l\right)\right]\mathbf{1}\left(l=l'\right)\\
&=\frac{1}{\lambda_{l}}\mathbb{E}\left[b_{l}\left(S,Z;\gamma_{l}\right)b_{l'}\left(S,Z;\gamma_{l}\right)'|D=l\right]\mathbf{1}\left(l=l'\right)
\end{align*}
where $\wedge$ denotes the minimum between two variables, and $G_{d,X,\tau}\equiv G_{d,X}\left(\tau,\pi_{d}\left(Z,\gamma_{d}\right);\theta_{d}\right)$. Define $\xi$ as the $L_{2}\left(P\right)$ pseudometric on $\mathcal{W}$:
\begin{align*}
\xi\left(\tilde{w},w\right)\equiv\sqrt{\mathbb{E}\left\|R_{m}\left(E,\tilde{\beta},\tilde{\theta},\tilde{\gamma},\tilde{\tau}\right)-R_{m}\left(E,\beta,\theta,\gamma,\tau\right)\right\|^{2}}
\end{align*}

Define $\delta_{n}\equiv\sup_{\tau\in\mathcal{U}}\left|\xi\left(\tilde{w}\left(\tau\right),w\left(\tau\right)\right)\right|_{\tilde{w}\left(\tau\right)=\hat{w}\left(\tau\right)}$. Since $\hat{\vartheta}_{m}\left(\tau\right)\overset{P}{\rightarrow}\vartheta_{m}\left(\tau\right)$ uniformly in $\tau$, $\forall m\in\mathcal{M}$, by Assumption~\ref{assum:bound}, $\delta_{n}\overset{P}{\rightarrow}0$. Therefore, as $\delta_{n}\overset{P}{\rightarrow}0$,
\begin{align*}
&\sup_{\tau\in\mathcal{U}}\left\|\mathbb{G}_{n}R_{m}\left(E,\hat{\beta},\hat{\theta},\hat{\gamma},\tau\right)-\mathbb{G}_{n}R_{m}\left(E,\beta,\theta,\gamma,\tau\right)\right\|\\
&\leq \sup_{\begin{smallmatrix}\xi\left(\tilde{w},w\right)\leq\delta_{n}\\ \tilde{w},w\in\mathcal{W}\end{smallmatrix}}\left\|\mathbb{G}_{n}R_{m}\left(E,\hat{\beta},\hat{\theta},\hat{\gamma},\tau\right)-\mathbb{G}_{n}R_{m}\left(E,\beta,\theta,\gamma,\tau\right)\right\|=o_{P}\left(1\right)
\end{align*}
by stochastic equicontinuity of $w\mapsto\mathbb{G}_{n}R_{m}\left(E,\beta,\theta,\gamma,\tau\right)$, which proves claim 2. Note that the result is holds uniformly in $m\in\mathcal{M}$ because $\mathcal{M}$ is a finite set. To prove claim 1, define
\begin{align*}
\mathcal{A}\equiv\left\{\left(\beta,\theta,\gamma,\tau\right)\mapsto q_{m}\left(E,\beta,\theta,\gamma,\tau\right)\right\}
\end{align*}

By Assumptions~\ref{assum:samp} and~\ref{assum:compact}, $\mathcal{A}$ is bounded, and it is also uniformly Lipschitz over $\mathcal{B}\times\Theta\times\Gamma\times\mathcal{U}$, so by Theorem 2.10.6 in \cite{VanderVaart1996}, $\mathcal{A}$ is Donsker. Hence, the following ULLN holds:
\begin{align*}
\sup_{w\in\mathcal{W}}\left|\mathbb{E}_{n}q_{m}\left(E,\beta,\theta,\gamma,\tau\right)-\mathbb{E}q_{m}\left(E,\beta,\theta,\gamma,\tau\right)\right|\overset{P}{\rightarrow}0
\end{align*}
which gives
\begin{align*}
\sup_{\left(\beta,\theta,\gamma,\tau\right)\in\mathcal{B}\times\Theta\times\Gamma\times\mathcal{U}}\left|\mathbb{E}_{n}q_{m}\left(E,\beta,\theta,\gamma,\tau\right)-\mathbb{E}q_{m}\left(E,\beta,\theta,\gamma,\tau\right)\right|\overset{P}{\rightarrow}0
\end{align*}

Using the fact that the number of total groups is finite, it implies claim 1.
\end{proof}
\section{Identification under a Parametric Assumption}\label{app:param}

Consider the following assumptions: 
\begin{assum}\label{assum:er}
$\left(U,V\right)$ is jointly statistically independent of $Z_{1}$ given $X=x$.
\end{assum}
\begin{assum}\label{assum:cop}
The bivariate distribution of $\left(U,V\right)$ given $X=x$ is absolutely continuous with respect to the Lebesgue measure, with standard uniform marginals and rectangular support. Denote its cumulative distribution function as $C_{d,x}\left(u,v\right)$.
\end{assum}
\begin{assum}\label{assum:cont}
The conditional cdf $F_{Y^{*}|X}\left(y|x\right)$ and its inverse are strictly increasing. In addition, $C_{d,x}\left(u,v\right)$ is strictly increasing in $u$.
\end{assum}
\begin{assum}\label{assum:prop2}
$\pi_{d}\left(Z\right)\equiv\mathbb{P}\left(S=1|D=d,Z\right)>0$ with probability 1.
\end{assum}

Assumptions~\ref{assum:er}-\ref{assum:prop2} are Assumptions A1-A4 from \cite{Arellano2017}. They are not enough to identify the SQF and the copula, but they can be used to put some restrictions on the conditional copula, as they show in Lemma 1. Specifically, their Equation 6 can be written using this paper's notation as:
\begin{align}\label{eq:fg}
F_{Y|D=d,Z,S=1}\left(F_{Y|D=d,Z,S=1}^{-1}\left(\tau|z'\right)|z\right)=G_{d,x}\left(G_{d,x}^{-1}\left(\tau,\pi_{d}\left(z'\right)\right),\pi_{d}\left(z\right)\right)
\end{align}
where $F_{Y|D=d,Z,S=1}\left(\tau|z\right)$ is the cdf of $Y$, conditional on $D=d$, $Z=z$, and $S=1$. Point identification is achieved when they assume either enough variation in the instrument such that one can use an identification at infinity argument, or that the copula is real analytic and the instrument displays some continuous variation. A feasible alternative would be to impose the following parametric assumption:
\begin{assum}\label{assum:param}
The copula $C_{d,x}\left(\tau,\pi\right)$ is known up to a scalar parameter $\theta_{d,x}\in\Theta_{d,x}$, for $d=0,1$ and $\forall x\in\mathcal{X}$. $C_{d,x}:\left(0,1\right)^{2}\rightarrow\left(0,1\right)$ is uniformly continuous and twice continuously differentiable in its arguments and in $\theta_{d,x}$ \textit{a.e.} Moreover, for any $\theta_{1}<\theta_{2}$, $C_{d,x}\left(\tau,\pi;\theta_{2}\right)$ is strictly more stochastically increasing in joint distribution than $C_{d,x}\left(\tau,\pi;\theta_{1}\right)$.
\end{assum}

The identification under this assumption is established by the following proposition:
\begin{prop}\label{prop:id}
Let Assumptions~\ref{assum:er} to~\ref{assum:param} hold, and $x\in\mathcal{X}$. Then, for $d=0,1$, the functions $\left(\tau,\pi\right)\rightarrow G_{d,x}\left(\tau,\pi\right)$ and $\tau\rightarrow g_{d}\left(x,\tau\right)$ are globally identified.
\end{prop}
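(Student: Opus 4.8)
The plan is to prove identification in two stages: first pin down the scalar copula parameter $\theta_{d,x}$, and then recover the structural quantile function $g_d(x,\cdot)$ by inversion. Throughout I fix $x\in\mathcal{X}$ and $d\in\{0,1\}$, and I treat the propensity score $\pi_d(\cdot)$ as already identified from the selection equation (it equals $\mathbb{P}(S=1\mid D=d,Z)$). Under Assumption~\ref{assum:er} together with continuous variation of the instrument $Z_1$ conditional on $X=x$ (the analogue of the continuous-instrument case of \cite{Arellano2017}), $\pi_d(z)$ sweeps out a nondegenerate interval $\mathcal{P}_{d,x}$; this is the source of exogenous variation I will exploit in place of the identification-at-infinity or real-analyticity arguments.

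First I would translate observational equivalence into a single functional equation. Since $F_{Y|D=d,Z,S=1}(y\mid z)=G_{d,x}(g_d^{-1}(x,y),\pi_d(z))=C_{d,x}(g_d^{-1}(x,y),\pi_d(z);\theta_{d,x})/\pi_d(z)$ (cf.\ Equation~\ref{eq:fg}), two candidate pairs $(\theta^{0},g^{0})$ and $(\theta^{1},g^{1})$ generate the same conditional distributions of $Y$ for every instrument value if and only if, writing $r=(g^{0})^{-1}(x,y)$ and $m(r)=(g^{1})^{-1}(x,g^{0}(x,r))$,
\[
C_{d,x}(r,\pi;\theta^{0})=C_{d,x}(m(r),\pi;\theta^{1})\qquad\text{for all }r\in(0,1),\ \pi\in\mathcal{P}_{d,x}.
\]
By Assumption~\ref{assum:cont} both $g^{0}(x,\cdot)$ and $g^{1}(x,\cdot)$ are strictly increasing, so $m:(0,1)\to(0,1)$ is a strictly increasing bijection that does not depend on $\pi$. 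The task reduces to showing that this identity forces $\theta^{0}=\theta^{1}$ and $m=\mathrm{id}$.

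The heart of the argument --- and the step I expect to be the main obstacle --- is ruling out a nontrivial relabeling $m$ when $\theta^{0}\neq\theta^{1}$. A naive marginal-matching argument fails because $C_{d,x}(u,1;\theta)=u$ requires evaluating at $\pi=1$, which need not lie in $\mathcal{P}_{d,x}$; the whole point of Assumption~\ref{assum:param} is to make the subinterval $\mathcal{P}_{d,x}$ suffice. Here I would differentiate the displayed identity in $\pi$, which is legitimate on the open interval $\mathcal{P}_{d,x}$ by the smoothness in Assumption~\ref{assum:param}, giving $\partial_\pi C_{d,x}(r,\pi;\theta^{0})=\partial_\pi C_{d,x}(m(r),\pi;\theta^{1})$. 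Since $V$ is uniform, each side is a genuine conditional distribution function $u\mapsto\mathbb{P}(U\le u\mid V=\pi)$ under the respective parameter, so the identity says the two $\pi$-indexed families of conditional laws coincide after the single, $\pi$-free increasing relabeling $m$ of the first argument $u$. The key observation is that the ``strictly more stochastically increasing'' dependence order of Assumption~\ref{assum:param} is invariant to such an increasing relabeling of $U$: the family $\{\partial_\pi C_{d,x}(\cdot,\pi;\theta^{1})\}_\pi$ and its relabeling $\{\partial_\pi C_{d,x}(m(\cdot),\pi;\theta^{1})\}_\pi$ carry the same stochastic-increasing content. Hence if, say, $\theta^{0}<\theta^{1}$, the relabeled $\theta^{1}$-family would remain strictly more stochastically increasing than the $\theta^{0}$-family, contradicting their equality. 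Therefore $\theta^{0}=\theta^{1}$; substituting back yields $C_{d,x}(r,\pi;\theta^{0})=C_{d,x}(m(r),\pi;\theta^{0})$, and strict monotonicity in the first argument (Assumption~\ref{assum:cont}) gives $m=\mathrm{id}$. This identifies $\theta_{d,x}$ and hence $G_{d,x}$.

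Finally I would recover $g_d(x,\cdot)$. Fixing any reference $z_0$ with $\pi_0=\pi_d(z_0)\in\mathcal{P}_{d,x}$, the relation $F_{Y|D=d,Z,S=1}(y\mid z_0)=G_{d,x}(g_d^{-1}(x,y),\pi_0)$ inverts in its first argument --- $G_{d,x}(\cdot,\pi_0)$ being strictly increasing by Assumption~\ref{assum:cont} --- to give $g_d^{-1}(x,y)=G_{d,x}^{-1}(F_{Y|D=d,Z,S=1}(y\mid z_0),\pi_0)$; inverting once more in $y$ delivers $g_d(x,\cdot)$. As every object on the right-hand side is now identified, both $(\tau,\pi)\mapsto G_{d,x}(\tau,\pi)$ and $\tau\mapsto g_d(x,\tau)$ are globally identified for $d=0,1$. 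The delicate points to check carefully are the invariance of the stochastic-increasing order under the relabeling $m$ (so that the strictness in Assumption~\ref{assum:param} actually bites) and that this strictness is detectable from the conditional laws indexed only by $\pi\in\mathcal{P}_{d,x}$ rather than the full unit square.
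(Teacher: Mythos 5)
Your strategy is genuinely different from the paper's and, under your own premises, essentially sound. The paper starts from the Arellano--Bonhomme restriction (Equation~\ref{eq:fg}), forms the map $M_{d,x}\left(\tau,\theta\right)=G_{d,x}\left(G_{d,x}^{-1}\left(\tau,\pi_{d}\left(z'\right);\theta\right),\pi_{d}\left(z\right);\theta\right)$ built from two instrument values, proves \emph{local} identification by showing $\nabla_{\theta}M\neq0$ --- using Lemma 4.1 of \cite{Han2017}, which converts the strictly-more-SI ordering into strict monotonicity in $\pi$ of the ratio $\nabla_{\theta}C/\nabla_{u}C$ --- and then upgrades to \emph{global} identification via a global inverse function theorem (Lemma 4.2 of \cite{Han2017}: properness and simple connectedness on sign-restricted half-spaces, extended to unbounded $\Theta$ by countable exhaustion). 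You instead run a one-shot order-theoretic argument on observationally equivalent structures: the $\pi$-free relabeling $m$, invariance of the more-SI-in-joint-distribution order under increasing transformations of $U$ (correct, since it is a dependence order: writing $\phi_{\pi}$ for the conditional quantile-quantile transform, relabeling replaces $\phi_{\pi}$ by $m^{-1}\circ\phi_{\pi}$, preserving monotonicity in $\pi$), and strictness/antisymmetry of the order then force $\theta^{0}=\theta^{1}$ and $m=\mathrm{id}$ simultaneously. This buys you a directly global proof with no Rothenberg step, no global-inversion machinery, no exhaustion argument, and uniqueness of the pair $\left(\theta_{d,x},g_{d}\right)$ in one stroke; your two flagged delicate points are the right ones and both check out, since the strictness in Assumption~\ref{assum:param} holds for all $v_{1}<v_{2}$ and hence restricts to any nondegenerate $\mathcal{P}_{d,x}$.

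The one substantive caveat is that your proof consumes strictly more instrument variation than the paper's. You differentiate the identity in $\pi$ to pass from the observed truncated conditionals $G_{d,x}\left(\cdot,\pi\right)=C_{d,x}\left(\cdot,\pi;\theta\right)/\pi$ (conditioning on $V\leq\pi$) to the point conditionals $\partial_{\pi}C_{d,x}\left(\cdot,\pi;\theta\right)$ (conditioning on $V=\pi$), on which the SI order is defined; this requires $\mathcal{P}_{d,x}$ to contain an interval. The paper's proof needs only two values with $\pi_{d}\left(z\right)\neq\pi_{d}\left(z'\right)$, and this matters: Assumption~\ref{assum:param} is introduced precisely to dispense with the continuous-instrument, real-analytic-copula route of \cite{Arellano2017}, so requiring a continuum of propensity values partly defeats its purpose (a binary instrument suffices for the paper's argument but not, as written, for yours). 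To match the paper's generality you would need an additional lemma transferring the strict SI comparison to the truncated family at two points --- e.g., that $u\mapsto G_{d,x}^{-1}\left(G_{d,x}\left(u,\pi;\theta_{1}\right),\pi;\theta_{2}\right)$ is strictly monotone in $\pi$ whenever $\theta_{1}\neq\theta_{2}$ --- which is essentially the content of Equation~\ref{eq:cond4} and the monotone-ratio lemma the paper invokes, observing that $\nabla_{\theta}G/\nabla_{u}G=\nabla_{\theta}C/\nabla_{u}C$.
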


\begin{proof}
The proof is split in parts. First, I show the local identification of $\theta_{d,x}$, followed by its global identification, and I conclude by showing the identification of the SQF.

Define the functions $M_{d,x}\left(\tau,\theta_{d,x}\right)\equiv G_{d,x}\left(G_{d,x}^{-1}\left(\tau,\pi_{d}\left(z'\right);\theta_{d,x}\right),\pi_{d}\left(z\right);\theta_{d,x}\right)$ and $\phi_{d,x}\left(\tau\right)\equiv F_{Y|D=d.Z,S=1}\left(F_{Y|D=d,Z,S=1}^{-1}\left(\tau|z'\right)|z\right)$. By Equation~\ref{eq:fg}, $M_{d,x}\left(\tau,\theta_{d,x}\right)=\phi_{d,x}\left(\tau\right)$, $\forall x\in\mathcal{X},d=0,1$. Taking the derivative with respect to the copula parameter for a generic value of $\theta$, and dropping the $\left(d,x\right)$ subscript from the functions $M$ and $\phi$ and $d$ from the propensity score for notational simplicity, yields
\begin{align}\label{eq:jacobian}
\nabla_{\theta}M\left(\tau,\theta\right)&=\nabla_{\theta}G\left(G^{-1}\left(\tau,\pi\left(z'\right);\theta\right),\pi\left(z\right);\theta\right)\nonumber\\
&-\nabla_{u}G\left(G^{-1}\left(\tau,\pi\left(z'\right);\theta\right),\pi\left(z\right);\theta\right)\frac{\nabla_{\theta}G\left(G^{-1}\left(\tau,\pi\left(z'\right);\theta\right),\pi\left(z'\right);\theta\right)}{\nabla_{u}G\left(G^{-1}\left(\tau,\pi\left(z'\right);\theta\right),\pi\left(z'\right);\theta\right)}
\end{align}

Because $M\left(\tau,\theta\right)$ holds for any $\tau\in\left(0,1\right)$, there is an continuum of moments that pin down the parameter $\theta$. Instead, consider a finite number of values of $\tau$, given by $\left\{\tau_{1},...,\tau_{T}\right\}$. Local identification holds when the matrix that collects the Jacobian for all values in this set is of full rank, as required by Theorem 6 in \cite{Rothenberg1971}. Because it is a scalar parameter, full rank is attained if $\nabla_{\theta}M\left(\tau,\theta\right)\neq0$ for any of the values of $\tau$, \textit{i.e.},
\begin{align}\label{eq:cond2}
\frac{\nabla_{\theta}G\left(G^{-1}\left(\tau,\pi\left(z'\right);\theta\right),\pi\left(z\right);\theta\right)}{\nabla_{u}G\left(G^{-1}\left(\tau,\pi\left(z'\right);\theta\right),\pi\left(z\right);\theta\right)}-\frac{\nabla_{\theta}G\left(G^{-1}\left(\tau,\pi\left(z'\right);\theta\right),\pi\left(z'\right);\theta\right)}{\nabla_{u}G\left(G^{-1}\left(\tau,\pi\left(z'\right);\theta\right),\pi\left(z'\right);\theta\right)}\neq0
\end{align}

Let $\tau'\equiv G^{-1}\left(\tau,\pi\left(z'\right);\theta\right)\Leftrightarrow \tau=G\left(\tau',\pi\left(z'\right);\theta\right)$. Then, Equation~\ref{eq:cond2} can be rewritten as
\begin{align}\label{eq:cond3}
\frac{\nabla_{\theta}G\left(\tau',\pi\left(z\right);\theta\right)}{\nabla_{u}G\left(\tau',\pi\left(z\right);\theta\right)}-\frac{\nabla_{\theta}G\left(\tau',\pi\left(z'\right);\theta\right)}{\nabla_{u}G\left(\tau',\pi\left(z'\right);\theta\right)}\neq0
\end{align}

By the definition of the conditional copula, $\sfrac{\nabla_{\theta}G\left(\tau,\pi;\theta\right)}{\nabla_{u}G\left(\tau,\pi;\theta\right)}=\sfrac{\nabla_{\theta}C\left(\tau,\pi;\theta\right)}{\nabla_{u}C\left(\tau,\pi;\theta\right)}$, so Equation~\ref{eq:cond3} is equivalent to
\begin{align}\label{eq:cond4}
\frac{\nabla_{\theta}C\left(\tau',\pi\left(z\right);\theta\right)}{\nabla_{u}C\left(\tau',\pi\left(z\right);\theta\right)}-\frac{\nabla_{\theta}C\left(\tau',\pi\left(z'\right);\theta\right)}{\nabla_{u}C\left(\tau',\pi\left(z'\right);\theta\right)}\neq0
\end{align}

This is equivalent to the first equation in Condition 4.7 in \cite{Han2017}. By Lemma 4.1 in \cite{Han2017}, under Assumption~\ref{assum:param}, if the copula $G_{d,x}\left(\tau,\pi\right)$ satisfies Assumption 6 in \cite{Han2017}, then Equation~\ref{eq:cond4} is strictly decreasing in the second argument of the copula. If $\pi\left(z\right)\neq\pi\left(z'\right)$, \textit{i.e.}, if the instrument does not come from a degenerate distribution, then the copula parameter $\theta_{d,x}$ is locally identifiable by Proposition 4.1 in \cite{Han2017}.

For global identification, I begin by showing that it is possible to apply Lemma 4.2 in \cite{Han2017} on a restricted parameter space, extending it subsequently to the entire parameter space. Note that Equation~\ref{eq:jacobian} can be rewritten as
\begin{align*}
\nabla_{\theta}M\left(\tau,\theta\right)=\nabla_{u}G\left(\tau',\pi\left(z\right);\theta\right)\left[\frac{\nabla_{\theta}G\left(\tau',\pi\left(z\right);\theta\right)}{\nabla_{u}G\left(\tau',\pi\left(z\right);\theta\right)}-\frac{\nabla_{\theta}G\left(\tau',\pi\left(z'\right);\theta\right)}{\nabla_{u}G\left(\tau',\pi\left(z'\right);\theta\right)}\right]
\end{align*}
where $\tau'\equiv G^{-1}\left(\tau,\pi\left(z'\right);\theta\right)$. By Lemma 4.1 in \cite{Han2017}, the term in brackets is positive when $\pi\left(z\right)<\pi\left(z'\right)$. Moreover, $\nabla_{u}G\left(\tau,\pi;\theta\right)=\frac{1}{\pi}\nabla_{u}C\left(\tau,\pi;\theta\right)>0$. Therefore, the Jacobian $\nabla_{\theta}M\left(\tau,\theta\right)$ is positive semidefinite if $\pi\left(z\right)<\pi\left(z'\right)$ and negative semidefinite if $\pi\left(z\right)>\pi\left(z'\right)$. In addition, it has full rank for any $\theta$ as long as $\pi\left(z\right)\neq\pi\left(z'\right)$.

Let $\Theta_{c}\subseteq\Theta$ be a bounded open space with half spaces $\Theta_{c_{1}}\equiv\left\{\theta\in\Theta_{c}:\pi\left(z\right)<\pi\left(z'\right)\right\}$, and $\Theta_{c_{2}}\equiv\left\{\theta\in\Theta_{c}:\pi\left(z\right)>\pi\left(z'\right)\right\}$, which are simply connected. Define $\phi_{c_{1}}\left(\tau\right)=M\left(\tau,\Theta_{c_{1}}\right)$ and $\phi_{c_{2}}\left(\tau\right)=M\left(\tau,\Theta_{c_{2}}\right)$, and let $M|_{\Theta_{c_{1}}}:\Theta_{c_{1}}\rightarrow\phi_{c_{1}}$ and $M|_{\Theta_{c_{2}}}:\Theta_{c_{2}}\rightarrow\phi_{c_{2}}$ be the function $M\left(\tau,\cdot\right)$ on its restricted domains.

Because $M|_{\Theta_{c_{1}}}\left(\tau,\cdot\right)$ and $M|_{\Theta_{c_{2}}}\left(\tau,\cdot\right)$ are continuous, the pre-image of a closed set under $M|_{\Theta_{c_{1}}}\left(\tau,\cdot\right)$ and $M|_{\Theta_{c_{2}}}\left(\tau,\cdot\right)$ is closed. Because $\Theta_{c_{1}}$ and $\Theta_{c_{2}}$ are bounded, the pre-image of a bounded set is bounded. Thus, $M|_{\Theta_{c_{1}}}\left(\tau,\cdot\right)$ and $M|_{\Theta_{c_{2}}}\left(\tau,\cdot\right)$ are proper.

Because $\Theta_{c_{1}}$ and $\Theta_{c_{2}}$ are simply connected, $M|_{\Theta_{c_{1}}}\left(\tau,\cdot\right)$ and $M|_{\Theta_{c_{2}}}\left(\tau,\cdot\right)$ are continuous on $\Theta_{c_{1}}$ and $\Theta_{c_{2}}$, respectively, and the Jacobian $\nabla_{\theta}M\left(\tau,\cdot\right)$ is positive semidefinite and negative semidefinite on $\Theta_{c_{1}}$ and $\Theta_{c_{2}}$, respectively, it follows that $\phi_{c_{1}}$ and $\phi_{c_{2}}$ are simply connected.

Also, $\nabla_{\theta}M\left(\tau,\cdot\right)$ has full rank over $\Theta_{c_{1}}$ and $\Theta_{c_{2}}$. Thus, by Lemma 4.2 in \cite{Han2017}, $\phi\left(\tau\right)=M\left(\tau,\theta\right)$ has a unique solution on $\Theta_{c_{1}}$ and $\Theta_{c_{2}}$, respectively. Because there exist $M|_{\Theta_{c_{1}}}^{-1}\left(\tau,\cdot\right)\in\Theta_{c_{1}}$ for $\phi\in\phi_{c_{1}}$ and $M|_{\Theta_{c_{2}}}^{-1}\left(\tau,\cdot\right)\in\Theta_{c_{2}}$ for $\phi\in\phi_{c_{2}}$, $\theta$ is globally identified.

Now let $\Theta_{1}\equiv\left\{\theta\in\Theta:\pi\left(z\right)<\pi\left(z'\right)\right\}$ and $\Theta_{2}\equiv\left\{\theta\in\Theta:\pi\left(z\right)>\pi\left(z'\right)\right\}$ be two simply connected, possibly unbounded spaces. $\Theta_{1}$ and $\Theta_{2}$ can be represented as a countable union of bounded open simply connected sets. \textit{E.g.}, $\Theta_{j}=\cup_{i=1}^{\infty}\Theta_{ji}$, where $\Theta_{ji}$ is a sequence of bounded open simply connected sets in $\Theta_{j}$ such that $\Theta_{j1}\subset\Theta_{j2}\subset...\subset\Theta_{j}$ for $j=1,2$.

Let $\phi_{ji}\left(\tau\right)\equiv M\left(\tau,\Theta_{ji}\right)$ for $i=1,2,...$ and $j=1,2$. Then, $\phi_{j}\left(\tau\right)=M\left(\tau,\Theta_{j}\right)=M\left(\tau,\cup_{i=1}^{\infty}\Theta_{ji}\right)=\cup_{i=1}^{\infty}M\left(\tau,\Theta_{ij}\right)=\cup_{i=1}^{\infty}\phi_{ij}\left(\tau\right)$, and $\phi_{j1}\subset\phi_{j2}\subset...\subset\phi_{j}$. Then, for any given $\phi\in\phi_{j}$, $\exists q:\phi\in\phi_{ji}\forall i\geq q$, so $M|_{\Theta_{c_{j}}}^{-1}\left(\tau,\phi\right)\in\Theta_{ji}\forall i\geq q$, and therefore $M^{-1}\left(\tau,\phi\right)=M|_{\cup_{i=q}^{\infty}\Theta_{ji}}^{-1}\left(\tau,\phi\right)\in\cup_{i=q}^{\infty}\Theta_{ji}=\Theta_{j}$. Because $M^{-1}\left(\tau,\phi\right)$ is the unique solution on $\Theta_{j}$, it is the unique solution of the full system with $\tau=\left\{\tau_{1},...,\tau_{T}\right\}$. Thus, $\theta$ is globally identified in $\Theta_{j}$. 

To show the identification of the SQF, note that by Equation~\ref{eq:fg}, $F_{Y|D=d,Z,S=1}\left(g_{d}\left(x,\tau\right)|z\right)=G_{d,x}\left(\tau,\pi\left(z\right);\theta_{d,x}\right)$. Therefore, one can solve for $g_{d}$ for $d=0,1$, and express it in terms of either observed or identified functions: $g_{d}\left(x,\tau\right)=F_{Y|D=d,Z,S=1}^{-1}\left(G_{d,x}\left(\tau,\pi\left(z\right);\theta_{d,x}\right)\right)$.
\end{proof}
\section{Particular Cases}\label{sec:cases}

Several particular cases are nested by the general model. I discuss how the different estimands change and which methods can be used to estimation them.

\subsection{Additively separable unobserved term}

An interesting case arises when $Y=\left(g_{D}\left(X\right)+\tilde{U}\right)S$, where $\tilde{U}=Q_{U|D,X}\left(U\right)$ denotes the conditional quantile function of the uniformly distributed random variable $U$. For example, if the additive error term $\tilde{U}$ is homoskedastic and normally distributed, then $Q_{U|D,X}\left(U\right)=\Phi^{-1}\left(U\right)$, where $\Phi\left(\cdot\right)$ is the standard normal cdf. Importantly, as long as $g_{j}\left(x\right)$ contains an intercept, $\mathbb{E}(\tilde{U})=0$. This case yields a substantial simplification to the expected outcome:
\begin{align*}
\mathbb{E}\left[Y^{hjkl}|S=1\right]&=\int_{\mathcal{Z}}\int_{0}^{1}\left[g_{j}\left(x\right)+Q_{U|j,x}\left(u\right)\right]dG_{k,x}\left(u,\pi_{l}\left(z\right)\right)dF_{Z}^{h}\left(z\right)\\
&=\int_{\mathcal{X}}g_{j}\left(x\right)dF_{X}^{h}\left(x\right)+\int_{\mathcal{Z}}\int_{0}^{1}Q_{U|j,x}\left(u\right)dG_{k,x}\left(u,\pi_{l}\left(z\right)\right)dF_{Z}^{h}\left(z\right)
\end{align*}
where $F_{X}^{h}\left(x\right)$ is the cdf of $X$ for group $h\in\mathcal{D}$ and $\mathcal{X}$ its support. Similarly,
\begin{align*}
\mathbb{E}\left[Y^{hjkl}\right]=\int_{\mathcal{X}}g_{j}\left(x\right)\pi_{l}\left(z\right)dF_{Z}^{h}\left(z\right)+\int_{\mathcal{Z}}\int_{0}^{1}Q_{U|j,x}\left(u\right)dC_{k,x}\left(u,\pi_{l}\left(z\right)\right)dF_{Z}^{h}\left(z\right)
\end{align*}

Both means can be split into two terms. One depends exclusively on the separable term $g_{d}\left(x\right)$ and the distribution of the covariates, without the instrument. The other depends on the copula, the propensity score and the distribution of the covariates including the instrument, but not on $g_{d}\left(x\right)$. Such a simplification follows from the linearity of the expectation operator, which does not hold for the unconditional quantiles. Note also that a change in the covariates results in a parallel shift of the conditional distribution function, without affecting its shape, limiting the amount of heterogeneity that the model can display.

The estimation of the SQF is simplified because it is comprised of the sum of a potentially nonlinear term that depends exclusively on the covariates, and the unobserved ability. Thus, it is possible to estimate $g_{d}\left(x\right)$ nonparametrically. For example, \cite{Das2003} proposes a two-step series estimator which, in the first step nonparametrically regresses the selection variable on the covariates and the instruments, and in the second step it nonparametrically regresses the outcome variable on the covariates and the correction term.

\subsection{Linear model}

A particular case of the previous one appears when $g_{d}\left(x,u\right)=x'\beta_{d}+\tilde{u}$, where again $\tilde{u}=Q_{u|d,x}\left(u\right)$. This assumption slightly simplifies the expression of the first term of the mean outcome relative to the additively separable unobserved term model. Specifically, $\int_{\mathcal{X}}g_{j}\left(x\right)dF_{X}^{h}\left(x\right)=\mathbb{E}_{h}\left(X\right)'\beta_{j}$, where $\mathbb{E}_{h}$ stands for the expectation with respect to $F_{X}^{h}$.

Another two-step estimator that can be used in this case is the one proposed by \cite{Newey2009}. This is similar to \cite{Das2003}, with the main difference being the fact that the regression in the second step is linear in the covariates whilst keeping the power series on the correction term. Other alternative estimators based on stronger parametric assumptions are those proposed, \textit{e.g.}, by \cite{Heckman1976,Heckman1979,Lee1983}.\footnote{See \cite{Vella1998} for a review of these and other estimators for data with sample selection.}

\subsection{No self-selection}

Usually, it is assumed that the ability of participants is unrelated to their participation decision. Mathematically, the copula is independent: $C_{d,x}\left(u,v\right)=uv$ and $G_{d,x}\left(u,v\right)=u$. This assumption implies that, conditional on participation, there are no differences in the distribution of ability between the members of both groups, so the selection component vanishes in all decompositions. Moreover, the outcome value of participants is comparable to the potential value of non-participants. Hence, absence of self-selection also implies that the participation component vanishes in the decompositions for participants (Equations~\ref{eq:decommeanys} and~\ref{eq:decomqys}). This can be seen in Equations~\ref{eq:meanys} and~\ref{eq:fys}, as the statistics of interest depend on the propensity score exclusively through the copula. However, the same simplification does not apply to the decompositions for the entire population, as both the mean and the unconditional quantiles directly depend on the propensity score.

The combination of this assumption with the additively linear unobserved term yields another meaningful simplification, as both the participation and selection components equal zero. To see this, note that
\begin{align*}
\int_{0}^{1}Q_{U|j,x}\left(u\right)dG_{k,x}\left(u,\pi_{l}\left(z\right)\right)=\int_{0}^{1}Q_{U|j,x}\left(u\right)du=0
\end{align*}
where the second equality follows by construction.

In this case the estimation of the copula is superfluous: because the distribution of the unobserved ability is the same for participants and non-participants, there is no need to account for sample selection in the estimation of the SQF. This opens the possibility of using methods that are consistent under exogeneity. For example, one can use quantile regression and then apply the decomposition proposed by \cite{Machado2005}, distributional regression and follow \cite{Chernozhukov2013}, or one could use reweighting methods, such as \cite{Dinardo1996} or \cite{Firpo2018}.\footnote{A detailed comparison of the advantages and disadvantages of these methods is provided by \cite{Fortin2011}. Furthermore, \cite{Leorato2015} compares quantile and distributional regression.} Additionally, if the linearity assumption is combined with either no self-selection or all participants, the resulting model can be estimated by OLS. Regardless, the decompositions for the entire population still depend on the propensity score.

\subsection{All participants}

A particular case of the previous one takes place when every considered individual is a participant. In this case, the propensity score equals one for all individuals, so $S_{i}=1$ for all $i=1,...,N$. As a consequence, the participation component vanishes. Also, because $C_{d,x}\left(u,1\right)=G_{d,x}\left(u,1\right)=u$, the selection component also disappears. Moreover, because there are no non-participants, the decomposition for participants and non-participants are the same. Additionally, under linearity the decomposition becomes the Oaxaca-Blinder decomposition.
\section{Additional Theoretical Results}\label{app:additional}

The estimators of Equations~\ref{eq:meany},~\ref{eq:fy} and~\ref{eq:qy} are given by
\begin{align*}\label{eq:meanyshat}
\hat{\mathbb{E}}\left[Y^{m}\right]&=\frac{1}{\tilde{n}_{h}}\sum_{i=1}^{n}\int_{\varepsilon}^{1-\varepsilon}\hat{g}_{j}\left(X_{i},u\right)d\hat{C}_{k,x}\left(u,\hat{\pi}_{l}\left(Z_{i}\right)\right)\mathbf{1}\left(D_{i}=h\right)\\
\hat{F}_{Y}^{m}\left(y\right)&=\frac{1}{\tilde{n}_{h}}\sum_{i=1}^{n}\left[\varepsilon+\int_{\varepsilon}^{1-\varepsilon}\mathbf{1}\left(\hat{g}_{j}\left(X_{i},u\right)\leq y\right)d\hat{C}_{k,x}\left(u,\hat{\pi}_{l}\left(Z_{i}\right)\right)+\left(1-\pi_{l}\left(Z_{i}\right)\right)\right]\mathbf{1}\left(D_{i}=h\right)\\
\hat{Q}_{Y}^{m}\left(\tau\right)&=\inf\left\{y:\tau\leq \hat{F}_{Y}^{m}\left(y\right)\right\}
\end{align*}
where $\tilde{n}_{h}=\sum_{i=1}^{n}\mathbf{1}\left(D_{i}=h\right)$. Under the same condition required for Theorem~\ref{thm:asymgc} to hold, these estimators are consistent and asymptotically Gaussian:
\begin{thm}\label{thm:asymadd}
Let the estimator $\hat{\upsilon}_{m}\left(z,\tau,\eta\right)$ satisfy Condition~\ref{con:genmod}. Under Assumptions~\ref{assum:samp}-\ref{assum:bound}, the following hold for all $\left(m,m'\right)$:
\begin{align*}
\sqrt{n}\Delta^{m,m'}\left(\hat{\mathbb{E}}\left[Y\right]-\mathbb{E}\left[Y\right]\right)\Rightarrow\mathbb{Z}_{\Delta^{mm'}Y}
\end{align*}
where $\mathbb{Z}_{\Delta^{mm'}Y}$ is a zero-mean Gaussian random variable, and
\begin{align*}
\sqrt{n}\Delta^{m,m'}\left(\hat{Q}_{Y}\left(\tau\right)-Q_{Y}\left(\tau\right)\right)\Rightarrow\mathbb{Z}_{Q,mm'}\left(\tau\right)
\end{align*}
where $\mathbb{Z}_{Q,mm'}\left(\tau\right)$ is a zero-mean Gaussian process.
\end{thm}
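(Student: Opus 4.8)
The plan is to mirror the proof of Theorem~\ref{thm:asymgc} almost verbatim, since the estimators for the entire population in Equations~\ref{eq:meany}, \ref{eq:fy} and~\ref{eq:qy} differ from their participant counterparts only in that the conditional copula $G_{l,x}$ is replaced by the unconditional copula $C_{l,x}$ as the integrating measure, and in the additive non-participation term $\left(1-\pi_m(z)\right)$ appearing in the distribution. I would first establish the mean result, then the distribution and quantile results, and finally take differences across $\left(\ell,\ell'\right)$ by one more application of the functional delta method.

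For the mean, I would apply the functional delta method to $\hat{\mathbb{E}}\left[Y^{\ell}\right]$ exactly as in the participant case, expanding the estimation error into four additive pieces corresponding to perturbations of $g_k$, the copula, the propensity score, and the empirical measure of covariates. The only change relative to Theorem~\ref{thm:asymgc} is bookkeeping: since the integrator is $C_{l,x}\left(u,\pi_m(z)\right)$ rather than $G_{l,x}\left(u,\pi_m(z)\right)=C_{l,x}\left(u,\pi_m(z)\right)/\pi_m(z)$, the copula-density term loses the $1/\pi_m(z)$ factor, becoming $\int_{\mathcal{Z}}\int_{\varepsilon}^{1-\varepsilon}g_k(x,\tau)\mathbb{Z}_{c_l}\left(\tau,\pi_m(z)\right)d\tau\, dF_Z^h(z)$, and the propensity-score term replaces $\nabla_\pi G_{l,x}$ by $\nabla_\pi C_{l,x}$. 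Note that the additive $\left(1-\pi_m(z)\right)$ term does not contribute here, since non-participants carry a zero outcome and the total mass $\pi_m(z)$ of $dC_{l,x}\left(\cdot,\pi_m(z)\right)$ over $[0,1]$ already encodes the participation rate. Collecting the four pieces under Condition~\ref{con:genmod} and Assumptions~\ref{assum:samp}-\ref{assum:bound} yields a zero-mean Gaussian limit $\mathbb{Z}_{Y^{\ell}}$, and taking the difference gives $\sqrt{n}\Delta^{\ell,\ell'}(\hat{\mathbb{E}}[Y]-\mathbb{E}[Y])\Rightarrow\mathbb{Z}_{\Delta^{\ell\ell'}Y}\equiv\mathbb{Z}_{Y^{\ell}}-\mathbb{Z}_{Y^{\ell'}}$.

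For the quantiles I would proceed in the same three sub-steps as in Theorem~\ref{thm:asymgc}: derive the law of the conditional distribution $\hat{F}_{Y|Z}^{\ell}(y|z)$, integrate over $z$ to obtain the unconditional $\hat{F}_Y^{\ell}(y)$, and then invert. The conditional-CDF expansion uses the Hadamard derivative of Lemma~\ref{lem:hadamardf}, which is already stated directly in terms of $C_{d,x}$ and $c_{d,x}$, augmented by a single extra contribution $-\mathbb{Z}_{\pi_m}(z)$ arising from differentiating the additive $\left(1-\hat{\pi}_m(z)\right)$ piece with respect to the propensity score. Integrating over the covariate distribution and adding the empirical-measure term $\sqrt{p_h}\mathbb{Z}_{Z_h}$ delivers a Gaussian process $\mathbb{Z}_{F_Y^{\ell}}(y)$, and differencing gives the limit for $\Delta^{\ell,\ell'}\hat{F}_Y$.

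The main obstacle is the inversion step, because, unlike the participant distribution, $F_Y$ is discontinuous: it carries an atom at $y=0$ of mass $\mathbb{E}\left[1-\pi_m(Z)\right]>0$ contributed by the non-participants. Across the corresponding interval of levels the quantile map is flat and not Hadamard differentiable, so $\hat{Q}_Y(\tau)$ has a degenerate limit there. I would therefore restrict attention to the set of $\tau$ for which $Q_Y(\tau)$ avoids the atom; on that set $f_Y(Q_Y(\tau))=\mathbb{E}\left[\pi_m(Z)f_{Y|S=1,D,Z}(Q_Y(\tau))\right]>0$ by Assumption~\ref{assum:bound}, so the Hadamard differentiability of the quantile operator from \cite{Chernozhukov2010} applies and the delta method yields $\sqrt{n}\Delta^{\ell,\ell'}(\hat{Q}_Y(\tau)-Q_Y(\tau))\Rightarrow-\bigl(\mathbb{Z}_{F_Y^{\ell}}(Q_Y^{\ell}(\tau))-\mathbb{Z}_{F_Y^{\ell'}}(Q_Y^{\ell'}(\tau))\bigr)/f_Y(Q_Y(\tau))\equiv\mathbb{Z}_{Q,\ell\ell'}(\tau)$. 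The a.s. uniform continuity of this zero-mean Gaussian process on the restricted index set follows, exactly as in Theorem~\ref{thm:asymgc}, from that of $\mathbb{Z}_{F_Y^{\ell}}$ together with the continuity of $\tau\mapsto Q_Y(\tau)$ away from the atom.
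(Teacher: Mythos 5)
Your proposal is correct and takes essentially the same route as the paper: the paper's proof of this theorem also simply re-runs the argument of Theorem~\ref{thm:asymgc} with $G_{l,x}$ replaced by $C_{l,x}$ as integrator (so the copula term appears as $\int_{\mathcal{Z}}\int_{\varepsilon}^{1-\varepsilon}g_{k}\left(x,\tau\right)\mathbb{Z}_{c_{l}}\left(\tau,\pi_{m}\left(z\right)\right)d\tau\,dF_{Z}^{h}\left(z\right)$ without the $\sfrac{1}{\pi_{m}\left(z\right)}$ factor and with $\nabla_{\pi}C_{l,x}$ in place of $\nabla_{\pi}G_{l,x}$), records exactly your extra $-\mathbb{Z}_{\pi_{m}}\left(z\right)$ contribution from the $\left(1-\hat{\pi}_{m}\left(z\right)\right)$ piece in the conditional-CDF expansion, and invokes an analog of Lemma~\ref{lem:hadamardf} before inverting via the quantile delta method. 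Your explicit restriction of $\tau$ away from the atom of mass $\mathbb{E}\left[1-\pi_{m}\left(Z\right)\right]$ at $y=0$, where the quantile map fails to be Hadamard differentiable, is in fact more careful than the paper's own proof, which acknowledges the mass point only in a footnote to Equation~\ref{eq:qy} and applies the inversion step without comment.
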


\begin{proof}
Using the same argument used in Theorem~\ref{thm:asymgc}, it is straightforward to show that
\begin{align*}
\sqrt{n}\Delta^{m,m'}\left(\hat{\mathbb{E}}\left[Y\right]-\mathbb{E}\left[Y\right]\right)\Rightarrow\mathbb{Z}_{Y^{m}}-\mathbb{Z}_{Y^{m'}}\equiv\mathbb{Z}_{\Delta^{mm'}Y}
\end{align*}
where
\begin{align*}
\mathbb{Z}_{\Delta^{mm'}Y}&=\int_{\mathcal{Z}}\int_{\varepsilon}^{1-\varepsilon}\mathbb{Z}_{g_{j}}\left(\tau,x\right)dC_{k,x}\left(\tau|\pi_{l}\left(z\right)\right)dF_{Z}^{h}\left(z\right)\\
&+\int_{\mathcal{Z}}\int_{\varepsilon}^{1-\varepsilon}g_{j}\left(x,\tau\right)\mathbb{Z}_{c_{k}}\left(\tau,\pi_{l}\left(z\right)\right)d\tau dF_{Z}^{h}\left(z\right)\\
&+\int_{\mathcal{Z}}\int_{\varepsilon}^{1-\varepsilon}g_{j}\left(x,\tau\right)d\left(\nabla_{\pi}C_{k,x}\left(\tau|\pi_{l}\left(z\right)\right)\right)\mathbb{Z}_{\pi_{l}}\left(z\right)dF_{Z}^{h}\left(z\right)\\
&+\sqrt{\lambda_{h}}\mathbb{Z}_{Z_{h}}\left(\int_{\varepsilon}^{1-\varepsilon}g_{j}\left(\cdot,\tau\right)dC_{k,x}\left(\tau|\pi_{l}\left(\cdot\right)\right)\right)
\end{align*}

Similarly,
\begin{align*}
\sqrt{n}\Delta^{m,m'}\left(\hat{Q}_{Y}\left(\tau\right)-Q_{Y}\left(\tau\right)\right)\Rightarrow\mathbb{Z}_{Q_{Y}^{m}}\left(\tau\right)-\mathbb{Z}_{Q_{Y}^{m'}}\left(\tau\right)\equiv\mathbb{Z}_{Q,mm'}\left(\tau\right)
\end{align*}
where
\begin{align*}
\mathbb{Z}_{Q_{Y}^{m}}\left(\tau\right)&=-\frac{\mathbb{Z}_{F_{Y}^{m}}\left(Q_{Y}^{m}\left(\tau\right)\right)}{f_{Y}^{m}\left(Q_{Y}^{m}\left(\tau\right)\right)}\\
\mathbb{Z}_{F_{Y}^{m}}\left(y\right)&=\int_{\mathcal{Z}}\left(\hat{F}_{Y|Z}^{m}\left(y|z\right)-F_{Y|Z}^{m}\left(y|z\right)\right)dF_{Z}^{h}\left(z\right)+\int_{\mathcal{Z}}F_{Y|Z}^{m}\left(y|z\right)d\left(\hat{F}_{Z}^{h}\left(z\right)-F_{Z}^{h}\left(z\right)\right)\\
&\Rightarrow\int_{\mathcal{Z}}\mathbb{Z}_{F_{Y}^{m}|X}\left(y|z\right)dF_{Z}^{h}\left(z\right)+\sqrt{\lambda_{h}}\mathbb{Z}_{Z_{h}}\left(F_{Y|Z}^{m}\left(y|z\right)\right)\\
\mathbb{Z}_{F_{Y}^{m}|Z}\left(y,z\right)&=-f_{Y|Z}\left(y|z\right)\frac{1}{\pi_{l}\left(z\right)}c_{l,x}\left(C_{k,x}\left(\tilde{u}_{j}\left(x,y\right),\pi_{l}\left(z\right)\right),\pi_{l}\left(z\right)\right)x'\mathbb{Z}_{\beta_{j}}\left(F_{Y|Z}\left(y|z\right)\right)\\
&+\int_{\varepsilon}^{1-\varepsilon}\mathbf{1}\left(x'\beta_{j}\left(\tau\right)\leq y\right)\mathbb{Z}_{c_{k}}\left(\tau,\pi_{l}\left(z\right)\right)d\tau\\
&+\left[\int_{\varepsilon}^{1-\varepsilon}\mathbf{1}\left(x'\beta_{j}\left(\tau\right)\leq y\right)\nabla_{\pi}c_{k,x}\left(u,\pi_{l}\left(z\right)\right)d\tau-1\right]\mathbb{Z}_{\pi_{l}}\left(z\right)
\end{align*}
and an analog version of Lemma~\ref{lem:hadamardf} for the distribution of the entire population, \textit{i.e.}, $F_{Y|Z}\left(y|z\right)$, is applied.
\end{proof}
\section{Additional Results}\label{app:add}

\subsection{Potential Outcomes for the Entire Population}

Table~\ref{tab:y3g} reports the potential earnings for the full population. Because the selection into employment was negative in the early years of the sample and turned positive afterwards, this distribution tends to be above the distribution of actual earnings for participants at the beginning of the period, and below it towards the end. This is most evident for the distributions of females. The different interquantile range measures (10-90th percentiles, 25-75th percentiles) are of a similar magnitude to those found in Table~\ref{tab:y1g}.

\begin{table}[htbp]
  \centering
  \caption{Potential earnings distributions for the full population by gender (Frank copula)}\label{tab:y3g}
		\begin{threeparttable}
    \begin{tabular}{ccccccc|cccccc}
		\hline
          & \multicolumn{6}{c}{Male}                      & \multicolumn{6}{c}{Female} \\
    Year  & Mean  & P10   & P25   & P50   & P75   & P90   & Mean  & P10   & P25   & P50   & P75   & P90 \\
		\hline
    1976  & 2.81  & 2.12  & 2.49  & 2.84  & 3.15  & 3.41  & 2.38  & 1.80  & 2.08  & 2.39  & 2.68  & 2.93 \\
    1977  & 2.78  & 2.06  & 2.44  & 2.81  & 3.13  & 3.39  & 2.37  & 1.80  & 2.07  & 2.37  & 2.67  & 2.91 \\
    1978  & 2.81  & 2.08  & 2.46  & 2.84  & 3.16  & 3.42  & 2.31  & 1.73  & 2.01  & 2.31  & 2.61  & 2.87 \\
    1979  & 2.82  & 2.10  & 2.47  & 2.85  & 3.18  & 3.44  & 2.31  & 1.74  & 2.02  & 2.31  & 2.61  & 2.87 \\
    1980  & 2.81  & 2.09  & 2.46  & 2.84  & 3.17  & 3.43  & 2.32  & 1.76  & 2.03  & 2.32  & 2.61  & 2.87 \\
    1981  & 2.76  & 2.03  & 2.41  & 2.79  & 3.12  & 3.38  & 2.30  & 1.73  & 2.00  & 2.29  & 2.58  & 2.85 \\
    1982  & 2.73  & 1.98  & 2.36  & 2.76  & 3.10  & 3.37  & 2.26  & 1.67  & 1.95  & 2.25  & 2.56  & 2.83 \\
    1983  & 2.66  & 1.88  & 2.27  & 2.69  & 3.05  & 3.34  & 2.22  & 1.61  & 1.90  & 2.21  & 2.53  & 2.81 \\
    1984  & 2.62  & 1.81  & 2.21  & 2.65  & 3.03  & 3.32  & 2.31  & 1.68  & 1.98  & 2.31  & 2.63  & 2.90 \\
    1985  & 2.71  & 1.91  & 2.32  & 2.75  & 3.11  & 3.40  & 2.28  & 1.63  & 1.93  & 2.27  & 2.61  & 2.89 \\
    1986  & 2.70  & 1.89  & 2.30  & 2.73  & 3.11  & 3.39  & 2.26  & 1.59  & 1.90  & 2.25  & 2.59  & 2.89 \\
    1987  & 2.73  & 1.92  & 2.33  & 2.76  & 3.14  & 3.44  & 2.24  & 1.54  & 1.87  & 2.23  & 2.60  & 2.91 \\
    1988  & 2.66  & 1.82  & 2.24  & 2.69  & 3.08  & 3.38  & 2.30  & 1.58  & 1.93  & 2.30  & 2.65  & 2.95 \\
    1989  & 2.64  & 1.81  & 2.23  & 2.67  & 3.06  & 3.37  & 2.31  & 1.60  & 1.93  & 2.30  & 2.67  & 2.97 \\
    1990  & 2.64  & 1.81  & 2.23  & 2.67  & 3.06  & 3.38  & 2.24  & 1.49  & 1.84  & 2.23  & 2.62  & 2.94 \\
    1991  & 2.62  & 1.79  & 2.20  & 2.64  & 3.04  & 3.36  & 2.30  & 1.60  & 1.93  & 2.30  & 2.67  & 2.97 \\
    1992  & 2.60  & 1.75  & 2.17  & 2.62  & 3.02  & 3.34  & 2.30  & 1.59  & 1.91  & 2.29  & 2.67  & 2.98 \\
    1993  & 2.50  & 1.61  & 2.05  & 2.52  & 2.95  & 3.29  & 2.30  & 1.57  & 1.91  & 2.29  & 2.67  & 2.98 \\
    1994  & 2.51  & 1.63  & 2.06  & 2.53  & 2.96  & 3.30  & 2.27  & 1.52  & 1.86  & 2.26  & 2.65  & 2.98 \\
    1995  & 2.48  & 1.57  & 2.01  & 2.49  & 2.94  & 3.29  & 2.24  & 1.47  & 1.82  & 2.22  & 2.63  & 2.98 \\
    1996  & 2.64  & 1.80  & 2.20  & 2.64  & 3.05  & 3.39  & 2.24  & 1.47  & 1.82  & 2.22  & 2.63  & 2.97 \\
    1997  & 2.63  & 1.79  & 2.19  & 2.62  & 3.03  & 3.37  & 2.25  & 1.49  & 1.83  & 2.24  & 2.64  & 2.98 \\
    1998  & 2.56  & 1.69  & 2.11  & 2.56  & 2.98  & 3.33  & 2.28  & 1.54  & 1.87  & 2.26  & 2.66  & 3.00 \\
    1999  & 2.70  & 1.87  & 2.28  & 2.70  & 3.09  & 3.45  & 2.30  & 1.54  & 1.88  & 2.29  & 2.69  & 3.03 \\
    2000  & 2.71  & 1.87  & 2.28  & 2.71  & 3.12  & 3.48  & 2.37  & 1.61  & 1.96  & 2.36  & 2.75  & 3.09 \\
    2001  & 2.75  & 1.91  & 2.30  & 2.72  & 3.13  & 3.54  & 2.40  & 1.63  & 1.99  & 2.39  & 2.77  & 3.11 \\
    2002  & 2.72  & 1.88  & 2.27  & 2.70  & 3.11  & 3.50  & 2.40  & 1.63  & 1.99  & 2.38  & 2.77  & 3.12 \\
    2003  & 2.72  & 1.89  & 2.28  & 2.71  & 3.12  & 3.51  & 2.34  & 1.55  & 1.91  & 2.32  & 2.73  & 3.10 \\
    2004  & 2.73  & 1.89  & 2.28  & 2.71  & 3.13  & 3.52  & 2.37  & 1.58  & 1.94  & 2.35  & 2.76  & 3.11 \\
    2005  & 2.73  & 1.88  & 2.27  & 2.70  & 3.12  & 3.53  & 2.35  & 1.55  & 1.92  & 2.34  & 2.75  & 3.11 \\
    2006  & 2.81  & 1.96  & 2.34  & 2.78  & 3.21  & 3.64  & 2.37  & 1.58  & 1.94  & 2.35  & 2.76  & 3.12 \\
    2007  & 2.72  & 1.88  & 2.26  & 2.69  & 3.12  & 3.52  & 2.30  & 1.47  & 1.84  & 2.29  & 2.72  & 3.10 \\
    2008  & 2.74  & 1.90  & 2.29  & 2.71  & 3.14  & 3.54  & 2.31  & 1.48  & 1.86  & 2.29  & 2.72  & 3.10 \\
    2009  & 2.65  & 1.77  & 2.19  & 2.62  & 3.06  & 3.47  & 2.32  & 1.49  & 1.88  & 2.30  & 2.73  & 3.10 \\
    2010  & 2.57  & 1.66  & 2.09  & 2.55  & 3.00  & 3.41  & 2.31  & 1.47  & 1.86  & 2.30  & 2.73  & 3.12 \\
    2011  & 2.65  & 1.78  & 2.18  & 2.63  & 3.07  & 3.47  & 2.27  & 1.40  & 1.81  & 2.26  & 2.71  & 3.09 \\
    2012  & 2.50  & 1.55  & 2.00  & 2.48  & 2.96  & 3.39  & 2.28  & 1.44  & 1.82  & 2.26  & 2.70  & 3.09 \\
    2013  & 2.68  & 1.80  & 2.21  & 2.65  & 3.10  & 3.51  & 2.30  & 1.45  & 1.84  & 2.28  & 2.72  & 3.11 \\
		\hline
    \end{tabular}
\end{threeparttable}
\end{table}

The decomposition of mean potential earnings for the entire population is reported in Table~\ref{tab:dy3g}. The gap displays an erratic behavior, driven by the coefficients component, that is more variable than for the two main decompositions. Hence, this gap is more sensitive to the QRS estimates, which may be less robust if the instrument used is weak for men. Regardless, the endowments component switches sign over time, thus helping reduce the gender gap.

\begin{table}[htbp]
  \centering
  \caption{Mean decomposition, potential earnings for the full population (Frank copula)}\label{tab:dy3g}
		\begin{threeparttable}
    \begin{tabular}{cc@{\,}lc@{\,}lc@{\,}l}
		\hline
    Year  & \multicolumn{2}{c}{Total} & \multicolumn{2}{c}{EC} & \multicolumn{2}{c}{CC} \\
		\hline
    1976  & 0.43  & ***   & 0.01  & ***   & 0.42  & *** \\
    1977  & 0.41  & ***   & 0.01  & ***   & 0.40  & *** \\
    1978  & 0.49  & ***   & 0.01  & ***   & 0.48  & *** \\
    1979  & 0.50  & ***   & 0.01  & ***   & 0.49  & *** \\
    1980  & 0.48  & ***   & 0.01  & ***   & 0.48  & *** \\
    1981  & 0.46  & ***   & 0.00  &       & 0.46  & *** \\
    1982  & 0.47  & ***   & 0.01  & **    & 0.46  & *** \\
    1983  & 0.44  & ***   & 0.01  & **    & 0.43  & *** \\
    1984  & 0.31  & ***   & 0.01  & **    & 0.30  & *** \\
    1985  & 0.43  & ***   & 0.01  & ***   & 0.42  & *** \\
    1986  & 0.44  & ***   & 0.01  & ***   & 0.43  & *** \\
    1987  & 0.49  & ***   & 0.01  & **    & 0.48  & *** \\
    1988  & 0.36  & ***   & 0.01  & ***   & 0.35  & *** \\
    1989  & 0.33  & ***   & 0.01  & ***   & 0.32  & *** \\
    1990  & 0.40  & ***   & 0.01  & ***   & 0.39  & *** \\
    1991  & 0.31  & ***   & 0.00  &       & 0.31  & *** \\
    1992  & 0.29  & ***   & 0.00  & *     & 0.29  & *** \\
    1993  & 0.20  & ***   & 0.00  &       & 0.20  & *** \\
    1994  & 0.24  & ***   & 0.00  &       & 0.24  & *** \\
    1995  & 0.24  & ***   & 0.00  &       & 0.24  & *** \\
    1996  & 0.40  & ***   & 0.00  &       & 0.41  & *** \\
    1997  & 0.37  & ***   & -0.01 & **    & 0.38  & *** \\
    1998  & 0.28  & ***   & -0.01 & ***   & 0.29  & *** \\
    1999  & 0.40  & ***   & -0.01 & *     & 0.41  & *** \\
    2000  & 0.35  & ***   & -0.01 & **    & 0.35  & *** \\
    2001  & 0.35  & ***   & -0.01 & ***   & 0.36  & *** \\
    2002  & 0.32  & ***   & -0.01 & ***   & 0.33  & *** \\
    2003  & 0.38  & ***   & -0.02 & ***   & 0.40  & *** \\
    2004  & 0.36  & ***   & -0.02 & ***   & 0.38  & *** \\
    2005  & 0.38  & ***   & -0.02 & ***   & 0.40  & *** \\
    2006  & 0.44  & ***   & -0.03 & ***   & 0.47  & *** \\
    2007  & 0.41  & ***   & -0.03 & ***   & 0.45  & *** \\
    2008  & 0.43  & ***   & -0.03 & ***   & 0.46  & *** \\
    2009  & 0.33  & ***   & -0.04 & ***   & 0.36  & *** \\
    2010  & 0.25  & ***   & -0.04 & ***   & 0.29  & *** \\
    2011  & 0.38  & ***   & -0.04 & ***   & 0.42  & *** \\
    2012  & 0.22  & **    & -0.04 & ***   & 0.25  & *** \\
    2013  & 0.38  & ***   & -0.04 & ***   & 0.42  & *** \\
		\hline
    \end{tabular}\begin{tablenotes}[para,flushleft]
\begin{spacing}{1}
{\footnotesize Notes: Total, EC and CC respectively denote total difference, endowments component and coefficients component.}
\end{spacing}
\end{tablenotes}
\end{threeparttable}
\end{table}

As it was the case for the mean, the decomposition of the unconditional distributions of potential earnings for the entire population are quite similar to those found for actual earnings for participants (Figure~\ref{fig:dec3}; Tables~\ref{tab:dy3p10g}-\ref{tab:dy3p90g}). Regarding its components, the coefficients component is dominant. On the other hand, the endowments component has a slightly increasing shape and is much smaller in magnitude. In the early years it was negative for the left tail and positive for the majority of the distribution, and it has progressively become more negative throughout the entire distribution, now contributing the the reduction of the gap.

\begin{figure}[htbp]
\caption{Unconditional quantiles decompositions, potential earnings for the full population (Frank copula)}
\includegraphics[width=16.5cm]{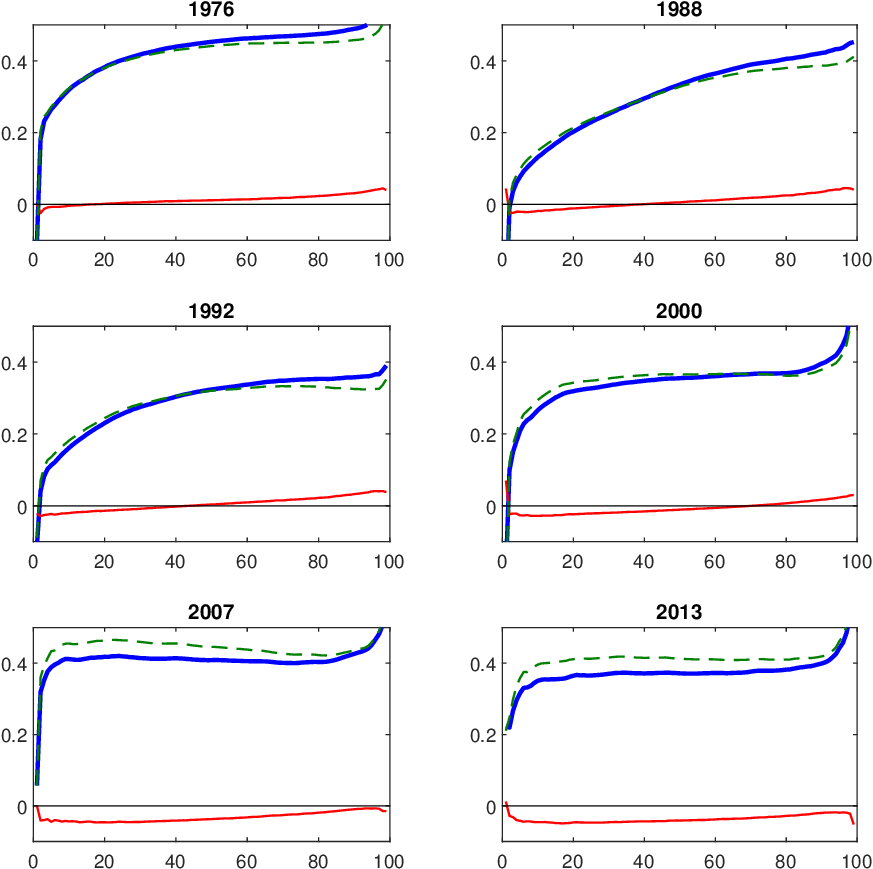}\label{fig:dec3}

{\footnotesize Notes: the solid thick blue line denotes the total gap between male and female workers; the solid thin red line denotes the endowments component; the dashed green line denotes the coefficients component.}
\end{figure}

\begin{table}[htbp]
  \centering
  \caption{10th percentile decomposition, potential earnings for the full population (Frank copula)}\label{tab:dy3p10g}
		\begin{threeparttable}
    \begin{tabular}{cc@{\,}lc@{\,}lc@{\,}l}
		\hline
    Year  & \multicolumn{2}{c}{Total} & \multicolumn{2}{c}{EC} & \multicolumn{2}{c}{CC} \\
		\hline
    1976  & 0.32  & ***   & 0.00  &       & 0.32  & *** \\
    1977  & 0.26  & ***   & -0.01 & **    & 0.28  & *** \\
    1978  & 0.35  & ***   & -0.01 &       & 0.36  & *** \\
    1979  & 0.36  & ***   & 0.00  &       & 0.36  & *** \\
    1980  & 0.32  & ***   & -0.01 & ***   & 0.34  & *** \\
    1981  & 0.30  & ***   & -0.01 & ***   & 0.32  & *** \\
    1982  & 0.31  & ***   & -0.01 & **    & 0.32  & *** \\
    1983  & 0.27  & ***   & -0.02 & ***   & 0.29  & *** \\
    1984  & 0.13  &       & -0.02 & ***   & 0.15  & * \\
    1985  & 0.29  & ***   & -0.01 & ***   & 0.30  & *** \\
    1986  & 0.29  & ***   & -0.01 & ***   & 0.31  & *** \\
    1987  & 0.37  & ***   & -0.02 & ***   & 0.39  & *** \\
    1988  & 0.24  & ***   & -0.02 & ***   & 0.25  & *** \\
    1989  & 0.20  & ***   & -0.01 & ***   & 0.22  & *** \\
    1990  & 0.33  & ***   & -0.02 & ***   & 0.34  & *** \\
    1991  & 0.20  & ***   & -0.02 & ***   & 0.22  & *** \\
    1992  & 0.16  & ***   & -0.02 & ***   & 0.18  & *** \\
    1993  & 0.04  &       & -0.02 & ***   & 0.06  &  \\
    1994  & 0.12  &       & -0.03 & ***   & 0.15  &  \\
    1995  & 0.09  &       & -0.03 & ***   & 0.12  &  \\
    1996  & 0.33  & ***   & -0.03 & ***   & 0.36  & *** \\
    1997  & 0.30  & ***   & -0.03 & ***   & 0.33  & *** \\
    1998  & 0.16  & **    & -0.03 & ***   & 0.19  & ** \\
    1999  & 0.34  & ***   & -0.03 & ***   & 0.36  & *** \\
    2000  & 0.27  & ***   & -0.03 & ***   & 0.29  & *** \\
    2001  & 0.28  & ***   & -0.03 & ***   & 0.32  & *** \\
    2002  & 0.25  & ***   & -0.03 & ***   & 0.28  & *** \\
    2003  & 0.33  & ***   & -0.04 & ***   & 0.37  & *** \\
    2004  & 0.31  & ***   & -0.04 & ***   & 0.35  & *** \\
    2005  & 0.33  & ***   & -0.04 & ***   & 0.37  & *** \\
    2006  & 0.38  & ***   & -0.04 & ***   & 0.42  & *** \\
    2007  & 0.41  & ***   & -0.04 & ***   & 0.46  & *** \\
    2008  & 0.42  & ***   & -0.05 & ***   & 0.47  & *** \\
    2009  & 0.28  & ***   & -0.05 & ***   & 0.32  & *** \\
    2010  & 0.19  &       & -0.05 & ***   & 0.24  &  \\
    2011  & 0.38  & ***   & -0.05 & ***   & 0.43  & *** \\
    2012  & 0.11  &       & -0.05 & ***   & 0.15  &  \\
    2013  & 0.35  & ***   & -0.05 & ***   & 0.40  & *** \\
		\hline
    \end{tabular}\begin{tablenotes}[para,flushleft]
\begin{spacing}{1}
{\footnotesize Notes: Total, EC and CC respectively denote total difference, endowments component and coefficients component.}
\end{spacing}
\end{tablenotes}
\end{threeparttable}
\end{table}

\begin{table}[htbp]
  \centering
  \caption{25th percentile decomposition, potential earnings for the full population (Frank copula)}\label{tab:dy3p25g}
		\begin{threeparttable}
    \begin{tabular}{cc@{\,}lc@{\,}lc@{\,}l}
		\hline
    Year  & \multicolumn{2}{c}{Total} & \multicolumn{2}{c}{EC} & \multicolumn{2}{c}{CC} \\
		\hline
    1976  & 0.40  & ***   & 0.00  &       & 0.40  & *** \\
    1977  & 0.37  & ***   & 0.00  &       & 0.37  & *** \\
    1978  & 0.46  & ***   & 0.00  &       & 0.46  & *** \\
    1979  & 0.45  & ***   & 0.00  &       & 0.45  & *** \\
    1980  & 0.43  & ***   & 0.00  &       & 0.44  & *** \\
    1981  & 0.41  & ***   & -0.01 & *     & 0.42  & *** \\
    1982  & 0.41  & ***   & 0.00  &       & 0.42  & *** \\
    1983  & 0.37  & ***   & -0.01 & **    & 0.38  & *** \\
    1984  & 0.23  & ***   & -0.01 & *     & 0.24  & *** \\
    1985  & 0.39  & ***   & 0.00  &       & 0.39  & *** \\
    1986  & 0.40  & ***   & -0.01 &       & 0.40  & *** \\
    1987  & 0.46  & ***   & -0.01 & **    & 0.47  & *** \\
    1988  & 0.31  & ***   & 0.00  &       & 0.32  & *** \\
    1989  & 0.29  & ***   & 0.00  &       & 0.30  & *** \\
    1990  & 0.39  & ***   & -0.01 &       & 0.40  & *** \\
    1991  & 0.27  & ***   & -0.01 & ***   & 0.28  & *** \\
    1992  & 0.26  & ***   & -0.01 & ***   & 0.27  & *** \\
    1993  & 0.14  & *     & -0.01 & ***   & 0.16  & ** \\
    1994  & 0.20  & **    & -0.02 & ***   & 0.22  & ** \\
    1995  & 0.19  & **    & -0.02 & ***   & 0.22  & ** \\
    1996  & 0.39  & ***   & -0.02 & ***   & 0.41  & *** \\
    1997  & 0.36  & ***   & -0.02 & ***   & 0.38  & *** \\
    1998  & 0.24  & ***   & -0.03 & ***   & 0.27  & *** \\
    1999  & 0.39  & ***   & -0.02 & ***   & 0.42  & *** \\
    2000  & 0.33  & ***   & -0.02 & ***   & 0.35  & *** \\
    2001  & 0.31  & ***   & -0.03 & ***   & 0.34  & *** \\
    2002  & 0.29  & ***   & -0.02 & ***   & 0.31  & *** \\
    2003  & 0.37  & ***   & -0.03 & ***   & 0.40  & *** \\
    2004  & 0.34  & ***   & -0.04 & ***   & 0.38  & *** \\
    2005  & 0.36  & ***   & -0.04 & ***   & 0.40  & *** \\
    2006  & 0.40  & ***   & -0.04 & ***   & 0.44  & *** \\
    2007  & 0.42  & ***   & -0.05 & ***   & 0.46  & *** \\
    2008  & 0.43  & ***   & -0.05 & ***   & 0.48  & *** \\
    2009  & 0.31  & ***   & -0.05 & ***   & 0.36  & *** \\
    2010  & 0.23  & *     & -0.05 & ***   & 0.28  & ** \\
    2011  & 0.37  & ***   & -0.05 & ***   & 0.42  & *** \\
    2012  & 0.17  &       & -0.05 & ***   & 0.22  & * \\
    2013  & 0.37  & ***   & -0.05 & ***   & 0.41  & *** \\
		\hline
    \end{tabular}\begin{tablenotes}[para,flushleft]
\begin{spacing}{1}
{\footnotesize Notes: Total, EC and CC respectively denote total difference, endowments component and coefficients component.}
\end{spacing}
\end{tablenotes}
\end{threeparttable}
\end{table}

\begin{table}[htbp]
  \centering
  \caption{50th percentile decomposition, potential earnings for the full population (Frank copula)}\label{tab:dy3p50g}
		\begin{threeparttable}
    \begin{tabular}{cc@{\,}lc@{\,}lc@{\,}l}
		\hline
    Year  & \multicolumn{2}{c}{Total} & \multicolumn{2}{c}{EC} & \multicolumn{2}{c}{CC} \\
		\hline
    1976  & 0.45  & ***   & 0.01  & ***   & 0.44  & *** \\
    1977  & 0.44  & ***   & 0.01  & **    & 0.43  & *** \\
    1978  & 0.53  & ***   & 0.01  & ***   & 0.52  & *** \\
    1979  & 0.54  & ***   & 0.01  & ***   & 0.53  & *** \\
    1980  & 0.53  & ***   & 0.01  & ***   & 0.52  & *** \\
    1981  & 0.51  & ***   & 0.00  &       & 0.50  & *** \\
    1982  & 0.51  & ***   & 0.00  &       & 0.50  & *** \\
    1983  & 0.48  & ***   & 0.00  &       & 0.47  & *** \\
    1984  & 0.33  & ***   & 0.01  & *     & 0.33  & *** \\
    1985  & 0.47  & ***   & 0.01  & **    & 0.46  & *** \\
    1986  & 0.48  & ***   & 0.01  & **    & 0.47  & *** \\
    1987  & 0.54  & ***   & 0.01  &       & 0.53  & *** \\
    1988  & 0.39  & ***   & 0.01  & ***   & 0.37  & *** \\
    1989  & 0.37  & ***   & 0.01  & ***   & 0.36  & *** \\
    1990  & 0.43  & ***   & 0.01  & **    & 0.43  & *** \\
    1991  & 0.34  & ***   & 0.00  &       & 0.33  & *** \\
    1992  & 0.32  & ***   & 0.00  &       & 0.32  & *** \\
    1993  & 0.23  & ***   & 0.00  &       & 0.23  & *** \\
    1994  & 0.27  & ***   & 0.00  &       & 0.27  & *** \\
    1995  & 0.27  & ***   & 0.00  &       & 0.27  & *** \\
    1996  & 0.42  & ***   & -0.01 & **    & 0.43  & *** \\
    1997  & 0.38  & ***   & -0.01 & ***   & 0.39  & *** \\
    1998  & 0.30  & ***   & -0.01 & ***   & 0.31  & *** \\
    1999  & 0.41  & ***   & -0.01 & ***   & 0.42  & *** \\
    2000  & 0.36  & ***   & -0.01 & ***   & 0.37  & *** \\
    2001  & 0.34  & ***   & -0.02 & ***   & 0.35  & *** \\
    2002  & 0.31  & ***   & -0.01 & ***   & 0.33  & *** \\
    2003  & 0.38  & ***   & -0.02 & ***   & 0.40  & *** \\
    2004  & 0.36  & ***   & -0.03 & ***   & 0.39  & *** \\
    2005  & 0.37  & ***   & -0.03 & ***   & 0.40  & *** \\
    2006  & 0.42  & ***   & -0.03 & ***   & 0.45  & *** \\
    2007  & 0.41  & ***   & -0.04 & ***   & 0.45  & *** \\
    2008  & 0.42  & ***   & -0.04 & ***   & 0.46  & *** \\
    2009  & 0.33  & ***   & -0.04 & ***   & 0.37  & *** \\
    2010  & 0.25  & ***   & -0.04 & ***   & 0.29  & *** \\
    2011  & 0.38  & ***   & -0.04 & ***   & 0.42  & *** \\
    2012  & 0.22  & ***   & -0.04 & ***   & 0.26  & *** \\
    2013  & 0.37  & ***   & -0.04 & ***   & 0.41  & *** \\
		\hline
    \end{tabular}\begin{tablenotes}[para,flushleft]
\begin{spacing}{1}
{\footnotesize Notes: Total, EC and CC respectively denote total difference, endowments component and coefficients component.}
\end{spacing}
\end{tablenotes}
\end{threeparttable}
\end{table}

\begin{table}[htbp]
  \centering
  \caption{75th percentile decomposition, potential earnings for the full population (Frank copula)}\label{tab:dy3p75g}
		\begin{threeparttable}
    \begin{tabular}{cc@{\,}lc@{\,}lc@{\,}l}
		\hline
    Year  & \multicolumn{2}{c}{Total} & \multicolumn{2}{c}{EC} & \multicolumn{2}{c}{CC} \\
		\hline
    1976  & 0.47  & ***   & 0.02  & ***   & 0.45  & *** \\
    1977  & 0.47  & ***   & 0.02  & ***   & 0.45  & *** \\
    1978  & 0.55  & ***   & 0.02  & ***   & 0.53  & *** \\
    1979  & 0.57  & ***   & 0.02  & ***   & 0.55  & *** \\
    1980  & 0.56  & ***   & 0.02  & ***   & 0.54  & *** \\
    1981  & 0.54  & ***   & 0.01  & ***   & 0.53  & *** \\
    1982  & 0.54  & ***   & 0.02  & ***   & 0.52  & *** \\
    1983  & 0.52  & ***   & 0.02  & ***   & 0.51  & *** \\
    1984  & 0.40  & ***   & 0.02  & ***   & 0.38  & *** \\
    1985  & 0.50  & ***   & 0.02  & ***   & 0.48  & *** \\
    1986  & 0.51  & ***   & 0.02  & ***   & 0.49  & *** \\
    1987  & 0.54  & ***   & 0.02  & ***   & 0.52  & *** \\
    1988  & 0.43  & ***   & 0.03  & ***   & 0.40  & *** \\
    1989  & 0.39  & ***   & 0.02  & ***   & 0.37  & *** \\
    1990  & 0.45  & ***   & 0.03  & ***   & 0.42  & *** \\
    1991  & 0.37  & ***   & 0.02  & ***   & 0.35  & *** \\
    1992  & 0.35  & ***   & 0.02  & ***   & 0.33  & *** \\
    1993  & 0.28  & ***   & 0.02  & ***   & 0.26  & *** \\
    1994  & 0.31  & ***   & 0.02  & ***   & 0.29  & *** \\
    1995  & 0.30  & ***   & 0.02  & ***   & 0.29  & *** \\
    1996  & 0.42  & ***   & 0.01  & **    & 0.41  & *** \\
    1997  & 0.39  & ***   & 0.01  &       & 0.38  & *** \\
    1998  & 0.32  & ***   & 0.00  &       & 0.31  & *** \\
    1999  & 0.40  & ***   & 0.01  & *     & 0.40  & *** \\
    2000  & 0.37  & ***   & 0.00  &       & 0.37  & *** \\
    2001  & 0.36  & ***   & 0.00  &       & 0.36  & *** \\
    2002  & 0.34  & ***   & 0.00  &       & 0.34  & *** \\
    2003  & 0.39  & ***   & 0.00  &       & 0.39  & *** \\
    2004  & 0.37  & ***   & -0.01 & ***   & 0.38  & *** \\
    2005  & 0.37  & ***   & -0.01 & ***   & 0.39  & *** \\
    2006  & 0.46  & ***   & -0.02 & ***   & 0.47  & *** \\
    2007  & 0.40  & ***   & -0.02 & ***   & 0.42  & *** \\
    2008  & 0.42  & ***   & -0.02 & ***   & 0.44  & *** \\
    2009  & 0.33  & ***   & -0.03 & ***   & 0.36  & *** \\
    2010  & 0.27  & ***   & -0.03 & ***   & 0.31  & *** \\
    2011  & 0.37  & ***   & -0.03 & ***   & 0.39  & *** \\
    2012  & 0.26  & ***   & -0.03 & ***   & 0.29  & *** \\
    2013  & 0.38  & ***   & -0.03 & ***   & 0.41  & *** \\
		\hline
    \end{tabular}\begin{tablenotes}[para,flushleft]
\begin{spacing}{1}
{\footnotesize Notes: Total, EC and CC respectively denote total difference, endowments component and coefficients component.}
\end{spacing}
\end{tablenotes}
\end{threeparttable}
\end{table}

\begin{table}[htbp]
  \centering
  \caption{90th percentile decomposition, potential earnings for the full population (Frank copula)}\label{tab:dy3p90g}
		\begin{threeparttable}
    \begin{tabular}{cc@{\,}lc@{\,}lc@{\,}l}
		\hline
    Year  & \multicolumn{2}{c}{Total} & \multicolumn{2}{c}{EC} & \multicolumn{2}{c}{CC} \\
		\hline
    1976  & 0.49  & ***   & 0.03  & ***   & 0.46  & *** \\
    1977  & 0.48  & ***   & 0.03  & ***   & 0.45  & *** \\
    1978  & 0.56  & ***   & 0.03  & ***   & 0.52  & *** \\
    1979  & 0.57  & ***   & 0.03  & ***   & 0.54  & *** \\
    1980  & 0.56  & ***   & 0.03  & ***   & 0.53  & *** \\
    1981  & 0.53  & ***   & 0.02  & ***   & 0.51  & *** \\
    1982  & 0.54  & ***   & 0.03  & ***   & 0.52  & *** \\
    1983  & 0.53  & ***   & 0.03  & ***   & 0.50  & *** \\
    1984  & 0.42  & ***   & 0.03  & ***   & 0.39  & *** \\
    1985  & 0.51  & ***   & 0.03  & ***   & 0.47  & *** \\
    1986  & 0.50  & ***   & 0.04  & ***   & 0.47  & *** \\
    1987  & 0.53  & ***   & 0.04  & ***   & 0.49  & *** \\
    1988  & 0.43  & ***   & 0.04  & ***   & 0.39  & *** \\
    1989  & 0.40  & ***   & 0.04  & ***   & 0.36  & *** \\
    1990  & 0.44  & ***   & 0.04  & ***   & 0.40  & *** \\
    1991  & 0.39  & ***   & 0.03  & ***   & 0.35  & *** \\
    1992  & 0.36  & ***   & 0.03  & ***   & 0.33  & *** \\
    1993  & 0.30  & ***   & 0.03  & ***   & 0.27  & *** \\
    1994  & 0.32  & ***   & 0.03  & ***   & 0.29  & *** \\
    1995  & 0.32  & ***   & 0.03  & ***   & 0.29  & *** \\
    1996  & 0.42  & ***   & 0.02  & ***   & 0.39  & *** \\
    1997  & 0.38  & ***   & 0.02  & ***   & 0.36  & *** \\
    1998  & 0.33  & ***   & 0.02  & ***   & 0.32  & *** \\
    1999  & 0.42  & ***   & 0.02  & ***   & 0.40  & *** \\
    2000  & 0.40  & ***   & 0.02  & ***   & 0.38  & *** \\
    2001  & 0.42  & ***   & 0.02  & ***   & 0.41  & *** \\
    2002  & 0.38  & ***   & 0.01  & ***   & 0.37  & *** \\
    2003  & 0.42  & ***   & 0.01  & ***   & 0.41  & *** \\
    2004  & 0.41  & ***   & 0.00  &       & 0.41  & *** \\
    2005  & 0.42  & ***   & 0.00  &       & 0.42  & *** \\
    2006  & 0.52  & ***   & 0.00  &       & 0.52  & *** \\
    2007  & 0.43  & ***   & -0.01 & ***   & 0.44  & *** \\
    2008  & 0.44  & ***   & -0.01 & ***   & 0.45  & *** \\
    2009  & 0.37  & ***   & -0.02 & ***   & 0.38  & *** \\
    2010  & 0.30  & ***   & -0.02 & ***   & 0.32  & *** \\
    2011  & 0.38  & ***   & -0.02 & ***   & 0.40  & *** \\
    2012  & 0.29  & ***   & -0.02 & ***   & 0.31  & *** \\
    2013  & 0.40  & ***   & -0.02 & ***   & 0.42  & *** \\
		\hline
    \end{tabular}\begin{tablenotes}[para,flushleft]
\begin{spacing}{1}
{\footnotesize Notes: Total, EC and CC respectively denote total difference, endowments component and coefficients component.}
\end{spacing}
\end{tablenotes}
\end{threeparttable}
\end{table}

\clearpage

\subsection{Generalized Entropy Measures of the Gap}

I report the generalized entropy measures recommended by \cite{Maasoumi2019} for the three populations considered. For the sake of brevity, I only comment the normalization of the Bhattacharya-Matusita-Hellinger measure, denoted by $S_{\rho}$, and the normalized and symmetrized Kullback-Leibler-Theil measure, denoted by Theil.

Table~\ref{tab:GE1} shows the estimates of these measures for the distributions of actual earnings for participants, and both of them experience a steady decrease over time, suggesting an important convergence between the distributions of both genders. These results are similar if one considers the entire population (Table~\ref{tab:GE2}), with two main differences: the values of these measures are larger when one considers the entire population, and the reduction of the measures has also been more pronounced. The explanation for these differences lies in the large reduction of the participation rates between genders.

\begin{table}[htbp]
  \centering
  \caption{Generalized entropy measures of the gap, actual earnings for participants (Frank copula)}\label{tab:GE1}
		\begin{threeparttable}
    \begin{tabular}{cccccccccccc}
		\hline
    Year  & $S_{\rho}$ & Theil & \footnotesize{$k=0.1$} & \footnotesize{$k=0.2$} & \footnotesize{$k=0.3$} & \footnotesize{$k=0.4$} & \footnotesize{$k=0.5$} & \footnotesize{$k=0.6$} & \footnotesize{$k=0.7$} & \footnotesize{$k=0.8$} & \footnotesize{$k=0.9$} \\
		\hline
    1976  & 13.6  & 133.1 & 8.5   & 13.4  & 17.6  & 22.1  & 27.1  & 33.2  & 41.2  & 53.4  & 76.7 \\
    1977  & 12.3  & 109.6 & 7.3   & 11.8  & 15.8  & 20.0  & 24.6  & 30.0  & 36.9  & 47.2  & 65.9 \\
    1978  & 10.8  & 51.4  & 4.8   & 9.1   & 13.3  & 17.5  & 21.7  & 26.2  & 31.0  & 36.4  & 43.0 \\
    1979  & 10.9  & 51.0  & 4.8   & 9.2   & 13.4  & 17.6  & 21.9  & 26.4  & 31.2  & 36.7  & 43.2 \\
    1980  & 10.5  & 50.2  & 4.6   & 8.8   & 12.8  & 16.9  & 21.0  & 25.3  & 29.9  & 35.1  & 41.3 \\
    1981  & 9.8   & 54.4  & 4.6   & 8.4   & 12.1  & 15.8  & 19.6  & 23.6  & 28.2  & 33.6  & 41.1 \\
    1982  & 9.6   & 47.0  & 4.4   & 8.2   & 11.9  & 15.6  & 19.4  & 23.4  & 27.7  & 32.7  & 39.2 \\
    1983  & 8.4   & 40.3  & 4.1   & 7.5   & 10.8  & 14.1  & 17.6  & 21.2  & 25.2  & 30.0  & 37.0 \\
    1984  & 7.8   & 48.7  & 4.1   & 7.2   & 10.1  & 13.0  & 16.1  & 19.5  & 23.5  & 28.7  & 37.3 \\
    1985  & 6.9   & 32.0  & 3.1   & 5.9   & 8.6   & 11.3  & 14.1  & 17.0  & 20.1  & 23.6  & 28.2 \\
    1986  & 6.2   & 27.2  & 3.1   & 5.6   & 8.1   & 10.7  & 13.3  & 16.0  & 19.0  & 22.6  & 27.9 \\
    1987  & 5.4   & 22.5  & 2.9   & 5.2   & 7.5   & 9.8   & 12.1  & 14.6  & 17.4  & 20.8  & 26.3 \\
    1988  & 5.1   & 21.0  & 2.9   & 5.0   & 7.2   & 9.4   & 11.6  & 14.0  & 16.8  & 20.2  & 26.2 \\
    1989  & 4.9   & 20.9  & 2.7   & 4.8   & 6.8   & 8.9   & 11.1  & 13.4  & 16.0  & 19.2  & 24.7 \\
    1990  & 4.2   & 17.5  & 2.7   & 4.6   & 6.4   & 8.3   & 10.3  & 12.5  & 15.0  & 18.3  & 24.7 \\
    1991  & 3.8   & 15.5  & 2.2   & 3.7   & 5.3   & 7.0   & 8.6   & 10.4  & 12.5  & 15.0  & 19.5 \\
    1992  & 3.4   & 13.9  & 2.1   & 3.5   & 5.0   & 6.5   & 8.0   & 9.7   & 11.6  & 14.0  & 18.6 \\
    1993  & 3.1   & 12.7  & 2.2   & 3.6   & 4.9   & 6.4   & 7.9   & 9.5   & 11.5  & 14.3  & 19.9 \\
    1994  & 2.6   & 10.8  & 2.1   & 3.3   & 4.5   & 5.8   & 7.1   & 8.6   & 10.5  & 13.1  & 18.9 \\
    1995  & 2.5   & 10.2  & 2.3   & 3.5   & 4.7   & 6.0   & 7.4   & 9.0   & 11.0  & 14.0  & 21.1 \\
    1996  & 2.3   & 9.4   & 2.2   & 3.3   & 4.4   & 5.6   & 6.9   & 8.4   & 10.4  & 13.3  & 20.1 \\
    1997  & 2.2   & 9.1   & 2.5   & 3.5   & 4.7   & 5.9   & 7.3   & 8.9   & 10.9  & 14.2  & 22.2 \\
    1998  & 2.3   & 9.2   & 2.8   & 3.9   & 5.1   & 6.4   & 7.8   & 9.6   & 11.9  & 15.6  & 24.8 \\
    1999  & 2.2   & 9.0   & 2.5   & 3.6   & 4.7   & 5.9   & 7.3   & 8.9   & 11.0  & 14.3  & 22.3 \\
    2000  & 2.1   & 8.5   & 2.4   & 3.4   & 4.5   & 5.7   & 7.0   & 8.5   & 10.5  & 13.7  & 21.5 \\
    2001  & 1.8   & 7.4   & 3.1   & 4.1   & 5.2   & 6.4   & 7.8   & 9.6   & 12.1  & 16.3  & 27.5 \\
    2002  & 1.6   & 6.6   & 3.0   & 4.0   & 5.0   & 6.2   & 7.6   & 9.3   & 11.7  & 15.9  & 27.3 \\
    2003  & 1.6   & 6.6   & 3.4   & 4.4   & 5.4   & 6.7   & 8.1   & 10.0  & 12.7  & 17.4  & 30.3 \\
    2004  & 1.4   & 5.8   & 2.7   & 3.6   & 4.5   & 5.5   & 6.7   & 8.2   & 10.4  & 14.2  & 24.5 \\
    2005  & 1.3   & 5.3   & 2.7   & 3.5   & 4.4   & 5.4   & 6.6   & 8.1   & 10.3  & 14.1  & 24.5 \\
    2006  & 1.5   & 6.0   & 2.5   & 3.4   & 4.3   & 5.3   & 6.4   & 7.9   & 9.9   & 13.5  & 22.8 \\
    2007  & 1.3   & 5.1   & 2.8   & 3.6   & 4.5   & 5.5   & 6.7   & 8.3   & 10.5  & 14.5  & 25.5 \\
    2008  & 1.2   & 5.0   & 2.8   & 3.6   & 4.4   & 5.4   & 6.6   & 8.1   & 10.3  & 14.3  & 25.3 \\
    2009  & 1.2   & 4.9   & 2.9   & 3.7   & 4.6   & 5.6   & 6.8   & 8.4   & 10.7  & 14.8  & 26.3 \\
    2010  & 1.1   & 4.5   & 3.2   & 4.0   & 4.9   & 6.0   & 7.2   & 8.9   & 11.5  & 16.1  & 29.2 \\
    2011  & 1.2   & 4.9   & 2.9   & 3.7   & 4.5   & 5.5   & 6.7   & 8.2   & 10.5  & 14.6  & 26.0 \\
    2012  & 1.0   & 4.2   & 3.3   & 4.1   & 5.0   & 6.0   & 7.3   & 9.0   & 11.6  & 16.3  & 29.8 \\
    2013  & 1.0   & 3.9   & 2.7   & 3.4   & 4.2   & 5.1   & 6.1   & 7.6   & 9.7   & 13.6  & 24.6 \\
		\hline
    \end{tabular}
\end{threeparttable}
\end{table}

\begin{table}[htbp]
  \centering
  \caption{Generalized entropy measures of the gap, actual earnings for the full population (Frank copula)}\label{tab:GE2}
		\begin{threeparttable}
    \begin{tabular}{cccccccccccc}
		\hline
    Year  & $S_{\rho}$ & Theil & \footnotesize{$k=0.1$} & \footnotesize{$k=0.2$} & \footnotesize{$k=0.3$} & \footnotesize{$k=0.4$} & \footnotesize{$k=0.5$} & \footnotesize{$k=0.6$} & \footnotesize{$k=0.7$} & \footnotesize{$k=0.8$} & \footnotesize{$k=0.9$} \\
		\hline
    1976  & 11.3  & 50.2  & 5.3   & 9.9   & 14.4  & 18.9  & 23.5  & 28.4  & 33.6  & 39.7  & 47.9 \\
    1977  & 11.0  & 51.4  & 5.4   & 9.8   & 14.2  & 18.6  & 23.0  & 27.8  & 33.1  & 39.4  & 48.4 \\
    1978  & 10.9  & 52.4  & 5.4   & 9.9   & 14.2  & 18.5  & 22.9  & 27.7  & 33.1  & 39.5  & 48.9 \\
    1979  & 10.9  & 51.5  & 5.4   & 9.8   & 14.1  & 18.4  & 22.9  & 27.6  & 32.9  & 39.3  & 48.6 \\
    1980  & 9.8   & 42.9  & 4.8   & 8.9   & 12.8  & 16.8  & 20.9  & 25.2  & 29.9  & 35.4  & 43.4 \\
    1981  & 9.1   & 42.6  & 4.5   & 8.2   & 11.8  & 15.4  & 19.1  & 23.1  & 27.5  & 32.7  & 40.3 \\
    1982  & 8.4   & 40.4  & 4.4   & 7.8   & 11.2  & 14.5  & 18.0  & 21.8  & 26.1  & 31.3  & 39.5 \\
    1983  & 7.0   & 33.3  & 3.8   & 6.7   & 9.5   & 12.4  & 15.3  & 18.5  & 22.2  & 26.7  & 34.0 \\
    1984  & 6.2   & 27.8  & 3.3   & 5.9   & 8.4   & 11.0  & 13.6  & 16.4  & 19.6  & 23.5  & 29.9 \\
    1985  & 6.0   & 27.6  & 3.3   & 5.8   & 8.2   & 10.7  & 13.3  & 16.1  & 19.2  & 23.1  & 29.5 \\
    1986  & 5.8   & 25.2  & 3.3   & 5.7   & 8.1   & 10.5  & 13.0  & 15.8  & 18.8  & 22.7  & 29.4 \\
    1987  & 5.3   & 22.0  & 3.1   & 5.3   & 7.6   & 9.8   & 12.2  & 14.8  & 17.6  & 21.3  & 27.8 \\
    1988  & 4.8   & 20.1  & 3.0   & 5.1   & 7.2   & 9.3   & 11.5  & 13.9  & 16.7  & 20.4  & 27.2 \\
    1989  & 4.6   & 19.6  & 2.9   & 4.8   & 6.8   & 8.8   & 11.0  & 13.3  & 15.9  & 19.3  & 25.7 \\
    1990  & 4.5   & 18.8  & 2.9   & 4.8   & 6.8   & 8.8   & 10.9  & 13.2  & 15.9  & 19.4  & 26.1 \\
    1991  & 4.1   & 16.8  & 2.6   & 4.3   & 6.1   & 7.9   & 9.8   & 11.8  & 14.2  & 17.3  & 23.1 \\
    1992  & 3.5   & 14.5  & 2.4   & 3.9   & 5.4   & 7.0   & 8.7   & 10.6  & 12.7  & 15.6  & 21.2 \\
    1993  & 3.2   & 13.2  & 2.2   & 3.6   & 5.1   & 6.5   & 8.1   & 9.8   & 11.8  & 14.6  & 20.2 \\
    1994  & 3.0   & 12.4  & 2.2   & 3.5   & 4.9   & 6.3   & 7.8   & 9.5   & 11.4  & 14.2  & 19.9 \\
    1995  & 3.1   & 12.5  & 2.3   & 3.7   & 5.1   & 6.5   & 8.1   & 9.8   & 11.9  & 14.8  & 21.0 \\
    1996  & 2.8   & 11.6  & 2.5   & 3.8   & 5.2   & 6.6   & 8.1   & 9.9   & 12.0  & 15.2  & 22.5 \\
    1997  & 2.7   & 11.0  & 2.6   & 3.9   & 5.2   & 6.6   & 8.1   & 9.9   & 12.1  & 15.5  & 23.4 \\
    1998  & 2.7   & 11.1  & 2.7   & 4.0   & 5.3   & 6.7   & 8.3   & 10.1  & 12.4  & 15.9  & 24.1 \\
    1999  & 2.8   & 11.2  & 2.7   & 4.0   & 5.4   & 6.9   & 8.5   & 10.3  & 12.6  & 16.2  & 24.6 \\
    2000  & 2.6   & 10.5  & 2.7   & 4.0   & 5.3   & 6.7   & 8.2   & 10.0  & 12.4  & 16.0  & 24.7 \\
    2001  & 2.4   & 9.8   & 3.2   & 4.5   & 5.8   & 7.3   & 8.9   & 10.9  & 13.5  & 17.9  & 29.0 \\
    2002  & 2.3   & 9.4   & 3.1   & 4.3   & 5.6   & 7.0   & 8.6   & 10.5  & 13.1  & 17.3  & 28.2 \\
    2003  & 2.2   & 8.7   & 3.1   & 4.2   & 5.5   & 6.8   & 8.3   & 10.2  & 12.7  & 17.0  & 27.9 \\
    2004  & 2.0   & 7.9   & 2.7   & 3.7   & 4.8   & 6.0   & 7.4   & 9.0   & 11.2  & 14.9  & 24.4 \\
    2005  & 1.9   & 7.7   & 2.7   & 3.7   & 4.8   & 6.0   & 7.3   & 9.0   & 11.2  & 14.9  & 24.5 \\
    2006  & 1.9   & 7.8   & 2.7   & 3.7   & 4.8   & 6.0   & 7.3   & 9.0   & 11.2  & 14.9  & 24.4 \\
    2007  & 1.9   & 7.5   & 2.8   & 3.8   & 4.8   & 6.0   & 7.4   & 9.0   & 11.3  & 15.1  & 24.9 \\
    2008  & 1.7   & 6.7   & 2.7   & 3.6   & 4.6   & 5.7   & 7.0   & 8.6   & 10.7  & 14.5  & 24.3 \\
    2009  & 1.6   & 6.5   & 2.6   & 3.5   & 4.5   & 5.6   & 6.8   & 8.4   & 10.5  & 14.2  & 23.8 \\
    2010  & 1.3   & 5.1   & 2.5   & 3.3   & 4.1   & 5.0   & 6.1   & 7.5   & 9.5   & 13.0  & 22.5 \\
    2011  & 1.3   & 5.1   & 2.3   & 3.1   & 3.9   & 4.7   & 5.8   & 7.1   & 9.0   & 12.2  & 21.0 \\
    2012  & 1.4   & 5.5   & 2.5   & 3.3   & 4.2   & 5.2   & 6.3   & 7.8   & 9.8   & 13.4  & 22.9 \\
    2013  & 1.3   & 5.4   & 2.5   & 3.3   & 4.1   & 5.1   & 6.2   & 7.6   & 9.6   & 13.1  & 22.4 \\
		\hline
    \end{tabular}
\end{threeparttable}
\end{table}

Finally, the behavior of the generalized entropy measures is also more volatile for the potential outcomes than for the actual outcomes. Regardless, the long-term trend is a marked decrease, pointing at a reduction in earnings differences between men and women.

\begin{table}[htbp]
  \centering
  \caption{Generalized entropy measures of the gap, potential earnings for the full population (Frank copula)}\label{tab:GE3}
		\begin{threeparttable}
    \begin{tabular}{cccccccccccc}
		\hline
    Year  & $S_{\rho}$ & Theil & \footnotesize{$k=0.1$} & \footnotesize{$k=0.2$} & \footnotesize{$k=0.3$} & \footnotesize{$k=0.4$} & \footnotesize{$k=0.5$} & \footnotesize{$k=0.6$} & \footnotesize{$k=0.7$} & \footnotesize{$k=0.8$} & \footnotesize{$k=0.9$} \\
		\hline
    1976  & 11.4  & 50.2  & 5.1   & 9.7   & 14.2  & 18.7  & 23.2  & 28.0  & 33.1  & 38.9  & 46.1 \\
    1977  & 10.7  & 52.1  & 4.9   & 9.2   & 13.4  & 17.5  & 21.8  & 26.3  & 31.2  & 36.9  & 44.5 \\
    1978  & 14.2  & 70.2  & 6.6   & 12.3  & 17.7  & 23.2  & 28.8  & 34.7  & 41.3  & 49.1  & 59.3 \\
    1979  & 14.6  & 71.7  & 6.8   & 12.6  & 18.3  & 23.9  & 29.6  & 35.8  & 42.6  & 50.5  & 60.9 \\
    1980  & 13.3  & 58.1  & 6.0   & 11.3  & 16.6  & 21.8  & 27.2  & 32.8  & 38.7  & 45.4  & 53.7 \\
    1981  & 12.5  & 60.4  & 5.6   & 10.5  & 15.3  & 20.1  & 25.1  & 30.2  & 35.8  & 42.1  & 50.1 \\
    1982  & 12.2  & 60.1  & 5.8   & 10.7  & 15.4  & 20.2  & 25.1  & 30.3  & 36.0  & 42.7  & 52.0 \\
    1983  & 10.3  & 48.6  & 4.6   & 8.8   & 12.7  & 16.7  & 20.8  & 25.1  & 29.7  & 35.0  & 41.7 \\
    1984  & 6.0   & 27.3  & 2.7   & 5.1   & 7.5   & 9.8   & 12.2  & 14.7  & 17.4  & 20.4  & 24.1 \\
    1985  & 9.4   & 43.7  & 4.3   & 8.1   & 11.8  & 15.5  & 19.2  & 23.2  & 27.5  & 32.4  & 39.0 \\
    1986  & 8.9   & 39.3  & 4.2   & 7.8   & 11.3  & 14.9  & 18.6  & 22.4  & 26.5  & 31.2  & 37.5 \\
    1987  & 9.7   & 40.7  & 4.5   & 8.5   & 12.4  & 16.3  & 20.3  & 24.5  & 28.9  & 33.9  & 40.7 \\
    1988  & 5.8   & 24.3  & 2.8   & 5.2   & 7.6   & 10.0  & 12.4  & 15.0  & 17.7  & 20.9  & 25.5 \\
    1989  & 5.1   & 21.8  & 2.4   & 4.5   & 6.5   & 8.6   & 10.7  & 12.9  & 15.3  & 17.9  & 21.5 \\
    1990  & 6.3   & 26.3  & 2.9   & 5.5   & 8.1   & 10.6  & 13.3  & 16.0  & 18.8  & 22.0  & 26.3 \\
    1991  & 4.4   & 18.2  & 2.1   & 3.9   & 5.6   & 7.4   & 9.3   & 11.2  & 13.2  & 15.4  & 18.5 \\
    1992  & 3.8   & 15.8  & 1.8   & 3.4   & 4.9   & 6.5   & 8.1   & 9.8   & 11.5  & 13.5  & 16.3 \\
    1993  & 2.5   & 10.3  & 1.2   & 2.2   & 3.2   & 4.2   & 5.2   & 6.3   & 7.4   & 8.7   & 10.4 \\
    1994  & 2.7   & 11.2  & 1.3   & 2.4   & 3.5   & 4.7   & 5.8   & 7.0   & 8.3   & 9.7   & 11.8 \\
    1995  & 2.4   & 9.8   & 1.2   & 2.2   & 3.2   & 4.2   & 5.3   & 6.4   & 7.5   & 8.9   & 11.1 \\
    1996  & 5.2   & 21.5  & 3.1   & 5.3   & 7.5   & 9.8   & 12.2  & 14.7  & 17.5  & 21.2  & 27.8 \\
    1997  & 4.5   & 18.5  & 2.9   & 4.8   & 6.8   & 8.8   & 11.0  & 13.3  & 15.9  & 19.4  & 26.1 \\
    1998  & 2.8   & 11.2  & 2.0   & 3.2   & 4.4   & 5.7   & 7.1   & 8.6   & 10.3  & 12.8  & 17.9 \\
    1999  & 5.0   & 20.5  & 3.4   & 5.5   & 7.7   & 10.0  & 12.4  & 15.0  & 18.0  & 22.1  & 30.2 \\
    2000  & 3.8   & 15.6  & 3.0   & 4.8   & 6.5   & 8.4   & 10.4  & 12.6  & 15.2  & 19.1  & 27.4 \\
    2001  & 3.6   & 14.6  & 3.8   & 5.6   & 7.4   & 9.3   & 11.5  & 14.0  & 17.3  & 22.3  & 34.6 \\
    2002  & 3.0   & 12.2  & 3.3   & 4.8   & 6.3   & 8.0   & 9.8   & 11.9  & 14.7  & 19.1  & 29.7 \\
    2003  & 4.3   & 17.7  & 3.9   & 5.9   & 7.9   & 10.1  & 12.5  & 15.2  & 18.5  & 23.4  & 34.7 \\
    2004  & 3.9   & 16.1  & 3.4   & 5.2   & 7.1   & 9.0   & 11.1  & 13.5  & 16.5  & 20.8  & 30.7 \\
    2005  & 3.9   & 15.9  & 3.5   & 5.3   & 7.1   & 9.1   & 11.2  & 13.6  & 16.7  & 21.1  & 31.4 \\
    2006  & 5.2   & 21.7  & 4.8   & 7.3   & 9.8   & 12.4  & 15.3  & 18.7  & 22.8  & 29.1  & 43.6 \\
    2007  & 4.9   & 20.1  & 3.8   & 6.0   & 8.3   & 10.6  & 13.1  & 15.9  & 19.3  & 24.1  & 34.6 \\
    2008  & 5.1   & 21.9  & 4.1   & 6.4   & 8.7   & 11.2  & 13.8  & 16.8  & 20.4  & 25.6  & 37.0 \\
    2009  & 2.9   & 11.7  & 2.7   & 4.1   & 5.5   & 7.0   & 8.6   & 10.5  & 12.8  & 16.4  & 24.7 \\
    2010  & 1.7   & 7.1   & 2.1   & 2.9   & 3.8   & 4.8   & 5.9   & 7.2   & 8.9   & 11.6  & 18.5 \\
    2011  & 4.0   & 18.0  & 3.1   & 4.9   & 6.7   & 8.6   & 10.7  & 12.9  & 15.7  & 19.7  & 28.3 \\
    2012  & 1.3   & 5.3   & 1.8   & 2.5   & 3.2   & 4.0   & 4.9   & 6.0   & 7.5   & 9.9   & 16.2 \\
    2013  & 3.6   & 14.8  & 3.4   & 5.0   & 6.8   & 8.6   & 10.6  & 12.9  & 15.8  & 20.2  & 30.2 \\
		\hline
    \end{tabular}
\end{threeparttable}
\end{table}

\clearpage

\subsection{Specification as in Maasoumi and Wang (2019)}

The main specification difference relative to \cite{Maasoumi2019} is the inclusion of the covariate number of children in the quantile regressions. Tables~\ref{tab:y1old}-\ref{tab:dy1p90old} show the estimates when this variable is not included. Relative to the baseline specification, the distributions of actual earnings for participants are almost unaltered for both genders. However, the decomposition presents some differences. Specifically, the coefficients components has been between 0.05 and 0.15 points smaller than in the baseline specification, presenting some volatility across time. In contrast, the selection and participation components are larger. The former accounts to about two thirds of the magnitude of the change in the coefficients component, whereas the latter accounts for approximately the remaining third. Finally, the endowments component has remained unaltered.

\begin{table}[htbp]
  \centering
  \caption{Actual earnings distributions for participants by gender (Frank copula)}\label{tab:y1old}
		\begin{threeparttable}
\begin{tablenotes}[para,flushleft]
\begin{spacing}{1}
{\footnotesize Notes: Total, EC and CC respectively denote total difference, endowments component and coefficients component.}
\end{spacing}
\end{tablenotes}
\end{threeparttable}
\end{table}

\clearpage

\subsection{Stata vs Matlab estimates}

The QRS estimates presented in this paper differ from those in \cite{Maasoumi2019}. As mentioned in Section~\ref{sec:emp}, this is partly due to some differences in the implementation of the estimator. However, an important factor to explain the differences in the results is the worse performance of Stata relative to Matlab.\footnote{Specifically, the Stata estimates are obtained with the codes available at \url{https://www.journals.uchicago.edu/doi/suppl/10.1086/701788/suppl_file/2012616data.zip}, which are based on the Stata package qreg, while the Matlab estimates are obtained using the rq.m code available at \url{http://www.econ.uiuc.edu/~roger/research/rq/rq.m}.} To see this, consider the estimation of the QRS estimates using the exact specification and implementation used by \cite{Maasoumi2019} in both statistical packages.

Both sets of estimates of the Frank copula for both genders are shown in Figure~\ref{fig:svm}. The original estimates are slightly smaller in general, which means that the estimated amount of self-selection is slightly more positive.

\begin{figure}[htbp]
\caption{Frank copula parameter estimates: Matlab vs Stata}
\includegraphics[width=16.5cm]{frank_svm.eps}\label{fig:svm}

{\footnotesize Notes: the solid thick blue lines denote the Matlab estimates and the dashed red lines denote the Stata estimates.}
\end{figure}

To understand why the two sets of estimates are different, let us focus on the estimates for male workers in 1976. These estimates minimize the value of the criterion function defined in Equation~\ref{eq:thetahat}, which in turn depend on the estimates $\hat{\beta}_{d}\left(\tau;t\right)$ given by Equation~\ref{eq:betahat}. Fix the value of the Frank copula to be the first one considered in the estimation grid: $t=-42.889$. The minimized values of Equation~\ref{eq:betahat} for the quantiles used in the estimation of $\theta_{d}$ for both sets of estimates are shown in Table~\ref{tab:svm}. These are larger for the Stata estimates than for the Matlab ones. As such, when the slope parameter estimates are plugged into Equation~\ref{eq:thetahat}, this is different for the estimates with both statistical packages. As a consequence, the estimates of the copula parameters are different.

\begin{table}[htbp]
  \centering
  \caption{Minimized value of check function}\label{tab:svm}
		\begin{threeparttable}
    \begin{tabular}{cccccc}
		\hline
          & \multicolumn{5}{c}{$\tau$} \\
          & 0.3   & 0.4   & 0.5   & 0.6   & 0.7 \\
		\hline
    Matlab estimates & 1216.086 & 2044.821 & 2672.171 & 3051.696 & 3090.995 \\
    Stata estimates & 1216.519 & 2045.354 & 2673.005 & 3052.189 & 3091.509 \\
		\hline
        \end{tabular}\begin{tablenotes}[para,flushleft]
\begin{spacing}{1}
{\footnotesize Notes: male sample, year 1976, Frank copula with $t=-42.889$.}
\end{spacing}
\end{tablenotes}
\end{threeparttable}
\end{table}

One final aspect regards the computational time required to obtain the estimates with each statistical package. Whereas those in Stata are obtained using the simplex method, the Matlab estimates are based on the interior point method, which are substantially faster \citep{Portnoy1997,Chernozhukov2022}. Therefore, this allows to obtain a larger set of estimates, including more flexible models, such as those with heterogeneous copulas.
\clearpage
\section{Additional Tables and Figures}\label{app:tabs}

\begin{table}[htbp]
  \centering
  \caption{Kendall's $\tau$ correlation coefficients for the Gaussian copula}\label{tab:heck}
		\begin{threeparttable}
    \begin{tabular}{cr@{\,}lr@{\,}lr@{\,}lr@{\,}l}
		\hline
							& \multicolumn{4}{c}{Frank copula} & \multicolumn{4}{c}{Gaussian copula} \\
    Year  & \multicolumn{2}{c}{Male} & \multicolumn{2}{c}{Female} & \multicolumn{2}{c}{Male} & \multicolumn{2}{c}{Female} \\
		\hline
    1976  & 0.28  & ***   & 0.13  & ***   & 0.31  & **    & 0.10  & ** \\
    1977  & 0.16  &       & 0.08  & *     & 0.17  &       & 0.07  & ** \\
    1978  & 0.21  & *     & -0.01 &       & 0.23  & *     & -0.02 &  \\
    1979  & 0.24  & **    & -0.02 &       & 0.22  & **    & -0.02 &  \\
    1980  & 0.24  & **    & 0.01  &       & 0.22  & *     & -0.01 &  \\
    1981  & 0.23  & ***   & 0.01  &       & 0.22  & *     & 0.02  &  \\
    1982  & 0.15  &       & -0.03 &       & 0.14  &       & -0.02 &  \\
    1983  & 0.02  &       & -0.09 & **    & -0.01 &       & -0.10 & ** \\
    1984  & -0.08 &       & 0.02  &       & -0.08 &       & 0.01  &  \\
    1985  & 0.12  &       & -0.05 &       & 0.09  &       & -0.05 &  \\
    1986  & 0.09  &       & -0.10 & ***   & 0.10  &       & -0.10 & *** \\
    1987  & 0.11  &       & -0.17 & ***   & 0.06  &       & -0.19 & *** \\
    1988  & -0.05 &       & -0.10 & ***   & -0.04 &       & -0.10 & *** \\
    1989  & -0.12 &       & -0.10 & **    & -0.12 &       & -0.03 &  \\
    1990  & -0.09 &       & -0.23 & ***   & -0.09 &       & -0.22 & *** \\
    1991  & -0.05 &       & -0.10 & **    & -0.07 &       & -0.11 & ** \\
    1992  & -0.09 &       & -0.11 & ***   & -0.07 &       & -0.10 & *** \\
    1993  & -0.28 & **    & -0.13 & ***   & -0.22 &       & -0.13 & *** \\
    1994  & -0.22 & *     & -0.17 & ***   & -0.24 & *     & -0.19 & *** \\
    1995  & -0.27 & **    & -0.21 & ***   & -0.28 & **    & -0.22 & *** \\
    1996  & 0.07  &       & -0.22 & ***   & 0.06  &       & -0.22 & *** \\
    1997  & 0.01  &       & -0.21 & ***   & -0.10 &       & -0.17 & *** \\
    1998  & -0.22 &       & -0.21 & ***   & -0.34 & **    & -0.23 & *** \\
    1999  & 0.06  &       & -0.22 & ***   & -0.01 &       & -0.26 & *** \\
    2000  & 0.10  &       & -0.13 & *     & 0.06  &       & -0.10 &  \\
    2001  & 0.13  &       & -0.12 & **    & 0.13  &       & -0.14 & ** \\
    2002  & 0.04  &       & -0.13 & **    & -0.01 &       & -0.11 & * \\
    2003  & 0.06  &       & -0.25 & ***   & 0.02  &       & -0.23 & *** \\
    2004  & 0.10  &       & -0.21 & ***   & 0.03  &       & -0.21 & *** \\
    2005  & 0.08  &       & -0.22 & ***   & 0.08  &       & -0.25 & *** \\
    2006  & 0.29  & ***   & -0.17 & ***   & 0.19  & **    & -0.18 & *** \\
    2007  & 0.10  &       & -0.30 & ***   & 0.03  &       & -0.28 & *** \\
    2008  & 0.13  &       & -0.31 & ***   & 0.06  &       & -0.37 & *** \\
    2009  & -0.06 &       & -0.26 & ***   & -0.14 &       & -0.26 & *** \\
    2010  & -0.18 &       & -0.28 & ***   & -0.23 & **    & -0.25 & *** \\
    2011  & -0.01 &       & -0.32 & ***   & -0.10 &       & -0.34 & *** \\
    2012  & -0.26 & *     & -0.28 & ***   & -0.32 & **    & -0.36 & *** \\
    2013  & 0.06  &       & -0.27 & ***   & -0.05 &       & -0.31 & *** \\
		\hline
    \end{tabular}\begin{tablenotes}[para,flushleft]
\begin{spacing}{1}
{\footnotesize Notes: Kendall's $\tau$ correlation coefficients of the copula estimates by year and gender. QRS and H2S respectively denote the estimates of the quantile regression with selection and Heckman 2-stage estimators.}
\end{spacing}
\end{tablenotes}
\end{threeparttable}
\end{table}

\begin{figure}[htbp]
\caption{Mean value of $u$ for participants, MA}
\includegraphics[width=16.5cm]{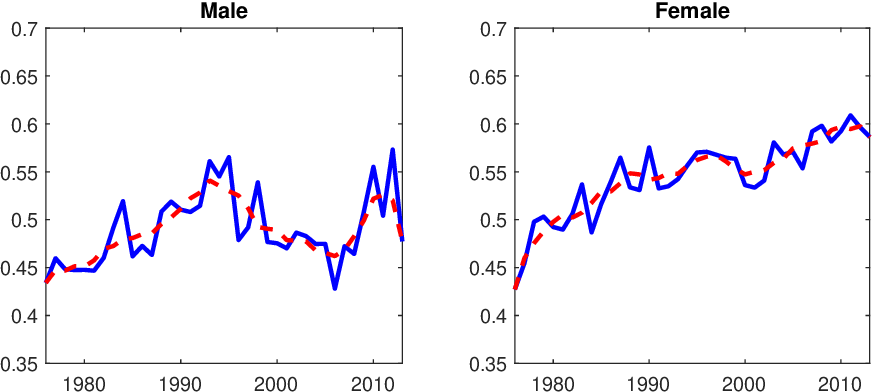}\label{fig:meanuma}

{\footnotesize Notes: the solid blue line denotes the estimate with the Frank copula for all participants; the dashed red line denotes the estimate with the Moving Average Frank copula for all participants; coefficients scaled by 100.}
\end{figure}

\begin{figure}[htbp]
\caption{Mean decomposition, actual earnings for participants, MA}
\includegraphics[width=16.5cm]{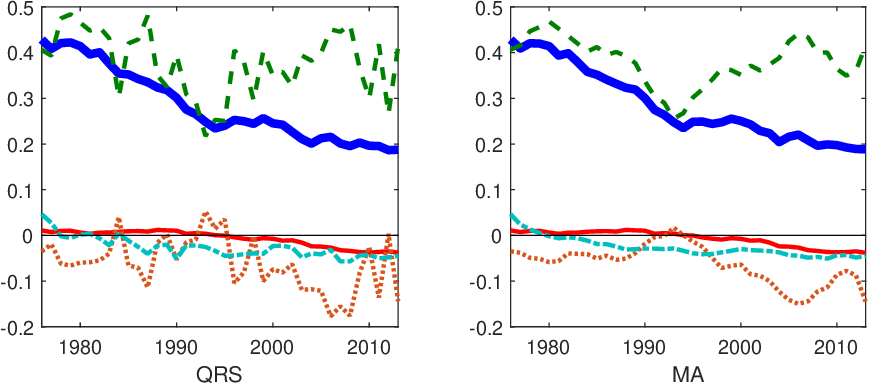}\label{fig:meanyma1}

{\footnotesize Notes: QRS and MA stand for the standard Quantile Regression with Selection estimator and its Moving Average counterpart; the solid thick blue line denotes the total gap between male and female workers; the solid thin red line denotes the endowments component; the dashed thin green line denotes the coefficients component; the dotted thin orange line denotes the selection component; the dashed-dotted thin cyan line denotes the participation component.}
\end{figure}

\begin{figure}[htbp]
\caption{Mean decomposition, actual earnings for the entire population, MA}
\includegraphics[width=16.5cm]{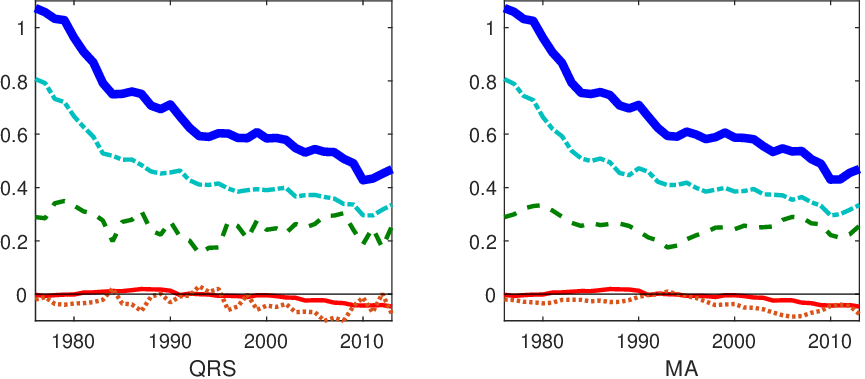}\label{fig:meanyma2}

{\footnotesize Notes: QRS and MA stand for the standard Quantile Regression with Selection estimator and its Moving Average counterpart; the solid thick blue line denotes the total gap between male and female workers; the solid thin red line denotes the endowments component; the dashed thin green line denotes the coefficients component; the dotted thin orange line denotes the selection component; the dashed-dotted thin cyan line denotes the participation component.}
\end{figure}

\clearpage

\begin{table}[htbp]
  \centering
  \caption{10th percentile decomposition, actual earnings for participants (Frank copula)}\label{tab:dy1p10g}
		\begin{threeparttable}
\begin{tablenotes}[para,flushleft]
\begin{spacing}{1}
{\footnotesize Notes: Total, EC, CC, SC and PC respectively denote total difference, endowments component, coefficients component, selection component and participation component.}
\end{spacing}
\end{tablenotes}
\end{threeparttable}
\end{table}

\begin{figure}[htbp]
\caption{Actual earnings distributions for male participants}
\includegraphics[width=16.5cm]{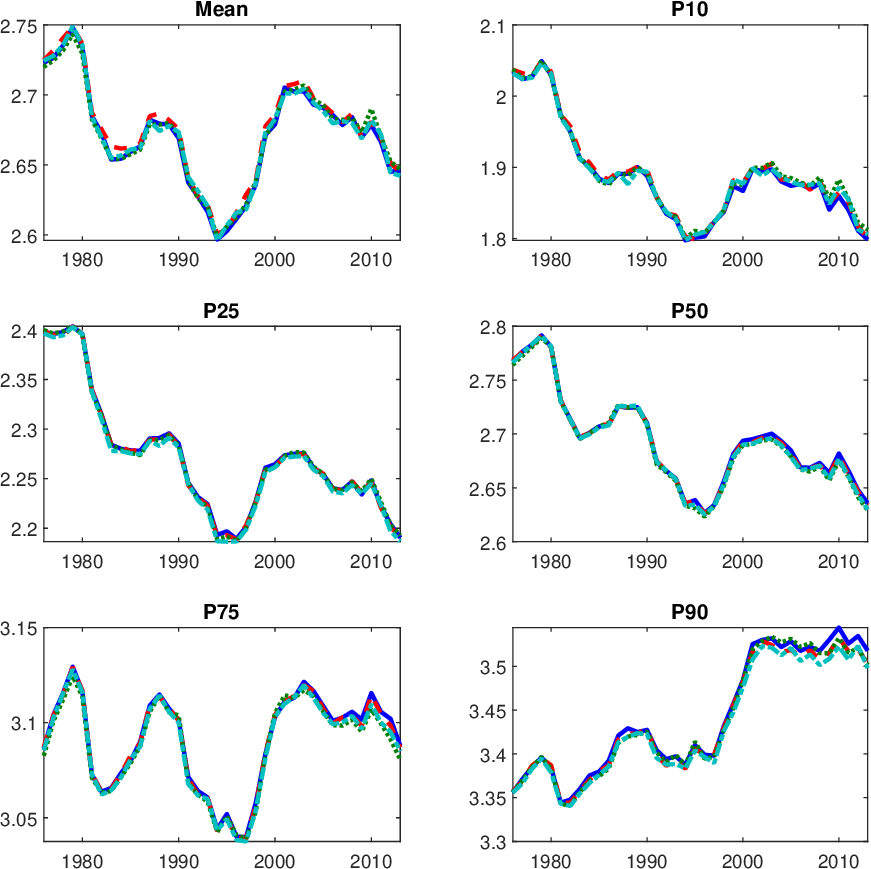}\label{fig:y1m}

{\footnotesize Notes: the solid thick blue line denotes the estimate with the Frank copula for all individuals; the solid thin red line denotes the estimate with the Gaussian copula for all individuals; the dashed green line denotes the estimate with the Frank copula, heterogeneous across race; the dotted orange line denotes the estimate with the Frank copula, heterogeneous across education level; the dashed-dotted cyan line denotes the estimate with the Frank copula, heterogeneous across marital status.}
\end{figure}

\begin{figure}[htbp]
\caption{Actual earnings distributions for the full male population}
\includegraphics[width=16.5cm]{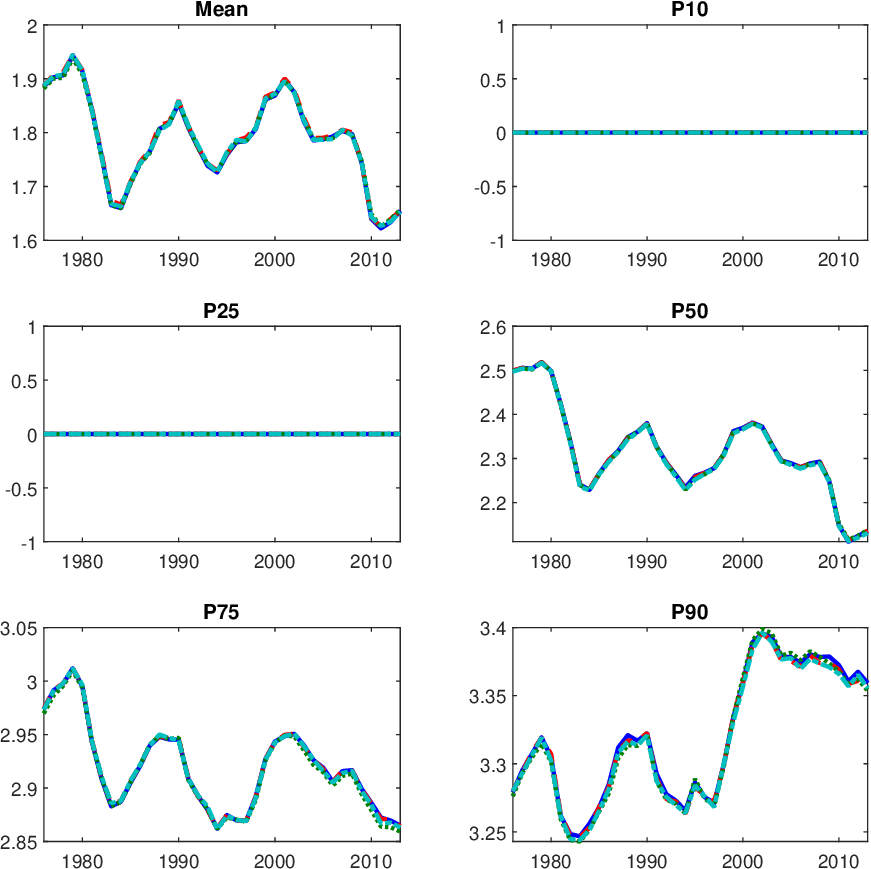}\label{fig:y2m}

{\footnotesize Notes: the solid thick blue line denotes the estimate with the Frank copula for all individuals; the solid thin red line denotes the estimate with the Gaussian copula for all individuals; the dashed green line denotes the estimate with the Frank copula, heterogeneous across race; the dotted orange line denotes the estimate with the Frank copula, heterogeneous across education level; the dashed-dotted cyan line denotes the estimate with the Frank copula, heterogeneous across marital status.}
\end{figure}

\begin{figure}[htbp]
\caption{Potential earnings distributions for the full male population}
\includegraphics[width=16.5cm]{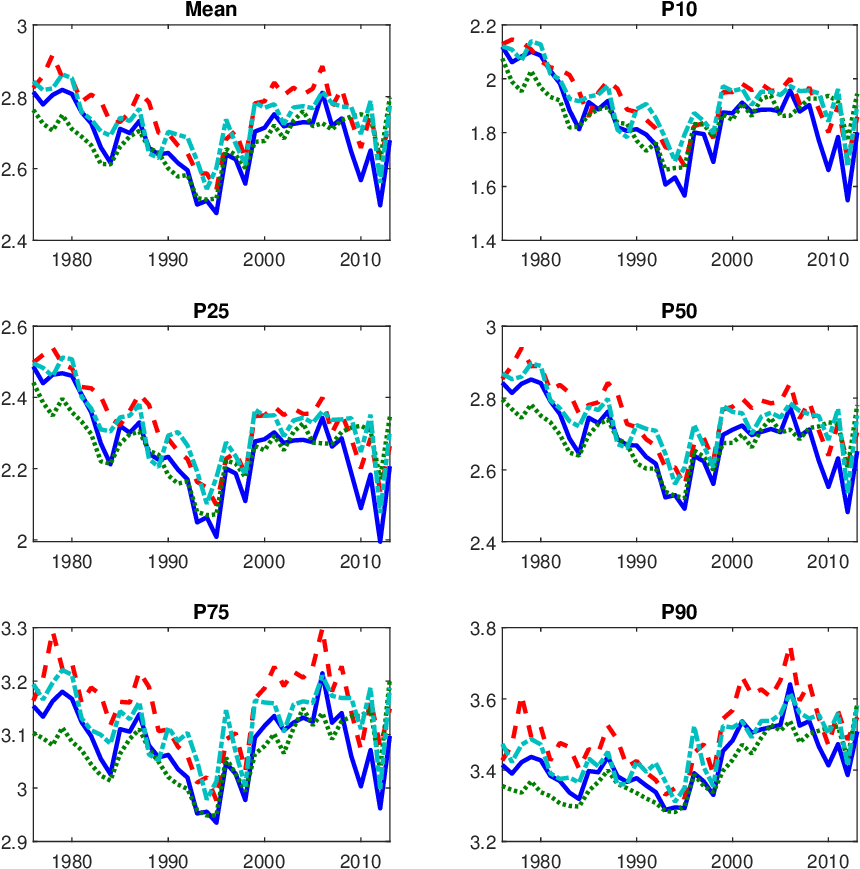}\label{fig:y3m}

{\footnotesize Notes: the solid thick blue line denotes the estimate with the Frank copula for all individuals; the solid thin red line denotes the estimate with the Gaussian copula for all individuals; the dashed green line denotes the estimate with the Frank copula, heterogeneous across race; the dotted orange line denotes the estimate with the Frank copula, heterogeneous across education level; the dashed-dotted cyan line denotes the estimate with the Frank copula, heterogeneous across marital status.}
\end{figure}

\begin{figure}[htbp]
\caption{Actual earnings distributions for female participants}
\includegraphics[width=16.5cm]{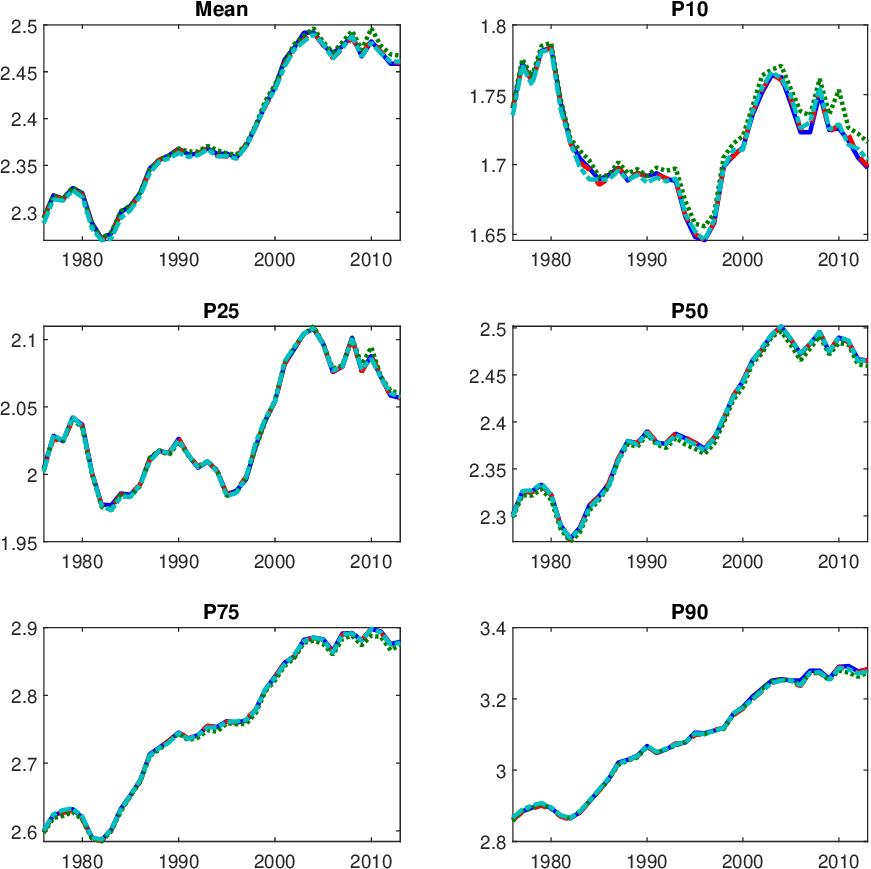}\label{fig:y1f}

{\footnotesize Notes: the solid thick blue line denotes the estimate with the Frank copula for all individuals; the solid thin red line denotes the estimate with the Gaussian copula for all individuals; the dashed green line denotes the estimate with the Frank copula, heterogeneous across race; the dotted orange line denotes the estimate with the Frank copula, heterogeneous across education level; the dashed-dotted cyan line denotes the estimate with the Frank copula, heterogeneous across marital status.}
\end{figure}

\begin{figure}[htbp]
\caption{Actual earnings distributions for the full female population}
\includegraphics[width=16.5cm]{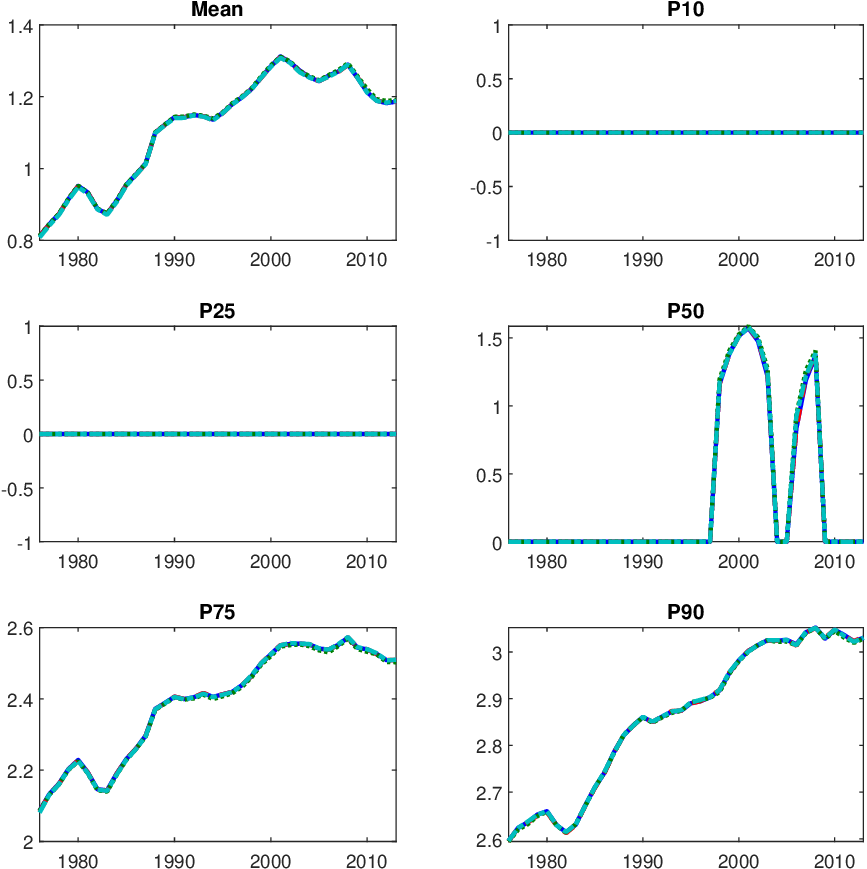}\label{fig:y2f}

{\footnotesize Notes: the solid thick blue line denotes the estimate with the Frank copula for all individuals; the solid thin red line denotes the estimate with the Gaussian copula for all individuals; the dashed green line denotes the estimate with the Frank copula, heterogeneous across race; the dotted orange line denotes the estimate with the Frank copula, heterogeneous across education level; the dashed-dotted cyan line denotes the estimate with the Frank copula, heterogeneous across marital status.}
\end{figure}

\begin{figure}[htbp]
\caption{Potential earnings distributions for the full female population}
\includegraphics[width=16.5cm]{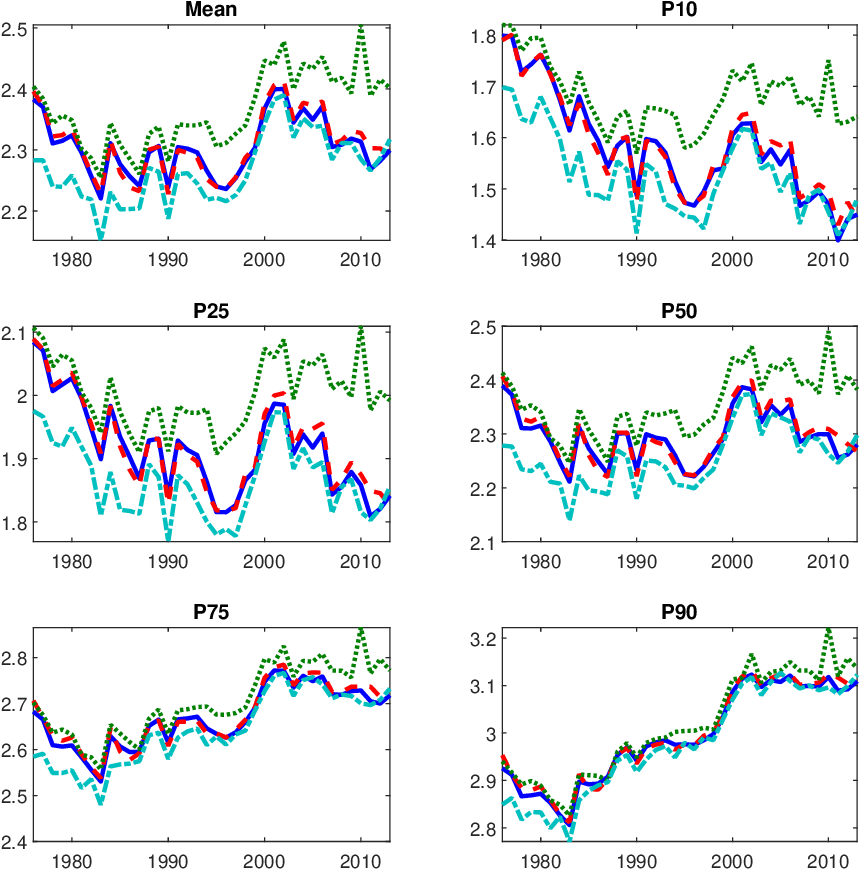}\label{fig:y3f}

{\footnotesize Notes: the solid thick blue line denotes the estimate with the Frank copula for all individuals; the solid thin red line denotes the estimate with the Gaussian copula for all individuals; the dashed green line denotes the estimate with the Frank copula, heterogeneous across race; the dotted orange line denotes the estimate with the Frank copula, heterogeneous across education level; the dashed-dotted cyan line denotes the estimate with the Frank copula, heterogeneous across marital status.}
\end{figure}

\clearpage

\begin{table}[htbp]
  \centering
  \caption{Self-selection decomposition (Gaussian copula)}\label{tab:duf}
		\begin{threeparttable}
\begin{tablenotes}[para,flushleft]
\begin{spacing}{1}
{\footnotesize Notes: Total, EC, CC, SC and PC respectively denote total difference, endowments component, coefficients component, selection component and participation component.}
\end{spacing}
\end{tablenotes}
\end{threeparttable}
\end{table}

\clearpage

\begin{figure}[htbp]
\caption{Actual earnings decomposition for participants, H2S}
\includegraphics[width=16.5cm]{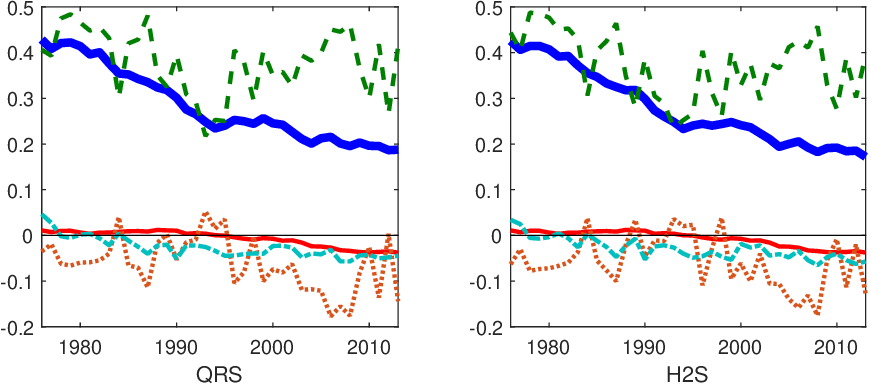}\label{fig:dech1}

{\footnotesize Notes: QRS and H2S stand for Quantile Regression with Selection and Heckman 2-Stage estimators; the solid thick blue line denotes the total gap between male and female workers; the solid thin red line denotes the endowments component; the dashed green line denotes the coefficients component; the dotted orange line denotes the selection component; the dashed-dotted cyan line denotes the participation component.}
\end{figure}

\begin{figure}[htbp]
\caption{Actual earnings decomposition for the entire population, H2S}
\includegraphics[width=16.5cm]{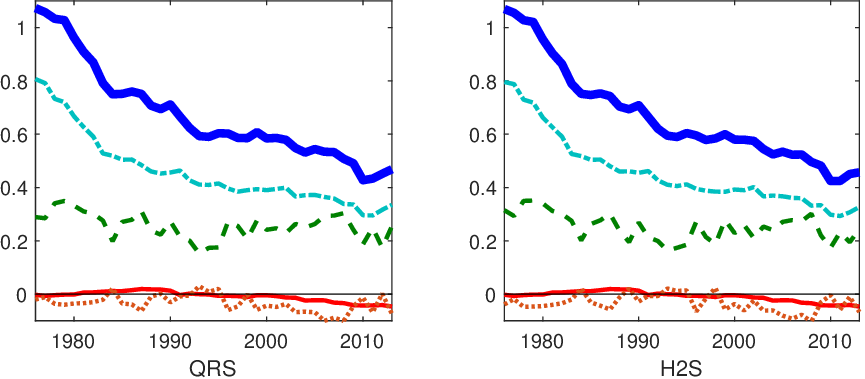}\label{fig:dech2}

{\footnotesize Notes: QRS and H2S stand for Quantile Regression with Selection and Heckman 2-Stage estimators; the solid thick blue line denotes the total gap between male and female workers; the solid thin red line denotes the endowments component; the dashed green line denotes the coefficients component; the dotted orange line denotes the selection component; the dashed-dotted cyan line denotes the participation component.}
\end{figure}

\begin{figure}[htbp]
\caption{Unconditional quantiles decompositions, actual earnings for participants, heterogeneous copula by race}
\includegraphics[width=16.5cm]{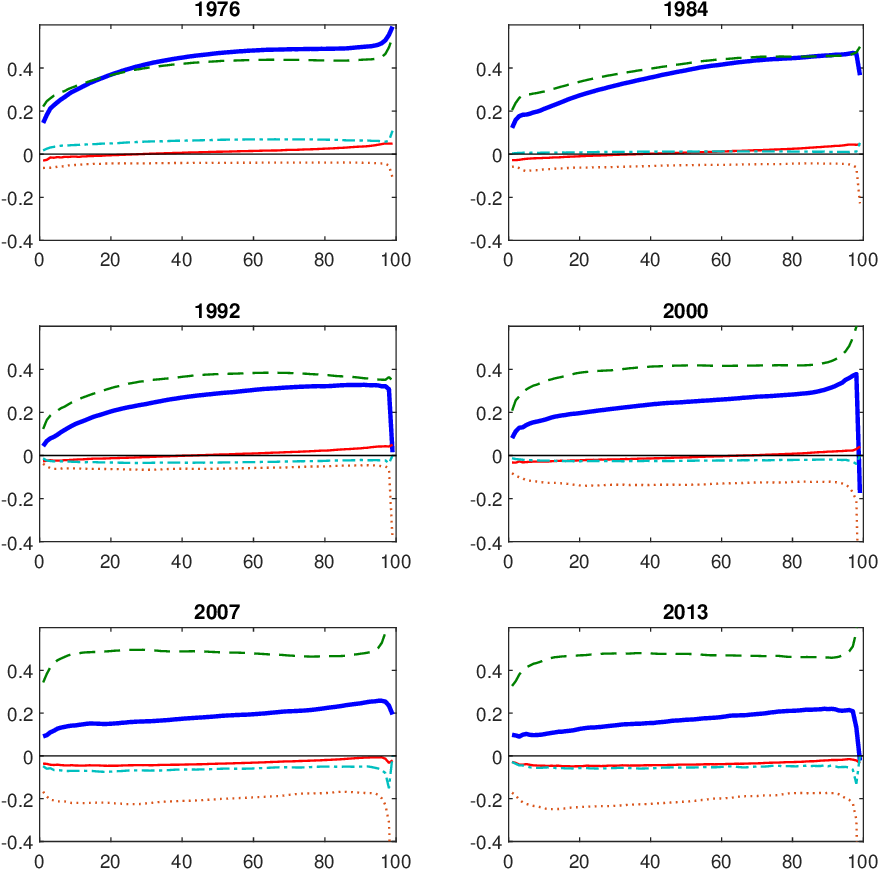}\label{fig:dec12}

{\footnotesize Notes: the solid thick blue line denotes the total gap between male and female workers; the solid thin red line denotes the endowments component; the dashed green line denotes the coefficients component; the dotted orange line denotes the selection component; the dashed-dotted cyan line denotes the participation component.}
\end{figure}

\begin{figure}[htbp]
\caption{Unconditional quantiles decompositions, actual earnings for participants, heterogeneous copula by education}
\includegraphics[width=16.5cm]{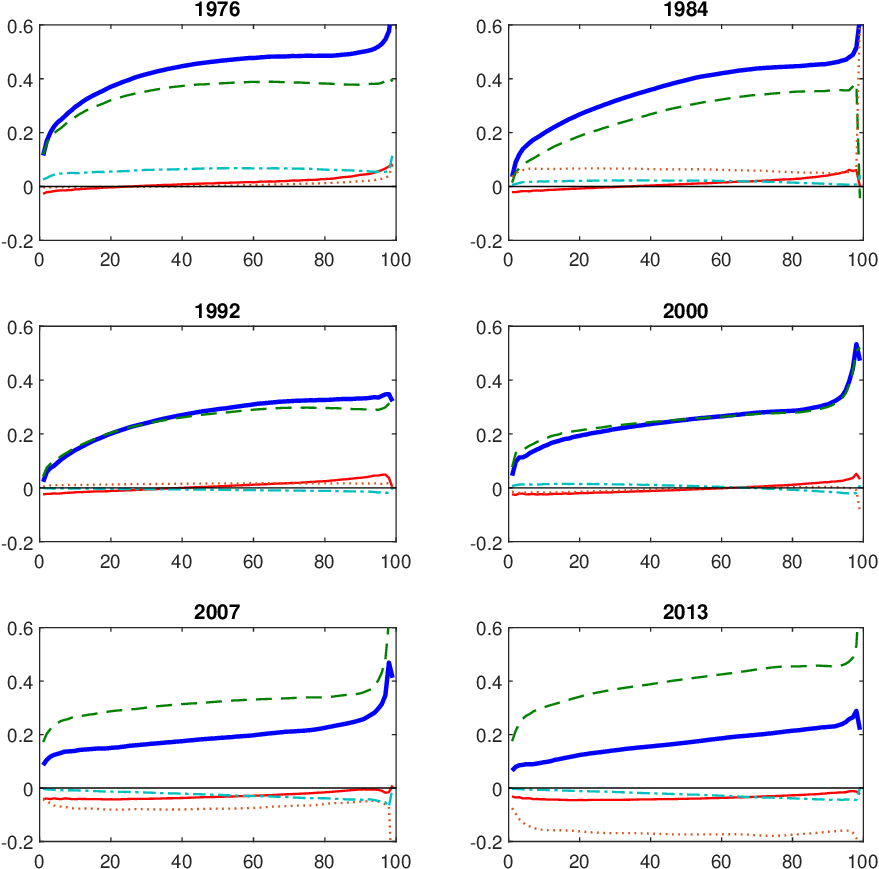}\label{fig:dec13}

{\footnotesize Notes: the solid thick blue line denotes the total gap between male and female workers; the solid thin red line denotes the endowments component; the dashed green line denotes the coefficients component; the dotted orange line denotes the selection component; the dashed-dotted cyan line denotes the participation component.}
\end{figure}

\begin{figure}[htbp]
\caption{Unconditional quantiles decompositions, actual earnings for participants, heterogeneous copula by marital status}
\includegraphics[width=16.5cm]{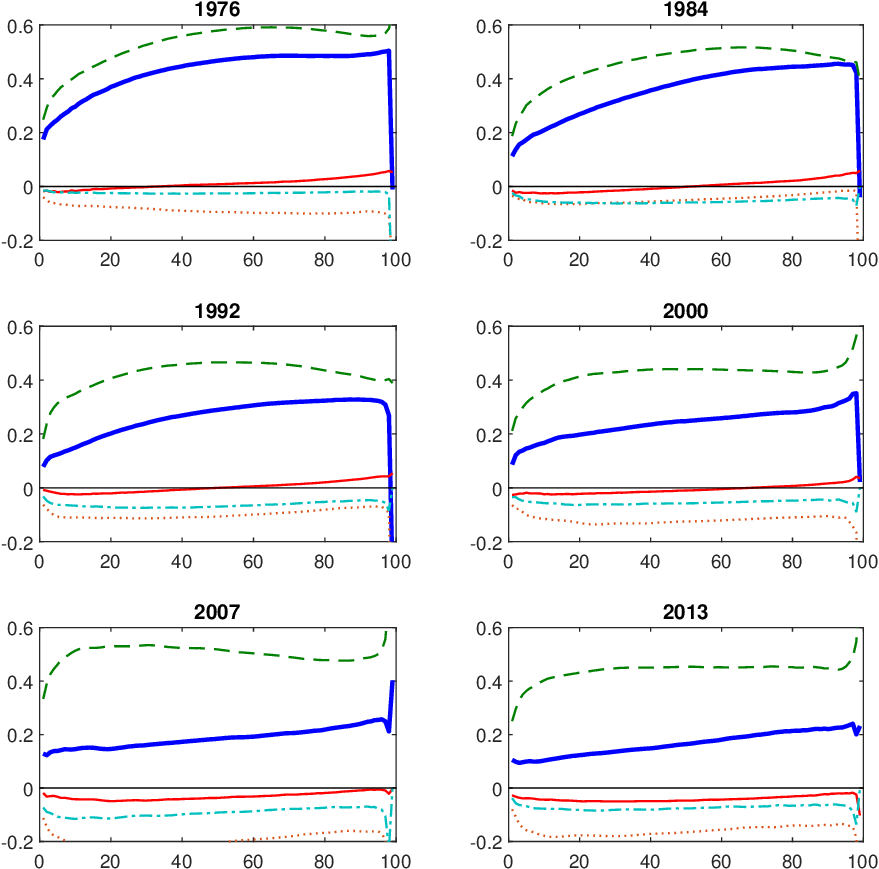}\label{fig:dec14}

{\footnotesize Notes: the solid thick blue line denotes the total gap between male and female workers; the solid thin red line denotes the endowments component; the dashed green line denotes the coefficients component; the dotted orange line denotes the selection component; the dashed-dotted cyan line denotes the participation component.}
\end{figure}

\end{document}